\DeclareMathOperator{\tr}{tr}
\DeclareMathOperator{\rk}{rk}
\DeclareMathOperator{\srk}{srk}
\DeclareMathOperator{\brk}{brk}
\DeclareMathOperator{\W}{W}
\DeclareMathOperator{\LL}{L}
\DeclareMathOperator{\M}{M}
\DeclareMathOperator{\N}{N}
\DeclareMathOperator{\Y}{Y}
\DeclareMathOperator{\Span}{Span}
\DeclareMathOperator{\sym}{sym}
\definecolor{Bittersweet}{HTML}{C04F17}
\definecolor{Darkgreen}{rgb}{0,0.4,0}
\newcommand{\xdashrightarrow}[2][]{\ext@arrow 0359\rightarrowfill@@{#1}{#2}}
\def\rightarrowfill@@{\arrowfill@@\relax\relbar\rightarrow}
\def\arrowfill@@#1#2#3#4{%
  $\m@th\thickmuskip0mu\medmuskip\thickmuskip\thinmuskip\thickmuskip
   \relax#4#1
   \xleaders\hbox{$#4#2$}\hfill
   #3$%
}
\newcommand\xxoverset[3]{%
  \resizebox{#1+\widthof{\scriptsize #2}}{\height}{$#3$}}
\newcommand\extoverset[3][0pt]{%
  \mathrel{\overset{\textup{#2}}{\xxoverset{#1}{#2}{#3}}}}
\newtheorem{lemma}{Lemma}[chapter]
\newtheorem{proposition}{Proposition}[chapter]
\newtheorem{theorem}{Theorem}[chapter]
\newtheorem{corollary}{Corollary}[chapter]
\newtheorem{conjecture}{Conjecture}[chapter]
\theoremstyle{definition}
\newtheorem{definition}{Definition}[chapter]
\newtheorem{example}{Example}[chapter]
\newtheorem{remark}{Remark}[chapter]
\crefname{proposition}{Proposition}{Propositions}
\crefname{definition}{Definition}{Definitions}
\crefname{lemma}{Lemma}{Lemmas}
\crefname{figure}{Figure}{Figures}
\crefname{theorem}{Theorem}{Theorems}
\crefname{corollary}{Corollary}{Corollary}
\crefname{conjecture}{Conjecture}{Conjectures}
\crefname{section}{Section}{Sections}
\crefname{appendix}{Appendix}{Appendices}
\crefname{observation}{Observation}{Observation}
\crefname{question}{Question}{Questions}
\crefname{remark}{Remark}{Remark}
\crefname{example}{Example}{Examples}
\crefname{equation}{Eq.}{Eqs.}
\crefname{table}{Table}{Tables}
\crefname{chapter}{Chapter}{Chapters}
\begin{document}
 
%%%%%%%%%%%%%%%%%%%%%%%%%%%%%%%%%%%%%%%%%%%%%%%%%%%%%

\begin{titlepage}
%\newgeometry{paperwidth=17in,paperheight=11in}
	\centering
	%{\scshape\large\bfseries University of Camerino\par}
	%\vspace{0.25cm}
	%{\scshape\large\bfseries School of Science and Technology\par} 
	%\vspace{0.9cm}

\begin{figure}[!htb]
\minipage{0.41\textwidth}
  \includegraphics[scale=0.1]{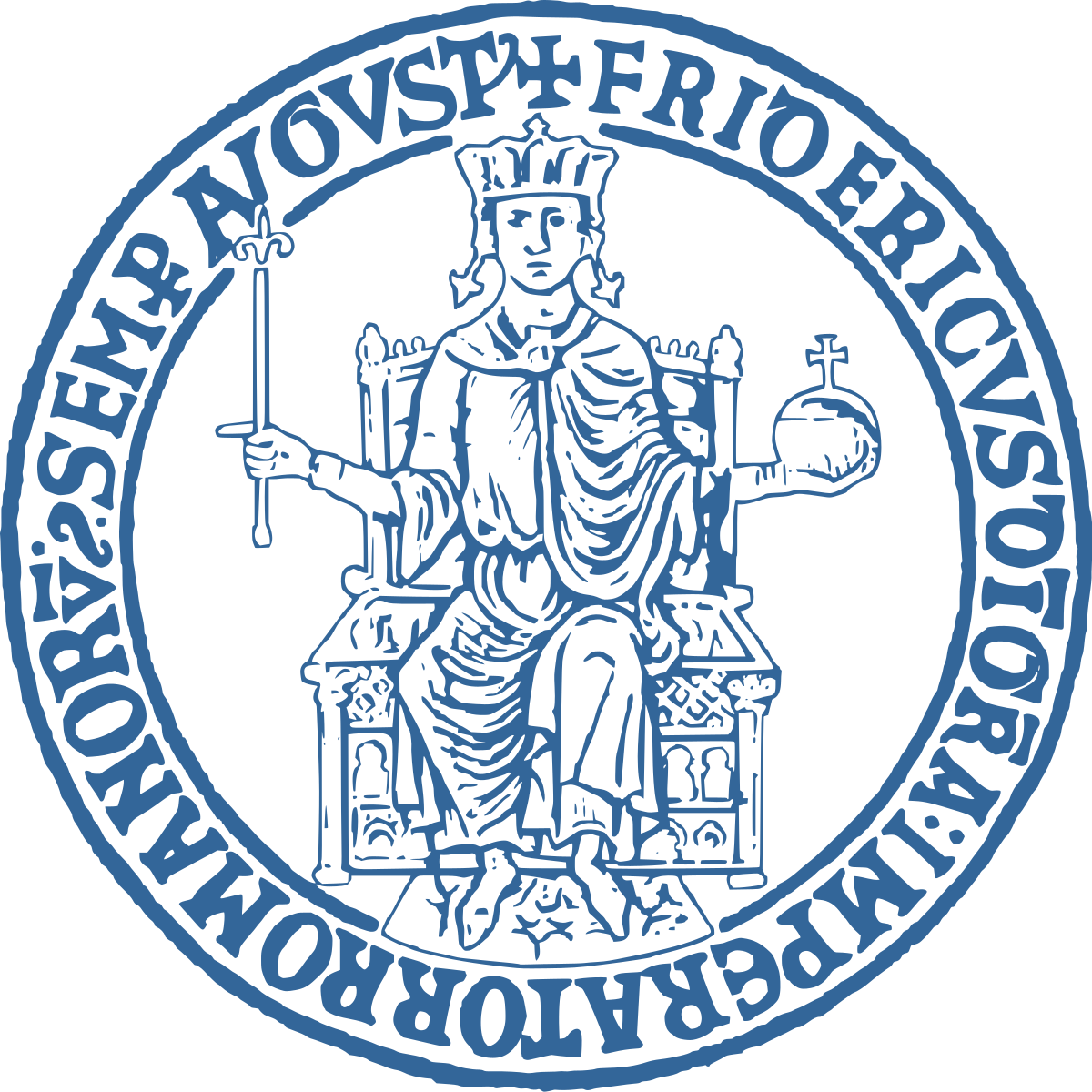}
  %\caption*{Università di Napoli ''Federico II"}
\endminipage\hfill
\minipage{0.33\textwidth}
  \includegraphics[scale=0.15]{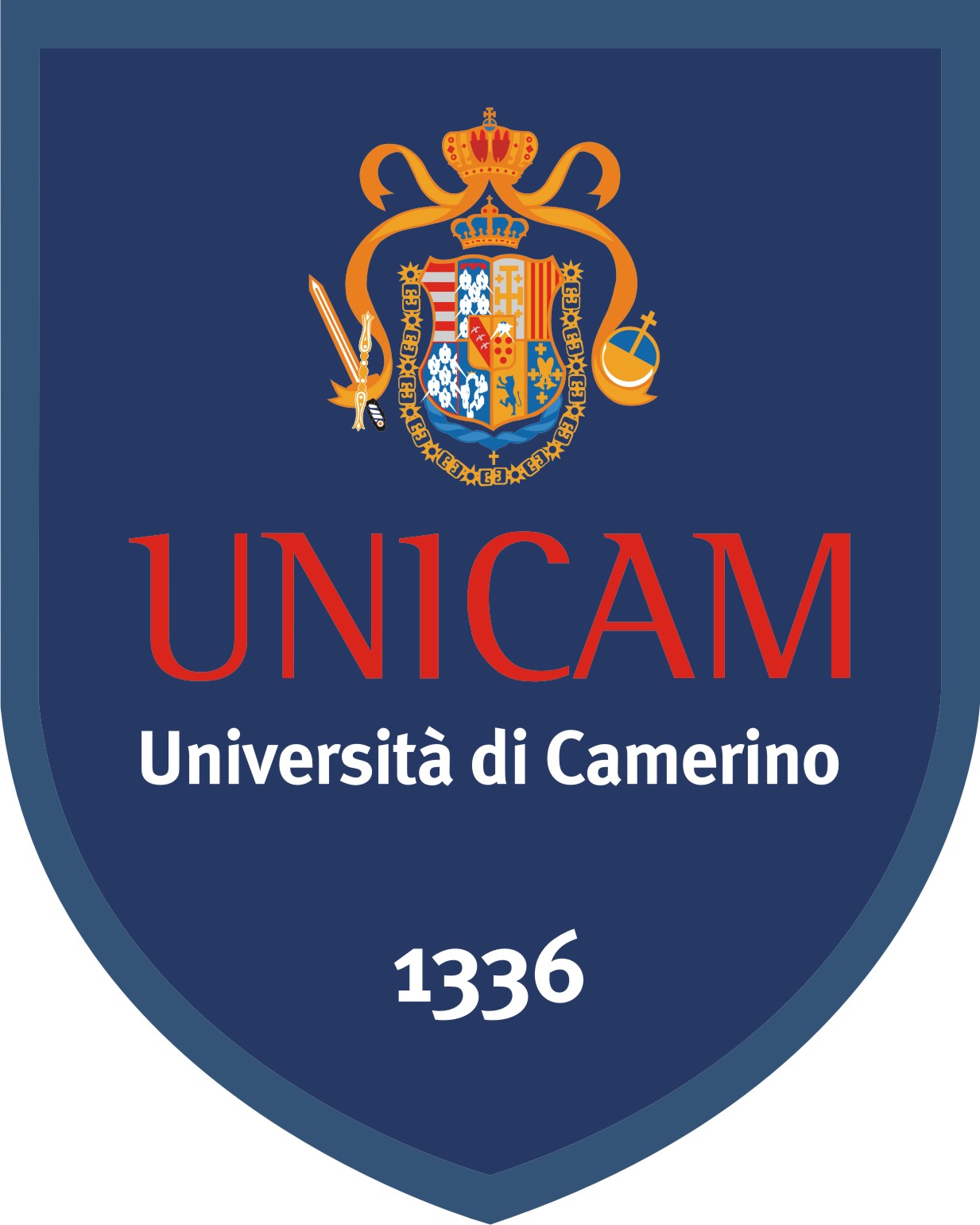}
  %\caption*{Università di Camerino}
\endminipage\hfill
\minipage{0.24\textwidth}%
  \includegraphics[scale=2.2]{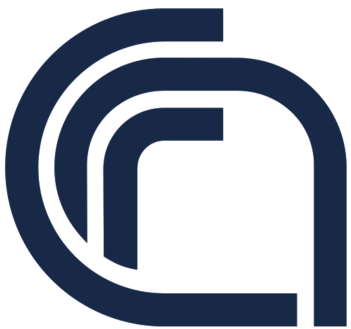}
  %\caption*{Consiglio Nazionale delle Ricerche}
\endminipage
\end{figure}

	\par\vspace{0.7cm}
	
	{\large\scshape\bfseries Dottorato di Ricerca in Quantum Technologies \par}
	\vspace{0.25cm}
	{\large\scshape\bfseries XXXV Cycle \par}
	%{\scshape\large\bfseries Quantum Technologies\par}
	%\vspace{0.25cm}	
	%{\scshape\large\bfseries Physics\par}
	\vspace{1.5cm}
	
	{\LARGE\bfseries \mbox{Classifying~Entanglement~by~Algebraic~Geometry}\par}
	\vspace{1.5cm}
	
    {\bfseries PhD Dissertation by \par}
	\vspace{0.25cm}	
	{\huge Masoud Gharahi\par}
	\vspace{1.5cm}

	{\bfseries Settore Scientifico Disciplinare FIS/02 \par}
	
	\vspace{1.5cm}
	
	{\bfseries Tutor \par}
	\vspace{0.25cm}
	{\Large Prof. Stefano Mancini}\par
	\vspace{2.1cm}
	
	{\large Camerino, Italy, a.y. 2019/2022}
\end{titlepage}

%%%%%%%%%%%%%%%%%%%%%%%%%%%%%%%%%%%%%%%%%%%%%%%%%%%%%

\newpage
\begin{titlepage}
	\begin{flushright}
	\mbox{}
    \vfill
	{\Large \textit{``Pure mathematics is the world's best game. It is more absorbing than chess, more of a gamble than poker, and lasts longer than Monopoly. It's free. It can be played anywhere - Archimedes did it in a bathtub.''}\par}
	\vspace{1cm}
	{\Large — Richard J. Trudea\par}
	\end{flushright}
\end{titlepage}
	
%%%%%%%%%%%%%%%%%%%%%%%%%%%%%%%%%%%%%%%%%%%%%%%%%%%%%

%\restoregeometry
\newpage
\begin{titlepage}
	\begin{flushright}
	\mbox{}
     \vfill
     {\Large Dedicated\par}
    \vspace{1cm}
    {\Large \textit{to my wonderful mother}\par}
	\vspace{0.5cm}
	{\Large \textit{for her endless love}\par}
	\vspace{1cm}
	{\Large \textit{and to the memory of my father}\par}
	\vspace{0.5cm}
	{\Large \textit{the first and the best math tutor of my life}\par}
	\end{flushright}
\end{titlepage}
%\subfile{auxiliary/Newpage}
%\begin{titlepage}
%	\begin{flushright}
%	\mbox{}
%     \vfill
%	{\Large Dedicated to\par}
%	\vspace{1cm}
%	{\huge \textit{Mother}\par}
%	\end{flushright}
%	\end{titlepage}
%\subfile{auxiliary/Newpage}

%%%%%%%%%%%%%%%%%%%%%%%%%%%%%%%%%%%%%%%%%%%%%%%%%%%%%
\newpage
\thispagestyle{empty}

\begin{flushleft}
    {\Large \textbf{Masoud Gharahi}\par}
    \vspace{0.1cm}
    {\Large School of Science and Technology\par}
    \vspace{0.1cm}
    {\Large University of Camerino, Italy \par}
    \vspace{0.1cm}
    {\Large \href{mailto:masoud.gharahi@gmail.com}{masoud.gharahi@gmail.com}\par}
%\end{flushleft}

\vfill

\[
\large
\begin{array}{lcccl}
\text{Date of Submission}  & && & \text{December 01, 2022} \\ [2ex]
\text{Date of Defense}     & && & \text{February 08, 2023} \\ [2ex]
\text{Advisor}          & && & \textbf{Prof. Stefano Mancini} \\
                           & && & \text{University of Camerino, Camerino, Italy} \\ [2ex]
\text{Assessment Commitee} & && & \textbf{Prof. Karol \.{Z}yczkowski} \\
                           & && & \text{Jagiellonian University, Krak\'{o}w, Poland} \\[2ex]
                           & && & \textbf{Prof. Fabio Benatti} \\
                           & && & \text{University of Trieste, Trieste, Italy} \\ [2ex]
                           & && & \textbf{Prof. Paolo Facchi} \\
                           & && & \text{University of Bari, Bari, Italy}
\end{array}
\]

\end{flushleft}

%%%%%%%%%%%%%%%%%%%%%%%%%%%%%%%%%%%%%%%%%%%%%%%%%%%%%

\newpage
\thispagestyle{empty}
\addcontentsline{toc}{chapter}{\numberline{}Abstract}
\begin{center}
  \textbf{\Large Classifying Entanglement by Algebraic Geometry}

  \vspace*{1cm}
  \textbf{\large Masoud Gharahi}

  \vspace*{0.5cm}
  {\large Submitted for the degree of Doctor of Philosophy\\
  December 2022}

  \vspace*{1cm}
  \textbf{\large Abstract}
\end{center}

Quantum Entanglement is one of the key manifestations of quantum mechanics that separate the quantum realm from the classical one. Characterization of entanglement as a physical resource for quantum technology became of uppermost importance. While the entanglement of bipartite systems is already well understood, the ultimate goal to cope with the properties of  entanglement of multipartite systems is still far from being realized. This dissertation covers characterization of multipartite entanglement using algebraic-geometric tools. Firstly, we establish an algorithm to classify multipartite entanglement by $k$-secant varieties of the Segre variety and $\ell$-multilinear ranks that are invariant under Stochastic Local Operations with Classical Communication (SLOCC). We present a fine-structure classification of multiqubit and tripartite entanglement based on this algorithm.
Another fundamental problem in quantum information theory is entanglement transformation that is quite challenging regarding to multipartite systems. It is captivating that the proposed entanglement classification by algebraic geometry can be considered as a reference to study SLOCC and asymptotic SLOCC interconversions among different resources based on tensor rank and border rank, respectively. In this regard, we also introduce a new class of tensors that we call \emph{persistent tensors} and construct a lower bound for their tensor rank. We further cover SLOCC convertibility of multipartite systems considering several families of persistent tensors.

\setcounter{page}{1}
\pagenumbering{roman}
%%%%%%%%%%%%%%%%%%%%%%%%%%%%%%%%%%%%%%%%%%%%%%%%%%%%%

\chapter*{Declaration}
\addcontentsline{toc}{chapter}{\numberline{}Declaration}

The work in this dissertation is based on research carried out at the
University of Camerino, School of Science and Technology, Italy. No part of this dissertation has been submitted elsewhere for any other degree or qualification and it is all my own work unless referenced to the contrary in the text.
%[If your dissertation based on joint research , you have to mention what part of it is your individual constribution, see Rules for the Submission of Work for Higher Degrees for detail. You will get one from the Graduate School.]

\vspace{2in}
\noindent \textbf{Copyright \copyright\; MASOUD GHARAHI, 2022}.\\
``The copyright of this dissertation rests with the author.  No quotations
from it should be published without the author's prior written consent and information derived from it should be acknowledged''.

%%%%%%%%%%%%%%%%%%%%%%%%%%%%%%%%%%%%%%%%%%%%%%%%%%%%%

\chapter*{Acknowledgments}
\addcontentsline{toc}{chapter}{\numberline{}Acknowledgments}

Here, I would like to take this opportunity to thank everyone for their help and support during the last  few years that I have been living outside of my homeland.

First and foremost I would like to thank my supervisor Stefano Mancini for giving me the opportunity to do my doctoral studies within his research group. Not only as a professor, he guided me in the academic ﬁeld, but also as a friend in life. Grazie infinite Stefano per il tuo continuo supporto.

During my Ph.D., I had the honor to work with bright mathematicians and physicists. I remember the day I contacted Giorgio Ottaviani to ask some \mbox{questions} which started my journey to algebraic geometry. I continued this connection and learned a lot from him. Grazie mille Giorgio for your willingness to share your knowledge. Subsequently, I would like to thank Alessandra Bernardi, Jarosław Buczyński, Fulvio Gesmundo, Frédéric Holweck, Joachim Jelisiejew, Luke \mbox{Oeding} and Adam Sawicki for all delightful and fruitful discussions. I would like to \mbox{acknowledge} the hospitality of the QMATH Center at the University of \mbox{Copenhagen} where I had a long-term visit. There, I had an enjoyable collaboration with Vladimir Lysikov on the Persistent Tensors project. Thank you Vladimir for your guidance and persistence until a solution was found.

\sloppy Here, I also would like to thank my former supervisor Seyed Javad Akhtarshenas who introduced me the field of quantum information theory.

I would like to thank a true old friend. Thank you Pedram for all the interesting conversations and discussions whether it was mathematics or any other subjects.

Last but not least, I would like to thank my lovely family for always being there. All of this would not have been possible without the love coming from my family.

%%%%%%%%%%%%%%%%%%%%%%%%%%%%%%%%%%%%%%%%%%%%%%%%%%%%%
\pagestyle{fancy} \fancyhead[LO,RE]{}
\fancyhead[LE]{\nouppercase{\bfseries \leftmark}}
\fancyhead[RO]{\nouppercase{\bfseries \rightmark}}
%%%%%%%%%%%%%%%%%%%%%%%%%%%%%%%%%%%%%%%%%%%%%%%%%%%%%
\setcounter{tocdepth}{3}
\tableofcontents
%\addcontentsline{toc}{chapter}{\numberline{}List of Figures}
\listoffigures
%\addcontentsline{toc}{chapter}{\numberline{}List of Tables}
\listoftables
%\clearpage
%%%%%%%%%%%%%%%%%%%%%%%%%%%%%%%%%%%%%%%%%%%%%%%%%%%%%
\pagestyle{plain}
%%%%%%%%%%%%%%%%%%%%%%%%%%%%%%%%%%%%%%%%%%%%%%%%%%%%%
\newpage
\chapter*{List of Publications}
\addcontentsline{toc}{chapter}{\numberline{}Publications}
{\bf{This PhD Dissertation is based on the following publications available online}}
\vspace{1cm}
\begin{enumerate}
\item[{[A]}]
\textbf{M. Gharahi} and S. Mancini,
\textit{Comment on ``Inductive entanglement classification of four qubits under stochastic local operations and classical communication''}, 
\href{https://doi.org/10.1103/PhysRevA.98.066301}{Phys. Rev. A \textbf{98}, 066301 (2018)}.

\hfill (contains results presented in \Cref{chap2})
\item[{[B]}]
\textbf{M. Gharahi}, S. Mancini, and G. Ottaviani,
\textit{Fine-structure classification of multiqubit entanglement by algebraic geometry}, 
\href{https://doi.org/10.1103/PhysRevResearch.2.043003}{Phys. Rev. Research \textbf{2}, 043003 (2020)}.

\hfill (contains results presented in Chapters \ref{chap3} and \ref{chap4})
\item[{[C]}]
\textbf{M. Gharahi} and S. Mancini,
\textit{Algebraic-geometric characterization of tripartite entanglement}, 
\href{https://doi.org/10.1103/PhysRevA.104.042402}{Phys. Rev. A \textbf{104}, 042402 (2021)}.

\hfill (contains results presented in Chapters \ref{chap3} and \ref{chap5})
\item[{[D]}]
\textbf{M. Gharahi} and V. Lysikov, 
\textit{Persistent Tensors and Multiqudit Entanglement Transformation},
\href{https://doi.org/10.22331/q-2024-01-31-1238}{Quantum \textbf{8}, 1238 (2024)}.

\hfill (contains results presented in Chapter \ref{chap6})
\end{enumerate}

%%%%%%%%%%%%%%%%%%%%%%%%%%%%%%%%%%%%%%%%%%%%%%%%%%%%%
\newpage
\chapter*{List of Acronyms}
\addcontentsline{toc}{chapter}{\numberline{}Acronyms}

\begin{table}[h!]
\begin{tabular}{ll}
CPTP & Completely Positive Trace-Preserving \\
EPR & Einstein–Podolsky–Rosen \\
$\rm{GHZ}$ & Greenberger-Horne-Zeilinger \\
$\rm{GL}$ & General Linear \\
iff & if and only if \\
LOCC &  Local Operations with Classical Communication \\
LU &  Local Unitary \\
MES & Maximally Entangled State \\
POVM & Positive Operator-Valued Measure \\
$\rm{PGL}$ & Projective General Linear \\
$\rm{SL}$ & Special Linear \\
$\rm{SO}$ & Special Orthogonal \\
$\rm{SU}$ & Special Unitary \\
SLOCC &  Stochastic Local Operations with Classical Communication \\
\end{tabular}
\end{table}

%%%%%%%%%%%%%%%%%%%%%%%%%%%%%%%%%%%%%%%%%%%%%%%%%%%%%%%%%%%%%%%%%%%%%%%%%%%%%%%%%%%%%%%%%%%%%%%%%%%%%%%%%%
\pagestyle{fancy} \fancyhead[LO,RE]{}
\fancyhead[LE]{\nouppercase{\bfseries \leftmark}}
\fancyhead[RO]{\nouppercase{\bfseries \rightmark}}
%%%%%%%%%%%%%%%%%%%%%%%%%%%%%%%%%%%%%%%%%%%%%%%%%%%%%%%%%%%%%%%%%%%%%%%%%%%%%%%%%%%%%%%%%%%%%%%%%%%%%%%%%%
\chapter{Introduction}
\setcounter{page}{1}
\pagenumbering{arabic}

\epigraph{``When I consider what people generally want in calculating, I found that it always is a number.''}{\textit{Mohammad ibn Musa Kharazmi}}

\section{Overview}
Quantum entanglement is one of the most strinking features of quantum mechanics. Actually it is synonimous of nonclassical correlations between systems that can be also spatially separated. This peculiar phenomenon first described by Albert Einstein with his two postdoctoral research associates Boris Podolsky and Nathan Rosen in their seminal paper entitled ``Can Quantum-Mechanical Description of Physical Reality Be Considered Complete?'' in 1935 \cite{EPR35}. In this paper, they formulated an apparent paradox of quantum theory which is known as EPR paradox. Shortly after, Erwin Schr\"odinger coined the term ``entanglement''\footnote{Verschr\"ankung in German which means ``folding the arms''.} in his seminal paper \cite{Schrodinger35} to describe this peculiar correlation between quantum systems.

Although quantum entanglement was discovered many decades ago it has recently attracted much attention. Not only does entanglement play a central role in quantum information theory, but also it is the key ingredient in many quantum information processing and communication. In fact, quantum entanglement is a physical resource, like energy. In more recent years, it has gained interest as a fundamental resource in many quantum information protocols such as quantum cryptography and quantum key distribution \cite{BB84,Ekert91,GRTZ02}, teleportation of quantum states \cite{BBCJPW93,BPMEWZ97}, superdense coding \cite{BW92}, and quantum error-correction \cite{Shor95,CS96}. Moreover, entanglement is considered as a crucial component for speed-up in quantum computation \cite{Shor97,Grover97,BCJLPS99}. In Ref. \cite{GHSZ90}, the authors have shown that multipartite entanglement violates classical principles even stronger than bipartite entanglement as proposed by John Bell in Ref. \cite{Bell64}. Moreover, it has been established that most of the multipartite quantum states are even too entangled to be useful for certain quantum information processing tasks \cite{GFE09}. Furthermore, theoretical and experimental results based on entangled states as resource states have contributed to new technology involving quantum information\footnote{The Nobel Prize in Physics 2022 was awarded to Alain Aspect, John F. Clauser, and Anton Zeilinger ``for experiments with entangled photons, establishing the violation of Bell inequalities and pioneering quantum information science'' \cite{Nobel2022}.}.

Taking all these reasons into account, the characterization, quantification, and classification of (multipartite) entanglement are crucial milestones in quantum information theory.  For bipartite systems this was done by developing entanglement monotones \cite{HHHH09}. However extension of these to multipartite systems soon appeared quite challenging. That is why a classification of entangled states in multipartite systems was pursued on the basis of one out of the many properties satisfied by entanglement monotones, namely the invariance under Local Operation and Classical Communication (LOCC). Actually, this property is reinforced by requiring stochasticity of local operation and classical communication (SLOCC). Such an invariance property is relevant to single out states that perform (probabilistically) equally well quantum information tasks. On the grounds of group theory, SLOCC equivalence classes are orbits under the action of special linear group ${\rm{SL}}(d_1,\mathbbm{C})\times\cdots\times{\rm{SL}}(d_n,\mathbbm{C})$ on the set of $n$-partite quantum states in the tensor product Hilbert space $\mathcal{H}=\otimes_{i=1}^n\mathbbm{C}^{d_i}$. For unnormalised $n$-partite states, the number of parameters needed to describe inequivalent states is thus given by the dimension of the space of orbits, i.e., the quotient space
\begin{equation*}
\frac{\otimes_{i=1}^n\mathbbm{C}^{d_i}}{{\rm{SL}}(d_1,\mathbbm{C})\times\cdots\times{\rm{SL}}(d_n,\mathbbm{C})}\,.
\end{equation*}
\sloppy Therefore the set of equivalence classes under SLOCC depends at least on $2(\prod_{i=1}^n d_i-1-\sum_{i=1}^n(d_i^2-1))$ parameters. Finding such equivalence classes, that will provide an entanglement classification based on a finite number of entanglement families, was a long-standing open problem in quantum information theory \cite{HHHH09}.

While SLOCC classification works well for small systems like two- and three-qubit systems which feature two and six orbits, respectively, there are inﬁnitely many (actually uncountable) SLOCC classes for four (or more) qubits \cite{DVC00}. This will be even  worse for higher dimensional systems. Along finding methods that group the infinite number of SLOCC classes in a finite number of families, there have been several attempts to classify four-qubit entanglement, the case which attracted most attention \cite{VDDV02, CD07, CW07, BDDMR10, BK12, CDGZ13, GA16}, and also for $n$-qubit symmetric states \cite{BKMGLS09, RM11,Aulbach12}. Although the general case of $n$-qubit entanglement has been addressed, its classification suffers from family overlapping \cite{LL12,GM18}, or still shows an infinite number of classes \cite{GW13}. Thus, it necessitates new methods to establish a finite classification.

Formally, (pure) quantum states are rays in a Hilbert space. As a consequence, the space of states is more appropriately described by projective Hilbert space $\mathbbm{P}(\mathcal{H}_{n})$. Thus, a natural way to study entanglement of pure states is with algebraic geometry, which is the ``language'' of projective spaces. This avenue was put forward in Refs. \cite{BH01,Miyake03,BGH07,ST13}, where the authors investigated the geometry of entanglement and considered small systems (up to $\mathbbm{C}^3\otimes\mathbbm{C}^2\otimes\mathbbm{C}^2$) to lighten it. Following this, it has been recently realized the existence, for four qubit systems, of families, each including an infinite number of SLOCC classes with common properties \cite{HLT14,HLT17,SBSE17,SMKKKO18}. The framework of algebraic geometry also helped to visualize entanglement families with polytopes \cite{WDGC13, SOK14}, which would be of practical use if a finite classification existed.

It is worth noting that in the algebraic-geometric approach, the Segre, Veronese, and Pl\"{u}cker embedding maps together with their secant varieties are the key tools (to be used with distinguishable particles, bosons and fermions, respectively) \cite{HLT14, HLT17,HLT12}.

In this dissertation, by referring to Ref. \cite{GMO20}, we introduce an entanglement classification of ``generic'' $n$-qubit pure states under SLOCC that is based on a finite number of \emph{families} and \emph{subfamilies} (i.e., a fine-structure classification). We do this by employing tools of algebraic geometry that are SLOCC invariants. In particular, the families and subfamilies are identified by $k$-secants and $\ell$-multilinear ranks (hereafter $\ell$-multiranks), respectively. A $k$-secant of a variety $\mathfrak{X}\subset\mathbbm{P}(\mathcal{H}_{n})$ is the projective span of $k$ points of $\mathfrak{X}$. Geometrically, the $k$-secant variety is the zero locus of a set of polynomial equations. Physically, as the $k$-secant of a variety joins its $k$ points, it can liaise to the concept of quantum superposition. On the other hand, $\ell$-multiranks are a collection of integers which are just ranks of different matricizations of a given $n$-qubit state as an order-$n$ tensor in $\otimes^n\mathbbm{C}^2$. Actually, the $\ell$-multiranks tell us about the separability of such a state; when all of them are equal to one we are dealing with a fully separable state. Furthermore, each $k$-secant is a counterpart of the generalized Schmidt rank \cite{CDS08, CCDJW10} which is an entanglement measure. These connections made our classification also operationally meaningful.

Going beyond qubit, more information can be encoded in qudits and more robustness against noise can be achieved \cite{CMM12}. Also, in quantum cryptography, entangled qudits guarantee more security against coherent attacks than using entangled qubits \cite{CBKG02}. These facts motivate the classification of entangled states for higher than two-dimensional systems. Ref \cite{YWWF08}, has investigated the SLOCC classification of two- and three-qutrit entanglement based on the inductive method. However, this method suffers from a flaw already at the qubits level \cite{GM18}, which propagates to higher dimensional systems \cite{example}. Therefore, there is an overlap between some families of three-qutrit entanglement, and thus this approach cannot be
used to exactly identify to which family a given three-qutrit
state belongs to. In Ref. \cite{BLTV04}, the invariants of three-qutrit entanglement has been studied, while Ref. \cite{HJ16} used singularity theory to study the entanglement of pure three-qutrit states. More specifically, Refs. \cite{Nurmiev1,Nurmiev2} provided an implicit description of all three  fundamental invariants of ${\rm{SL}}(3,\mathbbm{C})^{\times{3}}$, and classified the normal forms in five families, which can also be derived as a special case of entanglement classification of three-fermions with nine single-particle states \cite{SL14}.

Following Ref. \cite{GM21} we then pursue the extendibility of the algebraic-geometric approach of Ref. \cite{GMO20} to multiqudit states using as a benchmark tripartite $\mathbbm{C}^d\otimes\mathbbm{C}^d\otimes\mathbbm{C}^d$ systems and achieving in particular a fine-structure classification of three-qutrit entanglement.

Regarding entanglement classification, there is (uncountably) infinite number of inequivalent entangled states. Therefore, a natural question is whether two given (entangled) states can be probabilistically converted to each other via SLOCC. Since quantum mechanics is inherently probabilistic, the more natural question is with what probability $p$ one can obtained a target quantum state $|\varphi\rangle$ from a source quantum state $|\psi\rangle$ via LOCC. For $p=1$, the transformation protocol is called deterministic (LOCC), and for $0<p<1$, the protocol is called probabilistic (SLOCC).

The interconversion between different resources by the SLOCC and asymptotic SLOCC is another fundamental problem in quantum information theory that we study in this dissertation by referring to Ref. \cite{GL22}. This problem encodes some of the most challenging open problems in mathematics and computer science.

Using the Schmidt rank, one can characterize the SLOCC convertibility of bipartite systems. In fact, a bipartite quantum state is SLOCC convertible to another bipartite quantum state iff the Schmidt rank of the initial state is not smaller than that of the latter (notation: $|\psi\rangle\xrightarrow[]{\text{SLOCC}}|\varphi\rangle\Leftrightarrow\rk_S(\psi)\geq\rk_S(\varphi)$). A generalization of Schmidt rank in multipartite systems is the tensor rank. Using the Dirac bra-ket notation, the tensor rank can be considered as the length of the shortest bra-ket representation of a quantum state. Another tool relevant to the SLOCC entanglement transformation is the tensor border rank (border rank, for short). The border rank of a tensor $\mathcal{T}$ is defined as the smallest $r$ such that $\mathcal{T}$ is a limit of tensors of rank $r$. Both the tensor rank and the border rank have been extensively studied in algebraic complexity theory \cite{BCS97} and algebraic geometry \cite{Landsberg}. Recently, connections have been discovered between algebraic complexity theory, algebraic geometry, asymptotic SLOCC transformations, and SLOCC equivalence. \cite{CDS08,CCDJW10,GMO20,YCGD10,CDS10,YGD14,VC15,VC17}. An important property of the tensor rank is that it is an SLOCC monotone, that is, if a source quantum state $|\psi\rangle$ can be transformed into a target quantum state $|\varphi\rangle$ via SLOCC, then the tensor rank of the source is not smaller than that of the target (notation: $|\psi\rangle\xrightarrow[]{\text{SLOCC}}|\varphi\rangle\Rightarrow\rk(\psi)\geq\rk(\varphi)$). Although in general the inverse is not necessarily true, as in Ref. \cite{CDS08} it has been shown that a $\rm{GHZ}$-equivalent state (a state in the $\rm{GHZ}$ orbit) $|\psi_{\rm{GHZ}}\rangle$ can be transformed into another state $|\varphi\rangle$ iff the tensor rank of the $\rm{GHZ}$-equivalent state is not smaller than that of the latter, i.e., $|\psi_{\rm{GHZ}}\rangle\xrightarrow[]{\text{SLOCC}}|\varphi\rangle\Leftrightarrow\rk(\psi_{\rm{GHZ}})\geq\rk(\varphi)$. On the other hand, it is well known that a $\rm{GHZ}$ state cannot be transformed into a $\rm{W}$ state by SLOCC \cite{DVC00}, as they belong to distinct entanglement classes of multiqubit states, but one can asymptotically produce a $\rm{W}$-equivalent state from a $\rm{GHZ}$-equivalent state with rate arbitrarily close to one (see Refs. \cite{GMO20,GM21,VC15} for theory and Ref. \cite{WRZ05} for an experimental way). Actually, the reason is that the tensor rank of the multiqubit $\rm{GHZ}$ state is less than the tensor rank of the multiqubit $\rm{W}$ state, but the border rank of both of them is the same (geometrically, the $\rm{W}$ state is in the orbit closure of the $\rm{GHZ}$ state; see Ref. \cite{GMO20}). This phenomenon is known as degeneration in algebraic complexity theory \cite{BCS97} and algebraic geometry \cite{Landsberg} and is related to the asymptotic SLOCC transformation in entanglement theory \cite{GMO20,GM21,VC15,VC17}. Indeed, the border rank also has the same property as the tensor rank that is SLOCC monotone, i.e., a target quantum state $|\varphi\rangle$ can be approximated from a source quantum state $|\psi\rangle$ via SLOCC, then the border rank of the source is not smaller than that of the target (notation: $|\psi\rangle\xdashrightarrow[]{\text{SLOCC}}|\varphi\rangle\Rightarrow\brk(\psi)\geq\brk(\varphi)$). Interestingly, if a target quantum state $|\varphi\rangle$ can be obtained approximated from a $\rm{GHZ}$-equivalent state $|\psi_{\rm{GHZ}}\rangle$, then the border rank of the $\rm{GHZ}$-equivalent state is not smaller than that of the target, i.e., $|\psi_{\rm{GHZ}}\rangle\xdashrightarrow[]{\text{SLOCC}}|\varphi\rangle\Leftrightarrow\brk(\psi_{\rm{GHZ}})\geq\brk(\varphi)$.

The proposed entanglement classification by algebraic geometry \cite{GMO20,GM21} can be considered as a reference to study SLOCC and asymptotic SLOCC interconversions among different resources based on tensor rank and border rank, respectively. In this regard, we introduce a new class of tensors that we call \emph{persistent tensors} and have constructed a lower bound for their tensor rank. We present three specific families of minimum-rank persistent tensors that are different generalizations of multiqubit $\rm{W}$ state within multiqudit systems. We further cover SLOCC convertibility of multipartite systems considering several families of persistent tensors. Furthermore, we show that the obtained tensor rank lower bound can be extended to direct sums with persistent summands and to even more general combinations of tensors, which we call \emph{block pyramidal tensors} \cite{GL22}.

\section{Outline}
%\subparagraph*{Structure and aims of the dissertation.}
In this dissertation, we address two central problems in quantum information theory. The first one is the entanglement classification of pure multipartite states under SLOCC. The second problem is the convertibility between different multipartite entangled states by using SLOCC and asymptotic SLOCC.

The main goals and results of the dissertation are presented in the five next chapters. Hereafter we shortly describe the content of each of them.

\subsubsection*{Chapter 2: Quantum Entanglement}
This chapter contains the mathematical framework quantum mechanics as well as basic definitions of some concepts in entanglement theory. Although our purpose is entanglement classification of pure multipartite states, we have also extended the definitions to mixed states. Here, we have started by the definition of entanglement in pure and mixed states in biparite systems and have extended it to multipartite one. Then, we have introduced most famous quantum local operations, namely, Local Unitay (LU), Local Operation and Classical Communication (LOCC), and Stochastic LOCC. This followed by the definition of entanglement monotone and introducing some important entanglement measures. Finally, we have finalized the chapter by presenting the SLOCC classification of bipartite syatems and pure and mixed three-qubit states.

In this chapter, we have used some materials from Refs. \cite{HHHH09,Nielsen-Chuang,Mancini-Winter,Peres}.

\subsubsection*{Chapter 3: Algebraic Geometry}
In order to present algebro-geometric tools to solve our problems in quantum information theory, we have presented some basic definitions in this chapter. We introduced the most important concepts in affine and projective geometries. Then, using the definition of morphism between algebraic varieties we have introduced two important morphisms in algebraic geometry, namely, Veronese and Segre embeddings. Finally, we have introduced algebro-geometric tools that are SLOCC invariants, namely, tensor rank, border rank, $\ell$-multirank, and $k$-secant varieties of Segre variety.

For the definitions, lemmas, and theorems presented in this chapter we have used materials from Refs. \cite{Landsberg,Harris,Hartshorne,CLO15,Shafarevich,Holme}.

\subsubsection*{Chapters 4 \& 5: Fine-Structure Classification of Multiqubit \& Tripartite Entanglement}
In these chapters, we focus on the first problem that is multipartite entanglement classification. To this end, we use $k$-secant varieties and $\ell$-multiranks as the SLOCC invariants and we present the entanglement classification algorithm based on them. Regarding this algorithm, one is able to group orbits (classes) into ﬁnite number of families and subfamilies. Then, we study in details two- to five-qubit entanglement (in Chapter \ref{chap4}) and three-qutrit entanglement (in Chapter \ref{chap5}) achieving a fine-structure classification as relevant examples. Several issues of these cases will be generalized to multiqudit and multipartite systems.

\subsubsection*{Chapter 6: Persistent Tensors}
In this chapter, we introduce a new class of tensors in $\otimes_{i=1}^n\mathbbm{C}^{d_i}$ that we call \emph{persistent tensors} and construct a lower bound for their tensor rank. We present three specific families of tensors in this class which the lower bound is tight, and therefore, their corresponding $n$-qudit states can be seen as different generalizations of the multiqubit $\rm{W}$ state within multiqudit systems. Then, we provide a chain of degenerations among the families that can be used to study the entanglement transformation between them. In addition, we prove that the border rank of these three families of minimal-rank persistent tensors is equal to $d$.
This will reveal the fact that different generalizations of multiqubit $\rm{W}$ state within multiqudit systems are geometrically in the orbit closure of multiqudit $\rm{GHZ}$ states. Consequently, we show that a multiqudit $\rm{GHZ}$-equivalent state can can be transformed into each one of the generalizations of $\rm{W}$ state via asymptotic SLOCC with rate one. Furthermore, we show that the obtained tensor rank lower bound can be extended to direct sums with persistent summands and to even more general combinations of tensors, which we call \emph{block pyramidal tensors}. Accordingly, we show that the tensor rank is multiplicative under the Kronecker and the tensor product of minimal-rank persistent tensors with the $\rm{GHZ}$ tensor, and hence we answer an open question posed in Ref. \cite{CF18}.

\subsubsection*{Chapter 7: Summary and Outlook}
In this chapter, we summarize results obtained in the dissertation and outlines the potential applications and open problems for further research.

%%%%%%%%%%%%%%%%%%%%%%%%%%%%%%%%%%%%%%%%%%%%%%%%%%%%%%%%%%%%%%%%%%%%%%%%%%%%%%%%%%%%%%%%%%%%%%%%%%%%%%%%%%
\chapter{Quantum Entanglement}\label{chap2}

%\begin{quotation}
%``I never wish to be easily defined.''
%\begin{flushright}
%— Franz Kafka
%\end{flushright}
%\end{quotation}
\epigraph{``I never wish to be easily defined.''}{\textit{Franz Kafka}}

In this chapter, the mathematical formulation and the fundamental concepts of quantum information theory are introduced. We start with the mathematical description of single quantum systems residing in Hilbert spaces. Then, it will be shifted to the mathematical description of composite quantum systems where entanglement can be manifested. The main focus of this chapter lies on the description of entangled systems and the mathematical tools needed to characterize and classify entanglement in bipartite and multipartite quantum systems. In this chapter, we have used some introductory materials and notions from Refs. \cite{HHHH09,Nielsen-Chuang,Mancini-Winter,Peres}. The last part of this chapter is based on the Ref. \cite{GM18}.

\section{Hilbert space: space of quantum states}\label{sec.2.1}

The mathematical formulation of quantum mechanics that permit a rigorous description of quantum mechanics is due to John von Neumann \cite{vonNeumann}. In this formalism, the possible states\footnote{By state we mean something that determines the values of observables.} that describe a quantum mechanical system are represented by vectors, called state vectors, residing in a complex separable Hilbert space\footnote{A Hilbert space is separable iff it admits a countable orthonormal basis.}. While any quantum system is identified with a finite or infinite dimensional Hilbert space, in this dissertation, we will be only concerned with the quantum systems having finite dimensional Hilbert space. The Hilbert space representing the possible states of a $d$-dimensional quantum state is then $\mathcal{H}=\mathbbm{C}^d$ and $\dim\mathcal{H}=d$. In the following we introduce the notations starting from the definition of the Hilbert space.

\begin{definition}[Hilbert space]
A Hilbert space, denoted by $\mathcal{H}$, is a complete inner product space.
\end{definition}

In quantum mechanics, we represent elements of the Hilbert space by Dirac bra-ket notation, firstly introduced by Paul Dirac \cite{Dirac}, over the field of complex numbers $\mathbbm{C}$. A ket and a bra are of the form $|\psi\rangle$ and $\langle\varphi|$ that denote a vector in the Hilbert space $\mathcal{H}$ and a linear form (also known as a one-form, a covector, or linear functional) $\langle\varphi|\colon\mathcal{H}\to\mathbbm{C}$, respectively. Actually, the linear form $\langle\varphi|$ is a covector to $|\varphi\rangle$, and the set of all covectors forms the dual Hilbert space $\mathcal{H}^{\vee}$. Hence, the inner product (or scalar product) is defined as a sesquilinear form:
\begin{equation}
\langle\cdot|\cdot\rangle\colon\mathcal{H}\times\mathcal{H}\to\mathbbm{C}\,,
\end{equation}
that is antilinear (conjugate-linear) in the first argument and linear in their second, i.e.,
\begin{align}\nonumber
& \langle \alpha\varphi_1+\beta\varphi_2|\psi\rangle=\alpha^{*}\langle\varphi_1|\psi\rangle+\beta^{*}\langle\varphi_2|\psi\rangle\,, \\
& \langle\varphi|\alpha\psi_1+\beta\psi_2\rangle=\alpha\langle\varphi|\psi_1\rangle+\beta\langle\varphi|\psi_2\rangle\,,
\end{align}
where $\alpha,\beta\in\mathbbm{C}$ and $*$ denotes the complex conjugation. More generally, the inner product on any complex Hilbert space is a Hermitian form, that is $\langle\varphi|\psi\rangle=\langle\psi|\varphi\rangle^*$. Subsequently, the inner product of a vector state $|\psi\rangle$ and its dual complement $\langle\psi|=(|\psi\rangle)^{\dagger}$, where $\dagger$ indicates the conjugate transpose (also known as Hermitian transposition), is positive semidefinite, i.e., $\langle\psi|\psi\rangle\geq0$, and the equality holds iff the vector state is the zero vector. Concerning this property, a definition of the norm directly emerges as the square root of the inner product
\begin{equation}
\parallel|\psi\rangle\parallel\,=\sqrt{\langle\psi|\psi\rangle}\,.
\end{equation}

Finally, completeness is satisfied if every Cauchy sequence of vectors in $\mathcal{H}$ has a limit vector in $\mathcal{H}$. In other words, if $\{|\psi_i\rangle\}_{i=1}^{\infty}$ is a Cauchy sequence, then there exists a vector $|\psi\rangle\in\mathcal{H}$ such that
\begin{equation}
\lim_{i\to\infty}\parallel|\psi_i\rangle-|\psi\rangle\parallel\,=0\,.
\end{equation}

It is worth noting that the inner product allows to define geometric measures as the distance between elements of Hilbert space.

For the finite-dimentional Hilbert spaces we have the following proposition.
\begin{proposition}
Every ﬁnite-dimensional complex (or real) Hilbert space is complete with respect to the norm induced by its inner product.
\end{proposition}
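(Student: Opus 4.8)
The plan is to reduce the completeness of an abstract finite-dimensional inner product space to the well-known completeness of the scalar field $\mathbbm{C}$ (or $\mathbbm{R}$) by passing to coordinates with respect to an orthonormal basis. First I would fix an orthonormal basis $\{|e_1\rangle,\dots,|e_n\rangle\}$ of $\mathcal{H}$, where $n=\dim\mathcal{H}$; such a basis exists because one may apply the Gram--Schmidt procedure to any basis of the finite-dimensional space. The key structural fact to extract is that, writing $|\psi\rangle=\sum_{i=1}^{n}c_i|e_i\rangle$ with $c_i=\langle e_i|\psi\rangle$, the Pythagorean identity gives $\||\psi\rangle\|^2=\sum_{i=1}^n|c_i|^2$. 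In other words, the norm induced by the inner product agrees exactly with the Euclidean norm of the coordinate vector $(c_1,\dots,c_n)$, so the whole problem is transported to $\mathbbm{C}^n$.

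Next I would take an arbitrary Cauchy sequence $\{|\psi^{(k)}\rangle\}_{k}$ in $\mathcal{H}$ and expand each term as $|\psi^{(k)}\rangle=\sum_{i=1}^n c_i^{(k)}|e_i\rangle$. Applying the identity above to the differences yields
\[
\||\psi^{(k)}\rangle-|\psi^{(m)}\rangle\|^2=\sum_{i=1}^n|c_i^{(k)}-c_i^{(m)}|^2\geq|c_j^{(k)}-c_j^{(m)}|^2
\]
for every fixed index $j$. Hence each coordinate sequence $\{c_j^{(k)}\}_k$ is Cauchy in $\mathbbm{C}$ (respectively $\mathbbm{R}$). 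Since the scalar field is complete, each such sequence converges, say $c_j^{(k)}\to c_j$; I then define the candidate limit $|\psi\rangle:=\sum_{i=1}^n c_i|e_i\rangle$, which manifestly lies in $\mathcal{H}$.

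Finally, to confirm that this candidate is indeed the limit in the induced norm, I would invoke the Pythagorean identity once more to obtain $\||\psi^{(k)}\rangle-|\psi\rangle\|^2=\sum_{i=1}^n|c_i^{(k)}-c_i|^2$. As this is a fixed finite sum in which each of the $n$ summands tends to zero, the right-hand side tends to zero as $k\to\infty$, so $|\psi^{(k)}\rangle\to|\psi\rangle$ in $\mathcal{H}$. Because the Cauchy sequence was arbitrary, $\mathcal{H}$ is complete.

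As for the main obstacle, there is no deep difficulty here: the statement is essentially a repackaging of the completeness of $\mathbbm{C}$. The one point that must be handled with care is the passage between norm convergence and coordinatewise convergence, which is precisely what the Pythagorean identity secures. The finiteness of the dimension is essential, since it is what lets me interchange the (now finite) sum with the limit and guarantees that the two modes of convergence coincide; in infinite dimensions this step fails in general, which is exactly why completeness there becomes a genuine hypothesis rather than a theorem.
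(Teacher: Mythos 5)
Your proof is correct and complete: reducing to the completeness of the scalar field via an orthonormal basis and the Pythagorean identity is the standard argument, and each step (Gram--Schmidt, coordinatewise Cauchyness, convergence of the finite sum) is handled properly. The paper states this proposition without proof, so there is nothing to compare against; your argument fills that gap in the expected way.
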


Therefore, a Hilbert space can be defined as a complete metric space (a Banach space) with respect to the distance function induced by the inner product.

\subsubsection{Composite System}\label{subsub2.1_1}

In quantum information theory, we frequently deal with quantum systems consisting of several subsystems, called composite systems (or multipartite systems). The Hilbert space of an $n$-partite quantum system is the tensor product of the Hilbert spaces of each individual subsystem, that is
\begin{equation}
\mathcal{H}=\bigotimes_{i=1}^n\mathcal{H}_i\,.
\end{equation}

\begin{definition}[Tensor product Hilbert space]
The tensor product Hilbert space $\mathcal{H}_1\otimes\mathcal{H}_2$ of two Hilbert spaces $\mathcal{H}_1$ and $\mathcal{H}_2$ is a Hilbert space to which is associated a bilinear map $\mathcal{H}_1\times\mathcal{H}_2\to\mathcal{H}_1\otimes\mathcal{H}_2$ that maps a pair $(|\psi_1\rangle,|\psi_2\rangle)$,  where $|\psi_1\rangle\in\mathcal{H}_1$ and $|\psi_2\rangle\in\mathcal{H}_2$, to an element of $\mathcal{H}_1\otimes\mathcal{H}_2$ which is denoted by $|\psi_1\rangle\otimes|\psi_2\rangle\equiv|\psi_1\otimes\psi_2\rangle$. The inner product of the tensor product Hilbert space is defined by
\begin{equation}
\langle\varphi_1\otimes\varphi_2|\psi_1\otimes\psi_2\rangle:=\langle\varphi_1|\psi_1\rangle\langle\varphi_2|\psi_2\rangle\,,
\end{equation}
for all $|\varphi_1\rangle,|\psi_1\rangle\in\mathcal{H}_1$ and $|\varphi_2\rangle,|\psi_2\rangle\in\mathcal{H}_2$.
\end{definition}

Let $\mathcal{H}_1$ and $\mathcal{H}_2$ be two Hilbert spaces of dimensions $d_1$ and $d_2$, respectively. The tensor product Hilbert space $\mathcal{H}_1\otimes\mathcal{H}_2$ is a Hilbert space which has as a basis the set of all $|e_i\rangle\otimes|f_j\rangle$ where $\{|e_i\rangle\mid i\in\mathbbm{Z}_{d_1}\}$ and $\{|f_j\rangle\mid j\in\mathbbm{Z}_{d_2}\}$ are bases of $\mathcal{H}_1$ and $\mathcal{H}_2$, respectively. Therefore, the dimension of the tensor product Hilbert space of an $n$-partite quantum system, $\mathcal{H}=\otimes_{i=1}^n\mathcal{H}_i$, is equal to the product of dimensions of individual Hilbert spaces, i.e., $\dim\mathcal{H}= \prod_{i=1}^n\dim\mathcal{H}_i$.
 
\subsection{Pure quantum states}\label{subsec.2.1.1}
In classical mechanics, the pure state of the system is represented by a point in a real vector space that is called phase space. The dimension of phase space is defined by the number of the degrees of freedom of the system.

The case of quantum mechanics is more subtle. Instead of the real finite dimensional phase space we have a finite dimensional complex separable Hilbert space. Mathematically, a $d$-dimensional pure quantum state, which describes an isolated quantum system, is represented by a norm-one vector belonging to a Hilbert space $\mathcal{H}=\mathbbm{C}^d$. A pure quantum state $|\psi\rangle$ can be expressed as a linear combination of some orthonormal basis $\{|e_i\rangle\mid i\in\mathbbm{Z}_d\}$ of the Hilbert space $\mathbbm{C}^d$:
\begin{equation}\label{pure-state}
|\psi\rangle=\sum_{i=0}^{d-1}\mathbf{c}_i|e_i\rangle\,,
\end{equation}
where due to the normalization, the complex-valued coefficients $\mathbf{c}_i$ satisfy the following constraint, $\sum_{i=0}^{d-1}|\mathbf{c}_i|^2=1$. Thus the set of all pure states corresponds to the unit sphere in the Hilbert space. We usually use the standard basis (or canonical basis) in linear algebra, i.e., $\{|i\rangle\mid i\in\mathbbm{Z}_d\}$, which is called computational basis in quantum information and computation.

Pure states are also known as state vectors or wave functions, the latter term applying particularly when they are represented as functions of position or momentum.

In case of composite systems containing $n$ quantum subsystems, each with a respective Hilbert space $\mathcal{H}_j=\mathbbm{C}^{d_j}$, the associated state vector describing a pure quantum state is an element in the composite Hilbert space, denoted by the tensor product $\mathcal{H}_1\otimes\cdots\otimes\mathcal{H}_n$, i.e.,
\begin{equation}\label{pure-multi}
|\psi\rangle=\sum_{j=1}^{n}\sum_{i_j\in\mathbbm{Z}_{d_j}}\mathbf{c}_{i_1\cdots i_n}|i_1\rangle\otimes\cdots\otimes|i_n\rangle\,.
\end{equation}
It is worth remarking that often for the sake of simplicity, we will omit the tensor product symbol and merge ket vectors into one, i.e.,
\begin{equation}
|i_1\rangle\otimes\cdots\otimes|i_n\rangle\equiv|i_1\rangle\cdots|i_n\rangle\equiv|i_1\cdots i_n\rangle\,.
\end{equation}

\subsection{Mixed quantum states}\label{subsec.2.1.2}

The description of a quantum system by its state vector is possible only if preparation of the quantum system is fully known. In practice, such perfect information is mostly not available to specify the state vector of a quantum system but rather we know that with some probabilities $p_i$ the quantum system is in a normalized state vector $|\psi_i\rangle$ (see Fig. \ref{fig:MixedState}). These quantum states that incorporate the incomplete knowledge about the quantum system  are called mixed quantum states.

\begin{figure}[t]
\center{\includegraphics[width=5cm]{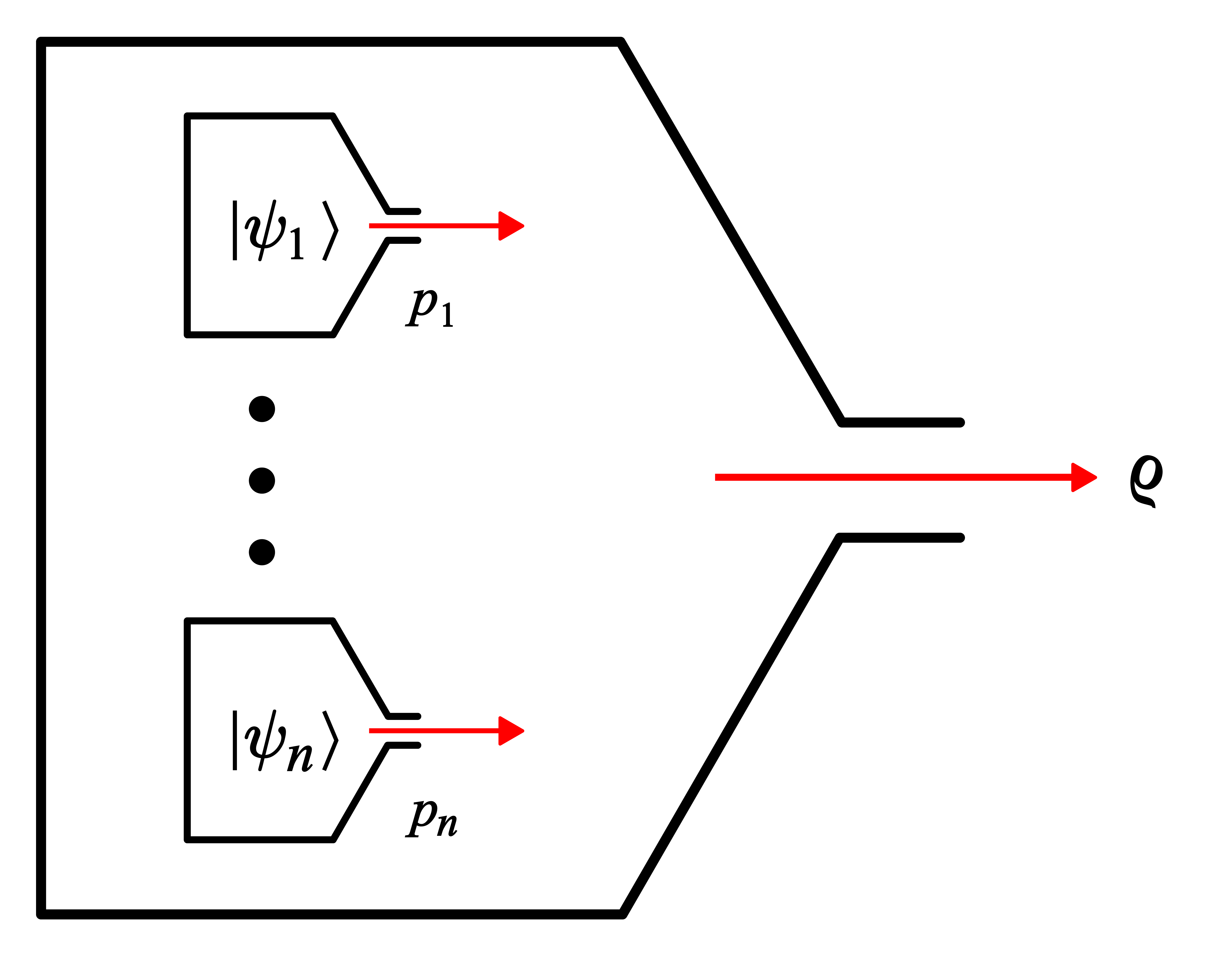}}
\caption[Mixed quantum state]{\label{fig:MixedState} Representation of a mixed quantum state as a statistical mixture with ensemble $\{p_i,|\psi_i\rangle\}$.}
\end{figure}

Mathematically, a mixed quantum state, consisting of several possible pure states $|\psi_i\rangle\in\mathcal{H}$, each with probability $p_i$ of being occupied, is described by a density matrix of the form:
\begin{equation}\label{MixedState}
\varrho=\sum_{i}p_i|\psi_i\rangle\langle\psi_i|\,,
\end{equation}
which is an element of $\mathcal{B}(\mathcal{H})=\mathcal{H}\otimes\mathcal{H}^{\vee}\cong{\rm{End}}(\mathcal{H})$, the space of endomorphisms of the Hilbert space $\mathcal{H}$. In other words, the density matrix in Eq. \eqref{MixedState} is a square matrix of size equal to the dimension of the Hilbert space $\mathcal{H}$. Since $p_i$'s are probabilities, they are non-negative real numbers that sum to one, i.e., $p_i\in\mathbbm{R}^+$ and $\sum_ip_i=1$. It follows that the density matrix $\varrho$ is normalized, i.e., $\tr(\varrho)=1$ where $\tr$ denotes the trace. Moreover, since every matrix $|\psi_i\rangle\langle\psi_i|$ is Hermitian positive semi-definitive so is the density matrix $\varrho$, i.e., $\varrho^\dagger=\varrho$ and $\langle\phi|\varrho|\phi\rangle\geq0$ for all $|\phi\rangle\in\mathcal{H}$.

It is worth noting that the decomposition of a density matrix in Eq. \eqref{MixedState} in a statistical ensemble of pure states is not unique since the vector states $|\psi_i\rangle$ need be neither orthogonal nor linearly independent.

Regarding Eq. \eqref{MixedState}, the density matrix can be seen as a weighted sum of projectors on all pure states within the statistical ensemble $\{p_i,|\psi_i\rangle\}$. So it is easy to see that this general definition of the density matrix also holds for pure states, for which we will only have one vector state $|\psi_i\rangle$ with $p_i=1$, i.e., the density matrix has rank one and thus $\varrho=|\psi_i\rangle\langle\psi_i|$ and $\varrho^2=\varrho$.

Mathematically, pure and mixed state can be distinguished by computing $\tr(\varrho^2)$ where $\varrho$ is the density matrix of the given state:
\begin{align}\nonumber
& \text{if}~\tr(\varrho^2)=1\quad\Leftrightarrow\quad\varrho\,~\text{is pure,} \\ \label{mixedness}
& \text{if}~\tr(\varrho^2)<1\quad\Leftrightarrow\quad\varrho\,~\text{is mixed.}
\end{align}
The minimum value of $\tr(\varrho^2)$ is attained when the density matrix is proportional to the identity matrix corresponding a quantum state called maximally mixed.

In summary, any trace-one Hermitian positive semidefinite matrix is a density matrix which describes a (pure or mixed) state of a quantum system. Therefore, the set of all quantum states is a closed convex set with pure states on its boundary.

\subsubsection{Reduced density matrix}\label{2.1.2-1}
As we mentioned before the space state of composite systems is obtained by the tensor product of its subsystems Hilbert spaces. Suppose now that, given a multipartite state, we are interested to have information about only one or some parts of the system. Thus, we need an operation that is somehow contrary to the tensor product. This operation is called the partial trace. In other word, the information we are interested can be found by taking the trace over the subspaces of the Hilbert space that represent subsystems we are not interested in. While the trace is a scalar-valued function on operators, the partial trace is an operator-valued function. For example, if we have a bipartite quantum system $\varrho_{AB}$, consisting of the subsystems $A$ and $B$, we can obtain a state that has information about subsytem $B$ by taking the partial trace over subsystem $A$. It is called reduced density matrix:
\begin{equation}\label{partial-trace}
\varrho_{B}=\tr_{A}(\varrho_{AB})=\sum_{i=0}^{d_A-1}\langle i|\varrho_{AB}|i\rangle\,.
\end{equation}

\subsection{Observables}\label{subsec.Observables}
In general, an operator $\hat{O}$ is a linear map that takes a quantum state $|\psi\rangle$ and produces another quantum state $\hat{O}|\psi\rangle$ which, possibly, will not be normalized\footnote{In the rest of the thesis, where there is no ambiguity we omit the hat notation of the operator.}. On the other hand, an observable is a measurable physical quantity. Hence, an observable is a self-adjoint operator (a Hermitian operator in the finite-dimensional case) since it has real spectrum.

%\begin{definition}[Hermitioan operator]
%The Hermitian conjugate $\hat{O}^{\dagger}$ of an operator $\hat{O}$ is defined by the requirement that for any pair of states $|\psi\rangle$ and $|\varphi\rangle$, we have
%\begin{equation}\label{Hermitian-Observable}
%\langle\varphi|\hat{O}^{\dagger}|\psi\rangle=\langle\psi|\hat{O}|\varphi\rangle^*\,.
%\end{equation}
%An operator $\hat{O}$ is called Hermitian operator if $\hat{O}^{\dagger}=\hat{O}$.
%\end{definition}

%\begin{proposition}
%The eigenvalues $\lambda_n$ of a Hermitian operator $\hat{O}$ satisfy the eigenvalue equation
%\begin{equation}
%\hat{O}|\lambda_m\rangle=\lambda_m|\lambda_m\rangle\,.
%\end{equation}
%Regarding Eq. \eqref{Hermitian-Observable}, eigenvalues of a Hermitian operator are real. Moreover, eigenvectors corresponding to different eigenvalues are orthonormal.
%\end{proposition}

%\begin{theorem}[Spectral theorem]
%If $\hat{O}$ is a Hermitian operator, then its eigenvectors provide a basis of the Hilbert space.
%\end{theorem}

\begin{definition}[Projective measurement]
A projective measurement is described by an observable $\hat{M}$ on the Hilbert space of the quantum system being measured. The observable $\hat{M}$ has a spectral decomposition
\begin{equation}
\hat{M}=\sum_mm\hat{P}_m\,,
\end{equation}
where $P_m$ is the projector onto the eigenspace of $\hat{M}$ corresponding to the eigenvalue $m$. The possible outcomes of the measurement correspond to the eigenvalues of the observable. Upon measuring the state $|\psi\rangle$, the probability of getting result $m$ is
\begin{equation}
p(m)=\langle\psi|\hat{P}_m|\psi\rangle\,.
\end{equation}
Given that outcome $m$ occurred, the state of the quantum system immediately after the measurement is
\begin{equation}
\frac{\hat{P}_m|\psi\rangle}{\sqrt{p(m)}}\,.
\end{equation}
\end{definition}
For instance, the square of the absolute value of each coefficient $\mathbf{c}_i$ in Eq. \eqref{pure-state}, corresponds to the probability of obtaining an outcome $e_i$, once the system is measured in the basis $\{|e_i\rangle\}_{i=0}^{d-1}$ which are the eigenvectors of the measurement operator. The quantum state after the measurement will be $|\psi'\rangle=|e_i\rangle$.

In quantum mechanics, an experimental setup is described by the observable $\hat{O}$ to be measured, and the state of the system is given by the density matrix $\varrho$. The probabilistic expected value of the result (measurement) of an experiment is called expected value and is defined as follows
\begin{equation}
\langle\hat{O}\rangle_{\varrho}=\tr(\varrho\hat{O})\,.
\end{equation}

\begin{corollary}\label{U1-gauge}
The global phase of a quantum state has no physical consequences, that is,
\begin{equation}
|\psi\rangle\sim{\rm{e}}^{\mathbf{i}\delta}|\psi\rangle=|{\psi}'\rangle\,.
\end{equation}
This is because the global phase $\delta$ does not affect the result of any measurement
\begin{equation}
\langle{\psi}'|\hat{O}|{\psi}'\rangle=\langle\psi|{\rm{e}}^{-\mathbf{i}\delta}\hat{O}{\rm{e}}^{\mathbf{i}\delta}|\psi\rangle=\langle\psi|\hat{O}|\psi\rangle\,.
\end{equation}
\end{corollary}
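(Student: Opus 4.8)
The plan is to establish that the phase factor ${\rm{e}}^{\mathbf{i}\delta}$ leaves invariant every quantity that quantum mechanics declares physically accessible, namely the expectation values $\tr(\varrho\hat{O})$ of observables and the outcome probabilities $p(m)=\langle\psi|\hat{P}_m|\psi\rangle$ of projective measurements. Since two states that yield identical statistics for all measurements are operationally indistinguishable, this is exactly what justifies the equivalence relation $|\psi\rangle\sim|\psi'\rangle$ with $|\psi'\rangle={\rm{e}}^{\mathbf{i}\delta}|\psi\rangle$.

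The cleanest route is to pass to the density matrix and to observe that every prediction of the theory is a function of it. Writing $|\psi'\rangle={\rm{e}}^{\mathbf{i}\delta}|\psi\rangle$, the corresponding bra is $\langle\psi'|=\langle\psi|{\rm{e}}^{-\mathbf{i}\delta}$ by the conjugate-linearity of the inner product in its first argument. Hence the associated density matrix satisfies $\varrho'=|\psi'\rangle\langle\psi'|={\rm{e}}^{\mathbf{i}\delta}{\rm{e}}^{-\mathbf{i}\delta}|\psi\rangle\langle\psi|=|\psi\rangle\langle\psi|=\varrho$. Because $\langle\hat{O}\rangle_{\varrho}=\tr(\varrho\hat{O})$ and, more generally, all measurement statistics are computed from $\varrho$ alone, the two states produce identical predictions.

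Equivalently, one may argue directly at the level of a single observable, as indicated in the statement: using $\langle\psi'|=\langle\psi|{\rm{e}}^{-\mathbf{i}\delta}$ together with the fact that the scalar ${\rm{e}}^{\mathbf{i}\delta}$ commutes with $\hat{O}$, one obtains $\langle\psi'|\hat{O}|\psi'\rangle={\rm{e}}^{-\mathbf{i}\delta}{\rm{e}}^{\mathbf{i}\delta}\langle\psi|\hat{O}|\psi\rangle=\langle\psi|\hat{O}|\psi\rangle$. The same cancellation applied to each projector $\hat{P}_m$ of the \Cref{U1-gauge} measurement gives $\langle\psi'|\hat{P}_m|\psi'\rangle=\langle\psi|\hat{P}_m|\psi\rangle$, so every outcome probability is preserved; moreover the post-measurement state $\hat{P}_m|\psi'\rangle/\sqrt{p(m)}$ differs from $\hat{P}_m|\psi\rangle/\sqrt{p(m)}$ only by the same unobservable phase, so the equivalence is stable under measurement.

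The argument is essentially a one-line cancellation, so there is no serious obstacle; the only points requiring care are the antilinearity used to form the bra, which supplies the conjugate factor ${\rm{e}}^{-\mathbf{i}\delta}$, and the quantifier structure, since the invariance must be verified for every observable and every measurement outcome, not merely for one, in order to conclude genuine physical indistinguishability rather than the equality of a single expectation value.
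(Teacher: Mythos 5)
Your proof is correct and follows essentially the same route as the paper: the paper's entire justification is the one-line cancellation $\langle\psi'|\hat{O}|\psi'\rangle=\langle\psi|{\rm{e}}^{-\mathbf{i}\delta}\hat{O}{\rm{e}}^{\mathbf{i}\delta}|\psi\rangle=\langle\psi|\hat{O}|\psi\rangle$, which you reproduce, correctly attributing the conjugate factor to the antilinearity of the bra. Your additional observations — that the density matrix $\varrho'=\varrho$ is unchanged, that all projective-measurement probabilities are preserved, and that the post-measurement state inherits only the same unobservable phase — are sound elaborations but not a different method.
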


\subsection{Qubits}\label{subsec.2.1.4}
In classical information theory, bit, contracted from binary digit, is the most basic unit of classical information. By its name, a bit is commonly represented as either $0$ or $1$. It describes a logical state with one of these two possible values. A bit can be described by a classical system with two independent physical states. These two states are connected since the classical system can hold a maximum of one bit of information. In physics, an observable is a physical quantity that can be measured.

Analogously, in quantum information theory, qubit is the fundamental unit of quantum information. The name of qubit, coined by Benjamin Schumacher \cite{Schumacher95}, is a portmanteau of quantum bit. A qubit is a two-level quantum-mechanical system which is the simplest non-trivial quantum system. While a bit is always in precisely one of two states, i.e., it can be either $0$ or $1$, the general state of a qubit can be in a superposition of both states simultaneously, that is a fundamental property of quantum mechanics. Any two-level quantum system can be used to provide a physical implementation of a qubit. Following are the important physical implementations of qubit systems:
\begin{enumerate}
\item[-] The orientation of a spin-half particle (spin up $|\!\uparrow\rangle$ and spin down $|\!\downarrow\rangle$). \\ \vspace{-4mm}
\item[-] The polarization of a photon (horizontal polarization $|\!\leftrightarrow\rangle$ and vertical polarization $|\!\updownarrow\rangle$). \\ \vspace{-4mm}
\item[-] A pair of electronic energy levels in an atom, ion, or quantum dot (ground state $|\texttt{g}\rangle$ and excited state $|\texttt{e}\rangle$).
\end{enumerate}

\subsubsection*{Qubit pure states}\label{2.1.3-1}
\begin{definition}[Qubit]
Any two-level quantum state which is an element of two-dimensional Hilbert space $\mathcal{H}=\mathbbm{C}^2$ is a qubit state. Therefore, a qubit pure state is represented by
\begin{equation}\label{pure-qubit}
|\psi\rangle =\alpha|0\rangle+\beta|1\rangle\,,
\end{equation}
where $\alpha,\beta\in\mathbbm{C}$, $|\alpha|^2+|\beta|^2=1$, and
\begin{equation}
|0\rangle=\begin{pmatrix}
1\\0 \end{pmatrix}\,, \qquad \text{and} \qquad |1\rangle=\begin{pmatrix}
0\\1 \end{pmatrix}\,,
\end{equation}
are two orthonormal basis vectors, which form what is known as the computational basis.
\end{definition}
Since $|\alpha|^2+|\beta|^2=1$, and because the state does not care about a global phase change, we can use the following useful parameterization for the amplitudes of the qubit sate: 
\[
\alpha=\cos\frac{\theta}{2} \qquad \text{and} \qquad \beta={\rm{e}}^{\mathbf{i}\varphi}\sin\frac{\theta}{2}\,,
\]
where $\theta\in[0,\pi]$ and $\varphi\in[0,2\pi)$ that yield\footnote{In this parameterization, the global phase is omitted (see \cref{U1-gauge}).}
\begin{equation}\label{qubit-polar}
|\psi\rangle=\cos\frac{\theta}{2}|0\rangle+{\rm{e}}^{\mathbf{i}\varphi}\sin\frac{\theta}{2}|1\rangle\,.
\end{equation}

An $n$-qubit pure state $|\psi\rangle\in\otimes^n\mathbbm{C}^2$ can be represented as follows
\begin{equation}\label{n-qubit-pure}
|\psi\rangle=\sum_{j=1}^{n}\sum_{i_j\in\mathbbm{Z}_2}\mathbf{c}_{i_1\cdots i_n}|i_1\cdots i_n\rangle\equiv\sum_{i\in\{0,1\}^n}\mathfrak{c}_i|i\rangle\,.
\end{equation}

We can generalize the concept of qubit to the higher dimensional Hilbert spaces.
\begin{definition}[Qudit]
Any element of $d$-dimensional Hilbert space $\mathcal{H}=\mathbbm{C}^d$ is called a qudit ($d$-level quantum state) which is represented by
\begin{equation}\label{pure-qudit}
|\psi\rangle =\alpha_0|0\rangle+\alpha_1|1\rangle+\cdots+\alpha_{d-1}|d-1\rangle\,,
\end{equation}
where $\alpha_i\in\mathbbm{C}$, $\sum_{i=0}^{d-1}|\alpha_i|^2=1$, and $\{|i\rangle\mid i\in\mathbbm{Z}_d\}$ is the computational basis.
\end{definition}

\subsubsection*{Qubit mixed states}\label{2.1.3-2}
In order to describe a qubit, it is convenient to treat a qubit as a spin-half particle and to introduce the Pauli operators. Pauli matrices are defined as follows
\begin{equation}\label{Pauli-Mat}
\sigma_1=\begin{pmatrix}
0 & 1 \\
1 & 0
\end{pmatrix}\,, \quad
\sigma_2=\begin{pmatrix}
0 & -\mathbf{i} \\
\mathbf{i} & 0
\end{pmatrix}\,, \quad
\sigma_3=\begin{pmatrix}
1 & 0 \\
0 & -1
\end{pmatrix}\,.
\end{equation}
These matrices together with the identity matrix $\mathbb{1}_2$ (sometimes considered as the zeroth Pauli matrix $\sigma_0$) form a basis for the $4$-dimensional real vector space of $2\times2$  Hermitian matrices. This means that any $2\times2$ Hermitian matrix can be written in a unique way as a linear combination of Pauli matrices, with all coefficients being real numbers.

\begin{definition}
Any trace-one Hermitian positive semidefinite matrix of size $2\times2$ is a qubit state. Therefore, a qubit mixed state can be written as follows
\begin{equation}\label{qubit-mixed}
\varrho=\frac{1}{2}(\mathbb{1}_2+\vec{r}\cdot\vec{\sigma})\,,
\end{equation}
where $\vec{r}=(r_1,r_2,r_3)\in\mathbbm{R}^3$, and $\vec{\sigma}=(\sigma_1,\sigma_2,\sigma_3)^{\rm{T}}$ ($\rm{T}$ denotes transposition).
\end{definition}

If we calculate $\tr(\varrho^2)$ for the Eq. \eqref{qubit-mixed}, we will find the following relations
\begin{equation}\label{Bloch-vec}
\begin{cases}
|\vec{r}|=1 \qquad \varrho\,~\text{is pure,}\\
|\vec{r}|<1 \qquad \varrho\,~\text{is mixed.}
\end{cases}
\end{equation} 

\begin{remark}
Using Pauli matrices in Eq. \eqref{Pauli-Mat}, the matrices $\{\mathbf{i}\sigma_i\}_{i=1}^3$ are the generators of the group ${\rm{SU}}(2)$.
\end{remark}

\begin{definition}
Any trace-one Hermitian positive semidefinite matrix of size $d\times d$ is a qudit state which can be expressed as follows \cite{HE81}
\begin{equation}\label{qudit-mixed}
\varrho=\frac{1}{d}(\mathbb{1}_{d}+\vec{r}\cdot\vec{\mathfrak{s}})\,,
\end{equation}
where $\vec{r}=(r_1,\ldots,r_{d^2-1})\in\mathbbm{R}^{d^2-1}$, with $r_i=\langle\mathfrak{s}_i\rangle=\tr(\varrho\,\mathfrak{s}_i)$, and $\vec{\mathfrak{s}}=(\mathfrak{s}_1,\ldots,\mathfrak{s}_{d^2-1})^{\rm{T}}$, with $\{\mathbf{i}\mathfrak{s}_i\}_{i=1}^{d^2-1}$ as the generators of the group ${\rm{SU}}(d)$.
\end{definition}

By calculating $\tr(\varrho^2)$ for the Eq. \eqref{qudit-mixed} we will have \cite{BK03,Kimura03}
\begin{equation}
\begin{cases}
|\vec{r}|=\sqrt{d(d-1)/2} \qquad \varrho\,~\text{is pure,}\\
|\vec{r}|<\sqrt{d(d-1)/2} \qquad \varrho\,~\text{is mixed.}
\end{cases}
\end{equation}

For $n$-qubit mixed states one can consider the following basis for the state space $\otimes^n(\mathbbm{C}^2\otimes{\mathbbm{C}^2}^{\vee})$
\begin{equation}\label{Pauli-n-qubit}
\big\{\mathfrak{s}_k\big\}_{k=0}^{4^n-1}=\bigotimes_{j=1}^n\sigma_{i_j}\,,
\end{equation}
where $\mathfrak{s}_0=\mathbb{1}_{d}$ and for any $j$, $i_j\in\mathbbm{Z}_4$.

As an example, any two-qubit mixed state can be written in Fano form \cite{Fano83}
\begin{equation}
\varrho=\frac{1}{4}\big(\mathbb{1}_4+\sum_{i=1}^3\alpha_i\sigma_i\otimes\mathbb{1}_2+\sum_{i=1}^3\beta_i\mathbb{1}_2\otimes\sigma_i+\sum_{i,j=1}^3\gamma_{ij}\sigma_i\otimes\sigma_j\big)\,.
\end{equation}
Here $\vec{\alpha}$ and $\vec{\beta}$ are Bloch vectors of the partially reduced states and $\gamma_{ij}$ is a real $3\times3$ matrix describing the correlations between the two subsystems. If $\gamma=0$, then the state is separable, but the reverse is not always true.

\section{Entangled states}\label{sec.2.2}
Entanglement is a non-classical correlation between two or more quantum subsystems with the property that the state of each individual subsystem cannot be described independently of the states of the other subsystems. Therefore, an entangled state provides complete information about the system as a whole but not about the subsystems. If we have no information about the subsystems, the entangled state is maximally entangled. It is also a central element in quantum information theory. However, it is not easy to give a precise and comprehensive definition of entanglement other than that it is a property of entangled states. Actually, it is simpler to define an entangled state by what it is not. Mathematically, an entangled state is described by a single state vector for pure states, or a single density matrix for mixed states, combining two or more subsystems, that does not factorize as a product of states of its local constituents. Factorizing as a product of states means that a local measurement acting on one subsystem is independent from the local measurement acting on other subsystems. That is, from measurement of one subsystem we can derive nothing about the measurement results of the other subsystems.

\subsection{Entangled pure states}\label{subsec.2.2.1}
\subsubsection{Bipartite entanglement}\label{2.2.1_1}
Consider the state of a composite quantum system consisting of two subsystems that are denoted by $A$ and $B$ and have associated in Hilbert spaces $\mathcal{H}_A$ and $\mathcal{H}_B$, respectively, as follows
\begin{equation}\label{sep-state}
|\psi\rangle_{AB}=|\varphi_A\rangle\otimes|\varphi_B\rangle\,,
\end{equation}
where $\otimes$ denotes tensor product.

Now, consider that the eigenvalue equations of two observables $\hat{A}$ and $\hat{B}$ that are defined in subsystems $A$ and $B$, respectively, are given as follows
\begin{align}\nonumber
&\hat{A}|a_i\rangle=a_i|a_i\rangle\,, \\
&\hat{B}|b_j\rangle=b_j|b_j\rangle\,.
\end{align}
The probability of obtaining the eigenvalue $a_i$ ($b_j$) as the outcome of the measurement of the observable $\hat{A}$ ($\hat{B}$) alone is given by
\begin{align}\nonumber
&\text{Pr}(a_i)=|\langle a_i|\varphi_A\rangle|^2\,,\\
&\text{Pr}(b_j)=|\langle b_j|\varphi_B\rangle|^2\,.
\end{align}
If one measures both observables $\hat{A}$ and $\hat{B}$ simultaneously, the probability of obtaining the eigenvalues $a_i$ and $b_j$, respectively, is given by
\begin{align}\nonumber
\text{Pr}(a_i,b_j)&=|\langle a_i,b_j|\psi\rangle_{AB}|^2=|\langle a_i\otimes b_j|\varphi_A\otimes\varphi_B\rangle|^2 \\
&=\text{Pr}(a_i)\cdot\text{Pr}(b_j)\,.
\end{align}
So the probability of the simultaneous measurement of subsystems is equal to the product of the probabilities of measurements of subsystems separately. Indeed, in a product state, any measurement on one of the subsystems does not affect the state of another subsystem. One question then arises: can we always write the state vector of a composite system as in Eq. \eqref{sep-state}? The answer is no and the reason is the fact that in the Hilbert space we have the possibility of linear combination of vectors (this is often referred to the quantum superposition principle in quantum mechanics). For instance, consider the following state
\begin{align}\nonumber
|\Psi\rangle_{AB}&=\alpha|\psi\rangle_{AB}+\beta|\psi'\rangle_{AB} \\ \label{superposition}
&=\alpha(|\varphi_A\rangle\otimes|\varphi_B\rangle)+\beta(|\varphi'_A\rangle\otimes|\varphi'_B\rangle)\,.
\end{align}
Now, if we measure both observables $\hat{A}$ and $\hat{B}$ simultaneously, the probability of obtaining the eigenvalues $a_i$ and $b_j$, respectively, is given by
\begin{align}\nonumber
\text{Pr}(a_i,b_j)&=|\langle a_i,b_j|\Psi\rangle_{AB}|^2 \\
&=|\alpha\langle a_i|\varphi_A\rangle\langle b_j|\varphi_B\rangle+\beta\langle a_i|\varphi'_A\rangle\langle b_j|\varphi'_B\rangle|^2\,.
\end{align}
From above equation we can conclude that there is a correlation between subsystems of the composite quantum system defined in Eq. \eqref{superposition}.

We can conclude that entanglement is a direct result from superposition. In Ref. \cite{AAPP22}, it has proved that entanglement can exist between different systems iff superposition can exist in each of them.

\begin{definition}[Bipartite separable state]
A pure bipartite quantum state $|\psi\rangle_{AB}\in\mathcal{H}_A\otimes\mathcal{H}_B$ is called separable, if it can be written as the tensor product of the quantum states of subsystems $|\varphi_A\rangle\in\mathcal{H}_A$ and $|\varphi_B\rangle\in\mathcal{H}_B$, i.e.,
\begin{equation}
|\psi\rangle_{AB}=|\varphi_A\rangle\otimes|\varphi_B\rangle\,.
\end{equation}
\end{definition}

\begin{definition}[Bipartite entangled state]
A pure bipartite quantum state $|\psi\rangle_{AB}\in\mathcal{H}_A\otimes\mathcal{H}_B$ is called entangled (or inseparable), if it is not separable. So, it cannot be written as the tensor product of the quantum states of subsystems, i.e.,
\begin{equation}
|\psi\rangle_{AB}\neq|\varphi_A\rangle\otimes|\varphi_B\rangle\,,
\end{equation}
or in other words,
\begin{equation}
\nexists~|\varphi_A\rangle\in\mathcal{H}_A \quad \text{and} \quad |\varphi_A\rangle\in\mathcal{H}_B \quad \text{s.t.} \quad |\psi\rangle_{AB}=|\varphi_A\rangle\otimes|\varphi_B\rangle\,.
\end{equation}
\end{definition}

\begin{example}
The most famous example of an entangled pure state is the two-qubit $\rm{EPR}$ state:
\begin{equation}\label{EPR}
|{\rm{EPR}}\rangle=\frac{1}{\sqrt{2}}(|00\rangle+|11\rangle)\,.
\end{equation}
This can be understood by trying to write the above mentioned $\rm{EPR}$ state as the tensor product of two single qubits, that is
\begin{align}\nonumber
|{\rm{EPR}}\rangle&\overset{?}{=}(\alpha|0\rangle+\beta|1\rangle)\otimes(\gamma|0\rangle+\delta|1\rangle) \\ \label{EPR-2-qubit}
&=\alpha\gamma|00\rangle+\alpha\delta|01\rangle+\beta\gamma|10\rangle+\beta\delta|11\rangle\,.
\end{align}
Comparing Eqs. \eqref{EPR} and \eqref{EPR-2-qubit}, one concludes that $\alpha\gamma=\beta\delta=\frac{1}{\sqrt{2}}$ and $\alpha\delta=\beta\gamma=0$, where there is no common solution for this system of equations.
\end{example}

Since a separable pure state can be written as the tensor product of the states of subsystems it is concluded that the reduced density matrix of each single subsystem is a pure state. For example, let $|\psi\rangle_{AB}=|\varphi_A\rangle\otimes|\varphi_B\rangle$ represent a separable pure state containing two parties, then we have
\begin{align}\nonumber
\varrho_A=\tr_B(|\psi\rangle_{AB}\langle\psi|)=|\varphi_A\rangle\langle\varphi_A|\,, \\
\varrho_B=\tr_A(|\psi\rangle_{AB}\langle\psi|)=|\varphi_B\rangle\langle\varphi_B|\,.
\end{align}
Therefore, concerning Eq. \eqref{mixedness}, entanglement can be related to the purity of the reduced density matrices as follows
\begin{align}\nonumber
& \text{if}~\tr(\varrho_A^2)=1\quad\Leftrightarrow\quad|\psi\rangle_{AB}~\text{is separable,} \\
& \text{if}~\tr(\varrho_A^2)<1\quad\Leftrightarrow\quad|\psi\rangle_{AB}~\text{is entangled.}
\end{align}

\subsubsection{Schmidt decomposition}\label{subsubsec.2.2.1_2}
There is a very powerful tool to characterize entanglement in bipartite systems, called Schmidt decomposition. Actually, for the special case of pure bipartite states, any state of the form Eq. \eqref{pure-multi} can be written in the Schmidt decomposition.

\begin{theorem}[Schmidt decomposition \cite{Schmidt}]
Let
\begin{equation}\label{pure-bipartite}
|\psi\rangle=\sum_{i=0}^{d_A-1}\sum_{j=0}^{d_B-1}\mathbf{c}_{ij}|ij\rangle\,,
\end{equation}
be a normalized bipartite state in $\mathcal{H}_A\otimes\mathcal{H}_B$. Then there exists orthonormal bases $\{e_i\}_{i=0}^{d_A-1}$ and $\{f_i\}_{i=0}^{d_B-1}$ in $\mathcal{H}_A$ and $\mathcal{H}_B$, respectively, such that
\begin{equation}\label{Schmidt-form}
|\psi\rangle=\sum_{i=0}^{d-1}\sqrt{\lambda_i}|e_if_i\rangle\,,
\end{equation}
where $d=\min(d_A,d_B)$ and $\lambda_i\in\mathbbm{R}^+$ are called Schmidt coefficients.
\end{theorem}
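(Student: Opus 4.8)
The plan is to prove the Schmidt decomposition by reducing it to the singular value decomposition (SVD) of the coefficient matrix. First I would collect the coefficients $\mathbf{c}_{ij}$ from Eq.~\eqref{pure-bipartite} into a $d_A\times d_B$ complex matrix $C=(\mathbf{c}_{ij})$. The SVD of $C$ guarantees the existence of unitary matrices $U\in{\rm{U}}(d_A)$ and $V\in{\rm{U}}(d_B)$ such that $C=U\Sigma V^{\dagger}$, where $\Sigma$ is a (rectangular) diagonal matrix whose diagonal entries $\sqrt{\lambda_i}\geq0$ are the singular values of $C$, with at most $d=\min(d_A,d_B)$ of them nonzero. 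Writing this out entrywise gives
\begin{equation*}
\mathbf{c}_{ij}=\sum_{k=0}^{d-1}U_{ik}\sqrt{\lambda_k}\,\overline{V_{jk}}\,.
\end{equation*}

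Next I would define the new orthonormal bases by absorbing these unitaries: set $|e_k\rangle=\sum_{i}U_{ik}|i\rangle$ in $\mathcal{H}_A$ and $|f_k\rangle=\sum_{j}\overline{V_{jk}}|j\rangle$ in $\mathcal{H}_B$. Because $U$ and $V$ are unitary, the families $\{|e_k\rangle\}$ and $\{|f_k\rangle\}$ are orthonormal (this is exactly the content of $U^{\dagger}U=\mathbb{1}$ and $V^{\dagger}V=\mathbb{1}$). Substituting the entrywise expansion of $\mathbf{c}_{ij}$ back into Eq.~\eqref{pure-bipartite} and interchanging the order of summation then collapses the double sum over $i,j$ into the single sum
\begin{equation*}
|\psi\rangle=\sum_{k=0}^{d-1}\sqrt{\lambda_k}\,|e_k\rangle\otimes|f_k\rangle\,,
\end{equation*}
which is precisely the claimed form \eqref{Schmidt-form}. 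The normalization $\langle\psi|\psi\rangle=1$ forces $\sum_k\lambda_k=1$, confirming that the $\lambda_k$ are nonnegative reals that one may take to lie in $\mathbbm{R}^+$ (allowing zeros, or restricting to the nonzero ones, i.e.\ the Schmidt rank).

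The only genuine input here is the existence of the SVD, so the main obstacle is really just organizing the index bookkeeping cleanly and being careful about the complex conjugation in the second factor (the right singular vectors enter through $V^{\dagger}$, so one must conjugate $V$ when defining the $|f_k\rangle$). An alternative self-contained route, which I would mention if one wishes to avoid citing the SVD as a black box, is to diagonalize the reduced density matrix $\varrho_A=\tr_B(|\psi\rangle\langle\psi|)=CC^{\dagger}$: since $\varrho_A$ is Hermitian positive semidefinite, the spectral theorem yields an orthonormal eigenbasis $\{|e_k\rangle\}$ with eigenvalues $\lambda_k\geq0$, and one then \emph{defines} $|f_k\rangle:=\tfrac{1}{\sqrt{\lambda_k}}\,(\langle e_k|\otimes\mathbb{1})|\psi\rangle$ for each $\lambda_k>0$ and checks directly via $\langle f_k|f_{k'}\rangle=\tfrac{1}{\sqrt{\lambda_k\lambda_{k'}}}\langle e_k|\varrho_A|e_{k'}\rangle=\delta_{kk'}$ that these are orthonormal. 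Either way the argument is short; the SVD route is the most transparent and is what I would present.
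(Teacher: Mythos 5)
Your SVD argument is correct and complete: the entrywise expansion of $C=U\Sigma V^{\dagger}$, the definitions $|e_k\rangle=\sum_i U_{ik}|i\rangle$ and $|f_k\rangle=\sum_j\overline{V_{jk}}|j\rangle$, and the orthonormality checks all go through, and you have handled the conjugation on the second factor correctly. For comparison: the dissertation does not actually prove this theorem — it is stated with a citation to Schmidt's original paper — but the remark immediately following it (that the Schmidt coefficients and bases are obtained as the eigenvalues and eigenvectors of the reduced density matrices $\varrho_A$ and $\varrho_B$) corresponds exactly to the second, ``self-contained'' route you sketch at the end, namely diagonalizing $\varrho_A=CC^{\dagger}$ and defining $|f_k\rangle=\lambda_k^{-1/2}(\langle e_k|\otimes\mathbb{1})|\psi\rangle$. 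The two routes are equivalent (the SVD of $C$ and the spectral decomposition of $CC^{\dagger}$ encode the same data); the SVD presentation is more symmetric between the two factors, while the reduced-density-matrix presentation connects more directly to the physical interpretation the paper uses afterwards. The only cosmetic point is the status of zero singular values: with the sum running to $d-1$ some $\lambda_i$ may vanish, so either allow $\lambda_i=0$ or truncate the sum at the Schmidt rank, as you already note.
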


Schmidt coefficients and Schmidt bases of an arbitrary pure bipartite quantum state in Eq. \eqref{pure-bipartite} can be found from the eigenvalues and eigenvectors of its reduced density matrices:
\begin{align}\nonumber
&\varrho_A=\tr_B|\psi\rangle\langle\psi|=\sum_i\lambda_i|e_i\rangle\langle e_i|\,, \\
&\varrho_B=\tr_A|\psi\rangle\langle\psi|=\sum_i\lambda_i|f_i\rangle\langle f_i|\,.
\end{align}

Regarding characterization of entanglement in pure bipartite quantum states, the Schmidt decomposition is a strong and important tool in quantum information theory .

\begin{definition}[Schmidt rank]
The number of non-zero Schmidt coefficients is called Schmidt rank.
\end{definition}

\begin{lemma}
A pure bipartite quantum state is separable iff its Schmidt rank is one.
\end{lemma}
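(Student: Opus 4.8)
The plan is to prove both directions of the biconditional directly from the Schmidt decomposition established in the preceding theorem. The key observation is that the Schmidt decomposition writes any pure bipartite state as $|\psi\rangle=\sum_{i=0}^{d-1}\sqrt{\lambda_i}\,|e_if_i\rangle$ with $\lambda_i>0$ for exactly as many indices as the Schmidt rank, so I can read off separability from the number of surviving terms.

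First I would prove the easy implication: if the Schmidt rank is one, then only a single coefficient, say $\sqrt{\lambda_0}$, is nonzero, so the decomposition collapses to $|\psi\rangle=\sqrt{\lambda_0}\,|e_0\rangle\otimes|f_0\rangle$. Since $|\psi\rangle$ is normalized we have $\lambda_0=1$, and setting $|\varphi_A\rangle=|e_0\rangle$ and $|\varphi_B\rangle=|f_0\rangle$ exhibits $|\psi\rangle$ as a tensor product, hence separable by the definition of a bipartite separable state.

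Next I would prove the converse, arguing by contraposition or directly. Suppose $|\psi\rangle$ is separable, so $|\psi\rangle=|\varphi_A\rangle\otimes|\varphi_B\rangle$ for some $|\varphi_A\rangle\in\mathcal{H}_A$ and $|\varphi_B\rangle\in\mathcal{H}_B$. The cleanest route is to compute the reduced density matrix $\varrho_A=\tr_B(|\psi\rangle\langle\psi|)=|\varphi_A\rangle\langle\varphi_A|$, which is a rank-one projector; since the earlier discussion identifies the Schmidt coefficients $\lambda_i$ precisely as the eigenvalues of $\varrho_A$, a rank-one $\varrho_A$ forces exactly one nonzero eigenvalue, i.e.\ exactly one nonzero $\lambda_i$, which is the statement that the Schmidt rank equals one. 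Alternatively, one can expand $|\varphi_A\rangle$ and $|\varphi_B\rangle$ in any orthonormal bases and note that the resulting coefficient matrix $\mathbf{c}_{ij}=(\varphi_A)_i(\varphi_B)_j$ is an outer product and therefore has matrix rank one, and the Schmidt rank is exactly the rank of this coefficient matrix.

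I do not expect any genuine obstacle here, as the statement is essentially a restatement of the Schmidt decomposition theorem together with the definition of separability. The only point requiring mild care is to make explicit the link between the Schmidt rank (number of nonzero $\lambda_i$) and the rank of either the coefficient matrix or the reduced density matrix; since the theorem and the subsequent formulas for $\varrho_A,\varrho_B$ already supply this correspondence, the proof reduces to invoking it in each direction.
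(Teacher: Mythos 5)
The paper states this lemma without providing any proof, so there is nothing to compare against; your argument is correct and is the standard one. Both directions are sound: the forward direction follows immediately from the Schmidt decomposition collapsing to a single product term when only one $\lambda_i$ is nonzero, and the converse follows either from $\varrho_A=|\varphi_A\rangle\langle\varphi_A|$ being a rank-one projector whose eigenvalues are the Schmidt coefficients, or equivalently from the coefficient matrix $\mathbf{c}_{ij}=(\varphi_A)_i(\varphi_B)_j$ being a rank-one outer product, which is consistent with the paper's later remark that the Schmidt rank equals the rank of the coefficient matrix.
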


Not only do the Schmidt coefficients determine separability of a pure bipartite quantum state but they also correlate the amount of entanglement to the mixedness of the reduced density matrix of a pure bipartite quantum state. A separable state indicates by a vector with components corresponding to Schmidt coefficients (Schmidt vector) that has only one non-zero component, i.e., 
\[
%\vec{\lambda}_{\text{Sep.}}=\begin{pmatrix}
%1 \\ 0 \\ \vdots \\ 0
%\end{pmatrix},
\vec{\lambda}_{\text{Sep.}}=\begin{pmatrix}
1 & 0 & \cdots & 0
\end{pmatrix},
\]
and therefore the corresponding reduced density matrix is a pure state. The Schmidt vector of an entangled state has at least two non-zero components. The Schmidt vector
\[
\vec{\lambda}_{\rm{MES}}=\frac{1}{d}\begin{pmatrix}
1 & 1 & \cdots & 1
\end{pmatrix},
\]
corresponds to a reduced density matrix that is proportional to identity matrix, i.e., the reduced density matrix is a maximally mixed state, indicates an entangled state which is called a Maximally Entangled State (MES).

The generalized Bell states can be considered as MESs:
\begin{equation}
|\psi_{\rm{MES}}\rangle=\frac{1}{\sqrt{d}}\sum_{i=0}^{d-1}|e_if_i\rangle\,,
\end{equation}
where $d=\min(d_A,d_B)$. For $d_A=d_B=2$ the original four orthogonal Bell states are
\begin{equation}
|\Psi^{\pm}\rangle=\frac{1}{\sqrt{d}}(|01\rangle\pm|10\rangle) \qquad~ |\Phi^{\pm}\rangle=\frac{1}{\sqrt{d}}(|00\rangle\pm|11\rangle)\,.
\end{equation}

\subsubsection{Multipartite entanglement}\label{2.2.1_3}
Composite systems get more complicated with the increasing number of parties, as there exist separability with respect to different partitioning. Here, the situation could be that in an $n$-partite system, an arbitrary number of subsystems are entangled but there is no entanglement between other parts. So there exist different notions of entanglement in multipartite systems. These considerations lead to the definition of $k$-separability \cite{HHHH09}.

In what follows we will be concerned with $n$-partite pure quantum states
\begin{equation}\label{n-partite-pure}
|\psi\rangle=\sum_{j=1}^{n}\sum_{i_j\in\mathbbm{Z}_{d_j}}\mathbf{c}_{i_1\cdots{i_n}}|i_1\cdots i_n\rangle\,,
\end{equation}
which are elements of the Hilbert space $\mathcal{H}=\otimes_{j=1}^n\mathcal{H}_j=\otimes_{j=1}^{n}\mathbbm{C}^{d_j}$, with $d_j$ standing for the dimension of the local Hilbert space corresponding to the subsystem $A_j$. Now, let $\alpha_k=\{S_1,\ldots,S_k\}$ denote a partition of $[n]:=\{1,\ldots,n\}$ into $k$ disjoint nonempty subsets ($k\leq n$). Such a partition corresponds to a division of the system into $k$ distinct subsystems, also called a $k$-partite split \cite{DC00}. For instance, the set $\{1,2,3\}$ has these five different partitions: $\{\{1\},\{2\},\{3\}\}$, $\{\{1,2\},\{3\}\}$, $\{\{1,3\},\{2\}\}$, $\{\{1\},\{2,3\}\}$, and $\{\{1,2,3\}\}$.

\begin{definition}[$k$-separable pure state]\label{def:k-separable-pure}
A pure multipartite quantum state in Eq. \eqref{n-partite-pure} is called $k$-separable state with respect to a specific $k$-partite split, iff it can be written as tensor product of $k$ factors of subsystem states, i.e.,
\begin{equation}\label{k-separable-pure}
|\psi_{\rm{ks}}\rangle=\bigotimes_{i=1}^{k}|\phi_i\rangle\,, \quad k\in[n]\,,
\end{equation}
where the state vector $|\phi_i\rangle$ may consist of more than one subsystem with the maximum number of $(n-k+1)$ subsystems which corresponds the situation of $(k-1)$ single subsystems and one $(n-k+1)$-partite subsystem.
\end{definition}

Concerning \Cref{k-separable-pure}, it follows that two special cases emerge for $k=n$ and $k=1$.

\begin{definition}[Fully separable pure state]
A pure multipartite quantum state in Eq. \eqref{k-separable-pure} with $k=n$ is called fully separable. That is iff it can be written as a tensor product of the quantum states of subsystems $|\varphi_i\rangle\in\mathcal{H}_i$ for all $i\in[n]$, i.e.,
\begin{equation}\label{full-sep-pure}
|\psi_{\rm{fs}}\rangle=\bigotimes_{i=1}^{n}|\varphi_i\rangle\,.
\end{equation}
\end{definition}

\begin{definition}[Genuine entangled pure state]
A pure multipartite quantum state in Eq. \eqref{k-separable-pure} with $k=1$ is called genuine entangled state. That is iff
\begin{equation}\label{genuine-entangled}
|\psi_{e}\rangle\neq|\phi\rangle_S\otimes|\phi'\rangle_{\bar{S}}\,,
\end{equation}
for any bipartition $S|\bar{S}$, where $S$ is a subset of $[n]$ which denotes the indices of subsystems $A_j$'s and $\bar{S}:=[n]\setminus S$ denotes the rest of them.
\end{definition}

\begin{example}
Let $|\psi\rangle\in\otimes^3\mathbbm{C}^2$ represent the vector state of a three-qubit system. We have the following possibilities for different notions of entanglement:
\begin{enumerate}
\item[1.] Fully separable state: $|000\rangle$.
\item[2.] Biseparable states: $|0\rangle_i\otimes|{\rm{EPR}}\rangle_{jk}$ where the indices $\{i,j,k\}=\{1,2,3\}$ indicate the corresponding subsystems.
\item[3.] Genuine entangled states: $|{\rm{GHZ}}\rangle=|000\rangle+|111\rangle$ which is known as Greenberger–Horne–Zeilinger ($\rm{GHZ}$) state \cite{GHZ89}. $|{\rm{W}}\rangle=|001\rangle+|010\rangle+|100\rangle$ which is known as $\rm{W}$ state \cite{DVC00}.
\end{enumerate}
\end{example}

In general, characterization of entanglement in pure multipartite state is quite challenging, especially for what concerns the characterization of different types of genuine multipartite entangled states. 

\subsection{Entangled mixed states}\label{subsec.2.2.2}
Since bipartite system is a special case of multipartite systems, here we just give the definition for the general case.

In what follows we will be concerned with $n$-partite mixed quantum states
\begin{equation}\label{n-partite-mixed}
\varrho\in\bigotimes_{i=1}^{n}(\mathcal{H}_i\otimes\mathcal{H}_i^{\vee})\,.
\end{equation}

\begin{definition}[$k$-separable mixed state]\label{def:k-separable-mixed}
A mixed multipartite quantum state in Eq. \eqref{n-partite-mixed} is called $k$-separable state, iff it can be written as a convex combination of $k$-separable pure states, i.e.,
\begin{equation}\label{k-separable-mixed}
\varrho_{\rm{ks}}=\sum_jp_j|\psi_{\rm{ks}}^{(j)}\rangle\langle\psi_{\rm{ks}}^{(j)}|=\sum_jp_j\big(\otimes_{i=1}^k\rho_i^{(j)}\big)\,,
\end{equation}
where $p_j\in\mathbbm{R}^+$, $\sum_jp_j=1$, and $k$-separable states $|\psi_{\rm{ks}}^{(j)}\rangle$ might be $k$-separable with respect to different $k$-partite splits.
\end{definition}

It is worth noting that if pure states $|\psi_{\rm{ks}}^{(j)}\rangle$ in Eq. \eqref{k-separable-mixed} are $k$-separable with respect to a specific $k$-partite split, then the mixed state is called $\alpha_k$-separable \cite{SU08}.

\begin{definition}[Fully separable mixed state]
A mixed multipartite quantum state in Eq. \eqref{k-separable-mixed} with $k=n$ is called fully separable. That is iff it can be written as a convex combination of fully separable pure states, i.e.,
\begin{equation}\label{full-sep-mix}
\varrho_{\rm{fs}}=\sum_jp_j|\psi_{\rm{fs}}^{(j)}\rangle\langle\psi_{\rm{fs}}^{(j)}|=\sum_jp_j\big(\otimes_{i=1}^{n}\varrho_i^{(j)}\big)\,,
\end{equation}
where $p_j\in\mathbbm{R}^+$, $\sum_jp_j=1$ and fully-separable states $|\psi_{\rm{fs}}^{(j)}\rangle$ are defined in Eq. \eqref{full-sep-pure}.
\end{definition}

\begin{figure}[t]
\center{\includegraphics[width=8cm]{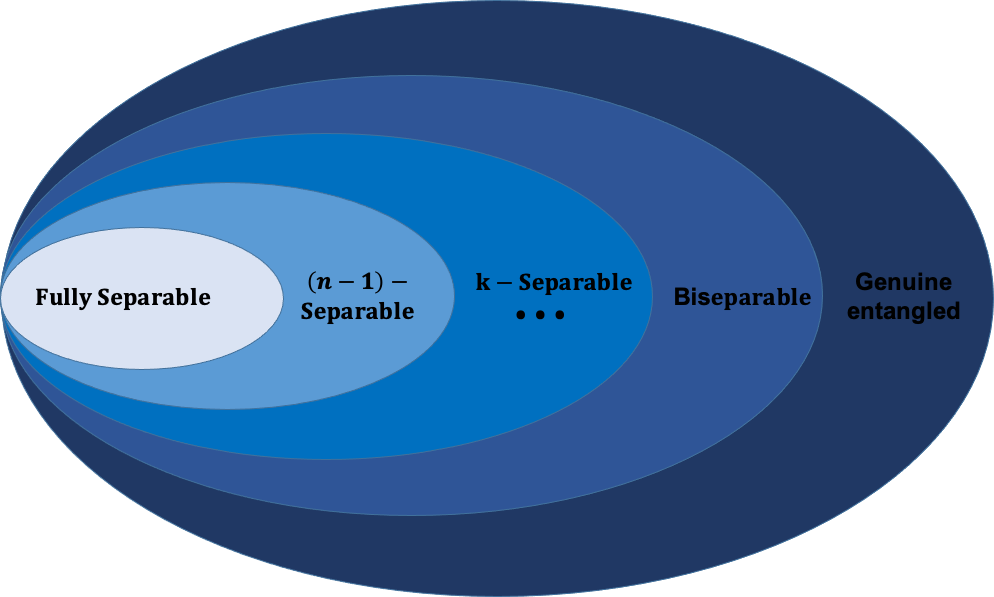}}
\caption[Convex set of $k$-separable states]{\label{fig:k-separable} Convex set of $k$-separable states.}
\end{figure}

Note that whenever a state is $k$-separable, it is automatically
also $l$-separable for all $1\leq l\leq k$. If we denote the set of all
$k$-separable states by $D_k$, then each set $D_k$ is a convex set and embedded within the next set, i.e., $D_n\subset D_{n-1}\subset\cdots\subset D_1$. The complement $D_1\setminus D_k$ of $D_k$ in $D_1$ is the set of all $k$-nonseparable states. In particular, the complement $D_1\setminus D_2$ is the set of all $2$-nonseparable states which are called genuine $n$-partite entangled states. Therefore, the cone of fully separable states lies in the middle and the cone of genuine multipartite entangled states lies at the outermost one (see Fig. \ref{fig:k-separable}).

\section{Quantum operations}\label{sec.2.3}
In quantum mechanics, the term quantum operation (also known as quantum dynamical map, quantum process or quantum superoperator) defines the class of transformations that a quantum system can undergo. Mathematically, a quantum operation is a linear map $\Lambda\colon\mathcal{B}(\mathcal{H})\to\mathcal{B}(\mathcal{H})$ that evolves a density matrix $\varrho$ to another density matrix $\Lambda(\varrho)=\varrho'$. The map $\Lambda$ is necessarily characterized by the following properties:
\begin{enumerate}
\item The probability that a physical process represented by map $\Lambda$ occurs is given by $\tr\big(\Lambda(\varrho)\big)$, where $\varrho$ is the initial state. Thus
\begin{equation}
0\leq\tr\big(\Lambda(\varrho)\big)\leq1\,.
\end{equation}
\item The map $\Lambda$ has to be a convex-linear map on the set of density matrices, that is, for probabilities $\{p_i\}$,
\begin{equation}
\Lambda\big(\sum_ip_i\varrho_i\big)=\sum_ip_i\Lambda(\varrho_i)\,.
\end{equation}
%\item As the density matrix is a Hermitian positive semidefinite operator, a quantum operation has to be Hermitian and positive, i.e.,
%\begin{align}\nonumber
%&\Lambda(\varrho)=\big(\Lambda(\varrho)\big)^{\dagger}  \hspace{-2cm} &\forall~\varrho\in\mathcal{B}(\mathcal{H})\,, \\
%&\Lambda(\varrho)\geq0  \hspace{-2cm} &\forall~\varrho\in\mathcal{B}(\mathcal{H})\,.
%\end{align}
\item Since in quantum information we mostly deal with composite systems, the map $\Lambda$ has to be completely positive. A positive map $\Lambda$ is called completely positive if any tensor extension to a larger Hilbert space, i.e., $\mathbb{1}_d\otimes\Lambda$, is a positive map. Here, $d$ denotes the
dimension of the extension and is arbitrary. So we have
\begin{equation}
\mathbb{1}_{d_A}\otimes\Lambda_B(\varrho_{AB})\geq0 \qquad \forall~\varrho_{AB}\in\mathcal{B}(\mathcal{H}_A)\otimes\mathcal{B}(\mathcal{H}_B)\,.
\end{equation}
\end{enumerate}

In summary, any quantum operation that describes a physical process a quantum system can undergo, is described by a Completely Positive Trace-Preserving linear map (CPTP map).

Note that in the context of quantum information and computation, a quantum operation is called a quantum channel.

Unitary evolution is the simple example of quantum operation, for which
\begin{equation}
\Lambda_U(\varrho)=U\varrho\, U^{\dagger}\,.
\end{equation}

\begin{example}
A two-qubit controlled NOT (CNOT) gate that is a fundamental quantum logic gate (or simply quantum gate) is a quantum operation. Actually, it is a unitary operation which is defined as follows
\begin{equation}
U_{\rm{CNOT}}=|0\rangle\langle0|\otimes\mathbb{1}+|1\rangle\langle1|\otimes\sigma_1=\begin{pmatrix}
1 & 0 & 0 & 0\\
0 & 1 & 0 & 0\\
0 & 0 & 0 & 1\\
0 & 0 & 1 & 0
\end{pmatrix}\,,
\end{equation}
\sloppy where the matrix representation is with respect to the canonical basis $\{|00\rangle, |01\rangle,|10\rangle, |11\rangle\}$. The quantum circuit illustration of this gate is represented in Fig. \ref{fig:CNOT}.
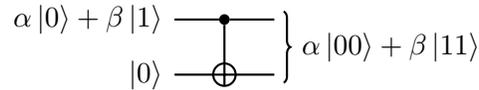
\begin{figure}[th]
\center{\begin{quantikz}
\lstick{$\alpha\ket{0}+\beta\ket{1}$} & \ctrl{1} & \qw\rstick[wires=2]{$\alpha\ket{00}+\beta\ket{11}$} \\
\lstick{$\ket{0}$}  & \targ{} & \qw
\end{quantikz}
\caption[Quantum circuit of two-qubit CNOT gate]{\label{fig:CNOT} Quantum circuit of two-qubit CNOT gate.}}
\end{figure}
\end{example}

As another example of quantum operation, we can consider quantum measurement with outcomes labeled by $m$ which is described by Positive Operator-Valued Measure (POVM)\footnote{Projective measurement (discussed in Section \ref{subsec.Observables}) is the simplest case of a POVM which is a set of orthogonal projectors.}. POVM is a set of $\{M_m\}$ such that $\sum_m M_m^{\dagger}M_m=\mathbb{1}$. So quantum measurement can be defined as the following map
\begin{equation}
\Lambda_m(\varrho)=M_m\varrho M_m^{\dagger}\,.
\end{equation}
The state of the quantum system immediately after the measurement is
\begin{equation}\label{post-measure}
\varrho_m=\frac{\Lambda_m(\varrho)}{\tr(\Lambda_m(\varrho))}\,,
\end{equation}
and the probability of obtaining this measurement result is
\begin{equation}\label{post-measure-prob}
p(m)=\tr(\Lambda_m(\varrho))\,.
\end{equation}

In general, quantum operations (CPTP-maps) can be represented in an elegant form known as the Kraus representation \cite{Kraus71}.

\begin{theorem}[Kraus representation of CPTP-maps]
Any quantum operation $\Lambda$ acting on a quantum system with associated a $d$-dimensional Hilbert space $\mathcal{H}$ can be represented as 
\begin{equation}
\Lambda(\varrho)=\sum_{i=1}^{N}K_i\varrho K_i^{\dagger}\,, \qquad \sum_{i=1}^{N}K_i^{\dagger}K_i=\mathbb{1}\,,
\end{equation}
where the operators $K_i\colon\mathcal{H}\to\mathcal{H}$ are called Kraus operators (also known as effects) and $1\leq N\leq d^2$.
\end{theorem}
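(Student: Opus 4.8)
The plan is to prove the statement via the Choi--Jamio{\l}kowski isomorphism, which converts the operator-level condition of complete positivity into an ordinary positive-semidefiniteness condition on a single matrix. First I would fix an orthonormal basis $\{|i\rangle\}_{i=0}^{d-1}$ of $\mathcal{H}$, introduce the (unnormalized) maximally entangled vector $|\Omega\rangle=\sum_{i=0}^{d-1}|ii\rangle\in\mathcal{H}\otimes\mathcal{H}$, and form the Choi matrix
\begin{equation}
C_\Lambda=(\mathbb{1}\otimes\Lambda)\big(|\Omega\rangle\langle\Omega|\big)=\sum_{i,j=0}^{d-1}|i\rangle\langle j|\otimes\Lambda\big(|i\rangle\langle j|\big)\,,
\end{equation}
an operator on the $d^2$-dimensional space $\mathcal{H}\otimes\mathcal{H}$. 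The point of this object is that the full action of $\Lambda$ on an arbitrary $\varrho$ is encoded in $C_\Lambda$, so that a decomposition of $C_\Lambda$ will translate directly into a decomposition of $\Lambda$.

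The first key step is to observe that complete positivity forces $C_\Lambda\geq0$. Indeed, $|\Omega\rangle\langle\Omega|$ is a positive semidefinite operator on $\mathcal{H}\otimes\mathcal{H}$, and by complete positivity the extended map $\mathbb{1}_d\otimes\Lambda$ is a positive map; applying it to $|\Omega\rangle\langle\Omega|$ yields $C_\Lambda\geq0$. Since $C_\Lambda$ acts on a space of dimension $d^2$, its spectral decomposition has at most $d^2$ nonzero eigenvalues, and after absorbing the nonnegative eigenvalues into the eigenvectors I can write $C_\Lambda=\sum_{\mu=1}^{N}|v_\mu\rangle\langle v_\mu|$ with $N=\rk(C_\Lambda)\leq d^2$ (and $N\geq1$ for a nonzero map).

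Next I would reshape each vector $|v_\mu\rangle\in\mathcal{H}\otimes\mathcal{H}$ into an operator $K_\mu\colon\mathcal{H}\to\mathcal{H}$ through the correspondence $|v_\mu\rangle=(\mathbb{1}\otimes K_\mu)|\Omega\rangle$, equivalently $\langle j|K_\mu|i\rangle=\langle ji|v_\mu\rangle$. Using $(\mathbb{1}\otimes K_\mu)|\Omega\rangle=\sum_i|i\rangle\otimes K_\mu|i\rangle$ and substituting the spectral decomposition back into $C_\Lambda$ gives
\begin{equation}
C_\Lambda=\sum_{i,j=0}^{d-1}|i\rangle\langle j|\otimes\Big(\sum_{\mu=1}^N K_\mu|i\rangle\langle j|K_\mu^\dagger\Big)\,.
\end{equation}
Comparing this block by block with the defining expression for $C_\Lambda$ yields $\Lambda(|i\rangle\langle j|)=\sum_\mu K_\mu|i\rangle\langle j|K_\mu^\dagger$ for all $i,j$; since the matrix units $\{|i\rangle\langle j|\}$ span $\mathcal{B}(\mathcal{H})$, linearity then gives $\Lambda(\varrho)=\sum_{\mu=1}^N K_\mu\varrho K_\mu^\dagger$ for every $\varrho$. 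Finally, the trace-preserving property is imposed separately: from $\tr\big(\Lambda(\varrho)\big)=\tr\big(\big(\sum_\mu K_\mu^\dagger K_\mu\big)\varrho\big)=\tr(\varrho)$ holding for all $\varrho$, cyclicity of the trace together with nondegeneracy of the Hilbert--Schmidt pairing forces $\sum_{\mu=1}^N K_\mu^\dagger K_\mu=\mathbb{1}$.

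The main obstacle is the reshaping and verification step: one must fix the vectorization convention (the index ordering in $|v_\mu\rangle\mapsto K_\mu$) consistently, so that the reconstructed $\sum_\mu K_\mu\varrho K_\mu^\dagger$ genuinely reproduces $\Lambda$ and not, say, its transpose. This is precisely the content of the Choi--Jamio{\l}kowski correspondence being a bijection between linear maps $\mathcal{B}(\mathcal{H})\to\mathcal{B}(\mathcal{H})$ and operators on $\mathcal{H}\otimes\mathcal{H}$. The positivity $C_\Lambda\geq0$ is what guarantees the clean $|v_\mu\rangle\langle v_\mu|$ form with no indefinite signs, while the dimension count on $\mathcal{H}\otimes\mathcal{H}$ delivers the bound $N\leq d^2$ for free.
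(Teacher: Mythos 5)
Your proposal is a correct and complete proof via the Choi--Jamio{\l}kowski isomorphism; note, however, that the dissertation states this theorem as background without proving it (it is attributed to Kraus \cite{Kraus71}), so there is no in-paper argument to compare against. The only point to tidy is the index convention in the reshaping step: from your own identity $(\mathbb{1}\otimes K_\mu)|\Omega\rangle=\sum_i|i\rangle\otimes K_\mu|i\rangle$ one gets $\langle j|K_\mu|i\rangle=\langle ij|v_\mu\rangle$ rather than $\langle ji|v_\mu\rangle$ --- a cosmetic slip that you already flag as the step requiring care, and which does not affect the block-by-block comparison or the conclusion.
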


It is worth noting that when the map $\Lambda$ is describing a quantum measurement, the trace does not need to be preserved. It is because of this fact that the trace of the post measurement state, i.e., $\tr(\Lambda(\varrho))$ gives the probability with which the measurement outcome does occur (see Eqs. \eqref{post-measure} and \eqref{post-measure-prob}). Therefore, in the case where the process is deterministic, that is, no measurement is taking place, this reduces to the following requirement
\begin{equation}
\tr\big(\Lambda(\varrho)\big)=\tr(\varrho)=1 \qquad \forall~\varrho\in\mathcal{B}(\mathcal{H})\,.
\end{equation}

Overall, all valid quantum operations can be written in Kraus representation. There are two main classes of quantum operations:
\begin{enumerate}
\item Trace-preserving quantum operations, that is $\sum_i K_i^{\dagger}K_i=\mathbb{1}$.
\item Trace decreasing quantum operations, that is $\sum_i K_i^{\dagger}K_i<\mathbb{1}$.
\end{enumerate}

It is worth noting that the Kraus representation is not unique.

\subsubsection*{Local quantum operations}
Local quantum operations can be written as the tensor product of CPTP maps $\Lambda_i\colon\mathcal{B}(\mathcal{H}_i)\to\mathcal{B}(\mathcal{H}_i)$ acting on all subsystems respectively, that is
\begin{equation}
\Lambda_i=\bigotimes_{i=1}^n\Lambda_i\,.
\end{equation}
Regarding Kraus representation, any local quantum operation can be represented in terms of the Kraus operators $\{K_{j_i}\}$ as follows
\begin{equation}\label{Kraus-local}
\Lambda(\varrho)=\sum_{j_1,\ldots,j_n}\big(\otimes_{i=1}^n K_{j_i}\big)\varrho\big(\otimes_{i=1}^n K_{j_i}^{\dagger}\big)\,,
\end{equation}
where $\sum_{j_i}(\otimes_{i=1}^n K_{j_i}^{\dagger}K_{j_i})=\mathbb{1}$ and $\sum_{j_i}(\otimes_{i=1}^n K_{j_i}^{\dagger}K_{j_i})<\mathbb{1}$ hold for trace-preserving quantum operations and for trace decreasing quantum operations, respectively.

Below we give a brief review of three most studied operations: Local Unitary (LU), Local Operations and Classical Communication (LOCC), and Stochastic LOCC (SLOCC).

\subsection{LU operations}\label{subsec.2.3.1}
Unitary operations belong to the class of deterministic CPTP maps. A unitary transformation is a transformation corresponding to a change of a basis so it preserves the inner product . A LU transformation corresponds to a change of a basis in each of the subsystems. These transformations simply reflect the choice of our point of view rather than any specific manipulation of the physical system. It reflects the fact that two equivalent states under local unitary operations have the same matrix form, only the choice basis of subsystems is different.

\begin{definition}[LU equivalence]
Given two $n$-partite quantum states $\varrho,\sigma\in\mathcal{B}(\mathcal{H})$
\begin{equation}
\varrho\extoverset[6pt]{LU}{\sim}\sigma \quad \text{iff} \quad \exists~{\rm{U}}={\rm{U}}(d_1)\otimes\cdots\otimes{\rm{U}}(d_n) \quad \text{s.t.} \quad \varrho={\rm{U}}\sigma{\rm{U}}^{\dagger}\,.
\end{equation}
\end{definition}

For two pure $n$-partite quantum states $|\psi\rangle,|\phi\rangle\in\mathcal{H}=\otimes_{i=1}^{n}\mathbbm{C}^{d_i}$, the LU equivalence between them implies:
\begin{equation}
|\psi\rangle\extoverset[6pt]{LU}{\sim}|\varphi\rangle \quad \text{iff} \quad \exists\, {\rm{U}}(d_i)~\forall\, i\in[n]~~\text{s.t.}~~|\psi\rangle={\rm{U}}(d_1)\otimes\cdots\otimes{\rm{U}}(d_n)|\varphi\rangle\,.
\end{equation}

\subsection{LOCC operations}\label{subsec.2.3.2}
LOCC is one of the most important classes of quantum operations in quantum information theory. Generally, LOCC is a method where a local operation is performed on a part of the system, and the result of that operation is communicated classically to another part where usually another local operation is performed conditioned on the information received. All CPTP maps in Eq. \eqref{Kraus-local} can be considered as LOCC operations. More precisely, LOCC can be considered as local unitary operations, local measurements and coupling to ancillary systems followed by their removal. These operations are able to answer the question whether or not a multipartite quantum state can be transformed deterministically into another multipartite quantum state in case each party of the multipartite system has access exclusively to its own subsystem. 

LOCC is also known as the ``free operation'' in the resource theories of entanglement since entanglement cannot be produced from separable states via LOCC. This can be understood by considering a general local operation on a bipartite quantum system as follows
\begin{equation}
\Lambda(\varrho)=\sum_{i}(K_i\otimes L_i)\varrho(K_i^{\dagger}\otimes L_i^{\dagger})\,,
\end{equation}
where $K_i$ and $L_i$ are Kraus operators and $\sum_{i}K_{i}^{\dagger}K_{i}\otimes L_{i}^{\dagger}L_{i}=\mathbb{1}_{\mathcal{H}_A\otimes\mathcal{H}_B}$. For a bipartite product state $\varrho=\varrho_A\otimes\varrho_B$ we have
\begin{equation}\label{free-operation}
\Lambda(\varrho)=\sum_i(K_i\varrho_A K_i^{\dagger})\otimes(L_i\varrho_B L_i^{\dagger})=\sum_ip_i\tilde{\varrho}_A^i\otimes\tilde{\varrho}_B^i\,,
\end{equation}
where
\begin{equation}
\tilde{\varrho}_A^i=\frac{K_i\varrho_A K_i^{\dagger}}{\tr(K_i\varrho_A K_i^{\dagger})}\,, \quad \tilde{\varrho}_B^i=\frac{L_i\varrho_B L_i^{\dagger}}{\tr(L_i\varrho_B L_i^{\dagger})}\,, \quad p_i=\tr(K_i\varrho_A K_i^{\dagger})\tr(L_i\varrho_B L_i^{\dagger})\,.
\end{equation}
Regarding Eq. \eqref{free-operation}, one can see that by a general local operation on a bipartite product state we could produce classical correlation but the produced state is still separable.

For pure quantum states it has been shown that the LOCC equivalence coincides with the LU equivalence \cite{BPRST00}.

\subsection{SLOCC operations}\label{subsec.2.3.3}

For pure quantum states we have the following definition \cite{DVC00}.

\begin{definition}[SLOCC equivalence]
Given two $n$-partite quantum states $|\psi\rangle$ and $|\varphi\rangle$ in $\mathcal{H}=\otimes_{i=1}^n\mathbbm{C}^{d_i}$
\begin{equation}\label{SLOCC-equiv}
|\psi\rangle\extoverset[6pt]{SLOCC}{\sim}|\varphi\rangle~~\text{iff} ~~\exists\, A_i\in{\rm{SL}}(d_i,\mathbbm{C})~\forall\, i\in[n]~~\text{s.t.}~~|\psi\rangle=A_1\otimes\cdots\otimes A_n|\varphi\rangle\,.
\end{equation}
\end{definition}

\subsubsection{Entanglement transformation}
Entanglement transformation is a fundamental problem in quantum information theory. For bipartite systems, the problem is completely solved. Nielsen has provided necessary and sufficient conditions for the LOCC convertibility of pure bipartite states \cite{Nielsen99}.

\begin{definition}[Majorization]
For a vector $\mathbf{x}\in\mathbbm{R}^d$ let $\mathbf{x} ^{\downarrow}\in\mathbbm{R}^d$ be a vector with the same components of the vector $\mathbf{x}$, but sorted in descending order. Given two $d$-dimensional vectors $\mathbf{x}=(x_1,\ldots,x_d)$ and $\mathbf{y}=(y_1,\ldots,y_d)$ in $\mathbbm{R}^d$ we say that $\mathbf{x}$ is majorized by $\mathbf{y}$ (equivalently $\mathbf{y}$ majorizes $\mathbf{x}$), denoted by $\mathbf{x}\prec\mathbf{y}$, iff
\begin{equation}
\sum_{i=1}^{k}x_i^{\downarrow}\leq\sum_{i=1}^{k}y_i^{\downarrow} \quad \forall~k\in\{1,\ldots,d\}\,,
\end{equation}
with equality holding when $k=d$.
\end{definition}

\begin{theorem}[Ref. \cite{Nielsen99}]
A pure bipartite quantum state $|\psi\rangle$ can be transformed deterministically to another pure bipartite quantum state $|\varphi\rangle$ via LOCC operations iff the Schmidt Coefficients of the first state is majorized by those of the second one, i.e.,
\begin{equation}
|\psi\rangle\xrightarrow[]{\text{LOCC}}|\varphi\rangle\qquad\text{iff}\qquad\lambda_{\psi}\prec\lambda_{\varphi}\,.
\end{equation}
\end{theorem}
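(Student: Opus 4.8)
The plan is to prove both implications by first collapsing a general LOCC protocol to a single round and then translating the deterministic-conversion requirement into a majorization relation between the Schmidt vectors, which are the eigenvalue vectors of the reduced density matrices $\varrho_\psi := \tr_B|\psi\rangle\langle\psi|$ and $\varrho_\varphi := \tr_B|\varphi\rangle\langle\varphi|$ on Alice's side. Two external ingredients are needed: (i) the reduction (Lo--Popescu) that any bipartite pure-state LOCC transformation can be realized by one party, say Alice, performing a single generalized measurement $\{A_k\}$ with $\sum_k A_k^\dagger A_k=\mathbb{1}$ followed by an outcome-dependent correcting unitary on Bob's side, justified for pure states using the fact, stated above, that LOCC coincides with LU; and (ii) the Hardy--Littlewood--P\'olya / Uhlmann characterization of majorization, namely that $\vec x\prec\vec y$ iff $\vec x=D\vec y$ for a doubly stochastic $D$, equivalently iff $\mathrm{diag}(\vec x)$ is a convex mixture of unitary conjugations of $\mathrm{diag}(\vec y)$.

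\emph{Necessity} ($\Rightarrow$). After the reduction, \emph{determinism} forces every outcome $k$ to yield a state LU-equivalent to $|\varphi\rangle$. Writing $|\psi\rangle,|\varphi\rangle$ in matrix (Schmidt) form and absorbing both parties' local unitaries, I would show this imposes, for each $k$, the reduced-state identity $A_k\varrho_\psi A_k^\dagger=p_k\,\varrho_\varphi$, where $p_k$ is the probability of outcome $k$. The key move is to \emph{invert} this relation: a polar decomposition of $A_k\varrho_\psi^{1/2}$ produces partial isometries $V_k$ with $A_k\varrho_\psi^{1/2}=\sqrt{p_k}\,\varrho_\varphi^{1/2}V_k$, and substituting these into the completeness relation $\sum_k A_k^\dagger A_k=\mathbb{1}$ (conjugated by $\varrho_\psi^{1/2}$) yields $\varrho_\psi=\sum_k p_k\,V_k^\dagger\varrho_\varphi V_k$. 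Thus $\varrho_\psi$ is a probabilistic mixture of unitary conjugations of $\varrho_\varphi$, and Uhlmann's theorem gives at once $\lambda_\psi\prec\lambda_\varphi$.

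\emph{Sufficiency} ($\Leftarrow$). Conversely, assume $\lambda_\psi\prec\lambda_\varphi$. By the HLP theorem write $\vec\lambda_\psi=D\vec\lambda_\varphi$ with $D$ doubly stochastic, and by Birkhoff--von Neumann expand $D=\sum_\ell q_\ell\,\Pi_\ell$ as a convex combination of permutation matrices. Choosing local bases in which $\varrho_\psi$ and $\varrho_\varphi$ are simultaneously diagonal with entries $\lambda_\psi,\lambda_\varphi$, I would define Alice's measurement operators $A_\ell=\sqrt{q_\ell}\,\varrho_\varphi^{1/2}\,\Pi_\ell\,\varrho_\psi^{-1/2}$ on the support of $\varrho_\psi$. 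A direct check gives $\sum_\ell A_\ell^\dagger A_\ell=\mathbb{1}$, using $\sum_\ell q_\ell\Pi_\ell^\dagger\varrho_\varphi\Pi_\ell=\mathrm{diag}(D\vec\lambda_\varphi)=\varrho_\psi$, while each outcome produces $A_\ell\varrho_\psi A_\ell^\dagger=q_\ell\varrho_\varphi$; hence the post-measurement state has Schmidt coefficients $\lambda_\varphi$ and is LU-equivalent to $|\varphi\rangle$, so Bob's correcting unitary completes the deterministic one-way protocol.

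The main obstacle is structural rather than algebraic: justifying the collapse of a general multi-round LOCC protocol to a single measurement-plus-correction round, and, within necessity, correctly handling rank issues, since $\varrho_\psi^{-1/2}$ and the partial isometries $V_k$ exist only on supports and the construction requires $\rk\varrho_\varphi\le\rk\varrho_\psi$ (which is automatic from the majorization hypothesis, as a majorized vector has at least as many nonzero entries). Once these reductions are secured, the passage from the operator identity to the majorization inequality is exactly Uhlmann's theorem, and the converse is its explicit inverse.
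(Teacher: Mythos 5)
The dissertation states this theorem as a cited result from Nielsen's 1999 paper and gives no proof of its own, so there is nothing internal to compare against; your proposal has to be judged as a reconstruction of the standard argument, and as such it is essentially correct. Both directions follow the canonical route: the Lo--Popescu collapse to a single Alice-measurement-plus-Bob-unitary round, the polar-decomposition step $A_k\varrho_\psi^{1/2}=\sqrt{p_k}\,\varrho_\varphi^{1/2}V_k$ feeding the completeness relation to give $\varrho_\psi=\sum_k p_k V_k^\dagger\varrho_\varphi V_k$ and hence $\lambda_\psi\prec\lambda_\varphi$ via Uhlmann, and the converse built from Birkhoff--von Neumann with $A_\ell=\sqrt{q_\ell}\,\varrho_\varphi^{1/2}\Pi_\ell\varrho_\psi^{-1/2}$. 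One caution: your justification of the Lo--Popescu reduction by appeal to ``LOCC equivalence coincides with LU equivalence for pure states'' is not the right provenance --- that statement concerns two-way convertibility of states, whereas Lo--Popescu is a structural theorem about protocols, proved separately by using the Schmidt decomposition to transfer any of Bob's measurements to Alice's side; since you invoke it as an external ingredient this does not break the proof, but the cited reason would not survive scrutiny. A second, purely bookkeeping point: in the sufficiency direction $\Pi_\ell^\dagger\varrho_\varphi\Pi_\ell=\mathrm{diag}(\Pi_\ell^{-1}\vec\lambda_\varphi)$, so to land exactly on $\mathrm{diag}(D\vec\lambda_\varphi)$ you should either decompose $D^{\mathrm{T}}$ or place $\Pi_\ell^\dagger$ in the definition of $A_\ell$; the rank caveats you flag are handled correctly by your observation that majorization forces $\rk\varrho_\varphi\le\rk\varrho_\psi$.
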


Regarding SLOCC operations, any bipartite state $|\psi\rangle$ can be probabilistically transformed into $|\varphi\rangle$ iff the Schmidt rank of the reduced density operator of $|\psi\rangle$ is greater than that one of $|\varphi\rangle$, i.e.,
\begin{equation}
|\psi\rangle\xrightarrow[]{\text{SLOCC}}|\varphi\rangle\qquad\text{iff}\qquad\rk_S(\psi)\geq\rk_S(\varphi)\,.
\end{equation}

\section{Entanglement quantification}\label{sec.2.4}
Regarding entanglement as a resource, one would like to know how much of that resource is available in any given situation. An entanglement measure is a function that quantifies the amount of entanglement present in a quantum state. Actually, an entanglement measure is an entanglement monotone that vanishes on all separable states. An entanglement monotone is a linear form that maps a quantum state to a positive real numbers. More precisely, we have the following definition of entanglement monotone \cite{Vidal00}.

\begin{definition}[Entanglement monotone]
An entanglement monotone is a function $\mu\colon\mathcal{B}(\mathcal{H})\to\mathbbm{R}^+$ which maps density operators to positive real numbers. This function is invariant under unitary operations and non–increasing, on average, under LOCC. That is,
\begin{equation}
\mu(\Lambda_{\text{LOCC}}\varrho)\leq\mu(\varrho) \qquad \forall~\varrho\in\mathcal{B}(\mathcal{H})~\&~\Lambda_{\text{LOCC}}\,.
\end{equation}
\end{definition}

The phrase ``on average'' refer to the general case, in which a pure state $\varrho$ is transformed by a probabilistic local operation into a mixture,
\begin{equation}
\varrho\to\sum_ip_i\varrho_i \qquad \Rightarrow \qquad \mu(\varrho)\geq\sum_ip_i\mu(\varrho_i)\,.
\end{equation}

In the following, we give the requirements for a good entanglement measure. See Ref. \cite{PV07} for a review on the subject.

\subsection{Entanglement measure}\label{subsec.2.4.1}
 A good entanglement measure $\mathcal{E}$ has to fulfill several requirements. However, it is still an open question whether all of these conditions are indeed necessary.
\begin{itemize}
\item[1)] {\em Discriminance:} $\mathcal{E}(\varrho)=0$ iff $\varrho$ is separable.
\item[2)] {\em Monotonicity under LOCC:} applying local operations to $\varrho$ and classically communicating cannot increase the entanglement of $\varrho$, i.e.,
\begin{equation}
\mathcal{E}(\Lambda_{LOCC}(\varrho))\leq\mathcal{E}(\varrho)\,.
\end{equation}
\item[3)] {\em Convexity:} The entanglement measure should be a convex function, i.e.
\begin{equation}
\mathcal{E}(p\varrho+(1-p)\sigma)\leq p\,\mathcal{E}(\varrho)+(1-p)\mathcal{E}(\sigma)\,,
 \end{equation}
for $0\leq p\leq1$.
\item[4)]\ {\em Continuity:} In the limit of vanishing distance between two density matrices the difference between  their  entanglement should tend to zero, i.e.,
\begin{equation}
\mathcal{E}(\varrho)-\mathcal{E}(\sigma)\rightarrow0 \quad \text{for} \quad \parallel\varrho-\sigma\parallel_1\rightarrow0\,.\footnote{The trace norm $\parallel O \parallel_1$ of an operator $O$ is the sum of its singular values. That is \\ $\parallel O \parallel_1=\tr\sqrt{OO^{\dagger}}$.}
  \end{equation}
\item[5)]\ {\em Additivity:} A certain number $n$ of 
identical copies of the state $\varrho$ should contain $n$ times the entanglement of one copy,
\begin{equation}
\mathcal{E}(\varrho^{\otimes n})=n\,\mathcal{E}(\varrho)\,.
\end{equation}
\item[6)]\ {\em  Subadditivity:} The entanglement of 
 the tensor product of two states
 $\varrho$ and $\sigma$ should not be larger than the sum of 
 the entanglement of each of the states,
\begin{equation}
\mathcal{E}(\varrho\otimes\sigma )\leq\mathcal{E}(\varrho )+\mathcal{E}(\sigma)\,,
\end{equation}
\item[7)]\ {\em Normalization:}  The entanglement of a maximally entangled state of two $d$-dimensional systems is given by
\begin{equation}
\mathcal{E}(|\psi_{\rm{MES}}\rangle\langle\psi_{\rm{MES}}|)=\log d\,.
\end{equation}
\end{itemize}
 
\subsubsection{Some important entanglement measures}\label{subsec.2.4.2.1}
Following we introduce some important entanglement measures, without making the attempt to discuss all existing entanglement measures.

\begin{itemize}
\item[-]{\em Entropy of entanglement \cite{BBPS96}:}
For pure bipartite states, the entropy of entanglement is defined as the von Neumann entropy of the reduced density matrix, i.e.,
\begin{equation}
\mathcal{E}_{\textit{E}}(|\psi\rangle_{AB})=S(\varrho_A)=-\tr(\varrho_A\log\varrho_A)\,.
\end{equation}
\item[-] {\em Distillable entanglement \cite{BDSW96,BBPSSW97,Rains99}:}
The distillable entanglement tells us how much maximally entangled state can be extracted from a given entangled state $\varrho$ by LOCC. Mathematically, it is the ratio of the number of maximally entangled state $|\psi_{\rm{MES}}\rangle$ as output states over the needed input states $\varrho$, maximized over all LOCC operations, in an asymptotic setting, i.e.,
\begin{equation}
\mathcal{E}_{\textit{D}}(\varrho)=\sup_{\{\Lambda_{LOCC}\}}\lim_{n_\varrho\rightarrow\infty}\frac{n^{\rm{out}}_{|\psi_{\rm{MES}}\rangle}}{n^{\rm{in}}_\varrho}\,.
\end{equation}
\item[-] {\em Entanglement of formation \cite{BDSW96,BBPSSW97}:}
The entanglement of formation is the averaged von Neumann entropy of the reduced density matrices of the pure states $|\psi_i\rangle$, minimized over all possible decompositions $\varrho =\sum_ip_i\proj{\psi_i}$, i.e.,
\begin{equation}
\mathcal{E}_{\textit{F}}(\varrho)=\inf_{\{p_i,|\psi_i\rangle\}}\sum_ip_i\mathcal{E}_{\textit{E}}(|\psi_i\rangle)\,.
\end{equation}
Actually, entanglement of formation $\mathcal{E}_{\rm F}$ is an extension of entropy of entanglement for mixed  via the convex roof extension.
\item[-]{\em Entanglement cost \cite{BDSW96}:}
The entanglement cost tells us how expensive it is to create an entangled state $\varrho$. Mathematically, it is the ratio of the number of maximally entangled states $|\psi_{\rm{MES}}\rangle$ as input states over the produced output states $\varrho$, minimized over all LOCC operations, in an asymptotic setting, i.e.,
\begin{equation}
\mathcal{E}_{\textit{C}}(\varrho)=\inf_{\{\Lambda_{LOCC}\}}\lim_{n_\varrho\rightarrow\infty}\frac{n^{\rm{in}}_{|\psi_{\rm{MES}}\rangle}}{n^{\rm{out}}_\varrho}\ .
\end{equation}
Entanglement cost can be related to entanglement of formation as follows (see Ref. \cite{HHT01})
\begin{equation}
\mathcal{E}_{\textit{C}}(\varrho)=\lim_{n\to\infty}\frac{\mathcal{E}_{\textit{F}}(\varrho^{\otimes n})}{n}\,.
\end{equation}
\item[-] {\em Relative entropy of entanglement \cite{VPRK97}:}
The relative entropy can be seen intuitively as the ``distance'' of the entangled $\varrho$ to the closest separable state $\sigma$, although it is not a distance in the mathematical sense,
\begin{equation}
\mathcal{E}_{\textit{R}}(\varrho)=S(\varrho\parallel\sigma)=\inf_{\sigma\in{\rm{D}}}\tr[\varrho(\log\varrho-\log\sigma)]\,.
\end{equation}
where ${\rm{D}}$ is the set of separable states.
\end{itemize}

It is worth noting that the extension of any entanglement measure for pure states to an entanglement measure for mixed states can be done by the convex roof construction \cite{Ulmann98}.

In Ref. \cite{HW97, Wootters98} Hill and Wootters introduced the concurrence as a measure for two-qubit systems. It is defined as the overlap of a given state $|\psi\rangle$ with its spin-flipped state, i.e., $|\tilde{\psi}\rangle=\sigma_2|\psi^*\rangle$. That is
\begin{equation}
\mathcal{C}(|\psi\rangle)=\langle\psi|\sigma_2\otimes\sigma_2|\psi^*\rangle\,.
\end{equation}
It can also be extended to mixed states by the convex roof construction. That is
\begin{equation}
\mathcal{C}(\varrho)=\inf_{\{p_i,|\psi_i\rangle\}}\sum_ip_i\mathcal{E}_{\textit{E}}(|\psi_i\rangle)\,.
\end{equation}
An explicit formula for this is given as follow
\begin{equation}
\mathcal{C}(\varrho)=\max\{\sqrt{\lambda_1}-\sqrt{\lambda_2}-\sqrt{\lambda_3}-\sqrt{\lambda_4}\}\,,
\end{equation}
where $\lambda_i$'s are eigenvalues of the Hermitian operator
\begin{equation}
\varrho\tilde{\varrho} \qquad \text{with} \qquad \tilde{\varrho}=\sigma_2\otimes\sigma_2\varrho^*\sigma_2\otimes\sigma_2\,
\end{equation}
which are sorted in descending order. The concurrence has been generalized to higher dimensional systems \cite{MKB04}
\begin{equation}
\mathcal{C}(\varrho)=\sqrt{2(1-\tr(\varrho_r))}\,,
\end{equation}
where $\varrho_r$ is the reduced density matrix.

Moreover, based on the concurrence, an entanglement measure for three-qubit states is deﬁned in Ref. \cite{CKW00}. That is the tangle:
\begin{equation}
\tau=\mathcal{C}_{A(BC)}^2-\mathcal{C}_{AB}^2-\mathcal{C}_{AC}^2\,.
\end{equation}

\section{Entanglement classification}\label{sec.2.5}
Multipartite entanglement is a physical resource, like energy, associated with the complex nonclassical correlations that are a basis for quantum-enhanced applications. Hence, it is important to characterize entanglement in order to learn which states are more powerful. Thanks to Schmidt decomposition, this was done by developing entanglement monotones for bipartite systems (see Section \ref{sec.2.4}). However, extension of these to multipartite systems is quite complicated. That is why a classification of multipartite entanglement is required to single out states that perform (probabilistically) equally well quantum information tasks. This is done by considering the following questions: Let us have two quantum states. How can we transfer entanglement from one state to another one? In other words, how can we testify entanglement equivalency of two quantum states using only local operations?

Because of nonlocality of quantum entanglement, a proper equivalent relation should be local operation. The most studied local operations are LU, LOCC, and SLOCC for mixed states and LU and SLOCC for pure states.

In this dissertation, we focus on pure multipartite quantum states.

While LU equivalence can provide a useful division into different equivalence classes, already for bipartite systems, a mathematical analysis is only possible in some lower dimensional systems \cite{Kraus10-1,Kraus10-2}. In contrast, SLOCC equivalence provides a neat characterization of different classes of entanglement.

\subsection{Entanglement classification of bipartite systems}\label{subsec.2.5.1}
Thanks to the existence of the Schmidt decomposition, bipartite entanglement classification can be fully determined under SLOCC.

\begin{theorem}
Schmidt rank is an SLOCC invariant that provides us with a complete entanglement classification of all bipartite states.
\end{theorem}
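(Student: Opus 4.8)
The plan is to reduce the claim to elementary linear algebra by \emph{matricizing} the state. First I would write a bipartite state as $|\psi\rangle=\sum_{i,j}\mathbf{c}_{ij}|ij\rangle$ and collect its amplitudes into the $d_A\times d_B$ coefficient matrix $C=(\mathbf{c}_{ij})$. By the Schmidt decomposition (equivalently, the singular value decomposition of $C$), the Schmidt coefficients are the singular values of $C$, so the Schmidt rank satisfies $\rk_S(\psi)=\rk(C)$. A direct computation then shows that the SLOCC action $|\psi\rangle\mapsto A\otimes B\,|\psi\rangle$, with $A\in\mathrm{SL}(d_A,\mathbbm{C})$ and $B\in\mathrm{SL}(d_B,\mathbbm{C})$, corresponds at the level of the coefficient matrix to the two-sided multiplication $C\mapsto A\,C\,B^{\mathrm{T}}$.

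With this dictionary in hand, the invariance is immediate: since $A$ and $B$ are invertible, so is $B^{\mathrm{T}}$, and multiplying a matrix on either side by an invertible matrix leaves its rank unchanged. Hence $\rk(A\,C\,B^{\mathrm{T}})=\rk(C)$, which says exactly that $\rk_S$ is constant on SLOCC orbits. In particular, two SLOCC-equivalent bipartite states must share the same Schmidt rank.

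For completeness I would prove the converse, namely that equal Schmidt rank forces SLOCC equivalence. The key tool is the rank normal form: any matrix $C$ of rank $r$ factors as $C=P\,D_r\,Q$ with $P,Q$ invertible, where $D_r$ is the $d_A\times d_B$ matrix $\begin{pmatrix}\mathbb{1}_r & 0\\ 0 & 0\end{pmatrix}$. Reading this back through the dictionary, every Schmidt-rank-$r$ state is carried by SLOCC to the canonical representative $|\Psi_r\rangle=\sum_{i=0}^{r-1}|ii\rangle$; by transitivity, any two states of the same Schmidt rank are SLOCC equivalent. Enumerating the attainable values $r\in\{1,\dots,\min(d_A,d_B)\}$ for nonzero states then yields precisely $\min(d_A,d_B)$ classes, one per Schmidt rank, consistent with the two orbits found in the two-qubit case.

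The one genuine subtlety --- and the step I expect to require the most care --- is the determinant constraint hidden in the definition of SLOCC. The matrices $P,Q$ delivered by Gaussian elimination lie in $\mathrm{GL}$, not necessarily in $\mathrm{SL}$. I would repair this by rescaling $P\mapsto(\det P)^{-1/d_A}P$ and $Q\mapsto(\det Q)^{-1/d_B}Q$, which places both in the special linear group at the price of multiplying $D_r$ by a nonzero global scalar. Because physical states are rays in $\mathbbm{P}(\mathcal{H})$ and SLOCC is inherently probabilistic, this overall scalar is immaterial, so the canonical-form reduction goes through with $\mathrm{SL}$ matrices as required.
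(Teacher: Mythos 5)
Your proof is correct and follows essentially the same route the paper takes: the paper leaves this theorem unproved in Chapter 2 but identifies the Schmidt rank with the rank of the coefficient matrix and, in \cref{l-multirank-SLOCC}, proves invariance by exactly your computation $\mathcal{M}[\tilde\psi]=A\,\mathcal{M}[\psi]\,B^{\rm T}$ with invertible local operators. Your completeness direction via the rank normal form and the determinant rescaling into $\mathrm{SL}$ is the standard argument and is handled correctly.
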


In the next chapters, we will see that the rank of the coefficient matrix of a quantum state is equal to its Schmidt rank.

\subsection{Classification of three-qubit entanglement}\label{subsec.2.5.2}
In this section, we recall the complete entanglement classification for both pure and mixed three-qubit states under SLOCC \cite{DVC00,ABLS01}.

\subsubsection{Pure three-qubit states}
Although there is no Schmidt decomposition for arbitrary states of more than two parties \cite{Peres95,Pati00} there is a generalization of Schmidt decomposition for pure multipartite quantum states \cite{CHS00,AACJLT00,AAJT01}.

For instance, In Ref. \cite{AACJLT00} the authors gave a generalized Schmidt decomposition for three-qubit pure states, in the sense that the coefficients of this decomposition carry all the information
about the non-local properties of the state.

\begin{theorem}[Ref. \cite{AACJLT00}]
Up to an LU operation, any three-qubit pure state can be written as follows
\begin{equation}\label{3-qubit-GSD}
|\psi\rangle=\lambda_0|000\rangle+\lambda_1{\rm{e}}^{\mathbf{i}\theta}|100\rangle+\lambda_2|101\rangle+\lambda_3|110\rangle+\lambda_4|111\rangle\,,
\end{equation}
where $\lambda_i\geq0$, $0\leq\theta\leq\pi$, and $\sum_i\lambda_i^2=1$.
\end{theorem}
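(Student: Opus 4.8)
The plan is to work with the coefficient tensor and reduce it one slice at a time. Writing the state as $|\psi\rangle=\sum_{i,j,k\in\{0,1\}}t_{ijk}|ijk\rangle$, I would package the eight amplitudes into two $2\times2$ complex ``slice'' matrices $(T_i)_{jk}=t_{ijk}$, $i\in\{0,1\}$, indexed by the $B$ and $C$ labels. The advantage of this repackaging is that a local unitary $U_A\otimes U_B\otimes U_C$ acts transparently: $U_B$ multiplies each slice on the left, $U_C^{\mathrm T}$ on the right, and $U_A$ linearly recombines the two slices, i.e. $T_i\mapsto\sum_{i'}(U_A)_{ii'}\,U_B\,T_{i'}\,U_C^{\mathrm T}$. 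The target canonical form is exactly the statement that, after a suitable choice of the three unitaries, $T_0=\mathrm{diag}(\lambda_0,0)$ (which forces $t_{001}=t_{010}=t_{011}=0$) while $T_1$ carries the remaining four amplitudes with only its top-left entry, the coefficient of $|100\rangle$, allowed to be complex.

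First I would use $U_A$ to make one slice singular. A first row $(a,b)$ of $U_A$ sends $T_0$ to $aT_0+bT_1$, and $\det(aT_0+bT_1)$ is a homogeneous quadratic form in $(a,b)$. Because $\mathbbm{C}$ is algebraically closed, this quadratic has a projective root $(a:b)$; normalising it so that $|a|^2+|b|^2=1$ and completing to a unitary produces a $U_A$ after which the new $T_0$ has vanishing determinant, hence rank $\le 1$. This is the crucial step, and also the one that conceals the degenerate cases (the quadratic vanishing identically, or both slices already singular), which I would treat separately and check that they still land in the claimed form with some $\lambda_i=0$. Once one slice has rank $\le 1$, its singular value decomposition collapses to a single singular value, so there exist $U_B,U_C$ with $U_B\,T_0\,U_C^{\mathrm T}=\mathrm{diag}(\lambda_0,0)$, $\lambda_0\ge0$; this simultaneously annihilates $t_{001},t_{010},t_{011}$ and leaves $T_1$ arbitrary, so $|\psi\rangle$ is already a superposition of the five kets $|000\rangle,|100\rangle,|101\rangle,|110\rangle,|111\rangle$.

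Finally I would fix the phases using the residual freedom that preserves $T_0=\mathrm{diag}(\lambda_0,0)$, which for $\lambda_0\neq0$ is exactly the diagonal phase gates $\mathrm{diag}(e^{\mathbf{i}\alpha_i})$ on each factor. Tracking how these six phases act on the five surviving amplitudes, one finds that a single combination, $\arg(t_{101}t_{110})-\arg(t_{100}t_{111})$, is invariant; the remaining freedom can therefore be spent to make $\lambda_0$ together with the coefficients of $|101\rangle,|110\rangle,|111\rangle$ real and non-negative, leaving only the coefficient of $|100\rangle$ with a phase $\theta$. The last, subtlest point is restricting $\theta$ to $[0,\pi]$: this comes from the leftover discrete freedom in the construction, namely the second root of the singularity quadratic (equivalently, swapping the two rank-one components), which conjugates the invariant phase and sends $\theta\mapsto-\theta\bmod 2\pi$, so a root can always be chosen landing in $[0,\pi]$. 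The normalisation $\sum_i\lambda_i^2=1$ is inherited from $\||\psi\rangle\|=1$ because every step above is a local unitary. I expect the genuine obstacles to be the singularity step (guaranteeing a usable root and unitary completion, and dispatching the degenerate configurations) and the careful phase bookkeeping that isolates the one invariant phase and pins down its fundamental domain.
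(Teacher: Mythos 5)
The dissertation gives no proof of this statement --- it is quoted by citation to Ac\'in \emph{et al.} \cite{AACJLT00} --- and your proposal is essentially a faithful reconstruction of that reference's argument: slice the coefficient tensor along the first qubit, use a root of the homogeneous quadratic $\det(aT_0+bT_1)$ (which exists since $\mathbbm{C}$ is algebraically closed) to make one slice rank one, diagonalize it to $\mathrm{diag}(\lambda_0,0)$ by singular value decomposition on the other two factors, and spend the residual diagonal phase freedom on all but one of the surviving amplitudes. The plan is sound and correctly flags its own weak points; the only claim you assert rather than establish is that the second root of the quadratic realizes $\theta\mapsto-\theta$ at fixed $\lambda_i$, which is precisely the uniqueness analysis carried out in the cited reference and can alternatively be obtained by noting that every generating polynomial LU invariant (cf.\ the concurrences and tangle in Eq.~\eqref{3-qubit-C-T}) depends on $\theta$ only through $\cos\theta$, so that $(\lambda_i,\theta)$ and $(\lambda_i,-\theta)$ lie on the same orbit of the compact local unitary group.
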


The concurrence and the tangle of the state in Eq. \eqref{3-qubit-GSD} reads
\begin{align}\nonumber
&\mathcal{C}_{AB}=2\lambda_0\lambda_3 \qquad \mathcal{C}_{BC}=2\sqrt{(\lambda_1\lambda_4)^2+(\lambda_1\lambda_3)^2 -2\lambda_1\lambda_2\lambda_3\lambda_4\cos\theta} \\ \label{3-qubit-C-T}
& \mathcal{C}_{AC}=2\lambda_0\lambda_2 \qquad \tau=4\lambda_0^2\lambda_4^2\,.
\end{align}

\begin{table}[t]
\centering
\caption[SLOCC classification of three-qubit entanglement]{\label{table:2.1} SLOCC classification of three-qubit entanglement.}
\begin{tabular*}{\linewidth}{c@{\extracolsep{\fill}}cccccc}
\hline\hline
&&&&& \\ [-2ex]
Class & Representative & $\mathcal{C}_{AB}$ & $\mathcal{C}_{AB}$ & $\mathcal{C}_{AB}$ & $\tau$ \\ [0.5ex]
\hline
&&&&& \\ [-2ex]
$|\rm{Sep}\rangle$ & $|000\rangle$ & $=0$ & $=0$ & $=0$ & $=0$ \\ [0.5ex]
$|{\rm B}_1\rangle$ & $|000\rangle+|011\rangle$ & $\neq0$ & $=0$ & $=0$ & $=0$ \\ [0.5ex]
$|{\rm B}_2\rangle$ & $|000\rangle+|101\rangle$ & $=0$ & $\neq0$ & $=0$ & $=0$ \\ [0.5ex]
$|{\rm B}_3\rangle$ & $|000\rangle+|110\rangle$ & $=0$ & $=0$ & $\neq0$ & $=0$ \\ [0.5ex]
$|{\rm{W}}\rangle$ & $|001\rangle+|010\rangle+|100\rangle$ & $\neq0$ & $\neq0$ & $\neq0$ & $=0$  \\ [0.5ex]
$|\rm{GHZ}\rangle$ & $|000\rangle+|111\rangle$ & $\neq0$ & $\neq0$ & $\neq0$ & $\neq0$  \\
\hline\hline
\end{tabular*}
\end{table}

\begin{figure}[t]
\centering
\subfloat[\centering]{\includegraphics[width=5.5cm]{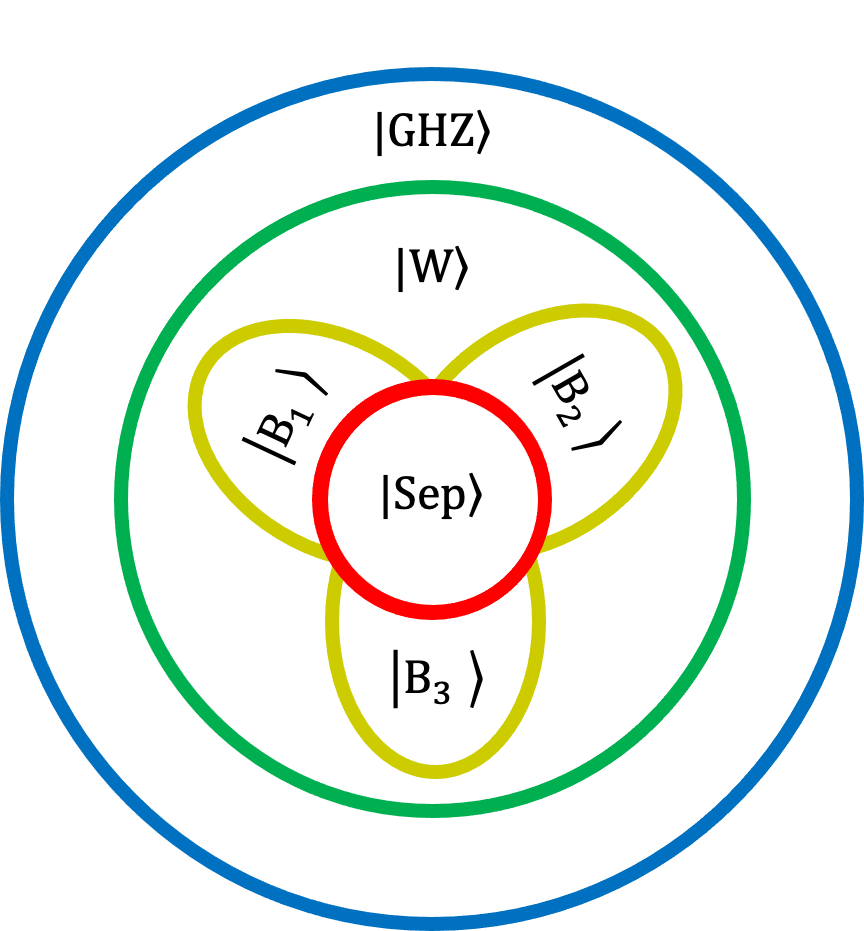}} \label{fig:2.4.1}
  \qquad
\subfloat[\centering]{\includegraphics[width=5.5cm]{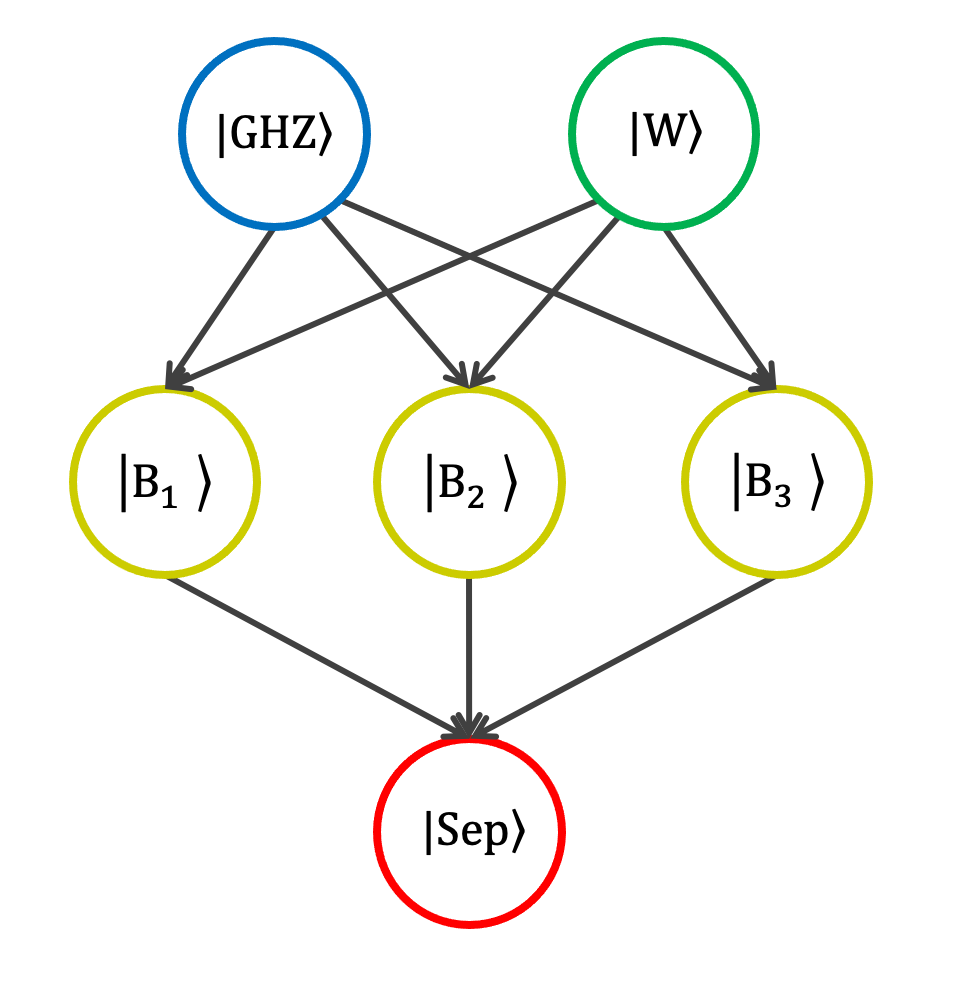}} \label{fig:2.4.2}
\caption[Entanglement classification of pure three-qubit states]{(a) SLOCC orbits of three-qubit entanglement. (b) Hasse diagram of the SLOCC classification of three-qubit entanglement. The direction of the arrows indicates noninvertible SLOCC transformations between classes that generate the entanglement hierarchy.}
\label{fig:3-qubit}%
\end{figure}

Since these entanglement measures are SLOCC invariants, we have the following complete entanglement classification for pure three-qubit states under SLOCC:
\begin{enumerate}
\item {\bf Fully separable:} This class contains fully separable states denoted by $\rm{Sep}$. This class represents all fully separable states that are SLOCC equivalent to $|000\rangle$, i.e., all pure three-qubit states with zero concurrences and tangle in Eq. \eqref{3-qubit-C-T}.

\item {\bf Biseprable $\mathbf{B}_1$:} This class contains biseparable states such that parties $B$ and $C$ are entangled and they are separable from party $A$. So this class contains all biseparable states that are SLOCC equivalent to $|{\rm B}_1\rangle=|0_A\rangle|{\rm{EPR}}_{BC}\rangle$, i.e., all pure three-qubit states with $\mathcal{C}_{AB}=\mathcal{C}_{AC}=\tau=0$ and $\mathcal{C}_{BC}\neq0$.

\item {\bf Biseprable $\mathbf{B}_2$:} This class contains biseparable states such that parties $A$ and $C$ are entangled and they are separable from party $B$. So this class contains all biseparable states that are SLOCC equivalent to $|{\rm B}_2\rangle=|0_B\rangle|{\rm{EPR}}_{AC}\rangle$, i.e., all pure three-qubit states with $\mathcal{C}_{AB}=\mathcal{C}_{BC}=\tau=0$ and $\mathcal{C}_{AC}\neq0$.

\item {\bf Biseprable $\mathbf{B}_3$:} This class contains biseparable states such that parties $A$ and $B$ are entangled and they are separable from party $C$. So this class contains all biseparable states that are SLOCC equivalent to $|{\rm B}_3\rangle=|0_C\rangle|{\rm{EPR}}_{AB}\rangle$, i.e., all pure three-qubit states with $\mathcal{C}_{AC}=\mathcal{C}_{BC}=\tau=0$ and $\mathcal{C}_{AB}\neq0$.
 
\item {\bf W:} This class represents all genuine entangled states such that are SLOCC equivalent to $|{\rm{W}}\rangle=|001\rangle+|010\rangle+|100\rangle$, i.e., all pure three-qubit states with non-zero concurrences and zero tangle in Eq. \eqref{3-qubit-C-T}.

\item {\bf GHZ:} This class which is denoted by $|{\rm{GHZ}}\rangle=|000\rangle+|111\rangle$ represent all genuine entangled states that are SLOCC equivalent to $|{\rm{GHZ}}\rangle$, i.e., all pure three-qubit states with non-zero concurrences and tangle in Eq. \eqref{3-qubit-C-T}.
\end{enumerate}

In summary, regarding entanglement classification of pure three-qubit states under the SLOCC, we have six different classes, namely, fully separable states, three different classes of biseparable states with respect to three different bipartitions, and two inequivalent genuine entangled states $\rm{W}$ and $\rm{GHZ}$. These classes and the their representative states from each class are summarized in Table \ref{table:2.1}. A visual representation of these SLOCC orbits is illustrated in Fig. \ref{fig:3-qubit} (a).

According to Eq. \eqref{SLOCC-equiv}, a noninvertible local operator transforms $|\psi\rangle$ into $|\varphi\rangle$ where at least one of the local operators is not full rank. So it is possible to transform each of $\rm{GHZ}$ and $\rm{W}$ states to one of the biseparable states or even to fully separable states. Note that the inverse transformations, e.g., from the class of fully separable states to one of the biseparable or genuine entangled states, are impossible as they would imply an increase of the rank of at least
one of the reduced density operators. These results are summarized in Fig. \ref{fig:3-qubit} (b).

\subsubsection{Mixed three-qubit states}
Now, we present a complete entanglement classification for mixed three-qubit states under SLOCC \cite{ABLS01}. A mixed three-qubit state $\varrho$ can be written as a convex combination of pure three-qubit states (See Eq. \ref{MixedState} in Subsection \ref{subsec.2.1.2}). Regarding the entanglement classification of pure three-qubit states and 
using \Cref{def:k-separable-mixed} in Subsection \ref{subsec.2.2.2} we have following classes:
\begin{enumerate}
\item {\bf Fully separable:} If $\varrho$ can be decomposed as a convex combination of projectors onto only pure separable states, then it belongs to the convex compact set of fully separable states, denoted by $\rm{Sep}$.

\item {\bf Biseparable (B):} If in the decomposition of $\varrho$ at least one pure biseparable state of any kind (and no genuine entangled state) is needed, then it belongs to the convex compact set of biseparable states. More precisely $\varrho$ belongs to $\rm{B}\setminus\rm{Sep}$.

\item {\bf W:} If in the decomposition of $\varrho$ at least one pure $\rm{W}$ state (and no pure $\rm{GHZ}$ state) is needed, then it belongs to the convex compact set of $\rm{W}$ states. More precisely $\varrho$ belongs to $\rm{W}\setminus\rm{B}$.

\item {\bf GHZ:} If in the decomposition of $\varrho$ at least one pure $\rm{GHZ}$ state is needed, then it belongs to the convex compact set of $\rm{GHZ}$ states. More precisely $\varrho$ belongs to $\rm{GHZ}\setminus\rm{W}$.
\end{enumerate}

\begin{figure}[t]
\centering
{\includegraphics[width=5.5cm]{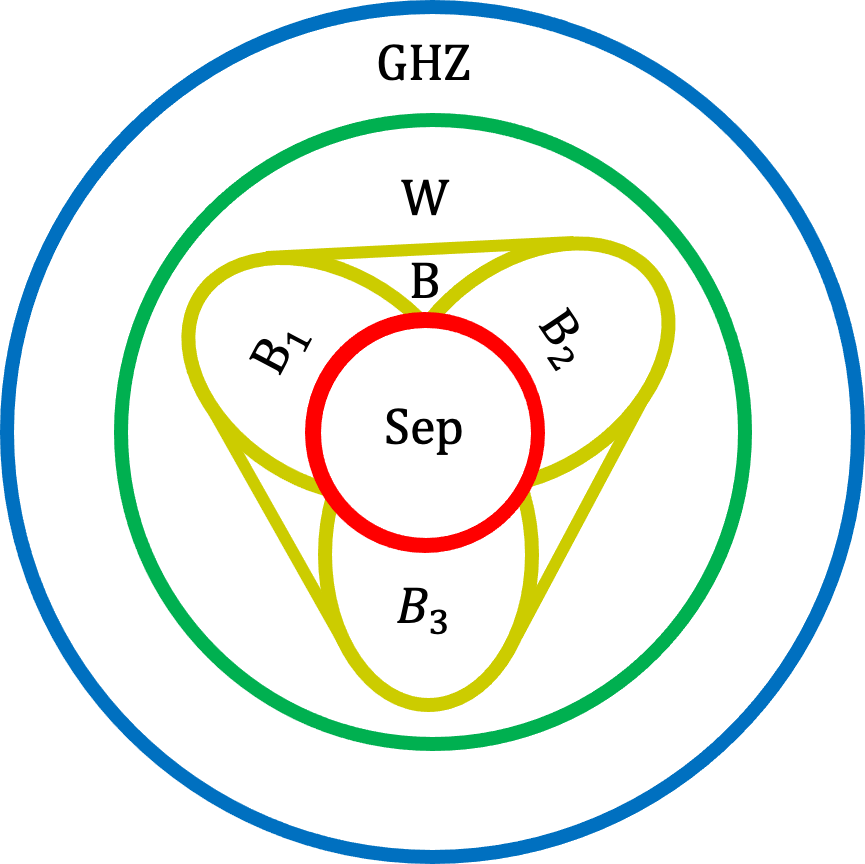}}
\caption[Entanglement classification of mixed three-qubit states]{ Entanglement classification of mixed three-qubit states under SLOCC.
$\rm{Sep}$: fully separable class; B: biseparable class (convex hull of biseparable states with respect to any bipartition), W class, and GHZ
class.}
\label{fig:3-qubit-mixed}%
\end{figure}

It is worth noting that these classes are embedded into each other (compare with Fig. \ref{fig:k-separable}), i.e.,
\[
\rm{Sep}\subset\rm{B}\subset\rm{W}\subset\rm{GHZ}\,.
\]
Also, it is important to note that $\rm{GHZ}\subset\rm{W}$ is not correct since otherwise the class $\rm{GHZ}$ would not be compact, as can be seen by studying the most general form of a $\rm{W}$-type state versus a $\rm{GHZ}$-type state, as given in \cite{AACJLT00}. In the next chapter, we also will see that we cannot produce a $\rm{GHZ}$-type state from a $\rm{W}$-type state but the reverse is approximately possible.

In summary, regarding entanglement classification of mixed three-qubit states under the SLOCC, we have four different classes, namely, fully separable class, biseparable class that is a convex hull of biseparable states with respect to any bipartition, $\rm{W}$ class, and $\rm{GHZ}$ class. A visual representation of these SLOCC classes is illustrated in Fig. \ref{fig:3-qubit-mixed}.

\subsection{Inductive entanglement classification}\label{subsec.2.5.3}
It is known that for four or more qubits there are an (uncountable) infinite number of SLOCC classes \cite{DVC00}. Hence, it is desirable to bunch the infinite number of SLOCC classes in a finite number of families with the common physical and/or
mathematical properties.

In Ref. \cite{VDDV02}, Verstraete et al., have introduced an entanglement classification of four-qubit systems. They have benefited from the the following elegant isomorphism
\begin{equation}
{\rm{SL}}(2,\mathbbm{C})\times{\rm{SL}}(2,\mathbbm{C})\cong{\rm{SO}}(4,\mathbbm{C})\,,
\end{equation}
and grouped the infinite number of SLOCC classes into nine families. However, this method cannot be extended to more than four qubits since there is no known isomorphism for this purpose.

Lamata et al. \cite{LLSS06} also introduced an inductive method to partition the infinite number of SLOCC classes into a finite number of families. Based on this method, they have bunched four qubits SLOCC classes into eight entanglement families (up to qubit permutation) \cite{LLSS07}. Some years later Backens has shown that the inductive approach yields ten entanglement families for four-qubit entangled states \cite{Backens17}. The main idea of this approach is to investigate the possible entanglement families of the right singular subspace of the coefficient matrix of each pure state $|\psi\rangle$ expressed in the canonical basis. In this order, one can write an $n$-qubit pure state as
\begin{equation}
|\psi\rangle=|e_{0}\rangle|\varphi_{0}\rangle+|e_{1}\rangle|\varphi_{1}\rangle\,,
\end{equation}
where $|e_{0}\rangle$ and $|e_{1}\rangle$ are two linearly independent states of the $i$-th ($i=1,\cdots,n$) qubit, and $|\varphi_{0}\rangle$ and $|\varphi_{1}\rangle$ are the states of the rest $n-1$ qubits. In general, normalization is not needed as SLOCC operations can change the norm of the states. The entanglement families are determined by considering all combinations of entanglement types of the rest $n-1$ qubits in different spanning sets for span$\{|\varphi_{0}\rangle,|\varphi_{1}\rangle\}$. In Refs. \cite{LLSS06,LLSS07}, the authors label the entanglement families according to the types of entangled vectors in the spanning set where $|\varphi_{0}\rangle$ and $|\varphi_{1}\rangle$ can take the values $000$, $0_{k}\Psi$, GHZ, or W. Here, $000$ denotes a fully separable state, while $0_{k}\Psi$ denotes a biseparable state where $0_{k}$ is the $k$-th single qubit in a product with an entangled state of the remaining qubits.

Here, we present two examples of four qubits states that invalidates this approach when we consider different partitioning of subsystems.

In Refs. \cite{LLSS06, LLSS07, Backens17}, the authors have considered the partition of the first qubit from the rest ($1|23$ and $1|234$) but there should be no loss of generality in choosing other partitions as it was also mentioned in Ref. \cite{LLSS06}. For the first example, let us consider two states of the family span$\{000,0_{k}\Psi\}$ for four qubits where the partition is $1|234$
\begin{align*}
&|0000\rangle+|1100\rangle+|1111\rangle=|0\rangle|000\rangle+|1\rangle|1\rangle(|00\rangle+|11\rangle)\,, \\
&|0000\rangle+|1101\rangle+|1110\rangle=|0\rangle|000\rangle+|1\rangle|1\rangle(|01\rangle+|10\rangle)\,.
\end{align*}
Now, let to consider the partition $123|4$ for the above states. We will have
\begin{align*}
&|0000\rangle+|1100\rangle+|1111\rangle=(|00\rangle+|11\rangle)|0\rangle|0\rangle+|111\rangle|1\rangle\,, \\
&|0000\rangle+|1101\rangle+|1110\rangle=(|000\rangle+|111\rangle)|0\rangle+|110\rangle|1\rangle\,.
\end{align*}
As we see by changing the partition (changing the algorithm) the first state remains in the family span$\{000,0_{k}\Psi\}$ but the second one goes to the family span$\{000,\text{GHZ}\}$. We can also see this problem with two other states of the family span$\{0_{k}\Psi,0_{k}\Psi\}$ by considering the partition as $1|234$
\begin{align*}
&|0000\rangle+|1100\rangle+\lambda_{1}|0011\rangle+\lambda_{2}|1111\rangle = \\
&|0\rangle|0\rangle(|00\rangle+\lambda_{1}|11\rangle)+|1\rangle|1\rangle(|00\rangle+\lambda_{2}|11\rangle)\,,
\end{align*}
and
\begin{align*}
&|0000\rangle+|1100\rangle+\lambda_{1}|0001\rangle+\lambda_{1}|0010\rangle+\lambda_{2}|1101\rangle+\lambda_{2}|1110\rangle = \\
&|0\rangle|0\rangle(|00\rangle+\lambda_{1}|01\rangle+\lambda_{1}|10\rangle)+|1\rangle|1\rangle(|00\rangle+\lambda_{2}|01\rangle+\lambda_{2}|10\rangle)\,.
\end{align*}
But when we consider the partition as $123|4$ we have
\begin{align*}
&|0000\rangle+|1100\rangle+\lambda_{1}|0011\rangle+\lambda_{2}|1111\rangle = \\
&(|00\rangle+|11\rangle)|0\rangle|0\rangle+(\lambda_{1}|00\rangle+\lambda_{2}|11\rangle)|1\rangle|1\rangle\,,
\end{align*}
and
\begin{align*}
&|0000\rangle+|1100\rangle+\lambda_{1}|0001\rangle+\lambda_{1}|0010\rangle+\lambda_{2}|1101\rangle+\lambda_{2}|1110\rangle = \\
&(|000\rangle+|110\rangle+\lambda_{1}|001\rangle+\lambda_{2}|111\rangle)|0\rangle+(\lambda_{1}|00\rangle+\lambda_{2}|11\rangle)|0\rangle|1\rangle\,,
\end{align*}
where the first state remains in the family span$\{0_{k}\Psi,0_{k}\Psi\}$ and the second one goes to the family span$\{0_{k}\Psi,\text{GHZ}\}$. It is worth noting that since genuine entanglement families for three qubits are merely W and GHZ with the symmetric canonical forms, there is no such a problem. Clearly this is due to the fact that these two inseparable classes are invariant under exchanging the parties and hence under changing the partition. The situation is different in the systems of four or more qubits. Indeed, from the above examples one can easily find out that there are some genuine entanglement families for four qubits for which the canonical forms are  not symmetric and these are where violation to the classification of Ref. \cite{LLSS07} takes place. Evidently, symmetric genuine entanglement families for four or more qubits are invariant under changing the partition, e.g. the family span$\{000,000\}$ and the family span$\{000,\text{W}\}$ in four qubits systems. Generally, entanglement families W and GHZ are symmetric in $n\geq3$ qubits systems.

In summary, if by changing the partition all states from one family are mapped to another family there would be no problem\footnote{It is worth to mention that for four or more qubits, the number of SLOCC classes is infinite, indeed an uncountable infinite. Therefore, Cantor's theorem ensures that there are infinite ways to allocate them into a finite number of families. It means that there are infinite number of ways to ``classify entanglement" into families. However, most of them are meaningless classifications from the point of view of physics. Hence, even if all states from one family are mapped to another family we have a meaningless entanglement classification.}, but we have seen that with the approach of Refs. \cite{LLSS06,LLSS07} it may happen that only some states go to another family. Hence, there is an overlap between some families of four-qubit entangled states and thus this approach cannot be used to exactly identify to which family a given four-qubit state belongs to. Furthermore, being the approach inductive, the entanglement classification turns out to be flawed also for more than four qubits systems.

%%%%%%%%%%%%%%%%%%%%%%%%%%%%%%%%%%%%%%%%%%%%%%%%%%%%%%%%%%%%%%%%%%%%%%%%%%%%%%%%%%%%%%%%%%%%%%%%%%%%%%%%%%
\chapter{Algebraic Geometry}\label{chap3}

\epigraph{Where are you?\\
In the boundless expanse of this world,\\
Where are you?\\
I am standing at the farthest end of the world.\\
By your side.}{\textit{Ahmad Shamloo, Meeting point}}

Projective space plays a central role in algebraic geometry. The aim of this chapter is to define the notion of multpartite entanglement in terms of abstract algebraic geometry.  So, within this chapter, we introduce some algebro-geometric tools that are needed to characterize multipartite entanglement. We start by the definition of affine space and affine variety and then we define projective space and projective variety. As examples, we introduce Veronese and Segre varieties. Then, we investigate their $k$-secant and $k$-tangent varieties. On the one hand any multipartite quantum state is indeed a tensor on the other hand the notion of $k$-secant variety is deeply connected to the notion of rank and border rank of tensors. Therefore, we also introduce tensor flattening and multilinear rank (multirank, for short). We will see that $k$-secant variety and $\ell$-multirank can be used as SLOCC invariants for the purpose of multipartite entanglement classification. In this chapter, we have used some materials and notions from Refs. \cite{Landsberg,Harris,Hartshorne,CLO15,Shafarevich,Holme}. The last section of this chapter is based on the Refs. \cite{GMO20,GM21}.

\section{Motivation}\label{sec.3.1}
One of the fundamental principles of quantum mechanics is that a quantum state that describes a quantum system corresponds to a vector in a Hilbert space $\mathcal{H}$, and that the Born rule gives the probability for a system in state $|\psi\rangle$ to be in state $|\varphi\rangle$ by
\begin{equation}
\text{Pr}(|\psi\rangle,|\varphi\rangle)=\frac{|\langle\psi|\varphi\rangle|^2}{|\langle\psi|\psi\rangle|\,|\langle\varphi|\varphi\rangle|}\,.
\end{equation}
Now, for any $\lambda\in\mathbbm{C}\setminus\{0\}$, one can see that $\text{Pr}(|\psi\rangle,|\varphi\rangle)=\text{Pr}(\lambda|\psi\rangle,|\varphi\rangle)=\text{Pr}(|\psi\rangle,\lambda|\varphi\rangle)=\text{Pr}(\lambda|\psi\rangle,\mu|\varphi\rangle)$. Therefore, $|\psi\rangle$ is said to be equivalent to the state $\lambda|\psi\rangle$, i.e.,
\begin{equation}\label{equivalence}
|\psi\rangle\sim\lambda|\psi\rangle \qquad \forall~\lambda\in\mathbbm{C}\setminus\{0\}\,,
\end{equation}
since these states will yield the same results when we use them in the Born rule. More precisely, the equivalence in Eq. \eqref{equivalence} comes from ${\rm{U}}(1)$ gauge symmetry in quantum theory. That is, if we consider normalized states $\langle\psi|\psi\rangle=1$, then the quantum states $|\psi\rangle$ and $\lambda|\psi\rangle$ represent the same physical state, for any $\lambda={\rm{e}}^{\mathbf{i}\delta}$ with $0\leq\delta<2\pi$ called the global phase (see \cref{U1-gauge}). It is known that one can not measure the global phase of the quantum states. So, one postulates that any calculation of a measurable quantity must be invariant under any change of phases, and therefore, the theory must be symmetric\footnote{The importance of such symmetries comes from Noether's theorem which states that such gauge symmetries lead to the conservation of a related quantity \cite{Noether}.} under such phase shifts.

Therefore, from a physics point of view and the fact that quantum mechanics is a ${\rm{U}}(1)$ gauge invariant theory, equivalent vectors in the Hilbert space represent the same pure quantum states. As a consequence, the proper state space of a quantum system is not the original Hilbert space $\mathcal{H}$ but rather the projective Hilbert space $\mathbbm{P}(\mathcal{H})$ where sets of equivalent states are its points. Thus, a natural way to study entanglement of pure states is with algebraic geometry, which is the ``language'' of projective spaces.

\section{Affine geometry}
In 1748, Leonhard Euler introduced the term ``affine'' in his book, ``Introductio in analysin infinitorum'' \cite{Euler}. After Felix Klein's Erlangen program, affine geometry was recognized as a generalization of Euclidean geometry \cite{Coxeter}. Actually, affine geometry is an incidence geometry that generalizes the Euclidean geometry when one forgets the notions of distance and angle. The only properties that are preserved are those related to parallelism and ratio of lengths for parallel line segments.

An affine space is nothing more than a vector space whose origin we try to forget about, by adding translations to the linear maps \cite{Berger}. In fact, the origin plays no special role in the affine space. Hence, any vector space is an affine space but the inverse is not true since in an affine space there is no distinctive point that serves as the origin. Therefore, we cannot associate a unique vector to a point in affine space. Instead, there are translation vectors between two points of the affine space. Hence, it makes sense to subtract two points of the space, giving a translation vector, but it does not make sense to add two points of the space. Subtraction of two points of the affine space can be seen in another way: adding a translation vector to a point of an affine space results in a new point translated from the starting point by that vector. See, for instance, Fig. \ref{fig:Affine-R3}.

\begin{figure}[t]
\center{\includegraphics[width=6cm]{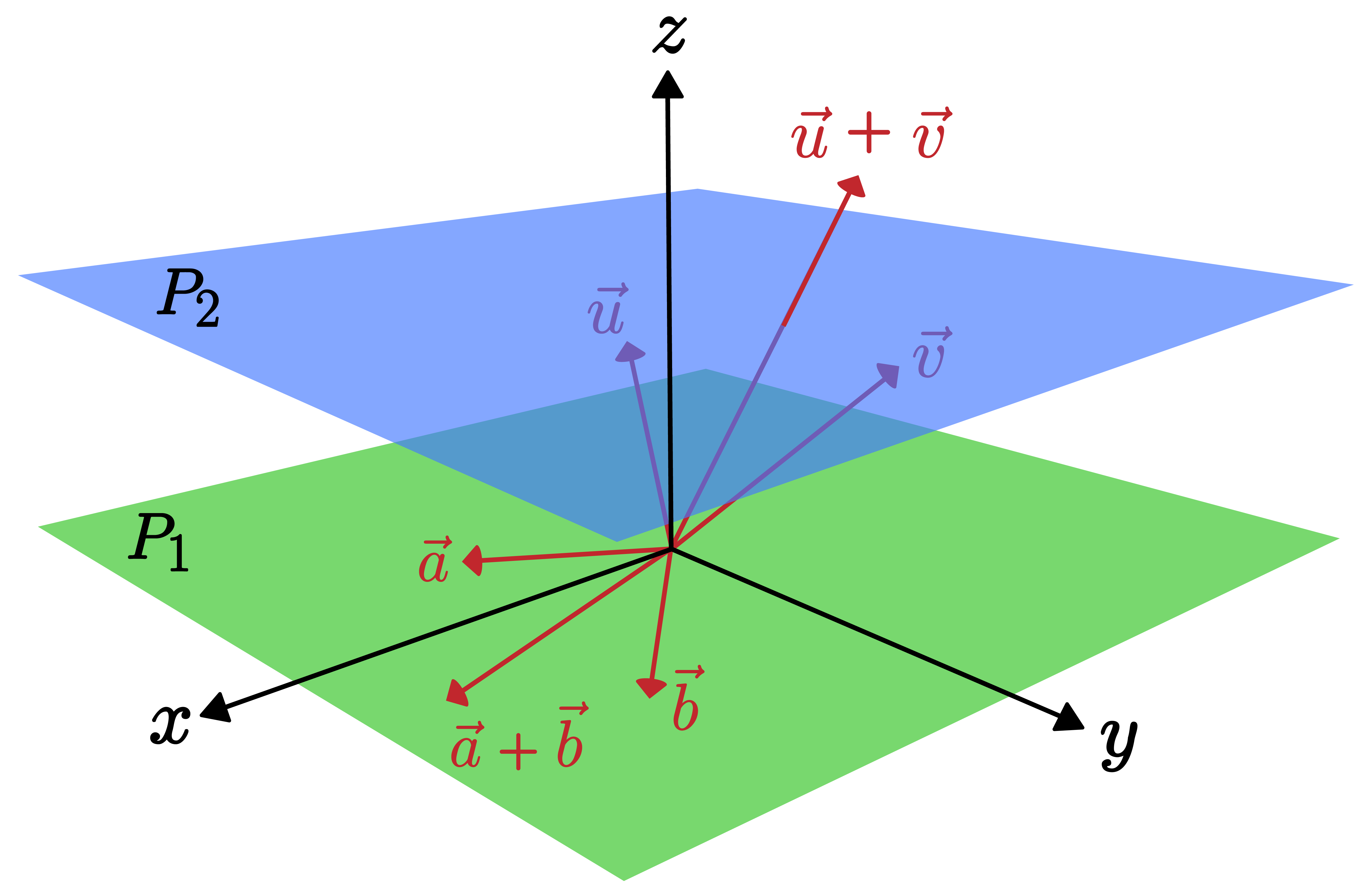}}
\caption[Affine space]{\label{fig:Affine-R3} Inside the vector space $\mathbbm{R}^3$ we can choose two planes $P_1$ (the green one) and $P_2$ (the blue one). While the plane $P_1$ that passes through the origin is a vector subspace ($P_1\cong\mathbbm{R}^2\subset\mathbbm{R}^3$), the plane $P_2$ is not a vector subspace. This is because $0\in P_1$ and $\vec{a}+\vec{b}\in P_1$ for any vectors $\vec{a},\vec{b}\in P_1$ but $0\notin P_2$ and $\vec{u}+\vec{v}\notin P_1$. The plane $P_2$ is an affine subspace.}
\end{figure}

\begin{definition}[Affine space]
Given a field $K$ and a positive integer $d$, we define the $d$-dimensional affine space over $K$, denoted by $\mathbbm{A}_K^d$, to be the set $K^d$ of all $d$-tuples of elements from $K$, i.e.,
\begin{equation}
\mathbbm{A}_K^d=\{(a_0,\ldots,a_{d-1})\mid a_j\in K~\forall~j\in\mathbbm{Z}_d\}\,.
\end{equation}
The elements of an affine space are called points.
\end{definition}

It worth remarking that within this chapter, $K$ is a commutative field.

\begin{remark}
The dimension of an affine space is defined as the dimension of its associated vector space that contains translations.
\end{remark}

Specifically, $\mathbbm{A}_K^1$ and $\mathbbm{A}_K^2$ are called affine line and affine plane, respectively. A $(d-1)$-dimensional affine subspace $\mathbbm{A}_K^{d-1}$ in an affine space (or a vector space) of dimension $d$ is called an affine hyperplane.

\subsection{Affine variety}
To link algebra and geometry, we need to relate polynomials over a field $K$ to an affine space over the same field.

\begin{definition}[Monomial]
A $d$-variate monomial is a product of the form
\begin{equation}
\mathbf{x}^{\alpha}=\prod_{k\in\mathbbm{Z}_d}x_k^{\alpha_k}\,,
\end{equation}
where $\alpha=(\alpha_0,\ldots,\alpha_{d-1})$ is a $d$-tuple of nonnegative integers. The total degree of this monomial is the sum $\alpha_0+\cdots+\alpha_{d-1}$.
\end{definition}

\begin{definition}[Polynomial]
A $d$-variate polynomial $f$ is a ﬁnite linear combination of $d$-variate monomials with coefficients in a field $K$. That is
\begin{equation}\label{Poly}
f(x_0,\ldots,x_{d-1})=\sum_{\alpha}a_{\alpha}\mathbf{x}^{\alpha}\,,
\end{equation}
where $a_{\alpha}\in K$ and the sum is over a finite number of $d$-tuples $\alpha=(\alpha_0,\ldots,\alpha_{d-1})$.
\end{definition}

\begin{definition}[Polynomial ring]
The set of all $d$-variate polynomials $f$ with coefﬁcients in a field $K$ is denoted by $K[x_{0},\ldots,x_{d-1}]$ which refers to a polynomial ring.
\end{definition}

Actually, under addition and multiplication, $K[x_{0},\ldots,x_{d-1}]$ satisfies all of the field axioms except for the existence of multiplicative inverses because, for example, $x^{-1}$ is not a polynomial. Such a mathematical structure is called a commutative ring.

The key idea of seeing how polynomials relate to an affine space is that a polynomial $f\in K[x_0,\ldots,x_{d-1}]$ corresponds to a multilinear form
\begin{equation}
f\colon\mathbbm{A}_K^d\to K\,.
\end{equation}
It is obvious that if we are given by a $(a_0,\ldots,a_{d-1})\in\mathbbm{A}_K^d$, then we can replace every $x_i$ in the expression of $f$ by $a_i$. Since the coefficients of $f$ are also in the field $K$, this function gives an element of the field $K$, i.e., $f(a_0,\ldots,a_{d-1})\in K$.

\begin{definition}[Algebraically closed field]
A field $K$ is algebraically closed if every nonconstant polynomial in $K[x_{0}, \ldots,x_{d-1}]$ has a root in $K$.
\end{definition}
Therefore, $\mathbbm{R}$ is not algebraically closed, whereas $\mathbbm{C}$ is\footnote{It is proved as the fundamental theorem of algebra (see Ref. \cite{Sheldon}).}.

Roughly speaking, every variety can be defined as the locus of vanishing a finite number of polynomials.

\begin{definition}[Affine variety]
Let $F=\{f_1,\ldots,f_s\}\subseteq K[x_0,\ldots,x_{d-1}]$ be a set of polynomials. The zero locus of the set of polynomials $F$, i.e.,
\begin{equation}
\mathfrak{V}(F)=\{(a_0,\ldots,a_{d-1})\in\mathbbm{A}_K^d\mid f_i(a_0,\ldots,a_{d-1})=0~\forall~f_i\in F\}\,.
\end{equation}
is called affine variety.
\end{definition}

In other words, an affine variety $\mathfrak{V}(F)\subseteq\mathbbm{A}_K^d$ is the the set of all solutions of the system of equations $f_1(x_0,\ldots,x_{d-1})=\cdots=f_s(x_0,\ldots,x_{d-1})=0$.

\begin{example}
The affine variety $\mathfrak{V}(f)$ with
\[
f(x,y,z)=3 (x^2+y^2+z^2)-2(x^2y^2+x^2z^2+y^2 z^2)-3\,,
\]
is shown in Fig. \ref{fig:3.2}.
\begin{figure}[t]
\center{\includegraphics[width=4cm]{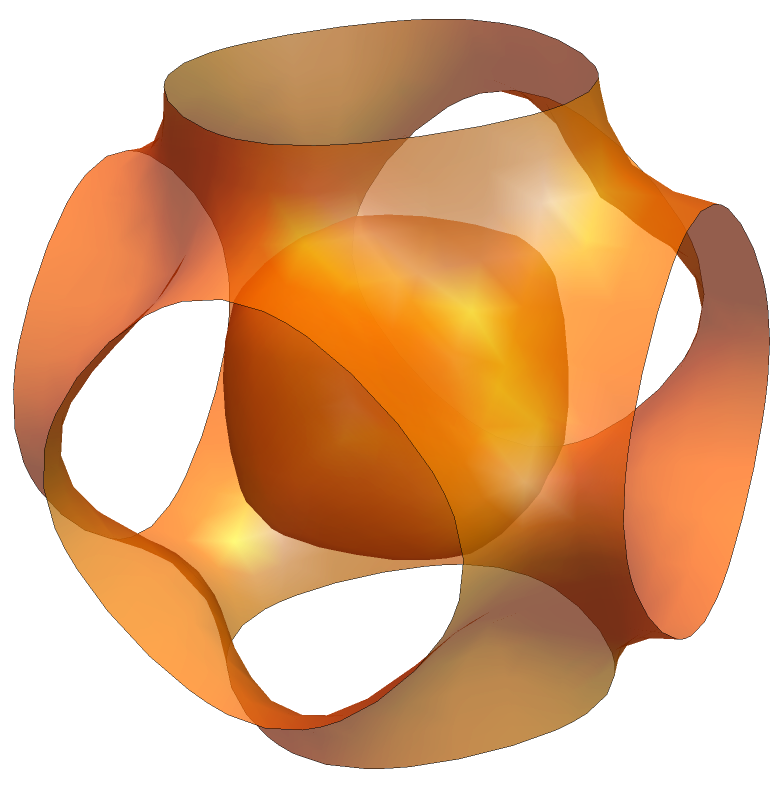}}
\caption[Affine variety]{\label{fig:3.2} %({\href{https://www.wolfram.com/mathematica/}{Wolfram Mathematica$^{\text{\textregistered}}$}})
The affine variety $\mathfrak{V}(3 (x^2+y^2+z^2)-2(x^2y^2+x^2z^2+y^2 z^2)-3)$.}
\end{figure}
\end{example}

\begin{definition}[Orbit]
Let $G\subseteq{\rm{GL}}(d,K)$ be a group and $\mathfrak{X}\subseteq\mathbbm{A}_K^d$ be an affine variety. The $G$-orbit of a point $p=(a_0,\ldots,a_{d-1})\in\mathfrak{X}$ is the set
\begin{equation}
[p]_G:=G\cdot p=\{g\cdot p\mid g\in G\}\,.
\end{equation}
That is the full set of points that $p$ is sent to under the action of  group $G$.
\end{definition}
\begin{definition}[Orbit space]
The orbit space, is the set of all orbits of $\mathfrak{X}\subseteq\mathbbm{A}_K^d$ under the action of a group $G$ and is denoted by
\begin{equation}
\frac{\mathfrak{X}}{G}\,.
\end{equation}
It is also called the quotient of the action.
\end{definition}

Since orbits are equivalence classes, therefore the orbit space $\mathbbm{A}_K^d/G$ is the set of equivalence classes of $\sim_G$. Moreover, the orbits are disjoint since $\sim_G$ is an equivalence relation.

\begin{remark}
Given an action $(G,\mathfrak{X})$, the problem of finding the parameterization of $\mathfrak{X}/G$ and the orbits is known as the classification problem.
\end{remark}

%The following theorem state that 
\begin{theorem}
The orbit space $\mathfrak{X}/G$ has the structure of an afﬁne variety.
\end{theorem}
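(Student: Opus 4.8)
The plan is to realize the quotient as the spectrum of the ring of invariant functions, which is the standard construction of geometric invariant theory. First I would pass from geometry to algebra: the affine variety $\mathfrak{X}\subseteq\mathbbm{A}_K^d$ has coordinate ring $A=K[x_0,\ldots,x_{d-1}]/I(\mathfrak{X})$, a finitely generated $K$-algebra, and the linear action of $G\subseteq\GL(d,K)$ on $\mathbbm{A}_K^d$ induces an action of $G$ on $A$ by $K$-algebra automorphisms via $(g\cdot f)(p)=f(g^{-1}\cdot p)$. A regular function descends to the orbit space exactly when it is $G$-invariant, so the natural candidate for the coordinate ring of $\mathfrak{X}/G$ is the invariant subring $A^G=\{f\in A\mid g\cdot f=f\ \forall g\in G\}$.

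The heart of the argument — and the step I expect to be the main obstacle — is to show that $A^G$ is again a finitely generated $K$-algebra. This is Hilbert's finiteness theorem; it is not automatic for an arbitrary group, as Nagata's counterexamples to Hilbert's fourteenth problem show, but it holds whenever $G$ is (geometrically) reductive, by the Hilbert--Nagata theorem. In the SLOCC setting relevant to this dissertation $G=\SL(d_1,\mathbbm{C})\times\cdots\times\SL(d_n,\mathbbm{C})$ is a product of special linear groups, hence semisimple and reductive, so finite generation is guaranteed. Over $\mathbbm{C}$ the mechanism is especially clean: there is a Reynolds operator, a $G$-equivariant $A^G$-linear projection $R\colon A\to A^G$, and feeding it into the Hilbert basis theorem applied to the ideal of $A$ generated by the positive-degree invariants yields a finite generating set for $A^G$.

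Granting finite generation, I would choose generators $f_1,\ldots,f_m$ of $A^G$. They assemble into a morphism $\mathfrak{X}\to\mathbbm{A}_K^m$ whose image is an affine variety with coordinate ring precisely $A^G$ (its defining ideal is the kernel of $K[y_1,\ldots,y_m]\to A^G$, i.e. the algebraic relations among the $f_i$). Setting $\mathfrak{X}/G:=\mathrm{Spec}(A^G)$ thus produces an affine variety, and the inclusion $A^G\hookrightarrow A$ dualizes to the quotient morphism $\pi\colon\mathfrak{X}\to\mathfrak{X}/G$, which is constant on $G$-orbits.

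Finally I would address the subtlety hidden in the phrase ``orbit space''. Because invariant functions cannot separate an orbit from other orbits lying in its closure, the points of $\mathrm{Spec}(A^G)$ are in bijection not with all orbits but with the \emph{closed} orbits, each fibre of $\pi$ containing exactly one closed orbit. Thus $\mathfrak{X}/G$ is properly the categorical (GIT) quotient, and it coincides set-theoretically with the naive orbit space precisely when every $G$-orbit is closed. I would make this hypothesis explicit, noting that it is the price one pays for endowing the quotient with the structure of an affine variety.
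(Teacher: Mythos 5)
The dissertation states this theorem without proof --- it appears as a bare assertion in the background chapter, imported from the cited algebraic-geometry references --- so there is no argument of the paper's to compare yours against. Your proposal is the standard and correct construction: pass to the coordinate ring $A$, take the invariant subring $A^G$, invoke Hilbert--Nagata finite generation for reductive $G$, and realize the quotient as the affine variety with coordinate ring $A^G$. The two caveats you raise are not incidental; they are exactly the hypotheses the theorem as stated is missing. Without reductivity, $A^G$ need not be finitely generated (Nagata), and even with it, $\mathrm{Spec}(A^G)$ parametrizes closed orbits rather than all orbits, so the literal ``orbit space'' carries an affine structure only when every orbit is closed --- which fails already for the SLOCC actions central to this dissertation, where generic orbits are non-closed and the GIT quotient identifies an orbit with everything in its closure. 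Your proof is therefore better regarded as a proof of the corrected statement (the categorical quotient $\mathfrak{X}/\!/G$ is affine for reductive $G$) than of the one printed.
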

%\begin{theorem}
%Let $G\subseteq{\rm{GL}}(d,K)$ be a finite group and $\mathfrak{X}\subseteq\mathbbm{A}_K^d$ be an affine variety, where $K$ is an algebraically closed field. Suppose that $K[x_1,\ldots,x_{d-1}]^G=K[f_1,\ldots,f_s]$. Then:
%\begin{enumerate}
%\item[(i)] The polynomial mapping $F:\mathfrak{X}\to\mathfrak{V}_F$ defined by $F(p)=(f_1(p),\ldots,f_s(p))$ is surjective.
%\item[(ii)] The map sending the $G$-orbit $G\cdot p\subseteq\mathfrak{X}$ to the point $F(p)\in\mathfrak{V}_F$ is injective.
%\end{enumerate}
%\end{theorem}

\begin{lemma}\label{variety-cup-cap}
If $\mathfrak{X},\mathfrak{Y}\subseteq\mathbbm{A}_K^d$ are two affine varieties, then $\mathfrak{X}\cup\mathfrak{Y}$ and $\mathfrak{X}\cap\mathfrak{Y}$ are also affine varieties.
\end{lemma}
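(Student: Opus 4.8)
The plan is to use the defining description of an affine variety as the zero locus of a finite set of polynomials. Write $\mathfrak{X}=\mathfrak{V}(F)$ with $F=\{f_1,\dots,f_s\}$ and $\mathfrak{Y}=\mathfrak{V}(G)$ with $G=\{g_1,\dots,g_t\}$, where all $f_i,g_j\in K[x_0,\dots,x_{d-1}]$. For each of $\mathfrak{X}\cap\mathfrak{Y}$ and $\mathfrak{X}\cup\mathfrak{Y}$ the goal is then to exhibit an explicit finite collection of polynomials whose common zero locus is exactly that set, which by definition makes it an affine variety.

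For the intersection, the claim is that $\mathfrak{X}\cap\mathfrak{Y}=\mathfrak{V}(F\cup G)$. This should be immediate from the definitions: a point $p\in\mathbbm{A}_K^d$ lies in $\mathfrak{X}\cap\mathfrak{Y}$ iff it annihilates every $f_i$ and every $g_j$, which is precisely the condition to lie in $\mathfrak{V}(\{f_1,\dots,f_s,g_1,\dots,g_t\})$. Since $F\cup G$ is finite, this settles the intersection with essentially no work.

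For the union, the step I expect to be the crux is choosing the right set of polynomials: instead of the $f_i$ and $g_j$ themselves, I would take all pairwise products,
\[
H=\{\,f_i g_j \mid 1\leq i\leq s,\ 1\leq j\leq t\,\},
\]
and claim $\mathfrak{X}\cup\mathfrak{Y}=\mathfrak{V}(H)$. The inclusion $\subseteq$ is straightforward: if $p\in\mathfrak{X}$ then every $f_i(p)=0$, so every product $f_i(p)g_j(p)=0$, and symmetrically if $p\in\mathfrak{Y}$; hence $p\in\mathfrak{V}(H)$ in either case. The reverse inclusion is where the field hypothesis enters: suppose $p\in\mathfrak{V}(H)$ but $p\notin\mathfrak{X}$, so that $f_{i_0}(p)\neq0$ for some index $i_0$. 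For every $j$ we have $f_{i_0}(p)g_j(p)=0$, and since $K$ is a field (hence has no zero divisors), we may cancel $f_{i_0}(p)$ to conclude $g_j(p)=0$ for all $j$, i.e. $p\in\mathfrak{Y}$. Thus every point of $\mathfrak{V}(H)$ lies in $\mathfrak{X}\cup\mathfrak{Y}$. As $H$ is again finite (of size $st$), this proves $\mathfrak{X}\cup\mathfrak{Y}$ is an affine variety.

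The only genuine subtlety is the cancellation argument in the reverse inclusion for the union; everything else is bookkeeping with the definition. It is worth recording that this is exactly the place where the absence of zero divisors in $K$ is used, a property guaranteed by the standing assumption that $K$ is a commutative field.
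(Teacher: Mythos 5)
Your proof is correct and follows exactly the route the paper takes: it exhibits $\mathfrak{X}\cap\mathfrak{Y}=\mathfrak{V}(f_1,\ldots,f_s,g_1,\ldots,g_t)$ and $\mathfrak{X}\cup\mathfrak{Y}=\mathfrak{V}(f_ig_j\mid 1\leq i\leq s,\ 1\leq j\leq t)$, which are precisely the generating sets the paper records after the lemma. The only difference is that you also spell out the verification (in particular the zero-divisor cancellation for the union), which the paper leaves implicit.
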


In \Cref{variety-cup-cap}, suppose that $\mathfrak{X}=\mathfrak{X}(f_1,\ldots,f_s)$ and $\mathfrak{Y}=\mathfrak{Y}(g_1,\ldots,g_t)$. Then,
\begin{equation}
\mathfrak{X}\cup\mathfrak{Y}=\mathfrak{V}(f_ig_j\mid 1\leq i\leq s, 1\leq j\leq t)\,,
\end{equation}
and
\begin{equation}
\mathfrak{X}\cap\mathfrak{Y}=\mathfrak{V}(f_1,\ldots,f_s,g_1,\ldots,g_t)\,,
\end{equation}
are affine varieties.

\begin{definition}[Irreducible variety]
An affine variety $\mathfrak{V}\subset\mathbbm{A}_K^d$ is called reducible if it can be written as a non-trivial union of two subvarieties $\mathfrak{V}=\mathfrak{V}_1\cup\mathfrak{V}_2$ where $\mathfrak{V}_1\neq\emptyset$ and $\mathfrak{V}_2\neq\emptyset$. Otherwise it is called irreducible.
\end{definition}

\begin{definition}[Ideal]
A subset $\mathbf{I}\subseteq K[x_{0},\ldots,x_{d-1}]$ is called an ideal if it satisfies the following:
\begin{enumerate}
\item[(i)] $0\in\mathbf{I}$.
\item[(ii)] If $f_1,f_2\in\mathbf{I}$, then $f_1+f_2\in\mathbf{I}$.
\item[(iii)] If $f\in\mathbf{I}$ and $g\in K[x_{0},\ldots,x_{d-1}]$, then $fg\in\mathbf{I}$.
\end{enumerate}
\end{definition}

\begin{definition}[Ideal of a variety]
Let $\mathfrak{V}\subseteq\mathbbm{A}_K^d$ be an afﬁne variety. We define the ideal of this variety as follows
\begin{equation}
\mathbf{I}(\mathfrak{V})=\{f\in K[x_{0},\ldots,x_{d-1}]\mid f(a_0,\ldots,a_{d-1})=0~\forall~(a_0,\ldots,a_{d-1})\in\mathfrak{V}\}\,.
\end{equation}
\end{definition}

\begin{remark}
The ideal of an affine variety $\mathfrak{V}\subseteq\mathbbm{A}_K^d$, $\mathbf{I}(\mathfrak{V})$, is an ideal.
\end{remark}

\begin{definition}[Subset closure]
The algebro-geometric closure of a subset of an affine space $S\subset\mathbbm{A}_K^d$ is defined as follows
\begin{equation}
\bar{S}=\{(a_0,\ldots,a_{d-1})\in\mathbbm{A}_K^d\mid f(a_0,\ldots,a_{d-1})=0~\forall~f\in\mathbf{I}(S)\}=\mathfrak{V}(\mathbf{I}(S))\,.
\end{equation}
A subset $S\subset\mathbbm{A}_K^d$ is closed if $S=\bar{S}$; $T \subset\mathbbm{A}_K^d$ is open if its complement $\mathbbm{A}_K^d\setminus T$ is closed in $\mathbbm{A}_K^d$.
\end{definition}

\begin{lemma}\label{lem:Zariski}
(i) A finite union of affine varieties $\cup_{i=1}^{N}\mathfrak{V}_i$ is an affine variety. \\ (ii) An arbitrary intersection of affine varieties $\cap_j\mathfrak{V}_j$ is an affine variety.
\end{lemma}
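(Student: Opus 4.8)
The plan is to treat the two parts separately, reducing each to the two-variety case already established in \Cref{variety-cup-cap}, and then handling the finiteness issue that genuinely arises only in the unbounded intersection of part (ii).

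For part (i), I would argue by induction on $N$. The case $N=1$ is trivial and $N=2$ is exactly \Cref{variety-cup-cap}. For the inductive step I would write $\cup_{i=1}^{N}\mathfrak{V}_i=\big(\cup_{i=1}^{N-1}\mathfrak{V}_i\big)\cup\mathfrak{V}_N$; by the inductive hypothesis the first term is an affine variety, and \Cref{variety-cup-cap} applied to the two affine varieties $\cup_{i=1}^{N-1}\mathfrak{V}_i$ and $\mathfrak{V}_N$ shows their union is again an affine variety. Concretely, if $\cup_{i=1}^{N-1}\mathfrak{V}_i=\mathfrak{V}(h_1,\ldots,h_r)$ and $\mathfrak{V}_N=\mathfrak{V}(f_1,\ldots,f_s)$, then the union is the zero locus of the finite collection of products $h_kf_\ell$, hence an affine variety.

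For part (ii), write each variety as $\mathfrak{V}_j=\mathfrak{V}(F_j)$ for a finite set of polynomials $F_j$. A point lies in $\cap_j\mathfrak{V}_j$ if and only if it is a common zero of every polynomial in every $F_j$, so $\cap_j\mathfrak{V}_j=\mathfrak{V}\big(\cup_j F_j\big)$. The subtlety is that the index set may be infinite, so $\cup_j F_j$ need not be finite, as the definition of an affine variety demands. To resolve this I would pass to the ideal $\mathbf{I}=\langle\cup_j F_j\rangle$ generated by this family; since $\mathfrak{V}(S)=\mathfrak{V}(\langle S\rangle)$ for any set $S$ of polynomials, we get $\cap_j\mathfrak{V}_j=\mathfrak{V}(\mathbf{I})$. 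Invoking the Hilbert Basis Theorem, namely that $K[x_0,\ldots,x_{d-1}]$ is Noetherian so that every ideal is finitely generated, I obtain polynomials $g_1,\ldots,g_m$ with $\mathbf{I}=\langle g_1,\ldots,g_m\rangle$, whence $\cap_j\mathfrak{V}_j=\mathfrak{V}(g_1,\ldots,g_m)$ is the zero locus of a finite set of polynomials, i.e.\ an affine variety.

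The main obstacle is precisely this finiteness requirement in part (ii): an arbitrary intersection a priori involves infinitely many vanishing conditions, and matching the finite-generation clause in the definition of affine variety is exactly where the Noetherian property of the polynomial ring is indispensable. Part (i) and the set-theoretic equality $\cap_j\mathfrak{V}_j=\mathfrak{V}(\cup_j F_j)$ are purely formal manipulations; the only genuine external input needed is the Hilbert Basis Theorem.
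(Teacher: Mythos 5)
Your proof is correct. The paper itself states \Cref{lem:Zariski} without proof (it is quoted as a standard fact from the cited algebraic-geometry references), so there is no in-text argument to compare against; the only ingredient the paper supplies is the two-variety case, \Cref{variety-cup-cap}, together with its explicit defining equations. Your part (i) is exactly the expected bootstrap of that lemma by induction, and your part (ii) correctly isolates the one point that is not purely formal: the paper's definition of an affine variety demands a \emph{finite} set of defining polynomials, so the identity $\cap_j\mathfrak{V}_j=\mathfrak{V}\bigl(\cup_j F_j\bigr)$ alone does not suffice when the index set is infinite, and passing to the generated ideal and invoking the Hilbert Basis Theorem is the standard (and essentially unavoidable) way to recover finiteness. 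One small remark: the reverse inclusion $\mathfrak{V}(h_kf_\ell\mid k,\ell)\subseteq\mathfrak{V}(h_1,\ldots,h_r)\cup\mathfrak{V}(f_1,\ldots,f_s)$ silently uses that $K$ has no zero divisors; since the paper assumes $K$ is a field this is harmless, but it is worth being aware that the union formula is where that hypothesis enters.
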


The \Cref{lem:Zariski} shows that closed subsets of an afﬁne space (in the sense of algebraic geometry) satisfy the axioms of a topological space.

\begin{definition}[Zariski topology on an affine space]
The Zariski topology on an affine space $\mathbbm{A}_K^d$ is the topology whose closed subsets are the affine varieties $\mathfrak{V}(\mathbf{I})$ for $\mathbf{I}\subseteq K[x_0,\ldots,x_{d-1}]$.
\end{definition}

\begin{definition}[Zariski topology on an affine variety]
The Zariski topology on an affine variety $\mathfrak{X}\subseteq\mathbbm{A}_K^d$ is the restricted topology, that is, the closed subsets of $\mathfrak{X}$ are subvarieties of $\mathfrak{X}$, i.e., $\{\mathfrak{X}\cap\mathfrak{V}(\mathbf{I})\mid\mathbf{I}\subseteq K[x_0,\ldots,x_{d-1}]\}$.
\end{definition}

\section{Projective geometry}
Projective geometry has its origins in the early Italian Renaissance, particularly in the architectural drawings of Filippo Brunelleschi (1377–1446) and Leon Battista Alberti (1404–72), who invented the method of perspective drawing \cite{Artmann}. 

Projective geometry is less restrictive than affine geometry. A projective space can be seen as an extension of a Euclidean space, that is an affine space with points at infinity, in such a way that there is one point at infinity of each direction of parallel hyperplanes, and so two hyperplanes always intersect. Actually, in projective space there is a hyperplane at infinity which is the locus of all points of intersections of parallel hyperplanes.

Consider the real projective plane $\mathbbm{R}\mathbbm{P}^2$ which is an example of a compact non-orientable two-dimensional manifold, i.e., it is a one-sided surface. M\"obius strip can be constructed from a square by gluing two of its sides together with a half-twist. The real projective plane can thus be constructed from a M\"obius strip by gluing opposite open edges of the strip together.

We define an equivalence relation $\sim$ on non-zero points of an affine space $\mathbbm{A}_K^d$ by setting
\begin{equation}\label{equiv-1}
(a_0,\ldots,a_{d-1})\sim (b_0,\ldots,b_{d-1})\,,
\end{equation}
if there is a non-zero element $\lambda\in K$ such that
\begin{equation}\label{equiv-2}
(a_0,\ldots,a_{d-1})=\lambda(b_0,\ldots,b_{d-1})\,.
\end{equation}

\begin{definition}[Projective space]
A $d$-dimensional projective space over the field $K$, denoted by $\mathbbm{P}_K^{d-1}$, is the set of equivalence classes of $\sim$ on $\mathbbm{A}_K^{d}$. That is
\begin{equation}
\mathbbm{P}_K^{d-1}=\frac{\mathbbm{A}_K^d}{\sim}\,.
\end{equation}
\end{definition}

\begin{definition}[Ray]
Equivalence classes of $\sim$ on $\mathbbm{A}_K^{d}$ are called rays (or fibers). So,
\[
\mathbbm{P}_K^{d-1}\cong\{\text{rays through the origin in}~K^d\}\,.
\]
\end{definition}

Given a point $p=(a_0,\ldots,a_{d-1})\in\mathbbm{A}_K^{d}$, its equivalence class $[p]\in\mathbbm{P}_K^{d-1}$ will be denoted by $[a_0:\cdots:a_{d-1}]$, and we will say that $[a_0:\cdots:a_{d-1}]$ are homogeneous coordinates of point $p$. Thus, for $\lambda\in K\setminus\{0\}$ we have
\begin{equation}
[a_0:\cdots:a_{d-1}]=[b_0:\cdots:b_{d-1}]~\Leftrightarrow~(a_0,\ldots,a_{d-1})=\lambda(b_0,\ldots,b_{d-1})\,.
\end{equation}

Whenever $a_0\neq0$, we may assume that $a_0=1$, and with this assumption the other coordinates are uniquely determined. Thus, we may identify the subset
\begin{equation}
\{[a_0:a_1\cdots:a_{d-1}]\mid a_0\neq0\}\subset\mathbbm{P}_K^{d-1}\,,
\end{equation}
with $\mathbbm{A}_K^{d-1}$ by letting $[a_0:a_1:\cdots:a_{d-1}]$ correspond to $(\frac{a_1}{a_0},\ldots,\frac{a_{d-1}}{a_0})$. The set
\begin{equation}
\{[a_0:a_1:\cdots:a_{d-1}]\mid a_0=0\}\subset\mathbbm{P}_K^{d-1}\,,
\end{equation}
corresponds to the points at infinity. This subset may in turn be identiﬁed with $\mathbbm{P}_K^{d-2}$ in an obvious way by forgetting the first coordinate $a_0$, which is zero. We obtain the following description of $\mathbbm{P}_K^{d-1}$:
\begin{equation}\label{Affine+Proj}
\mathbbm{P}_K^{d-1}=\mathbbm{A}_K^{d-1}\cup\mathbbm{P}_K^{d-2}\,,
\end{equation}
and we say that the projective space $\mathbbm{P}_K^{d-1}$ is obtained by adjoining to an affine space $\mathbbm{A}_K^{d-1}$ a projective space space $\mathbbm{P}_K^{d-2}$ of points at infinity.

\begin{corollary}
In projective geometry, affine space means the complement of a hyperplane at infinity in a projective space.
\end{corollary}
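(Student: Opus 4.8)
The plan is to read the statement off directly from the decomposition recorded in Eq.~\eqref{Affine+Proj}, making precise that the ``hyperplane at infinity'' is the zero locus of a single linear form and that its complement carries the structure of an affine space of one lower dimension. First I would fix the coordinate hyperplane $H = \{[a_0:\cdots:a_{d-1}] \mid a_0 = 0\} \subset \mathbbm{P}_K^{d-1}$ and check that it is genuinely a hyperplane of the projective space: the condition $a_0 = 0$ is homogeneous of degree one, hence invariant under the scaling $(a_0,\ldots,a_{d-1}) \mapsto \lambda(a_0,\ldots,a_{d-1})$ defining the equivalence $\sim$ of Eqs.~\eqref{equiv-1} and \eqref{equiv-2}, so it descends to a well-defined closed subset. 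Under the identification that produced Eq.~\eqref{Affine+Proj}, this $H$ is exactly the copy of $\mathbbm{P}_K^{d-2}$ consisting of the points at infinity.

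Next I would exhibit the complement as an affine space. By construction $\mathbbm{P}_K^{d-1}\setminus H$ consists precisely of the classes $[a_0:\cdots:a_{d-1}]$ with $a_0\neq 0$, and each such class has a unique representative with $a_0 = 1$. Hence the map $[a_0:\cdots:a_{d-1}]\mapsto(a_1/a_0,\ldots,a_{d-1}/a_0)$ is well-defined, with evident inverse $(b_1,\ldots,b_{d-1})\mapsto[1:b_1:\cdots:b_{d-1}]$; this is the bijection with $\mathbbm{A}_K^{d-1}$ already recorded in the discussion preceding Eq.~\eqref{Affine+Proj}. Combining it with the identification of $H$ as the locus of points at infinity gives $\mathbbm{P}_K^{d-1}\setminus H \cong \mathbbm{A}_K^{d-1}$, which is exactly the content of the corollary.

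Finally I would observe that nothing privileges the coordinate $a_0$: replacing the linear form $a_0$ by any nonzero linear form $\ell$ yields another hyperplane $H_\ell$, and a projective linear change of coordinates carrying $\ell$ to $a_0$ shows that $\mathbbm{P}_K^{d-1}\setminus H_\ell$ is again an affine space of dimension $d-1$. This justifies the indefinite article in the statement, since any hyperplane of $\mathbbm{P}_K^{d-1}$ may serve as the hyperplane at infinity. The argument is essentially bookkeeping, and the only point needing the slightest care is the well-definedness of the affine chart on the complement, i.e.\ that normalizing to the representative with $a_0 = 1$ does not depend on the chosen homogeneous coordinates; this is immediate from the defining equivalence relation.
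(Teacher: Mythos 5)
Your argument is correct and follows the same route as the paper, which treats the corollary as an immediate restatement of the decomposition $\mathbbm{P}_K^{d-1}=\mathbbm{A}_K^{d-1}\cup\mathbbm{P}_K^{d-2}$ in Eq.~\eqref{Affine+Proj}, obtained from exactly the affine chart $[a_0:\cdots:a_{d-1}]\mapsto(a_1/a_0,\ldots,a_{d-1}/a_0)$ on the locus $a_0\neq0$ and the identification of $\{a_0=0\}$ with the points at infinity. Your closing observation that any hyperplane may serve as the hyperplane at infinity, via a projective change of coordinates, is a worthwhile addition but does not change the substance of the argument.
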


Regarding Eq. \eqref{Affine+Proj}, One can divide $\mathbbm{P}_K^{d-2}$ in a similar way, and repeating the process all the way down to $\mathbbm{P}_K^0$, that is,
\begin{equation}
\mathbbm{P}_K^{d-1}=\mathbbm{A}_K^{d-1}\cup\mathbbm{A}_K^{d-2}\cup\cdots\cup\mathbbm{A}_K^1\cup\mathbbm{P}_K^0\,,
\end{equation}
where $\mathbbm{P}_K^0$ is just only one point at infinity, i.e., $\mathbbm{P}_K^0=\{\infty\}$.

%\begin{definition}[Projection]
%Projection is a surjective map
%\begin{align}\nonumber
%\pi\colon\mathbbm{A}_K^d&\to\mathbbm{P}_K^{d-1} \\ \label{projection}
%p=(a_0,\ldots,a_{d-1})&\to[p]=[a_0,\cdots,a_{d-1}]\,,
%\end{align}
%for which the ray above $[p]$ is the set $\{\lambda p:\lambda\in K\setminus\{0\}\}$.
%\end{definition}

%It is also beneficial to identify the projective space $\mathbbm{P}^{d-1}$ with the space of lines in the affine space $\mathbbm{A}_K^d$ containing $\{0\}$. That is, each point of the projective space $\mathbbm{P}^{d-1}$ corresponds to a line $\ell_{[p]}\subset\mathbbm{A}_K^d\cup\{0\}$ defined by
%\begin{equation}
%\ell_{[p]}:=\{\pi^{-1}([p])\cup\{0\}\}=\{\lambda p: \lambda\in K\}\,.\footnote{The map $\pi^{-1}$ is called section.}
%\end{equation}

\subsection{Projective Hilbert space}

\begin{definition}[Ray]
The equivalence class of a vector $|\psi\rangle\in\mathcal{H}$ for the equivalence relation $\sim$
\begin{equation}
|\psi\rangle\sim|\varphi\rangle\qquad\text{iff}\quad\exists~\lambda\in\mathbbm{C}\setminus\{0\}\quad\text{s.t.}\quad|\psi\rangle=\lambda|\varphi\rangle\,,
\end{equation}
is called a ray. For $|\psi\rangle$, the associated ray is the set
\begin{equation}
[\psi]=\big\{|\varphi\rangle\in\mathcal{H}~\big|~|\varphi\rangle=\lambda|\psi\rangle~\forall~\lambda\in\mathbbm{C}\setminus\{0\}\big\}\,.
\end{equation}
\end{definition}

Therefore, it is more convenient to work on the projective Hilbert space by considering each ray as a point of it.

\begin{definition}[Projective Hilbert space]
Projective Hilbert space is the set of all rays of non-zero vectors in the Hilbert space which is induced by equivalence relation $\sim$. In other words, it is the quotient space 
\begin{equation}
\mathbbm{P}(\mathcal{H}):=\frac{\mathcal{H}\setminus\{0\}}{\sim}\,.
\end{equation}
\end{definition}

Points in the projective Hilbert space $\mathbbm{P}(\mathcal{H})$ are one-dimensional rays in the Hilbert space $\mathcal{H}$ or equivalently one-dimensional projectors:
\begin{equation}
|\psi\rangle\to P_{\psi}:=\frac{|\psi\rangle\langle\psi|}{\langle\psi|\psi\rangle}\,.
\end{equation}

In this dissertation, we shall consider only finite dimensional complex Hilbert spaces, i.e., we assume $\mathcal{H}=\mathbbm{C}^d$. Hence, normalized quantum pure states $|\psi\rangle\in\mathbbm{C}^d$ can be considered as points on a unit $(2d-1)$-dimensional sphere ($(2d-1)$-sphere, for short)
\begin{equation}
S^{2d-1}=\{|\psi\rangle\in\mathbbm{C}^d\mid\langle\psi|\psi\rangle=1\}\,.
\end{equation}

Now, two normalized quantum pure states $|\psi\rangle$ and $|\varphi\rangle$ as points on the unit $(2d-1)$-sphere $S^{2d-1}$ are equivalent iff $|\varphi\rangle={\rm{e}}^{\mathbf{i}\delta}|\psi\rangle$. Therefore, the
corresponding projective Hilbert space can be considered as a quotient space of the unit $(2d-1)$-sphere in $\mathbbm{C}^d$ under the action of ${\rm{U}}(1)$, i.e.,
\begin{equation}
\mathbbm{P}(\mathcal{H})=\mathbbm{C}\mathbbm{P}^{d-1}:=\frac{S^{2d-1}}{{\rm{U}}(1)}\cong\frac{S^{2d-1}}{S^1}\,.
\end{equation}
the last equality can be considered by the topology of the group ${\rm{U}}(1)$, that is ${\rm{U}}(1)\cong S^1$.

Concerning Eq. \eqref{Affine+Proj}, we also can write the following equality
\begin{equation}
\mathbbm{C}\mathbbm{P}^{d-1}=\mathbbm{C}^{d-1}\cup\mathbbm{C}\mathbbm{P}^{d-2}\,.
\end{equation}
Specifically, $\mathbbm{C}\mathbbm{P}^1=\mathbbm{C}\cup\{\infty\}$.

On the other hand, the transitive group of $\mathbbm{C}\mathbbm{P}^{d-1}$ is ${\rm{SU}}(d)$, that is, for every $p,q\in\mathbbm{C}\mathbbm{P}^{d-1}$ there exists a $g\in{\rm{SU}}(d)$ such that
\begin{align}\nonumber
&\varpi_g\colon{\rm{SU}}(d)\times\mathbbm{C}\mathbbm{P}^{d-1}\to\mathbbm{C}\mathbbm{P}^{d-1}\,, \\
&p\to\varpi_g(p)=q\,.
\end{align}
The isotropy group (or stabilizer) of $\mathbbm{C}\mathbbm{P}^{d-1}$ is the group ${\rm{U}}(d-1)$, hence
\begin{equation}
\mathbbm{C}\mathbbm{P}^{d-1}\cong\frac{{\rm{SU}}(d)}{{\rm{U}}(d-1)}\,.
\end{equation}

\subsubsection{Bloch sphere}
\begin{figure}[t]
\center{\includegraphics[width=4.5cm]{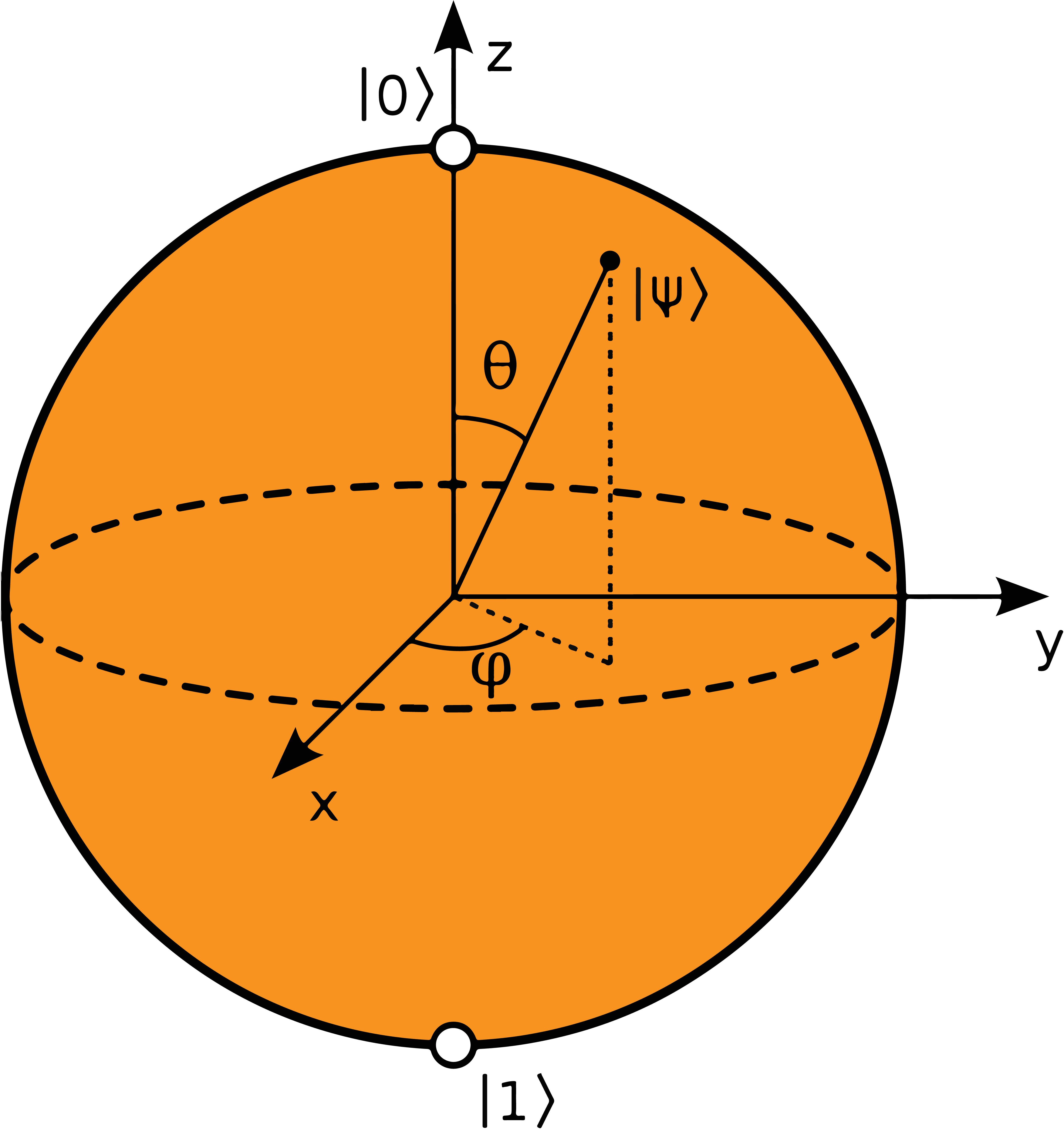}}
\caption[Bloch sphere]{\label{fig:3.3} Bloch sphere: the projective Hilbert space of a qubit $\mathbbm{P}(\mathbbm{C}^2)=\mathbbm{C}\mathbbm{P}^1$. Each point on this sphere represents a class of equivalent one-qubit pure states (see Eq. \eqref{qubit-polar}).}
\end{figure}
A well known example of the projective Hilbert space is $\mathbbm{C}\mathbbm{P}^1$ (also known as the complex projective line or Riemann sphere) which is the state space of one-qubit states.
\begin{equation}
\mathbbm{C}\mathbbm{P}^1=\frac{S^{3}}{{\rm{U}}(1)}\cong S^2\,.
\end{equation}
This two-dimensional sphere is known as the Bloch sphere in quantum mechanics (see Fig. \ref{fig:3.3}). Regarding Eq. \eqref{qubit-mixed}, the $\vec{r}$ is called the Bloch vector.  Therefore, any point on this sphere, i.e., $|\vec{r}|=1$, corresponds to a class of equivalent qubit pure states\footnote{Elements of the projective Hilbert space are not states but rather a class of equivalent states. However, since the members of the equivalent class represent the same quantum states (by the ${\rm{U}}(1)$ gauge invariance), we can use the term state for the elements of the projective Hilbert space. Indeed, we usually use a minimal form of a state from an Hilbert space as a representative of a point of the projective Hilbert space.}. If we consider a Bloch ball corresponding to $|\vec{r}|<1$, then any point in the Bloch ball represents a class of equivalent qubit mixed states and $|\vec{r}|$ is called the Bloch ball radius.

\subsection{Projective variety}
Here, our goal is to extend the definition of the variety in the affine space to the projective space. For this purpose, we will see that some care must be taken. For instance, in $\mathbbm{R}\mathbbm{P}^2$, we can try to construct the variety $\mathfrak{V}(x^2-y-z)$. This variety is the zero locus of the polynomial $x^2-y-z$. Moreover, in projective space, we should consider the equivalency of points (see Eqs. \eqref{equiv-1} and \eqref{equiv-2}). We can see that the point $p=[x:y:z]=[2:2:2]$ appears to be in this set since the components of $p$ satisfy the equation $x^2-y-z=0$. Although we know that the same point $p$ can be represented by the homogeneous coordinates $p=[4:4:4]$ but if we substitute these components into the polynomial $x^2-y-z$, we obtain $16-4-4=8\neq0$. Therefore, we get different results depending on which homogeneous coordinates we choose. To avoid this problem, we use homogeneous polynomials in the projective space.

\begin{definition}[Homogeneous polynomial]\label{def:Hom-Poly}
A homogeneous $d$-variate \mbox{degree-$n$} polynomial $f\in K[x_0,\ldots,x_{d-1}]_n$ is a finite linear combination of $d$-variate degree-$n$ monomials with coefficients in the field $K$. That is
\begin{equation}
f(x_0,\ldots,x_{d-1})=\sum_{\alpha}a_{\alpha}\mathbf{x}^{\alpha}\,, \qquad \text{with} \qquad \sum_{k\in\mathbbm{Z}_d}\alpha_k=n\,,
\end{equation}
where $\alpha=(\alpha_0,\ldots,\alpha_{d-1})$ and $a_{\alpha}\in K$.
\end{definition}

\begin{lemma}
Let $f\in K[x_0,\ldots,x_{d-1}]$ be a homogeneous polynomial. If $f$ vanishes on one of the homogeneous coordinates representing a point $p\in\mathbbm{P}_K^{d-1}$, then $f$ vanishes for all homogeneous coordinates of $p$.
\end{lemma}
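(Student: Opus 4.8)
The plan is to reduce the statement to the single scaling identity that characterizes homogeneous polynomials, namely that a homogeneous degree-$n$ polynomial $f$ satisfies $f(\lambda x_0,\ldots,\lambda x_{d-1})=\lambda^n f(x_0,\ldots,x_{d-1})$ for every $\lambda\in K$. Once this identity is in hand, the lemma follows immediately from the definition of homogeneous coordinates, since any two representatives of the same point $p\in\mathbbm{P}_K^{d-1}$ differ by a nonzero scalar.

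First I would establish the scaling identity directly from \Cref{def:Hom-Poly}. Writing $f=\sum_{\alpha}a_{\alpha}\mathbf{x}^{\alpha}$ with $\sum_{k\in\mathbbm{Z}_d}\alpha_k=n$ for every multi-index $\alpha$ appearing in the sum, I would substitute $\lambda\mathbf{x}=(\lambda x_0,\ldots,\lambda x_{d-1})$ and observe that each monomial transforms as
\begin{equation*}
(\lambda\mathbf{x})^{\alpha}=\prod_{k\in\mathbbm{Z}_d}(\lambda x_k)^{\alpha_k}=\lambda^{\sum_k\alpha_k}\prod_{k\in\mathbbm{Z}_d}x_k^{\alpha_k}=\lambda^{n}\,\mathbf{x}^{\alpha}\,.
\end{equation*}
Summing over $\alpha$ with the coefficients $a_{\alpha}$ then gives $f(\lambda\mathbf{x})=\lambda^{n}f(\mathbf{x})$. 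This is the only place the homogeneity hypothesis is used, and it is where the degree-$n$ constraint on the multi-indices is essential.

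Next I would invoke the definition of homogeneous coordinates. Let $(a_0,\ldots,a_{d-1})$ be a representative of $p$ on which $f$ vanishes, so $f(a_0,\ldots,a_{d-1})=0$. Any other representative of the same point $p$ has the form $(\lambda a_0,\ldots,\lambda a_{d-1})$ for some $\lambda\in K\setminus\{0\}$, by the equivalence relation defining $\mathbbm{P}_K^{d-1}$. Applying the scaling identity yields
\begin{equation*}
f(\lambda a_0,\ldots,\lambda a_{d-1})=\lambda^{n}f(a_0,\ldots,a_{d-1})=\lambda^{n}\cdot 0=0\,,
\end{equation*}
so $f$ vanishes on every representative of $p$, as claimed.

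I do not expect a genuine obstacle here: the argument is essentially the verification of the scaling identity followed by a one-line substitution. The only point requiring a small amount of care is to confirm that the homogeneity exponent $n$ is the same for every monomial in $f$ — this is exactly the content of \Cref{def:Hom-Poly} (all multi-indices $\alpha$ satisfy $\sum_k\alpha_k=n$) — so that the common factor $\lambda^n$ can be pulled out of the entire sum rather than varying term by term. It is also worth remarking that the conclusion is what makes the vanishing locus $\mathfrak{V}(f)$ well-defined as a subset of projective space, which is the motivation flagged just before the statement.
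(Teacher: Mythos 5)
Your proof is correct: the scaling identity $f(\lambda\mathbf{x})=\lambda^{n}f(\mathbf{x})$ derived termwise from \Cref{def:Hom-Poly}, combined with the fact that two representatives of a point differ by a nonzero scalar, is exactly the standard argument. The paper states this lemma without proof, so there is nothing to compare against, but your write-up supplies precisely the justification the text implicitly relies on when it concludes that $\mathfrak{V}(f)$ is well defined.
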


Therefore, $\mathfrak{V}(f)=\{p\in\mathbbm{P}_K^{d-1}\mid f(p)=0\}$ for a homogeneous polynomial $f$, is a well-defined subset of the projective space $\mathbbm{P}_K^{d-1}$. So we can define the projective variety as follows.

\begin{definition}[Projective variety]
The zero locus of a set of homogeneous polynomials $F=\{f_1,\ldots,f_s\}\subset K[x_0,\ldots,x_{d-1}]$ is called a projective variety. That is
\begin{equation}
\mathfrak{V}(F)=\{[a_0:\cdots:a_{d-1}]\in\mathbbm{P}_K^{d-1}\mid f_i(a_0,\ldots,a_{d-1})=0~\forall~f_i\in F\}\,.
\end{equation}
\end{definition}

%\begin{definition}[Affine cone]
%Regarding the projection in Eq. \eqref{projection}, the aﬃne cone over a projective variety $\mathfrak{X}\subseteq\mathbbm{P}_K^{d-1}$ is defined by
%\begin{equation}
%C(\mathfrak{X})=\pi^{-1}(\mathfrak{X})\cup\{0\}\,.
%C(\mathfrak{X})=\{(a_0,\ldots,d_{d-1})\in\mathbbm{A}_K^d\mid[a_0:\cdots:a_{d-1}]\in\mathfrak{X}\}\cup\{0\}\,.
%\end{equation}
%So, the cone $C(\mathfrak{X})$ is saturated by the lines passing through the points of the projective variety and the origin $0\in K^d$.
%\end{definition}

\begin{definition}[Projective equivalence]
We say $\mathfrak{V}\subseteq\mathbbm{P}_K^{d-1}$ and  $\mathfrak{W}\subseteq\mathbbm{P}_K^{d-1}$ are projectively equivalent if they are transformed into each other by a linear change of coordinates in $\mathbbm{P}^{d-1}$. It means that there exists a linear map $A\in{\rm{GL}}(d,K)$ which establishes an isomorphism
\begin{align}\nonumber
A:\mathfrak{V}&\to\mathfrak{W} \\
\vec{a}&\to A\vec{a}\,,
\end{align}
Where $\vec{a}=(a_0,\ldots,a_{d-1})$. The inverse of the linear map is given then by the inverse matrix $A^{-1}$.
\end{definition}

Note that in this situation $A$ and $\lambda A$ define the same transformation in the projective space $\mathbbm{P}_K^{d-1}$ for $\lambda\in K\setminus\{0\}$, and in fact the group which acts on the projective space is
\begin{equation}
{\rm{PGL}}(d,K)=\mathbbm{P}\big({\rm{GL}}(d,K)\big)=\frac{{\rm{GL}}(d,K)}{K\setminus\{0\}}\,.
\end{equation}

\begin{definition}[Closed subset]
A subset of a projective space $S\subset\mathbbm{P}_K^{d-1}$ is called closed subset if it consists of all points at which a finite number of homogeneous polynomials vanish.
\end{definition}

It is worth noting that the set of all homogeneous polynomials in $K[x_0,\ldots,x_{d-1}]$ that vanish at all points $p\in S$ forms an ideal $\mathbf{I}(S)$, called the ideal of the closed set $S$. Therefore, the closure of a subset of a projective space is the variety of the ideal of that subset, i.e.,
\begin{equation}
\bar{S}=\{[a_0:\cdots:a_{d-1}]\in\mathbbm{P}_K^{d-1}\mid f(a_0,\ldots,a_{d-1})=0~\forall~f\in\mathbf{I}(S)\}=\mathfrak{V}(\mathbf{I}(S))\,.
\end{equation}

\begin{remark}
A closed subset defined by one homogeneous equation $f=0$ is called a hypersurface, as in the affine case. The degree of the polynomial is the degree of the hypersurface. A hypersurface of degree two is called a quadric.
\end{remark}

\begin{definition}[Zariski topology on a projective space]
The Zariski topology on a projective space $\mathbbm{P}_K^{d-1}$ is the topology whose closed subsets are the projective varieties. The Zariski topology on a projective variety $\mathfrak{V}\subseteq\mathbbm{P}_K^{d-1}$ is the induced topology.
\end{definition}

\section{Morphisms}\label{sec.3.4}
In algebraic geometry, a morphism between algebraic varieties is made of local functions between the algebraic varieties that is given by a polynomial map.

\begin{definition}[Morphism]
A morphism of aﬃne spaces
\begin{equation}
F\colon\mathbbm{A}_K^{d_1}\to\mathbbm{A}_K^{d_2}\,,
\end{equation}
is given by a polynomial map
\begin{equation}
(a_0,\ldots,a_{d_1-1})\mapsto(f_0(\mathbf{a}),\ldots,f_{d_2-1}(\mathbf{a}))\,,
\end{equation}
where $\mathbf{a}=(a_0,\ldots,a_{d_1-1})$, $F=(f_0,\ldots,f_{d_2-1})$, and $f_i\in K[x_0,\ldots, x_{d_1-1}]$ for each $0\leq i \leq d_2-1$.
\end{definition}

A morphism $F\colon\mathfrak{V}\to\mathfrak{W}$ between aﬃne varieties $\mathfrak{V}\subseteq\mathbbm{A}_K^{d_1}$ and $\mathfrak{W}\subseteq\mathbbm{A}_K^{d_2}$ is given by a polynomial map $F\colon\mathbbm{A}_K^{d_1}\to\mathbbm{A}_K^{d_2}$ that restricts to $\mathfrak{V}$, such that $F(\mathfrak{V})\subseteq\mathfrak{W}$.

It is worth noting that an isomorphism is a morphism which has an inverse morphism.

\begin{remark}
Generally speaking, the image of a morphism need not be an aﬃne variety.
\end{remark}

\begin{remark}
A morphism of projective spaces is the same as the morphism of affine space where the local functions are homogeneous polynomials.
\end{remark}

\subsection{Veronese embedding}\label{subsec3.4.1}

\begin{definition}[Veronese embedding]
The degree $n$ Veronese embedding
\begin{equation}\label{Veronese-AG}
\mathcal{V}^{n}_{d-1}\colon\mathbbm{P}_K^{d-1}\hookrightarrow\mathbbm{P}_K^{m}\,,
\end{equation}
is an injective morphism defined by
\begin{equation}\label{Veronese-functions}
[a_0:\cdots:a_{d-1}]\to[f_0(\mathbf{a}):\cdots:f_m(\mathbf{a})]\,,
\end{equation}
where $\mathbf{a}=(a_0,\ldots,a_{d-1})$ and $\{f_i\}_{i=0}^m$ are all of the $d$-variate degree-$n$ monomials. That is, each $f_i$ has the form
\begin{equation}\label{homo-monomial}
\prod_{k\in\mathbbm{Z}_d}x_k^{\alpha_k}\,, \qquad \text{with} \qquad \sum_{k\in\mathbbm{Z}_d}\alpha_k=n\,.
\end{equation}
For $0\leq\alpha_k\leq n$ we have $\binom{n+d-1}{n}$ different forms. Therefore, $m=\binom{n+d-1}{n}-1$. The image of this morphism is called Veronese variety.
\end{definition}

\begin{definition}[Catalecticant matrix]
Let $[z_0:\cdots:z_m]$ denote the homogeneous coordinates of the local functions in Eq. \eqref{Veronese-functions}. The $i$-th catalecticant matrix of a point $p\in\mathbbm{P}^m$ is defined as
\begin{equation}\label{catalecticant}
\mathrm{C}_i(p)=\begin{pmatrix}
z_0 & z_1 & \cdots & z_i \\
z_1 & z_2 & \cdots & z_{i+1} \\
\vdots & \vdots & \ddots & \vdots \\
z_{m-i} & z_{m-i+1} & \cdots & z_{m}
\end{pmatrix},
\end{equation}
for $1\leq i\leq m-1$.
\end{definition}

\begin{remark}
The Veronese embedding can also be represented as follows
\begin{align}\nonumber
\mathcal{V}_{d-1}^n:\mathbbm{P}\big(K^d\big)&\hookrightarrow\mathbbm{P}\big({\rm{Sym}}^n K^d\big) \\
[v]&\to[v^n]\,,
\end{align}
where ${\rm{Sym}}^n K^d$, the $n$-th symmetric power of vector space $K^d$, is the quotient of $n$-fold product $(K^d)^{\times n}\colon K^d\times\cdots\times K^d$ by the permutation action of the symmetric group $\mathfrak{S}_n$, i.e.,
\begin{equation}
\frac{(K^d)^{\times n}}{\mathfrak{S}_n}\,.
\end{equation}
\end{remark}

\subsubsection{Rational normal curve}
In Eq. \eqref{Veronese-AG}, for $d-1=1$, i.e.,
\begin{equation}
\mathcal{V}_1^n\colon\mathbbm{P}_K^1\hookrightarrow\mathbbm{P}_K^n\,.
\end{equation}
the Veronese variety is known as the rational normal curve. For $n=1$ the Veronese map is the identity map on the projective line. For $n=2$ the Veronese variety is the standard parabola, $[a^2:ab:b^2]$, in affine coordinates $[1:z:z^2]$. For $n=3$ the Veronese variety is the twisted cubic, $[a^3:a^2b:ab^2:b^3]$, in affine coordinates $[1:z:z^2:z^3]$ (see Fig. \ref{fig:Twisted-Cube}).
\begin{figure}[t]
\center{\includegraphics[width=4.5cm]{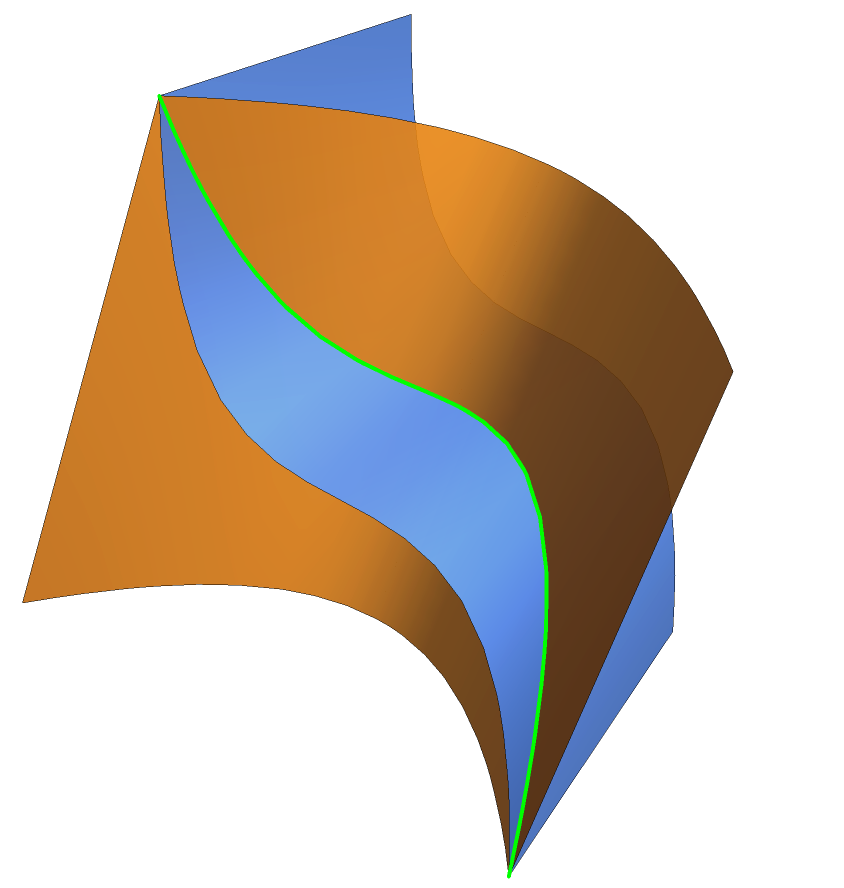}}
\caption[Twisted cube]{\label{fig:Twisted-Cube} %({\href{https://www.wolfram.com/mathematica/}{Wolfram Mathematica$^{\text{\textregistered}}$}})
Twisted cube: the Veronese variety of $\mathbbm{R}\mathbbm{P}^1\hookrightarrow\mathbbm{R}\mathbbm{P}^3$.}
\end{figure}

If we let $[z_0:z_1:z_2:z_3]$ denotes the homogeneous coordinates of $\mathbbm{P}_K^3$, then twisted cubic can be defined as the intersection of the following three quadrics which can be obtained by $2\times2$ minors of the catalecticant matrix in Eq. \eqref{catalecticant}
\begin{equation}\label{3-quadrics}
\begin{cases}
z_0z_2-z_1^2=0\,,\\
z_1z_3-z_2^2=0\,,\\
z_0z_3-z_1z_2=0\,.
\end{cases}
\end{equation}

\subsection{Segre embedding}\label{subsec.3.4.2}
The Zariski topology is strictly finer than the product topology\footnote{In topology, a product space is the Cartesian product of a family of topological spaces equipped with a topology called the product topology}. It means that the Zariski topology on $\mathbbm{P}_K^2$ is not identical with the product topology on $\mathbbm{P}_K^1\times\mathbbm{P}_K^1$. So, the Segre embedding is used to consider the Cartesian product of projective spaces as a projective subvariety of a bigger projective space and restrict the Zariski topology on the projective space to the subvariety.

\begin{definition}[Segre embedding]
The Segre embedding is an injective morphism deﬁned by
\begin{equation}
\Sigma_{(d_1-1,d_2-1)}^{2}\colon\mathbbm{P}_K^{d_1-1}\times\mathbbm{P}_K^{d_2-1}\hookrightarrow\mathbbm{P}_K^{d_1d_2-1}\,,
\end{equation}
which takes a pair of points $([a],[b])\in\mathbbm{P}^{d_1-1}\times\mathbbm{P}^{d_2-1}$ to their products
\begin{equation}
([a_{0}:\cdots:a_{d_1-1}],[b_{0}:\cdots:b_{d_2-1}])\rightarrow[a_{0}b_{0}:\cdots:a_{i}b_{j}:\cdots:a_{d_1-1}b_{d_2-1}]\,,
\end{equation}
where the notation refers to homogeneous coordinates and the $a_{i}b_{j}$ are taken in lexicographical order. The image of this morphism is called Segre variety.
\end{definition}

\begin{remark}
The Segre embedding can be considered as matrix multiplication\footnote{Notice that it is true for Segre embedding of two factors.}. That is
\begin{equation}
\Sigma_{(d_1-1,d_2-1)}^{2}\colon\mathbbm{P}\big(K^{d_1}\big)\times\mathbbm{P}\big(K^{d_2}\big)\hookrightarrow\mathbbm{P}\big(K^{d_1}\otimes K^{d_2}\big)\,,
\end{equation}
which is defined by
\begin{equation}
([a],[b])\rightarrow[a\otimes b]=[a_ib_j]\,.
\end{equation}
Indeed,
\begin{equation}
([a_{0}:\cdots:a_{d_1-1}],[b_{0}:\cdots:b_{d_2-1}])\rightarrow\left[
\begin{pmatrix}
a_0\\
\vdots\\
a_{d-1}
\end{pmatrix}
\begin{pmatrix}
b_0 & \cdots & b_{d-1}
\end{pmatrix}
\right].
\end{equation}
\end{remark}

\begin{example}
Consider the following Segre map
\begin{equation}\label{Segre-P1P1-P3}
\Sigma_{(1,1)}^{2}\colon\mathbbm{P}_K^1\times\mathbbm{P}_K^1\hookrightarrow\mathbbm{P}_K^3\,,
\end{equation}
which is defined by
\begin{equation}
[a_0:b_1]\times[b_0:b_1]\to[a_0b_0:a_0b_1:a_1b_0:a_1b_1]\,.
\end{equation}
Regarding $[z_0:z_1:z_2:z_3]$ as the homogeneous coordinates of $\mathbbm{P}_K^3$, then it is easy to check that the Segre variety in $\mathbbm{P}_K^3$ has equation
\begin{equation}
z_0z_3-z_1z_2=0\,.
\end{equation}
Thus, the Segre variety is a smooth quadric in $\mathbbm{P}_K^3$. If we take the real field $K=\mathbbm{R}$, then the Segre variety would be a hyperboloid ruled by straight lines (see Fig. \ref{fig:Segre}).
\begin{figure}[t]
\center{\includegraphics[width=4cm]{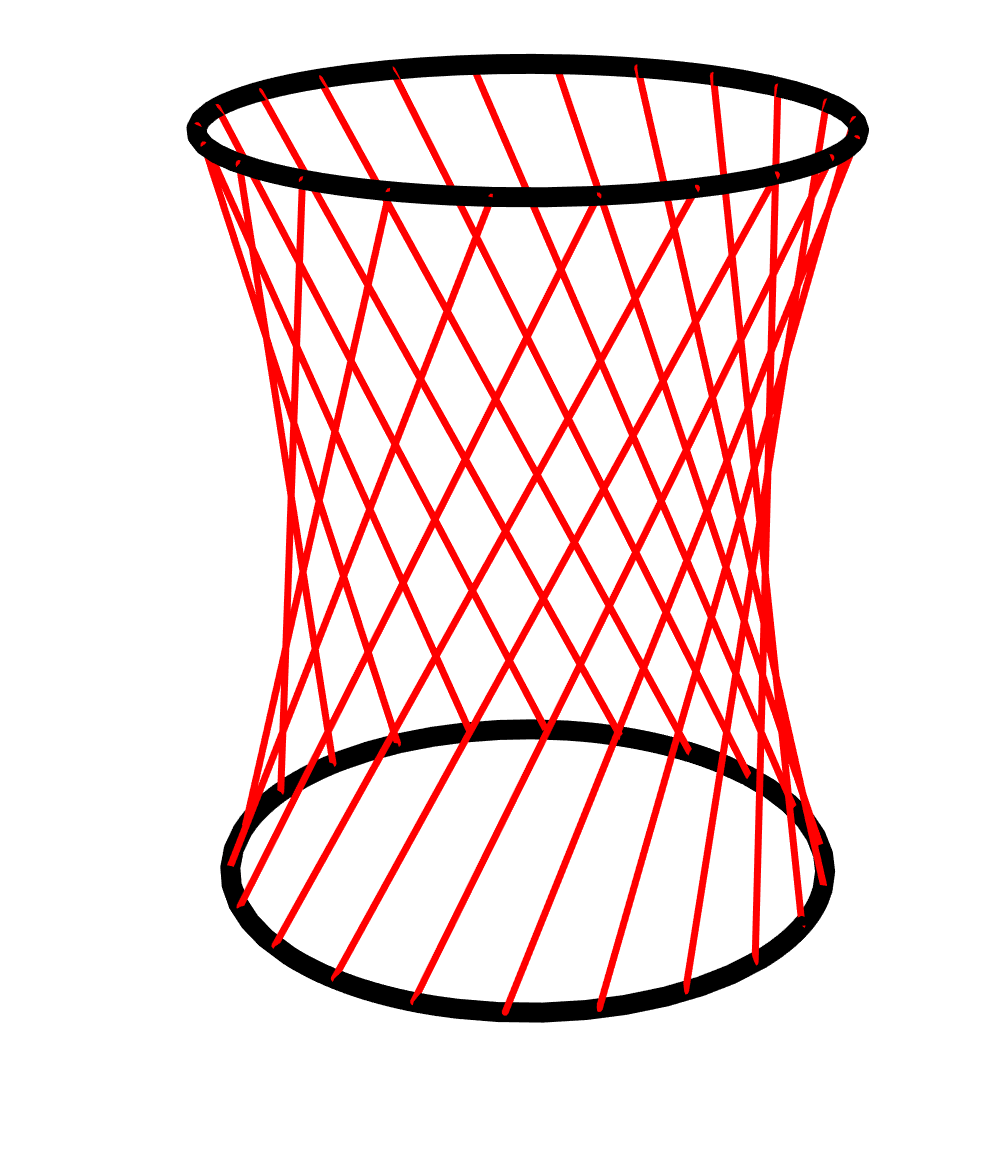}}
\caption[The real Segre variety]{\label{fig:Segre} %({\href{https://www.wolfram.com/mathematica/}{Wolfram Mathematica$^{\text{\textregistered}}$}})
The real Segre embedding $\mathbbm{R}\mathbbm{P}^1\times\mathbbm{R}\mathbbm{P}^1\hookrightarrow\mathbbm{R}\mathbbm{P}^3$: the Segre variety is a hyperboloid ruled by straight lines.}
\end{figure}
\end{example}

The Segre map in Eq. \eqref{Segre-P1P1-P3} can be also defined as follows
\begin{equation}
[a_0:a_1]\times[b_0:b_1]\to\begin{pmatrix}
a_0b_0 & a_0b_1\\
a_1b_0 & a_1b_1
\end{pmatrix}.
\end{equation}
Therefore, the Segre variety represents the set of $2\times2$ matrices of rank at most one and the ideal of the Segre variety is generated by the vanishing of the determinant of the generic matrix
\begin{equation}
\begin{pmatrix}
z_0 & z_1\\
z_2 & z_3
\end{pmatrix}.
\end{equation}

In the next section, we will present algebraic tools that are profitable for entanglement characterization. Particularly, we study flattening a tensor corresponding to a multipartite quantum state. We see that a given $\ell$-multilinear rank configuration determines a determinantal variety in the projective Hilbert space $\mathbbm{P}(\mathcal{H})$ which is a subvariety of $k$-secant varieties of the Segre variety.

It is worth remarking that for the sake of simplicity, we write $\mathbbm{C}\mathbbm{P}^d\equiv\mathbbm{P}^d$ in the rest of this thesis since we mainly work with the complex field $K=\mathbbm{C}$, unless otherwise noted.

\section{Algebraic geometry and SLOCC invariants}\label{sec.3.5}

\subsection{Tensors}\label{sec.Tensors}
Since rank and border rank of a tensor are invariant under the SLOCC they are profitable tools to study our central problems, that is, multipartite entanglement classification and the interconversion between different resources by the SLOCC and asymptotic SLOCC. Therefore, we provide these tools in this section.

A tensor is an algebraic object that can be represented by a multidimensional array of components that are functions of the coordinates of a space. We provide a general definition of a tensor based on multilinear maps in multilinear algebra.

\begin{definition}[Multilinear map]
Let $V_1,\ldots,V_n$, and $W$ be vector spaces. An $n$-linear map is a function
\begin{equation}\label{multilinear-map}
\phi\colon V_1\times\cdots\times V_n\to W\,,
\end{equation}
such that $(v_1,\ldots,v_i,\ldots,v_n)$ is a linear function on $v_i\in V_i$ if all of the variables but $v_i$ are held constant, for each $i\in\{1,\ldots,n\}$. Therefore, a multilinear map is a function of several variables that is linear separately in each variable. 
\end{definition}

\begin{definition}[Multilinear complexity]
Let us assume that there exists a sequence $(f_1^{(1)},\ldots,f_1^{(n)},w_1:\cdots:f_r^{(1)},\ldots,f_r^{(n)},w_r)$ with $f_k^{(j)}\in V_j^{\vee}$ and $w_k\in W$ such that the multilinear map in Eq. \eqref{multilinear-map} can be written as
\begin{equation}\label{multilinear-complexity}
\phi(v_1,\ldots,v_n)=\sum_{k=1}^{r}f_k^{(1)}(v_1)\cdots f_k^{(n)}(v_n)w_k \qquad \forall~v_j\in V_j\,,
\end{equation}
The minimum $r$ is called  multilinear complexity or the rank of the multilinear map $\phi$ in Eq. \eqref{multilinear-map}.
\end{definition}

The border rank of the multilinear map $\phi$ in Eq. \eqref{multilinear-map} is defined as the smallest $r$ such that $\phi$ can be written as a limit of a sequence of bilinear maps of rank $r$.

A multilinear map can be thought of as a tensor in $V_1^{\vee}\otimes\cdots\otimes V_n^{\vee}\otimes W$. Therefore, Eq. \eqref{multilinear-complexity} can be considered as decompositions of this tensor and the rank and border rank of the multilinear map coincides with the tensor rank and tensor border rank. In addition, it can be concluded that there is a natural one-to-one correspondence between the multilinear map $\phi$ in Eq. \eqref{multilinear-map} and the following linear map
\begin{equation}\label{tensor-map}
\mathcal{T}\colon V_1\otimes\cdots\otimes V_n\to W\,,
\end{equation}
where $V_1\otimes\cdots\otimes V_n$ denotes the tensor product of vector spaces $V_1,\ldots,V_n$. The relation between these two maps is given by $\phi(v_1,\ldots,v_n)=\mathcal{T}(v_1\otimes\cdots\otimes v_n)$.

\begin{remark}
Any pure multipartite quantum state $|\psi\rangle\in\otimes_{i=1}^n\mathbbm{C}^{d_i}$ corresponds to an order-$n$ tensor (or simply $n$-tensor). So one can treat a multipartite quantum state like a tensor.
\end{remark}

\begin{remark}
A tensor of the form $v_1\otimes\cdots\otimes v_n$ in $\otimes_{i=1}^n V_i$ is called simple tensor. Therefore, a nonzero simple tensor is equivalent to a fully separable quantum state.
\end{remark}

Let us consider the bilinear complexity of the matrix multiplication as an example. The matrix multiplication of two $2\times2$ matrices can be seen as a bilinear map
\begin{equation}\label{MaMu-2-2}
\mathrm{MaMu}\colon{\mathbbm{C}^4}^{\vee}\times{\mathbbm{C}^4}^{\vee}\to\mathbbm{C}^4\,,
\end{equation}
Now, the question is: what is the minimum number of scalar multiplications over the ground ﬁeld that are required to compute the map $\mathrm{MaMu}$ in Eq. \eqref{MaMu-2-2}? Actually, the answer of this question is equal to the rank of the bilinear map.

Let $A$ and $B$ be two $2\times2$ matrices
\begin{equation}
A=\begin{pmatrix}
a_{11} & a_{12} \\
a_{21} & a_{22}
\end{pmatrix}, \qquad B=\begin{pmatrix}
b_{11}& b_{12} \\
b_{21} & b_{22}
\end{pmatrix}.
\end{equation}
Using the definition of matrix multiplication we need eight scalar multiplications to compute their product. That is
\begin{equation}
A\times B=\begin{pmatrix}
a_{11}b_{11}+a_{12}b_{21} & a_{11}b_{12}+a_{12}b_{22} \\
a_{21}b_{11}+a_{22}b_{21} & a_{21}b_{12}+a_{22}b_{22}
\end{pmatrix}.
\end{equation}
However, in Ref. \cite{Strassen69}, Strassen presents an algorithm such that using the following seven scalar multiplications
\begin{align}\nonumber
&M_1=(a_{11}+a_{22})(b_{11}+b_{22})\,, \\ \nonumber
&M_2=(a_{21}+a_{22})b_{11}\,, \\ \nonumber
&M_3=a_{11}(b_{12}-b_{22})\,, \\ \nonumber
&M_4=a_{22}(b_{21}-b_{11})\,, \\ \nonumber
&M_5=(a_{11}+a_{12})b_{22}\,, \\ \nonumber
&M_6=(a_{21}-a_{11})(b_{11}+b_{12})\,,  \\ \label{Strassen-Algorithm}
&M_7=(a_{12}-a_{22})(b_{21}+b_{22})\,,
\end{align}
we can obtain $A\times B$ as
\begin{equation}
A\times B=\begin{pmatrix}
M_1+M_4-M_5+M_7 & M_3+M_5 \\
M_2+M_4 & M_1-M_2+M_3+M_6
\end{pmatrix}.
\end{equation}

While one needs $n^3$ scalar multiplications to compute the matrix multiplication of two $n\times n$ matrices based on its definition, one can do it by $7^m$ scalar multiplication for $n=2^m$, using Strassen’s algorithm iteratively. Therefore, this algorithm lowers the upper bound for the complexity of matrix multiplication from $n^3$ to $n^{\log_2 7}\simeq n^{2.81}$. A long-standing open problem in algebraic complexity theory is whether we can lower this upper bound to $n^2$.

The bilinear map in Eq. \eqref{MaMu-2-2} is equivalent to a three-tensor
\begin{equation}
\mathrm{MaMu}\in{\mathbbm{C}^4}\otimes{\mathbbm{C}^4}\otimes{\mathbbm{C}^4}\,.
\end{equation}
Therefore, considering the bilinear map as a tensor, we can write $\mathrm{MaMu}$ as a linear combination of simple tensors where each simple tensor represents an scalar multiplication. The naive algorithm for matrix multiplication implies that $\mathrm{MaMu}$ can be written as a linear combination of eight simple tensors. However, concerning Strassen's algorithm in Eq. \eqref{Strassen-Algorithm}, the tensor $\mathrm{MaMu}$ can be written as $\mathrm{MaMu}=\sum_{i=1}^{7}x_i\otimes y_i\otimes z_i$. It means that the upper bound of the tensor rank of $\mathrm{MaMu}$ is seven. Later, it was shown that the rank of $\mathrm{MaMu}$ is optimal \cite{Winogard71}. Furthermore, Landsberg proved that the border rank of $\mathrm{MaMu}$ is seven \cite{Landsberg06}.

The rank of a tensor $\mathcal{T}$ is defined as the minimum number of simple tensors that sum to $\mathcal{T}$ and it extends the notion of the rank of a matrix in algebra \cite{Bourbaki}, so it can be seen as a generalization of Schmidt rank. The following is a concrete definition of tensor rank.

\begin{definition}[Tensor rank]\label{def:rank}
Let $\mathcal{T}\in V_1\otimes\cdots\otimes V_n$ be a tensor where each $V_i$ is a vector space. The tensor rank of $\mathcal{T}$ is defined as follows
\begin{equation}\label{rank}
\rk(\mathcal{T})=\min\Big\{r~\big|~\mathcal{T}=\sum_{p=1}^{r}v_1^{(p)}\otimes\cdots\otimes v_n^{(p)}\,,~\text{for some}~v_i^{(p)}\in V_i\Big\}\,.
\end{equation}
\end{definition}
If $n=2$, the tensor rank of $\mathcal{T}$ is equal to the matrix rank of the linear map $\mathcal{T}\colon V_1^{\vee}\to V_2$.

The tensor border rank (border rank, for short) of a tensor $\mathcal{T}$ is defined as the smallest $r$ such that $\mathcal{T}$ is a limit of tensors of rank $r$, or equivalently the smallest $r$ such that $\mathcal{T}$ lies in the Zariski closure of the set of tensors of rank $r$ \cite{Landsberg}, so it can be seen as a counterpart of the generalized Schmidt rank. The following is a concrete definition of border rank of a tensor $\mathcal{T}$.

\begin{definition}[Border rank]\label{def:brank}
Let $\mathcal{T}\in V_1\otimes\cdots\otimes V_n$ be a tensor where each $V_i$ is a vector space. The border rank of $\mathcal{T}$ is the smallest $r$ such that $\mathcal{T}$ is a limit of tensors of rank $r$, i.e.,
\begin{equation}\label{brank}
\brk(\mathcal{T})=\min\Big\{r~\big|~\mathcal{T}=\lim_{\varepsilon\to 0}\mathcal{T}_{\varepsilon}\,,\text{s.t.}~\forall\varepsilon\neq0,~\rk(\mathcal{T}_{\varepsilon})=r\Big\}\,.
\end{equation}
\end{definition}
Clearly, $\brk(\mathcal{T})\leq\rk(\mathcal{T})$.

It is useful to introduce symmetric tensors and symmetric tensor rank (or Waring rank).

\begin{definition}[Symmetric tensor]
Let $\mathcal{T}\in V_1\otimes\cdots\otimes V_n$ be a tensor where each $V_i$ is a vector space. The tensor $\mathcal{T}$ is symmetric if it is invariant under the action of a symmetric group that permutes the tensor factors, that is, for all $\mathfrak{p}\in\mathfrak{S}_n$, $\mathfrak{p}(\mathcal{T})=\mathcal{T}$.
\end{definition}

Symmetric tensors correspond to symmetric multipartite quantum states that are invariant under any permutations of the parties. To estimate the symmetric tensor rank of a symmetric quantum state $|\psi\rangle$, denoted by $\srk(\psi)$, there is a correspondence between symmetric quantum states and homogeneous polynomials.

\begin{definition}[Waring rank]\label{def:Waring}
The Waring rank of a homogeneous $d$-variate degree-$n$ polynomial $f\in\mathbbm{C}[x_0,\ldots,x_{d-1}]_n$ is the minimum number of terms contained in $f$ when it is expressed as a combination of $n^{\text{th}}$ powers of linear forms, that is, the minimum number of $s$ in the following symmetric decomposition
\begin{equation}
f(x_0,\ldots,x_{d-1})=\sum_{i=1}^s(\beta_0^{(i)}x_0+\cdots+\beta_{d-1}^{(i)}x_{d-1})^n\,.
\end{equation}
\end{definition}

The Waring rank is a much-studied problem in algebraic geometry \cite{CGLM08,LT10,CCG12,Shitov18,Oeding20}. Since there is a unique correspondence between a symmetric quantum state $|\psi\rangle$ and a homogeneous $d$-variate degree-$n$ polynomial $f$, up to scaling the variables, the symmetric tensor rank is identical to the Waring rank. Although it is known that the symmetric tensor rank is not equal to the tensor rank \cite{Shitov18}, it can be considered as an upper bound for the tensor rank of a symmetric tensor $\mathcal{T}_{\sym}$ (see Refs. \cite{CCDJW10,LT10}), that is,
\begin{equation}\label{SymRank}
\rk(\mathcal{T}_{\sym})\leq\srk(\mathcal{T}_{\sym})\,.
\end{equation}

The Waring rank of a general monomial has been found in Ref. \cite{CCG12}.

\begin{theorem}[Ref. \cite{CCG12}]\label{theo:Warink-rk}
Let $\alpha=(0<\alpha_0\leq\cdots\leq\alpha_{d-1})\in\mathbbm{N}^{d+1}$. The Waring rank of the monomial $x_0^{\alpha_0}\cdots x_{d-1}^{\alpha_{d-1}}$ is equal to
\begin{equation}
\rk(x_0^{\alpha_0}\cdots x_{d-1}^{\alpha_{d-1}})=\prod_{i=1}^{d-1}(\alpha_i+1)\,.
\end{equation}
\end{theorem}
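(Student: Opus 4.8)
The plan is to recast the computation through Macaulay's theory of inverse systems (apolarity), which turns the Waring rank into a question about finite sets of points. Let $S=\mathbbm{C}[y_0,\ldots,y_{d-1}]$ act on $\mathbbm{C}[x_0,\ldots,x_{d-1}]$ by differentiation, with $y_i$ acting as $\partial/\partial x_i$, and for $F=x_0^{\alpha_0}\cdots x_{d-1}^{\alpha_{d-1}}$ put $F^{\perp}=\{g\in S\mid g\circ F=0\}$. A direct computation shows $y_i^{\alpha_i+1}\circ F=0$ while $y_i^{\alpha_i}\circ F\neq0$, and since $F$ is a single monomial these generate all relations, so $F^{\perp}=(y_0^{\alpha_0+1},\ldots,y_{d-1}^{\alpha_{d-1}+1})$ is a monomial complete intersection. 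By the Apolarity Lemma, the rank of $F$ equals the least cardinality of a reduced set of points $\mathbb{X}\subset\mathbbm{P}^{d-1}$ whose homogeneous ideal satisfies $I_{\mathbb{X}}\subseteq F^{\perp}$, so the proof splits into exhibiting a small such $\mathbb{X}$ (upper bound) and proving none is smaller (lower bound).

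For the upper bound I would write down an explicit apolar complete intersection inside $F^{\perp}$. Since $\alpha_0\leq\alpha_i$ for every $i$, one has $y_0^{\alpha_i+1}=y_0^{\alpha_i-\alpha_0}\,y_0^{\alpha_0+1}\in F^{\perp}$, and $y_i^{\alpha_i+1}\in F^{\perp}$ as well, so the binomials
\begin{equation*}
g_i=y_i^{\alpha_i+1}-y_0^{\alpha_i+1}\,,\qquad i=1,\ldots,d-1\,,
\end{equation*}
all lie in $F^{\perp}$. These are $d-1$ forms in $\mathbbm{P}^{d-1}$ cutting out a zero-dimensional complete intersection whose common zeros are the points $[1:\zeta_1:\cdots:\zeta_{d-1}]$ with each $\zeta_i$ an $(\alpha_i+1)$-th root of unity (the locus $y_0=0$ forces every coordinate to vanish, hence contributes nothing). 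This is a reduced set of $\prod_{i=1}^{d-1}(\alpha_i+1)$ distinct points, matching the Bézout count, so the Apolarity Lemma gives $\rk(F)\leq\prod_{i=1}^{d-1}(\alpha_i+1)$.

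The hard part is the matching lower bound, and here the naive tools fail: the catalecticant (symmetric flattening) bound only yields $\max_t\dim_{\mathbbm{C}}(S/F^{\perp})_t$, which is strictly smaller than the claimed value already for the binary monomial $x_0^{\alpha_0}x_1^{\alpha_1}$, where it returns $\alpha_0+1$ instead of the correct $\alpha_1+1$. I would therefore induct on $d$, with the binary case as base: there $I_{\mathbb{X}}$ is principal, generated by a squarefree form of $(y_0^{\alpha_0+1},y_1^{\alpha_1+1})$, and every element of degree $\leq\alpha_1$ is divisible by $y_0^{\alpha_0+1}$ (hence non-squarefree), so the lowest-degree squarefree element has degree $\alpha_1+1$, forcing $\rk\geq\alpha_1+1$. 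For the inductive step the main obstacle is to control how an apolar point set must distribute relative to the coordinate structure of $F^{\perp}$: applying $y_0^{\alpha_0}$ collapses $F$ to $x_1^{\alpha_1}\cdots x_{d-1}^{\alpha_{d-1}}$ and yields only $\rk(F)\geq\prod_{i=2}^{d-1}(\alpha_i+1)$, losing the factor $(\alpha_1+1)$. Recovering that lost factor — by tracking the multiplicities with which the points meet the hyperplanes $y_i=0$ rather than merely differentiating — is the delicate combinatorial-geometric step, and this is precisely where the argument of Ref.~\cite{CCG12} does the real work.
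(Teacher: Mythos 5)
First, a point of reference: the dissertation does not prove this statement at all — it is imported verbatim from Ref.~\cite{CCG12} and used as a black box (e.g.\ in \Cref{subsec.4.2.5}), so there is no in-paper proof to compare against; your proposal has to stand on its own. Your setup is the right one and your upper bound is complete and correct: $F^{\perp}=(y_0^{\alpha_0+1},\ldots,y_{d-1}^{\alpha_{d-1}+1})$ is indeed the apolar ideal, the binomials $g_i=y_i^{\alpha_i+1}-y_0^{\alpha_i+1}$ lie in $F^{\perp}$ precisely because $\alpha_0\leq\alpha_i$, and their common zero locus is a reduced complete intersection of $\prod_{i=1}^{d-1}(\alpha_i+1)$ points whose (saturated) ideal is $(g_1,\ldots,g_{d-1})\subseteq F^{\perp}$; this is exactly the construction of \cite{CCG12}. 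Your observation that the catalecticant bound caps out at $\alpha_0+1$ in the binary case, and your squarefree-generator argument for $d=2$, are also correct.

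The genuine gap is the lower bound for $d\geq 3$, which you do not prove — you correctly diagnose that differentiating by $y_0^{\alpha_0}$ loses the factor $(\alpha_1+1)$ and then explicitly hand the remaining work back to the reference, so as written the proposal establishes only $\rk(F)\leq\prod_{i=1}^{d-1}(\alpha_i+1)$ in general. The missing idea is not an induction on $d$ but an Artinian reduction modulo the distinguished variable $y_0$: if $\mathbb{X}$ is an apolar set of points and the linear form $y_0$ vanishes at no point of $\mathbb{X}$, then $y_0$ is a non-zerodivisor on $S/I_{\mathbb{X}}$, so $|\mathbb{X}|=\dim_{\mathbbm{C}}S/(I_{\mathbb{X}}+(y_0))$, and the inclusion $I_{\mathbb{X}}+(y_0)\subseteq F^{\perp}+(y_0)$ gives $|\mathbb{X}|\geq\dim_{\mathbbm{C}}\,\mathbbm{C}[y_1,\ldots,y_{d-1}]/(y_1^{\alpha_1+1},\ldots,y_{d-1}^{\alpha_{d-1}+1})=\prod_{i=1}^{d-1}(\alpha_i+1)$ in one stroke, with no induction and no lost factor. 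What then remains — and what is the actual delicate step in \cite{CCG12} — is to show that points of $\mathbb{X}$ lying on the hyperplane $\{y_0=0\}$ cannot help, which is where the hypotheses $\alpha_0>0$ and $\alpha_0=\min_i\alpha_i$ enter (note that multiplying an element of $I_{\mathbb{X}\setminus\{y_0=0\}}$ by $y_0$ lands it in $I_{\mathbb{X}}\subseteq F^{\perp}$, which only shows apolarity to $y_0\circ F$, so the reduction is not free). Until that reduction is supplied, the claimed equality is not established.
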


In addition, the conjecture in Ref. \cite{Oeding20} provides the symmetric border rank of a general monomial.

\begin{conjecture}[Ref. \cite{Oeding20}]\label{conj:Waring-brk}
Let $\alpha=(0<\alpha_0\leq\cdots\leq\alpha_{d-1})\in\mathbbm{N}^{d+1}$. The symmetric border rank of the monomial $x_0^{\alpha_0}\cdots x_{d-1}^{\alpha_{d-1}}$ is equal to
\begin{equation}
\brk(x_0^{\alpha_0}\cdots x_{d-1}^{\alpha_{d-1}})=\prod_{i=0}^{d-2}(\alpha_i+1)\,.
\end{equation}
\end{conjecture}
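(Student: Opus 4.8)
The plan is to establish the conjectured value $B:=\prod_{i=0}^{d-2}(\alpha_i+1)$ as both an upper and a lower bound for $\brk(x_0^{\alpha_0}\cdots x_{d-1}^{\alpha_{d-1}})$. Throughout I would work in the apolarity framework: writing $f=x_0^{\alpha_0}\cdots x_{d-1}^{\alpha_{d-1}}$ as a symmetric tensor of degree $n=\sum_i\alpha_i$, its apolar ideal in the dual ring $\mathbbm{C}[y_0,\ldots,y_{d-1}]$ is the monomial complete intersection $f^\perp=(y_0^{\alpha_0+1},\ldots,y_{d-1}^{\alpha_{d-1}+1})$. By the border-rank version of the Apolarity Lemma, $\brk(f)$ equals the minimal length of a smoothable zero-dimensional scheme whose ideal is contained in $f^\perp$, and this is the object I would analyze. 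As a first sanity check, \Cref{theo:Warink-rk} gives the Waring rank $\srk(f)=\prod_{i=1}^{d-1}(\alpha_i+1)$, whence $\brk(f)\le\srk(f)$; but this only omits the smallest factor, and the whole point is to improve the omitted factor from the smallest exponent to the largest.

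For the upper bound $\brk(f)\le B$, I would produce an explicit degeneration $f=\lim_{t\to0}f_t$ with each $f_t$ a combination of $B$ powers of linear forms. The idea is to concentrate an apolar scheme at the coordinate point dual to the largest variable $x_{d-1}$ and spread it as a grid in the remaining $d-1$ directions: index the linear forms by $\beta=(\beta_0,\ldots,\beta_{d-2})$ with $0\le\beta_i\le\alpha_i$, of which there are exactly $B$, and set $\ell_{\beta,t}=x_{d-1}+t\sum_{i=0}^{d-2}\omega_i^{\beta_i}x_i$ for distinct scalars $\omega_i$. One then chooses coefficients $c_{\beta,t}$ so that $\sum_\beta c_{\beta,t}\,\ell_{\beta,t}^{\,n}$, after rescaling by the appropriate power of $t$, converges to $x_0^{\alpha_0}\cdots x_{d-1}^{\alpha_{d-1}}$; equivalently the limiting scheme is a fat point at $[0:\cdots:0:1]$ whose structure is governed by the box of size $\prod_{i=0}^{d-2}(\alpha_i+1)$, and smoothability holds because the largest exponent $\alpha_{d-1}$ leaves enough degree in the $x_{d-1}$ direction to smooth the grid.

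For the lower bound $\brk(f)\ge B$, the first tool is the catalecticant: by \eqref{catalecticant} the rank of any catalecticant $\mathrm{C}_k(f)$ bounds $\brk(f)$ from below, and for a monomial this rank is the number of exponent vectors $\beta$ with $0\le\beta_i\le\alpha_i$ and $\sum_i\beta_i=k$. A direct count shows that when the monomial is \emph{unbalanced}, i.e. $\alpha_{d-1}\ge\sum_{i<d-1}\alpha_i$, some slice $k$ achieves exactly $B$, so the catalecticant already proves the bound. The difficulty is the \emph{balanced} regime $\alpha_{d-1}<\sum_{i<d-1}\alpha_i$, where every catalecticant slice falls strictly short of $B$. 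Here I would replace the plain flattening by a Young/Koszul flattening in the style of Landsberg--Ottaviani: build an equivariant linear map whose rank on $f$ is computable from the monomial structure and is forced to equal $B$, and then invoke the fact that the rank of any such flattening bounds $\brk(f)$ from below.

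The main obstacle is precisely this lower bound in the balanced case: the classical catalecticant is provably insufficient, so the crux is to design a flattening — or an alternative obstruction such as a border-apolarity argument on the relevant Hilbert scheme — that is simultaneously sharp for every exponent pattern and still yields the clean closed form $\prod_{i=0}^{d-2}(\alpha_i+1)$. Matching the upper-bound degeneration against such an obstruction in all regimes, and not merely the unbalanced one, is exactly what separates this statement from a routine corollary of the Waring-rank formula and explains why it is stated as a conjecture.
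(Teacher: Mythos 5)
The statement you are trying to prove is labeled a \emph{conjecture} in the paper and is cited from Ref.~\cite{Oeding20}; the paper offers no proof of it, so there is nothing to compare your argument against on that side. The real issue is that your proposal is not a proof either, and you say as much in your own closing paragraph. The upper bound $\brk(f)\le\prod_{i=0}^{d-2}(\alpha_i+1)$ via a degeneration concentrated at the point dual to $x_{d-1}$ is plausible and is indeed known, and your catalecticant count correctly shows that the lower bound holds in the unbalanced regime $\alpha_{d-1}\ge\sum_{i<d-1}\alpha_i$, where some graded piece of the apolar algebra $\mathbbm{C}[y_0,\ldots,y_{d-1}]/(y_0^{\alpha_0+1},\ldots,y_{d-1}^{\alpha_{d-1}+1})$ attains the full value $B$. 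But in the balanced regime every catalecticant falls strictly short of $B$, and at that point your argument says only that one ``would'' construct a Koszul/Young flattening or a border-apolarity obstruction whose rank equals $B$. No such flattening is exhibited, no reason is given that one exists, and this missing construction is precisely the content of the conjecture. A plan that reduces an open problem to its known hard step is a useful survey, but it does not establish the statement.

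If you want to salvage something provable from this, you could restate and prove the two pieces you actually have: (i) the upper bound $\brk(x_0^{\alpha_0}\cdots x_{d-1}^{\alpha_{d-1}})\le\prod_{i=0}^{d-2}(\alpha_i+1)$, with the degeneration written out explicitly and its convergence verified (your sketch leaves the choice of the coefficients $c_{\beta,t}$ and the smoothability claim unjustified), and (ii) the equality in the unbalanced case, where the catalecticant bound closes the gap. That would be an honest partial result consistent with how the paper itself uses the conjecture, namely only through the consequences it would have for Dicke states in \Cref{subsec.4.2.5}.
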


\subsection{Flattening}\label{sec.flattening}
Although it is customary to look at an $n$-partite quantum state
\begin{equation}\label{n-partite}
|\psi\rangle=\sum_{\alpha=1}^{n}\sum_{i_{\alpha}=0}^{d_{\alpha}-1}\mathbf{c}_{i_{1}\cdots i_{n}}|i_{1}\cdots i_{n}\rangle\,,
\end{equation}
as a vector, such a vector results from the vectorization of an $n$-tensor in the Hilbert space $\mathcal{H}_{n}=\otimes_{i=1}^{n}\mathbbm{C}^{d_{i}}$. In multilinear algebra, this vectorization is a kind of tensor reshaping. Here, we shall use a tensor reshaping known as tensor flattening (or matricization) \cite{Landsberg}. It consists in partitioning the $n$-fold tensor product space (here, $\mathcal{H}_{n}$) to two-fold tensor product spaces with higher dimensions. With respect to the partitioning, we define an ordered $\ell$-tuple $I=(i_{1},i_{2},\ldots,i_{\ell})$ where $1\leq\ell\leq{n-1}$ and $1\leq i_{1}<\cdots<i_{\ell}\leq{n}$ and an ordered $(n-\ell)$-tuple related to complementary partition $\bar{I}$ such that $I\cup\bar{I}=(1,2,\ldots,n)$. Therefore, $\mathcal{H}_{n}\simeq\mathcal{H}_{I}\otimes\mathcal{H}_{\bar{I}}$ where $\mathcal{H}_{I}=\otimes_{\alpha=i_{1}}^{i_{\ell}}\mathbbm{C}^{d_{\alpha}}$ and $\mathcal{H}_{\bar{I}}$ is the complementary Hilbert space. For any state $\psi$ with vector representation $|\psi\rangle\in\mathcal{H}_n$, the $\ell$-partition $I$ leads to a linear operator $\mathcal{M}_{I}[\psi]$ which maps 
the dual $\mathcal{H}_I^{\vee}$ of ${\cal H}_I$ to ${\cal H}_{\bar{I}}$,
\begin{equation}\label{flattening}
\mathcal{M}_I[\psi]\colon\mathcal{H}_I^{\vee}\to\mathcal{H}_{\bar{I}}\,.
\end{equation}
Using Dirac notation, the matricization of $|\psi\rangle$ reads
\begin{equation}
\mathcal{M}_{I}[\psi]=\left(\langle{e_1}|\psi\rangle,\ldots,\langle{e_{d_{I}}}|\psi\rangle\right)^{\rm{T}}\,,
\end{equation}
where $\{|e_j\rangle=|i_{1}\cdots{i_{\ell}}\rangle\}_{j=1}^{d_I=\Pi{d_{\alpha}}}$ is the computational basis of $\mathcal{H}_{I}$ and $\rm{T}$ denotes the matrix transposition.
Clearly, we shall consider all ordered $\ell$-tuples $I$ to avoid overlapping of entanglement families \cite{GM18}. Hence, for a given $|\psi\rangle$ we have as many matrix representations $\mathcal{M}_{I}[\psi]$ as the number of possible $\ell$-tuples $I$, which is ${\binom{n}{\ell}}$. In this way, we can define $\ell$-multilinear rank ($\ell$-multirank, for short) \cite{Landsberg} of $|\psi\rangle$ as a ${\binom{n}{\ell}}$-tuple of ranks of ${\binom{n}{\ell}}$ different flattening $\mathcal{M}_{I}[\psi]$. Obviously, the zero-multirank is just a number, namely 1, as well as the $n$-multirank.

It is worth noting that it is enough to check $\ell$-multiranks for partition $I$ with \mbox{$1\leq\ell\leq\lfloor\frac{n}{2}\rfloor$}, because for complementary partition $\bar{I}$ the matrices $\mathcal{M}_{\bar{I}}[\psi]$ are just the transposes of $\mathcal{M}_{I}[\psi]$ and transposition does not alter the rank of the matrix. Moreover, when $n$ is an even integer the half digits of $\binom{n}{\frac{n}{2}}$-tuple corresponding to $\frac{n}{2}$-multirank is iterative. Therefore, the number of independent flattening ranges from ${\binom{n}{1}}$ to $(1/2)^{n+1~{\rm{mod}}~2} {\binom{n}{\lfloor\frac{n}{2}\rfloor}}$.

\begin{remark}
Recall that a matrix has rank less than $r$ iff all of its $(r+1)\times(r+1)$ minors vanish. Similarly, a tensor has multilinear rank less than $(r_1,\ldots,r_m)$ iff all of its $(r_i+1)\times(r_i+1)$ minors of the $i$-flattening vanish for all $i$.
\end{remark}

\begin{remark}
Not only do the integers of the tuples ($\ell$-multiranks) tell us about the separability of the state but also the greater the integers are the more entanglement the parties of the state have. Each integer equals one means there is a separability between two parties
\end{remark}
\begin{figure}[t]
\center{\includegraphics[width=13cm]{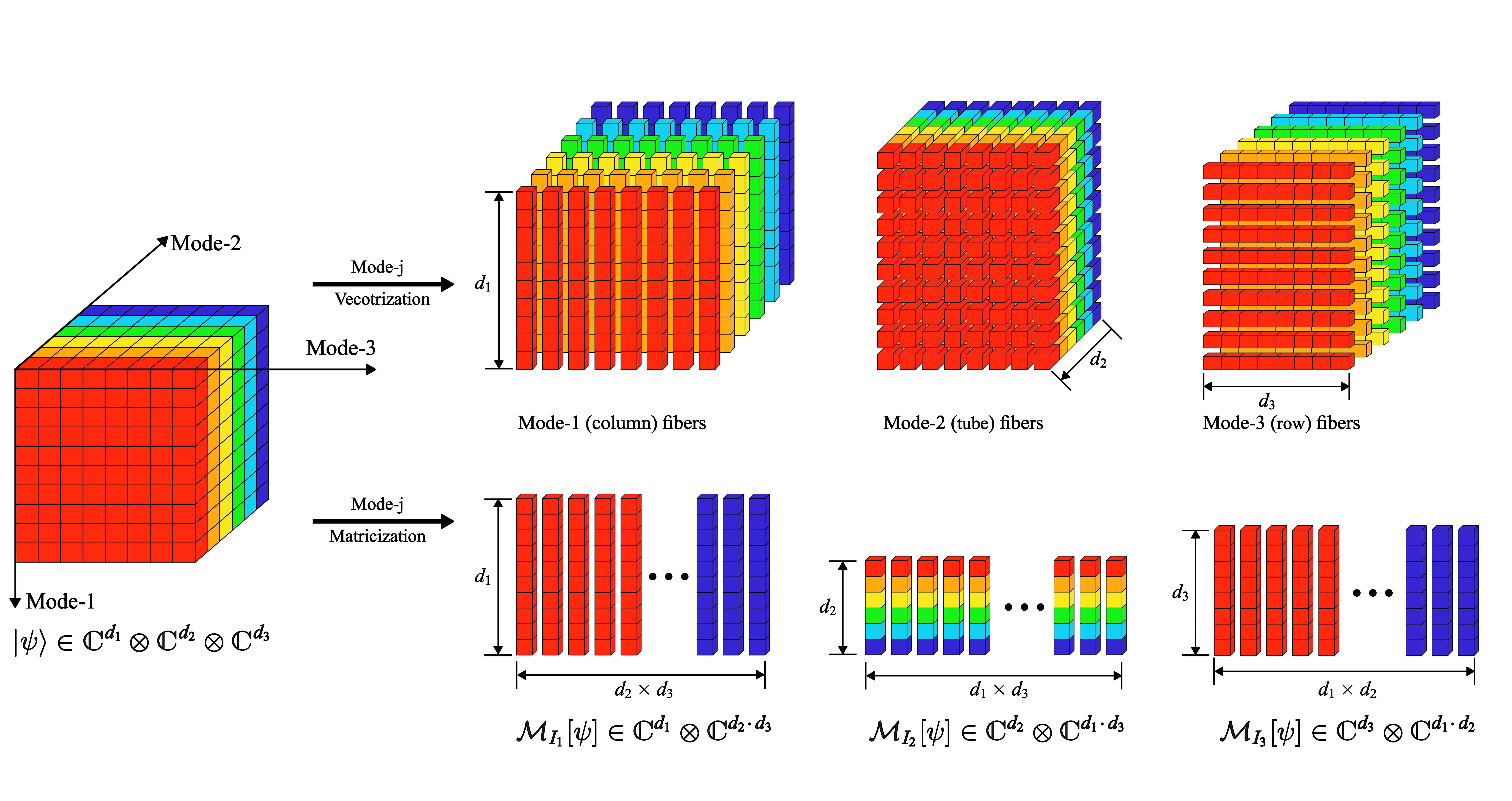}}
\caption[Flattening]{\label{fig:flattening} Flattening of $|\psi\rangle\in\mathbbm{C}^{d_1}\otimes\mathbbm{C}^{d_2}\otimes\mathbbm{C}^{d_3}$ to three different matrices.}
\end{figure}
\begin{example}
Let us consider $|\psi\rangle\in\mathbbm{C}^{d_1}\otimes\mathbbm{C}^{d_2}\otimes\mathbbm{C}^{d_3}$. Regarding Eq. \eqref{flattening}, we have three different matrices corresponding to partitions: $I_i=(i)$ for all $i\in\{1,2,3\}$. That is
\begin{align}\nonumber
&\mathcal{M}_{I_1}[\psi]\in\mathbbm{C}^{d_1}\otimes\mathbbm{C}^{d_2\cdot d_3}\,, \\ \nonumber
&\mathcal{M}_{I_2}[\psi]\in\mathbbm{C}^{d_2}\otimes\mathbbm{C}^{d_1\cdot d_3}\,, \\
&\mathcal{M}_{I_3}[\psi]\in\mathbbm{C}^{d_3}\otimes\mathbbm{C}^{d_1\cdot d_2}\,.
\end{align}
These three different flattening is illustrated in Fig. \ref{fig:flattening}. Therefore, one-multirank is a triple $(r_1,r_2,r_3)$ where $r_i=\mathrm{rank}(\mathcal{M}_{I_i})$.
\end{example}

\begin{remark}
The rank of $\mathcal{M}_{I}[\psi]$ is the same as the rank of the reduced density matrix obtained after tracing over the parties identified by the $(n-\ell)$-tuple $\bar{I}$, i.e.,
\begin{equation}
\varrho_{I}={\rm{Tr}}_{\bar{I}}\left(|\psi\rangle\langle\psi|\right)=\mathcal{M}_{I}[\psi]\mathcal{M}_{I}^{\dagger}[\psi]\,.
\end{equation}
\end{remark}

\begin{theorem}\label{l-multirank-SLOCC}
$\ell$-multirank is an SLOCC invariant.
\end{theorem}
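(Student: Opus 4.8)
The plan is to reduce the invariance of the whole $\binom{n}{\ell}$-tuple to a statement about a single flattening, and then to show that each flattening is altered only by left- and right-multiplication with invertible matrices, an operation that leaves the matrix rank unchanged. First I would fix an ordered $\ell$-tuple $I$ together with its complement $\bar I$, and recall from the definition of SLOCC equivalence in Eq.~\eqref{SLOCC-equiv} that $|\psi\rangle = A_1\otimes\cdots\otimes A_n|\varphi\rangle$ with each $A_i\in\mathrm{SL}(d_i,\mathbbm{C})$, so in particular each $A_i$ is invertible.

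Because the local operators act on distinct tensor factors and never permute the parties, the global operator factorizes through the bipartition $I\,|\,\bar I$ as $A_I\otimes A_{\bar I}$, where $A_I:=\bigotimes_{i\in I}A_i$ acts on $\mathcal{H}_I$ and $A_{\bar I}:=\bigotimes_{j\in\bar I}A_j$ acts on $\mathcal{H}_{\bar I}$. This factorization is exactly what lets a genuinely multipartite operation interact with the bipartite object $\mathcal{M}_I[\varphi]$ in a controlled way.

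The crux of the argument is the transformation rule for the matricization, and it is the one computation I would actually carry out. Writing $|\varphi\rangle=\sum_{a,b}c_{ab}\,|a\rangle_I|b\rangle_{\bar I}$ in the computational bases of $\mathcal{H}_I$ and $\mathcal{H}_{\bar I}$, and letting $C=(c_{ab})$ be the coefficient matrix (so that $\mathcal{M}_I[\varphi]$ equals $C$ or its transpose according to the convention fixed by Eq.~\eqref{flattening}, which does not affect the rank), a direct index computation using $(A_I\otimes A_{\bar I})|a\rangle_I|b\rangle_{\bar I}=\sum_{a',b'}(A_I)_{a'a}(A_{\bar I})_{b'b}\,|a'\rangle_I|b'\rangle_{\bar I}$ yields
\begin{equation*}
C\;\longmapsto\;A_I\,C\,A_{\bar I}^{\mathrm{T}}\,.
\end{equation*}
Everything else is bookkeeping.

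Finally, since the Kronecker product of invertible matrices is invertible — indeed $\det A_I=\prod_{i\in I}(\det A_i)^{\,d_I/d_i}\neq0$ with $d_I=\prod_{i\in I}d_i$, and likewise for $A_{\bar I}$ — both $A_I$ and $A_{\bar I}^{\mathrm{T}}$ are invertible, so multiplying $C$ on the left and right by them preserves its rank. Hence $\mathrm{rank}\,\mathcal{M}_I[\psi]=\mathrm{rank}\,\mathcal{M}_I[\varphi]$ for the fixed $I$. As $I$ ranges over all $\binom{n}{\ell}$ ordered $\ell$-tuples and the SLOCC operation never permutes parties, the $I$-th entry of the $\ell$-multirank of $|\psi\rangle$ matches the $I$-th entry of that of $|\varphi\rangle$; the two $\binom{n}{\ell}$-tuples therefore coincide, which establishes that the $\ell$-multirank is an SLOCC invariant. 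I expect the only real obstacle to be notational: keeping the row/column index conventions of the flattening consistent so that the factor $A_{\bar I}$ lands on the correct side as a transpose, rather than any genuine mathematical difficulty.
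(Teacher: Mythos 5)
Your proposal is correct and follows essentially the same route as the paper: both establish the transformation rule $\mathcal{M}_{I}[\psi]\mapsto\bigl(\otimes_{i\in I}A_{i}\bigr)\mathcal{M}_{I}[\varphi]\bigl(\otimes_{i\in\bar{I}}A_{i}\bigr)^{\rm{T}}$ and conclude that left- and right-multiplication by invertible matrices preserves the rank of each flattening, hence the whole $\binom{n}{\ell}$-tuple. The extra determinant computation for the Kronecker product is harmless but unnecessary, since $(\otimes_i A_i)^{-1}=\otimes_i A_i^{-1}$ already gives invertibility.
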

{\it Proof.}
Regarding SLOCC equivalent states, i.e.,
\begin{equation}
|\tilde{\psi}\rangle = \left(\otimes_{i=1}^{n} A_{i}\right) |\psi\rangle\,,
\end{equation}
where $|\psi\rangle\in\mathcal{H}_{n}$ and $A_{i}\in{\rm{SL}}(d_{i},\mathbbm{C})$, we have following relation between their flattenings
\begin{equation}
\mathcal{M}_{I}[\tilde{\psi}]=\left(\otimes_{i\in I}A_{i}\right)\mathcal{M}_{I}[\psi]\left(\otimes_{i\in\bar{I}}A_{i}\right)^{\rm{T}}\,,
\end{equation}
that does not alter the matrix rank. Therefore, $\ell$-multirank is an SLOCC invariant.
\qed

\begin{remark}\label{remark-MR-GE}
A state is genuinely entangled iff all $\ell$-multiranks are greater than one.
\end{remark}

\subsection{Auxiliary varieties}

Since $\ell$-multiranks only depend on the quantum state, and not on the representation, and, furthermore, because statements about rank can be rephrased as statements about minors which are determinants, it follows that a given $\ell$-multirank configuration determines a determinantal variety in the projective Hilbert space $\mathbbm{P}\mathcal{H}$. Actually, the determinantal variety is a subset of all matrices with rank $r$ or less in $\mathbbm{P}\mathcal{H}$, that is just the common zero locus of the $(r+1)\times(r+1)$ minors. Pure multipartite states which have $\ell$-multiranks bounded by a given integer sequence make a subvariety of $\mathbbm{P}\mathcal{H}$. In particular, the Segre variety is an example of a determinantal variety; it is the zero locus of the $2\times2$ minors of the coefficient matrices in Eq. \eqref{n-partite}, i.e., common zero locus of the quadratic polynomials $\mathcal{M}_{ij}\mathcal{M}_{kl}-\mathcal{M}_{il}\mathcal{M}_{kj}$. Therefore, the projective variety of fully separable $n$-partite states has the structure of a Segre variety \cite{Miyake03, Heydari08} which is embedded in the ambient space as follows:
\begin{equation}\label{segre-SM}
\Sigma^{n}_{\textbf{d}-\textbf{1}}\colon\mathbbm{P}^{d_{1}-1}\times\mathbbm{P}^{d_{2}-1}\times\cdots\times\mathbbm{P}^{d_{n}-1}\hookrightarrow\mathbbm{P}^{D}\,.
\end{equation}
Here, $\textbf{d}-\textbf{1}=(d_{1}-1,\ldots,d_{n}-1)$, $D=\left(\Pi_{i=1}^{n}d_{i}\right)-1$, and $\times$ is the Cartesian product of sets.  One can readily check that $\Sigma^{n}_{\textbf{d}-\textbf{1}}$ is indeed the projective variety of fully separable states. Actually, if all partial traces are pure states, the corresponding ranks are all one. So we have that for all $\ell$-partitions the rank of 
$\mathcal{M}_I[\psi]$ is always one. Conversely, if all ranks are one, the state is fully separable.

It is worth noting that multipartite symmetric separable states with identical parties of dimension $d$ have the structure of Veronese variety.

\begin{definition}[Secant variety]
Let projective varieties $\mathfrak{X}$ and $\mathfrak{Y}$ be subvarieties of a projective variety. The joining of $\mathfrak{X}$ and $\mathfrak{Y}$ is given by the algebraic closure, for the Zariski topology, of the lines from one to the other,
\begin{equation}\label{joinvariety}
\mathfrak{J}(\mathfrak{X},\mathfrak{Y})=\overline{\bigcup_{x\in\mathfrak{X},y\in\mathfrak{Y},x\neq{y}}\mathbbm{P}^1_{xy}}\,,
\end{equation}
where $\mathbbm{P}^1_{xy}$ is the projective line that includes both $x$ and $y$. If $\mathfrak{X}=\mathfrak{Y}$, the joining is called the secant variety of $\mathcal{X}$, i.e., $\sigma(\mathfrak{X})=\mathfrak{J}(\mathfrak{X},\mathfrak{X})$.
\end{definition}

\begin{remark}
The determinantal varieties in Eq. \eqref{segre-SM} are subvarieties of secant varieties of the Segre variety.
\end{remark}

\begin{definition}[Tangent variety (see Ref. \cite{Zak})]
Let projective varieties $\mathfrak{X}$ and $\mathfrak{Y}$ be subvarieties of a projective variety and suppose now $\mathfrak{Y}\subset\mathfrak{X}$. Let $\mathfrak{T}^{\star}_{\mathfrak{X},\mathfrak{Y},y_{0}}$ denote the relative tangent star, which is the union of $\mathbbm{P}_{\star}^{1}=\lim_{x,y\to{y_{0}}}\mathbbm{P}^{1}_{xy}$ with $y_{0}\in\mathfrak{Y}$ and the $x$'s taken from $\mathfrak{X}$. The variety of relative tangent stars is defined as follows
\begin{equation}\label{tangentvariety}
\mathfrak{T}(\mathfrak{X},\mathfrak{Y})=\bigcup_{y\in\mathfrak{Y}}\mathfrak{T}^{\star}_{\mathfrak{X},\mathfrak{Y},y}\,.
\end{equation}
 If $\mathfrak{X}=\mathfrak{Y}$, we denote the tangential variety as $\tau(\mathfrak{X})=\mathfrak{T}(\mathfrak{X},\mathfrak{X})$. 
\end{definition} 
 
\begin{remark}
The iterated join of $k$ copies of $\mathcal{X}$ is called the $k$-secant variety of $\mathcal{X}$. Hence, the secant varieties that we have mentioned above are given by the algebraic closure of the joining of the Segre variety and the immediately previous secant variety:
\begin{equation}\label{recursivesecant}
\sigma_{k}(\Sigma)=\mathfrak{J}\left(\sigma_{k-1}(\Sigma),\Sigma\right)\,.
\end{equation}
\end{remark}

Notice that the first secant variety of the Segre variety coincides with the Segre variety itself, i.e., $\sigma_{1}(\Sigma)=\Sigma$\footnote{Actually, this statement is true for any variety.}. This means that a generic point of the $k$-secant is a combination of $k$ independent points of the Segre variety (the superposition of $k$ fully separable states). If $\sigma_k$ fills the ambient projective space we say that the generic tensor rank is $k$. A crucial element of the definitions is that the secants are closed. This means that in each $k$-secant ($k>1$) family there will be elements whose tensor rank will not be $k$ (it can be greater than $k$) but the border rank is $k$. Thus we will make the distinction between the proper secant and the tangent.

For instance, based on Strassen's algorithm discussed in Section \ref{sec.Tensors}, the tensor rank of $\mathrm{MaMu}$ is seven which means $\mathrm{MaMu}\in\sigma_7(\mathbbm{P}^3\times{\mathbbm{P}^3}\times{\mathbbm{P}^3})$ \cite{Strassen69}. Moreover, based on the information that the border rank of $\mathrm{MaMu}$ is also seven we can conclude that $\mathrm{MaMu}\notin\sigma_6(\mathbbm{P}^3\times{\mathbbm{P}^3}\times{\mathbbm{P}^3})$ \cite{Landsberg06}.

\begin{remark}
If the projective variety $\mathfrak{X}$ is non-degenerate, i.e., it is not contained in a linear subspace of $\mathbbm{P}(K^d)$, there is a natural sequence of inclusions given by
\begin{equation}
\mathfrak{X}\subset\sigma_2(\mathfrak{X})\subset\sigma_3(\mathfrak{X})\subset\cdots\subset\sigma_r(\mathfrak{X})=\mathbbm{P}(K^d)\,,
\end{equation}
where $r$ is the smallest integer such that the $r$-th secant variety fills the ambient space.
\end{remark}

We can also generalize the definition of tangent line to a curve by introducing its osculating planes \cite{Harris}. Hence, one can define varieties of different types of limiting curves inside the $k$-secant variety. To simplify the calculations, let $x_t$ be a smooth curve in $\Sigma$. Then, we can take higher order derivatives and calculate the higher dimensional tangential varieties as follows:
\begin{equation}\label{G-tangent}
\tau_{k}(\Sigma)=\overline{\{x_{0}+x'_{0}+\cdots+x^{(k-1)}_{0}|x_{t}\subset\Sigma~\text{is a smooth curve}\}}\,.
\end{equation}
Obviously $\tau_{k}(\Sigma)\subset\sigma_{k}(\Sigma)$ and $\mathfrak{T}(\tau_{k-1}(\Sigma),\Sigma)\subset\tau_{k}(\Sigma)$, the last inclusion is even an equality.

The expected dimension of $k$-secant variety of a projective variety $\mathfrak{X}\subset\mathbbm{P}^{d-1}$ arises just from the naive dimension count. That is, if the $k$-secant variety does not fill the ambient space, each point in an open set of $\sigma_k(\mathfrak{X})$ can be decomposed as a sum of $k$ points from the projective variety $\mathfrak{X}$. In the $k$-secant variety, there are $k$ points which leads to $k\times\dim\mathfrak{X}$ parameters. Concerning $k$ points, there are $k$ coefficients of the ground field. Moreover, one can divide all the $k$ coefficients by one of them which leads to $k-1$ independent parameters. Furthermore, we expect that since there is no deductive relation between these parameters, one can define the expected dimension of the $k$-secant variety of a projective variety $\mathfrak{X}$ as follows.
\begin{definition}[Expected dimension of $k$-secant variety]
For a projective variety $\mathfrak{X}\subseteq\mathbbm{P}^{d-1}$, with $\dim\,\mathfrak{X}=s$, the expected dimension of $\sigma_{k}(\mathfrak{X})$ is
\begin{equation}\label{exp-k-secant}
\mathrm{expdim}\,\sigma_k(\mathfrak{X})=\min\{ks+k-1,d-1\}\,.
\end{equation}
If $\mathrm{expdim}\,\sigma_k(\mathfrak{X})-\dim\sigma_k(\mathfrak{X})=a>0$, then $\sigma_{k}(\mathfrak{X})$ is called $k$-defective or simply defective and $a$ is the defect.
\end{definition}

It worth noting that the expected dimension is also the maximum dimension of the $k$-secant variety.

For qudits we take the projective variety $\mathfrak{X}$ to be the Segre variety and regarding Eq. \eqref{segre-SM} $\dim\Sigma^{n}_{\textbf{d}-\textbf{1}}=\sum_{i=1}^n(d_i-1)$.

\begin{conjecture}[Ref. \cite{AOP09}]
For a general multipartite quantum state in the Hilbert space $\mathcal{H}=\mathbbm{C}^{d_1}\otimes\cdots\otimes\mathbbm{C}^{d_n}$, the generic tensor rank, denoted by $\rk_{\text{gen}}$, is equal to the expected tensor rank. That is, the smallest $k$ in Eq. \eqref{exp-k-secant} such that $k$-secant variety fills the ambient space, i.e,
\begin{equation}
\rk_{\text{gen}}=\left\lceil\frac{\prod_{i=1}^n d_i}{\sum_{i=1}^n(d_i-1)+1}\right\rceil\,.
\end{equation}
\sloppy In this conjecture, there are exceptional cases: $\mathbbm{C}^{4\times{4}\times{3}}$, $\mathbbm{C}^{(2i+1)\times(2i+1)\times{3}}$, and $\mathbbm{C}^{(i+2)\times(i+2)\times{2}\times{2}}$, with $i\in\mathbbm{Z}^+$. In this exceptional cases, the generic tensor rank is equal to the expected plus one.
\end{conjecture}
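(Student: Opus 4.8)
The plan is to recast the statement about generic rank as a dimension count for secant varieties of the Segre variety $\Sigma^n_{\mathbf{d}-\mathbf{1}}$ of Eq.~\eqref{segre-SM}, and to reduce everything to a \emph{non-defectivity} assertion. As explained around Eqs.~\eqref{recursivesecant}--\eqref{exp-k-secant}, $\rk_{\text{gen}}$ is the least $k$ for which $\sigma_k(\Sigma^n_{\mathbf{d}-\mathbf{1}})$ fills $\mathbbm{P}^D$, with $D=\prod_{i=1}^n d_i-1$ and $\dim\Sigma^n_{\mathbf{d}-\mathbf{1}}=\sum_{i=1}^n(d_i-1)$. If every secant variety were of expected dimension, then setting $\mathrm{expdim}\,\sigma_k=D$ in Eq.~\eqref{exp-k-secant} would give precisely $k=\lceil\prod_i d_i/(\sum_i(d_i-1)+1)\rceil$. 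Thus the claim is equivalent to showing that the Segre variety is never defective except in the three listed families, and that in each of those the defect is exactly one, which forces $\rk_{\text{gen}}$ to be the expected value plus one.

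Next I would make the dimension of $\sigma_k$ computable through Terracini's lemma: for general points $p_1,\dots,p_k\in\Sigma^n_{\mathbf{d}-\mathbf{1}}$ the affine tangent space to the cone over $\sigma_k$ at a general point of their span equals $\widehat{T}_{p_1}+\cdots+\widehat{T}_{p_k}$, where the affine tangent space $\widehat{T}_{p}$ at $p=[v_1\otimes\cdots\otimes v_n]$ is the span of all tensors obtained by replacing a single factor $v_i$ by an arbitrary vector of $\mathbbm{C}^{d_i}$, of dimension $\sum_i(d_i-1)+1$. Non-defectivity then becomes the purely linear-algebraic statement that these $k$ tangent spaces are as transverse as possible, i.e.\ that
\begin{equation}
\dim\big(\widehat{T}_{p_1}+\cdots+\widehat{T}_{p_k}\big)=\min\Big\{k\big(\sum_{i}(d_i-1)+1\big),\ \prod_{i}d_i\Big\}\,.
\end{equation}

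To prove this identity for general configurations, I would run an inductive specialization argument on the format $\mathbf{d}$ and on $n$, in the spirit of the Horace method. First I would dispose of the \emph{unbalanced} regime, where one factor dominates the product of the others and the bound coming from the largest flattening already pins down $\rk_{\text{gen}}$; here the tangent-space span is controlled directly. For the balanced regime, I would degenerate part of the point configuration onto a sub-Segre obtained by collapsing or slicing one factor (a hyperplane section in one $\mathbbm{P}^{d_i-1}$), so that the span splits into a piece handled by the inductive hypothesis for a smaller format and a residual piece; by semicontinuity, attaining the expected dimension for the specialized configuration implies it for the general one. The base cases (small $n$, small $d_i$) would be checked directly, and the three exceptional families isolated as precisely the configurations where the splitting leaves a one-dimensional excess that no admissible specialization can remove.

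The main obstacle is the inductive step itself: keeping the specialized tangent spaces in sufficiently general position so that no \emph{unexpected} linear dependence appears, while simultaneously proving that exactly the listed formats \emph{do} produce such a dependence and with defect exactly one. Controlling these dependencies --- essentially a Hilbert-function and apolarity computation that must be uniform across the induction yet sharp enough to detect the sporadic exceptions $\mathbbm{C}^{4\times4\times3}$, $\mathbbm{C}^{(2i+1)\times(2i+1)\times3}$ and $\mathbbm{C}^{(i+2)\times(i+2)\times2\times2}$ --- is where the difficulty concentrates, and is the reason the statement is recorded here as a conjecture rather than a theorem in full generality.
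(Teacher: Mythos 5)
This statement is recorded in the paper as a \emph{conjecture}, attributed to Ref.~\cite{AOP09}; the paper supplies no proof of it, so there is nothing to compare your argument against except the surrounding framework (the expected-dimension count of Eq.~\eqref{exp-k-secant} and Terracini's lemma), which your proposal reproduces faithfully. Your reduction is the correct and standard one: $\rk_{\text{gen}}$ is the least $k$ with $\sigma_k$ filling $\mathbbm{P}^D$, and if all secant varieties of $\Sigma^n_{\mathbf{d}-\mathbf{1}}$ have the expected dimension then the ceiling formula follows immediately, so the whole content of the statement is a non-defectivity assertion, which Terracini converts into transversality of the affine tangent spaces $\widehat{T}_{p_i}$. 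This is precisely the strategy of the cited reference.

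However, what you have written is a research plan, not a proof, and you say so yourself: the entire mathematical content --- the inductive/Horace specialization step, the control of unexpected linear dependencies among the specialized tangent spaces, and the sharp identification of the sporadic defective formats with defect exactly one --- is deferred to the sentence ``this is where the difficulty concentrates.'' That is exactly the part that remains open in general, which is why the paper states this as a conjecture rather than a theorem. One further point deserves care: your dismissal of the unbalanced regime is too quick. When one factor satisfies $d_n > \prod_{i<n}d_i - \sum_{i<n}(d_i-1)$, the generic rank is governed by the flattening bound and genuinely \emph{exceeds} the ceiling formula (e.g.\ $\mathbbm{C}^2\otimes\mathbbm{C}^2\otimes\mathbbm{C}^4$ has generic rank $4$ while the formula gives $3$), so these formats are additional defective cases not among the three families listed in the statement; any complete argument must either treat them as further exceptions or restrict the claim to balanced formats. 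As it stands, your proposal correctly frames the problem but proves none of it, and in one place (the unbalanced case) asserts more than is true.
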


So, based on this conjecture, it is also possible to classify entanglement in multipartite systems.

%Regarding Eq. \eqref{exp-k-secant}, and the fact that the dimension of the Veronese variety in Eq. \eqref{Veronese-AG} is $\dim\mathcal{V}^n_{d-1}=d$, we have the following corollary.

\begin{theorem}[ Alexander-Hirschowit (see Ref. \cite{AH95})]\label{theo:AH95}
The generic tensor rank of a symmetric tensor in ${\rm{Sym}}^n\mathbbm{C}^d$ is equal to the expected symmetric tensor rank which is
\begin{equation}\label{SymGenericRank}
\rk_{\text{gen}}=\left\lceil\frac{\binom{n+d-1}{n}}{d}\right\rceil\,,
\end{equation}
except for (i) $n=2$ where it is equal to $d$, and (ii) the pairs $(n,d)=(3,5)$, $(4,3)$, $(4,4)$, $(4,5)$ where the generic tensor rank is equal to the expected plus one \cite{AH95, BO08}.  Furthermore,
in these exceptional cases, all tensors of border rank at most $\brk=\rk_{\text{gen}}$ form a hypersurface in ${\rm{Sym}}^n\mathbbm{C}^d$ \cite{BFZ20}.
\end{theorem}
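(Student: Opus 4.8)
The plan is to pass from symmetric tensors to the geometry of the Veronese variety and reduce the rank statement to a question about the dimension of a linear system of hypersurfaces singular at general points. First I would identify ${\rm{Sym}}^n\mathbbm{C}^d$ with degree-$n$ homogeneous polynomials and recall that the Veronese variety $\mathcal{V}^n_{d-1}\subset\mathbbm{P}^N$, with $N=\binom{n+d-1}{n}-1$, parametrizes $n$-th powers of linear forms, so that a generic point of $\sigma_k(\mathcal{V}^n_{d-1})$ has symmetric rank $k$. By the recursive definition in \eqref{recursivesecant}, the generic symmetric rank is the least $k$ with $\sigma_k(\mathcal{V}^n_{d-1})=\mathbbm{P}^N$; since $\dim\mathcal{V}^n_{d-1}=d-1$, the expected-dimension count \eqref{exp-k-secant} yields the candidate value \eqref{SymGenericRank}. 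Hence it suffices to prove that $\sigma_k(\mathcal{V}^n_{d-1})$ has the expected dimension $\min\{kd-1,N\}$ for every $k$ outside the stated exceptions, and has defect exactly one in those exceptions.

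Second, I would invoke Terracini's lemma to turn this dimension into an interpolation problem. The affine tangent space to $\sigma_k$ at a general point spanned by $k$ points of $\mathcal{V}^n_{d-1}$ is the span of the $k$ tangent spaces to the Veronese, and its projective dimension equals $N-h$, where $h$ is the dimension of the space of degree-$n$ forms vanishing to order two at $k$ general points $p_1,\dots,p_k\in\mathbbm{P}^{d-1}$. Non-defectivity is therefore equivalent to the statement that $k$ general double points impose the expected number $\min\{kd,\binom{n+d-1}{n}\}$ of independent conditions on degree-$n$ forms, each double point having length $d$.

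Third, and this is the technical heart, I would prove this interpolation statement by a double induction on $d$ and $n$ via the Horace (Castelnuovo) method: specialize some of the double points onto a hyperplane $H\cong\mathbbm{P}^{d-2}$ and exploit the restriction sequence $0\to\mathcal{I}_{\mathrm{Res}_HZ}(n-1)\to\mathcal{I}_Z(n)\to\mathcal{I}_{Z\cap H,H}(n)\to 0$, which relates cohomology in $\mathbbm{P}^{d-1}$ to lower-degree data on $H$ and on the residual scheme. Because a naively specialized double point distributes length inconsistently between trace and residual, I would use the \emph{differential Horace method} of Alexander and Hirschowitz: specialize only a length-one tangent direction of each chosen double point into $H$, so the trace carries a reduced point on $H$ while the residual retains the complementary direction. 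Balancing the numerics so that both the trace system on $H$ and the residual system in $\mathbbm{P}^{d-1}$ fall under the inductive hypothesis then forces the expected dimension.

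The main obstacle is controlling precisely where the induction fails. One must (a) settle the base cases $n=1,2$ directly — for $n=2$ the $k$-secant of the quadratic Veronese is the determinantal variety of symmetric $d\times d$ matrices of rank at most $k$, whose dimension is strictly below expected for $k\ge 2$, so the space fills only at $k=d$, giving generic rank $d$; and (b) treat the sporadic triples $(n,d)=(3,5),(4,3),(4,4),(4,5)$ by hand, exhibiting in each the unexpected degree-$n$ form through the general double points (via the classical geometry of the associated quadrics/cubics). This simultaneously shows the generic rank equals the expected value plus one and that the defect is exactly one, so that $\sigma_{k_0}(\mathcal{V}^n_{d-1})$ — with $k_0$ the expected value, which equals the generic border rank — is a hypersurface whose general member has rank $k_0+1$, as the \emph{furthermore} clause asserts. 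Verifying that these four cases together with $n=2$ are the \emph{only} failures, i.e.\ that the differential Horace induction closes everywhere else, is the delicate bookkeeping that constitutes the real content of the theorem.
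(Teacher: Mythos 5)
The paper does not prove this theorem: it is quoted verbatim from the literature, with the rank statement attributed to Alexander--Hirschowitz \cite{AH95} (as reorganized by Brambilla--Ottaviani \cite{BO08}) and the hypersurface clause to \cite{BFZ20}. Your outline is a faithful summary of exactly the proof strategy in those references: the translation of symmetric rank into secant varieties of the Veronese, the Terracini reduction to the postulation of $k$ general double points, the exact sequence of the Horace method, the \emph{m\'ethode d'Horace diff\'erentielle} to fix the trace/residual imbalance, the determinantal description that settles $n=2$, and the classical constructions (the square of the quadric through $5$, $9$, $14$ general points for the quartic cases, the secant cubic of the rational normal quartic through $7$ points in $\mathbbm{P}^4$ for $(n,d)=(3,5)$) that exhibit the defect in the sporadic cases. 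So there is no divergence of approach to report; what you have written is the canonical route.

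Two caveats. First, as you yourself concede in the final sentence, the outline defers the entire inductive bookkeeping --- the verification that the differential Horace specializations can always be arranged so that both the trace and the residual systems satisfy the inductive hypothesis, and that the five listed failures are the \emph{only} ones. That verification is the content of \cite{AH95} and occupies the bulk of \cite{BO08}; a proposal at this level of detail establishes the architecture but not the theorem. Second, a small slip in the last paragraph: in the exceptional cases the hypersurface is $\sigma_{k_0}(\mathcal{V}^n_{d-1})$ with $k_0$ the \emph{expected} value, but the generic border rank is $k_0+1$, not $k_0$ --- a generic form lies off the hypersurface, so its border rank already equals $\rk_{\text{gen}}=k_0+1$. (The theorem statement in the dissertation is itself ambiguous on this point, since ``border rank at most $\brk=\rk_{\text{gen}}$'' would describe $\sigma_{k_0+1}$, which fills the ambient space; the intended assertion, and the one proved in \cite{BFZ20}, is that $\sigma_{k_0}$ is a hypersurface.)
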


To obtain the dimension of the secants and tangents, one can utilize the following theorems \cite{FH79,Zak}.

\begin{theorem}[Fulton-Hansen (see Ref. \cite{FH79})]\label{theo:FH97}
Let $\mathfrak{X}$ be a projective algebraic variety of dimension $d$. Then, one of the following two properties holds
\begin{enumerate}
\item[(i)] $\dim \sigma_2(\mathfrak{X})=2d+1$ and $\dim \tau(\mathfrak{X})=2d$,
\item[(ii)] $\dim \sigma_2(\mathfrak{X})\leq2d$ and $\tau(\mathfrak{X})=\sigma_2(\mathfrak{X})$.
\end{enumerate}
\end{theorem}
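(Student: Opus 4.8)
The plan is to deduce this dichotomy from the Fulton--Hansen connectedness theorem, which is the essential input; everything else is dimension bookkeeping on incidence varieties together with a reduction by generic projection. Write $d=\dim\mathfrak{X}$, assume $\mathfrak{X}\subseteq\mathbbm{P}^N$ is irreducible, and work over $\mathbbm{C}$. First I would record the two a priori bounds. Form the incidence variety $Z=\overline{\{(x,y,p)\in\mathfrak{X}\times\mathfrak{X}\times\mathbbm{P}^N: x\neq y,\ p\in\mathbbm{P}^1_{xy}\}}$; projecting to the first two factors exhibits $Z$ as generically a $\mathbbm{P}^1$-bundle over $\mathfrak{X}\times\mathfrak{X}$, so $\dim Z=2d+1$, and since $\sigma_2(\mathfrak{X})$ is the image of the third projection $q\colon Z\to\mathbbm{P}^N$ one gets $\dim\sigma_2(\mathfrak{X})\le 2d+1$. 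The fibre of $q$ over a general point of $\sigma_2(\mathfrak{X})$ therefore has dimension $2d+1-\dim\sigma_2(\mathfrak{X})$, so it is positive-dimensional precisely when $\sigma_2$ is deficient. An analogous count for tangent lines (a point of $\mathfrak{X}$, a tangent direction, a point on the tangent line) gives $\dim\tau(\mathfrak{X})\le 2d$, and $\tau(\mathfrak{X})\subseteq\sigma_2(\mathfrak{X})$ because tangent lines are limits of secants.

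The crux is the single lemma: \emph{if the generic fibre of $q$ is positive-dimensional, then the generic point of $\sigma_2(\mathfrak{X})$ lies on $\tau(\mathfrak{X})$.} I would prove this by projection from a general point $z\in\sigma_2(\mathfrak{X})$. After a preliminary general linear projection I may assume $\mathfrak{X}\subseteq\mathbbm{P}^{2d}$; this is legitimate exactly in the deficient case, since projection to $\mathbbm{P}^{2d}$ is injective iff $\dim\sigma_2\le 2d$, and a general centre disjoint from $\sigma_2$ changes neither $d$ nor the dimensions of $\sigma_2$ and $\tau$. Let $\pi_z$ denote projection away from $z$ and consider
\begin{equation}
f=(\pi_z\times\pi_z)|_{\mathfrak{X}\times\mathfrak{X}}\colon \mathfrak{X}\times\mathfrak{X}\longrightarrow\mathbbm{P}^{2d-1}\times\mathbbm{P}^{2d-1}.
\end{equation}
Since $z$ is general, $\pi_z|_{\mathfrak{X}}$ is generically finite onto its image, so $\dim f(\mathfrak{X}\times\mathfrak{X})=2d>2d-1$, and the Fulton--Hansen connectedness theorem applies: $f^{-1}(\Delta)$ is connected, where $\Delta$ is the diagonal. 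Now $f^{-1}(\Delta)=\Delta_{\mathfrak{X}}\cup D_z$, where $D_z=\{(x,y):x\neq y,\ z\in\mathbbm{P}^1_{xy}\}$ is the locus of honest secants through $z$. When the secant fibre over $z$ is positive-dimensional, $D_z$ is nonempty; connectedness then forces $\overline{D_z}$ to meet $\Delta_{\mathfrak{X}}$, and a boundary point $(x,x)\in\overline{D_z}$ encodes a tangent line through $z$, that is $z\in\tau(\mathfrak{X})$. Since $\tau(\mathfrak{X})$ is closed and now contains a dense subset of $\sigma_2(\mathfrak{X})$, this yields $\sigma_2(\mathfrak{X})=\tau(\mathfrak{X})$, which is case (ii).

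It then remains to settle the non-deficient case and to pin down $\dim\tau$. If $\dim\sigma_2(\mathfrak{X})=2d+1$, the $(2d+1)$-dimensional $Z$ maps generically finitely onto $\sigma_2$, the secant fibres are finite, and the lemma forces no tangency; here $\tau(\mathfrak{X})\subsetneq\sigma_2(\mathfrak{X})$ because $\dim\tau\le 2d<2d+1$. To obtain the exact value $\dim\tau=2d$ I would argue that $\dim\tau<2d$ can only occur when the Gauss map of $\mathfrak{X}$ is degenerate, which in characteristic zero forces $\mathfrak{X}$ to be a linear subspace; but a linear space has $\sigma_2(\mathfrak{X})=\mathfrak{X}$, contradicting $\dim\sigma_2=2d+1$. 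Throughout, Terracini's lemma—identifying $T_z\sigma_2(\mathfrak{X})$ with the span $\langle T_x\mathfrak{X},T_y\mathfrak{X}\rangle$ at a general point $z\in\mathbbm{P}^1_{xy}$—is the convenient device translating between the secant defect and the mutual position of the embedded tangent spaces.

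The main obstacle, and the step deserving the most care, is the application of the connectedness theorem: one must verify that $\pi_z|_{\mathfrak{X}}$ is generically finite (so the dimension hypothesis $\dim f(\mathfrak{X}\times\mathfrak{X})>2d-1$ genuinely holds for general $z$, using that $\mathfrak{X}$ is not a cone with an apex forced by $z$), and one must correctly identify $f^{-1}(\Delta)$ with $\Delta_{\mathfrak{X}}\cup D_z$ so that the connectedness conclusion produces a limiting secant through a coincident pair—an actual tangent line—rather than a spurious extra component. Once this identification and the generic-projection reduction are in place, the dichotomy follows cleanly from whether the secant fibre dimension $2d+1-\dim\sigma_2(\mathfrak{X})$ is zero or positive.
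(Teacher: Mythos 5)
The paper offers no proof of this statement at all — it is quoted verbatim from Fulton--Hansen — so your attempt has to stand on its own. The first half of it does: the dimension counts on the incidence variety, the reduction to $\mathbbm{P}^{2d}$ in the deficient case (legitimate exactly because a general centre of the relevant dimension then misses $\sigma_2(\mathfrak{X})$ and hence induces a finite morphism preserving $d$, $\dim\sigma_2$ and $\dim\tau$), and the connectedness argument for $f=\pi_z\times\pi_z$ with $f^{-1}(\Delta)=\Delta_{\mathfrak{X}}\cup\overline{D_z}$, whose connectedness forces $\overline{D_z}\cap\Delta_{\mathfrak{X}}\neq\emptyset$ and hence a limiting secant through $z$ with coincident base points — precisely a tangent-star line in the sense of the paper's definition of $\tau$. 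This correctly yields $\dim\sigma_2(\mathfrak{X})\le 2d\Rightarrow\sigma_2(\mathfrak{X})=\tau(\mathfrak{X})$, i.e.\ case (ii), and is the classical route.

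The gap is in your treatment of case (i), where you must still rule out $\dim\tau(\mathfrak{X})<2d$ when $\dim\sigma_2(\mathfrak{X})=2d+1$. Your argument — that $\dim\tau(\mathfrak{X})<2d$ forces a degenerate Gauss map, which in characteristic zero forces $\mathfrak{X}$ to be linear — fails twice. First, degenerate Gauss maps do not imply linearity: cones over curves and tangent developables of curves have positive-dimensional Gauss fibres and are not linear (only a \emph{constant} Gauss map forces linearity). Second, $\dim\tau(\mathfrak{X})<2d$ is not Gauss degeneracy; it says a general point of $\tau(\mathfrak{X})$ lies on the tangent spaces at a positive-dimensional family of points, which is a fibre condition on a different incidence projection. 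The missing ingredient is the second assertion of Fulton--Hansen's Corollary 4, namely $\dim\tau(\mathfrak{X})<2d\Rightarrow\tau(\mathfrak{X})=\sigma_2(\mathfrak{X})$, which needs its own application of the connectedness theorem — most conveniently via its corollary that a finite unramified morphism $X\to\mathbbm{P}^m$ with $2\dim X>m$ is a closed embedding. Concretely: if $t=\dim\tau(\mathfrak{X})<2d$, a general linear projection of $\mathfrak{X}$ to $\mathbbm{P}^{t}$ has centre disjoint from $\tau(\mathfrak{X})$ (a dimension count), hence is finite and unramified; since $2d>t$ it must be a closed embedding, in particular injective, so the general centre also misses $\sigma_2(\mathfrak{X})$, giving $\dim\sigma_2(\mathfrak{X})\le t<2d+1$; your own first half then forces $\tau=\sigma_2$, so case (i) with $\dim\tau<2d$ cannot occur. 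Replacing the Gauss-map paragraph with this argument closes the proof.
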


\begin{theorem}[Zak (see Ref. \cite{Zak})]\label{theo:Zak}
Let $\mathfrak{X}\subset\mathbbm{P}^{D}$ be an irreducible nondegenerate (i.e., not contained in a hyperplane) $d_1$-dimensional projective variety. For an arbitrary nonempty irreducible $d_2$-dimensional variety $\mathfrak{Y}\subset\mathfrak{X}$ one of the following two properties holds
\begin{enumerate}
\item[(i)] ${\rm{dim}}~\mathfrak{J}(\mathfrak{X},\mathfrak{Y})=d_1+d_2+1 > {\rm{dim}}~\mathfrak{T}({X},\mathfrak{Y})=d_1+d_2$,
\item[(ii)] $\mathfrak{J}(\mathfrak{X},\mathfrak{Y})=\mathfrak{T}(\mathfrak{X},\mathfrak{Y})$.
\end{enumerate}
\end{theorem}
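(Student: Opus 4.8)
The plan is to establish the dichotomy as the combination of a numerical statement---that the join can exceed the relative tangent variety in dimension by at most one---with the geometric fact that \emph{whenever} the join fails to attain its expected dimension it collapses onto the tangent variety. Two facts hold unconditionally and I would record them first. Since every relative tangent star $\mathfrak{T}^{\star}_{\mathfrak{X},\mathfrak{Y},y_{0}}$ is by definition a limit of secant lines $\mathbbm{P}^{1}_{xy}$, we have $\mathfrak{T}(\mathfrak{X},\mathfrak{Y})\subseteq\mathfrak{J}(\mathfrak{X},\mathfrak{Y})$, both sides being irreducible as closures of images of irreducible incidence varieties; and the naive parameter counts give $\dim\mathfrak{J}(\mathfrak{X},\mathfrak{Y})\leq d_{1}+d_{2}+1$ and $\dim\mathfrak{T}(\mathfrak{X},\mathfrak{Y})\leq d_{1}+d_{2}$. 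It therefore suffices to prove the single implication: if $\dim\mathfrak{J}(\mathfrak{X},\mathfrak{Y})<d_{1}+d_{2}+1$ then $\mathfrak{J}(\mathfrak{X},\mathfrak{Y})=\mathfrak{T}(\mathfrak{X},\mathfrak{Y})$, for in the opposite case $\dim\mathfrak{J}=d_{1}+d_{2}+1$ the argument below forces $\dim\mathfrak{T}=d_{1}+d_{2}$, which is alternative (i).

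First I would introduce the abstract join $\tilde{\mathfrak{J}}=\overline{\{(x,y,z)\mid x\in\mathfrak{X},\,y\in\mathfrak{Y},\,z\in\mathbbm{P}^{1}_{xy}\}}$, irreducible of dimension $d_{1}+d_{2}+1$, together with its projection $\pi\colon\tilde{\mathfrak{J}}\to\mathbbm{P}^{D}$ onto $\mathfrak{J}(\mathfrak{X},\mathfrak{Y})$. Thus $\dim\mathfrak{J}<d_{1}+d_{2}+1$ is equivalent to the general fibre of $\pi$ being positive dimensional, and over a general $z\in\mathfrak{J}$ this fibre is the entry locus $E_{z}=\{(x,y)\in\mathfrak{X}\times\mathfrak{Y}\mid z\in\mathbbm{P}^{1}_{xy}\}$. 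Next, Terracini's lemma identifies the embedded tangent space to the join at a general point $z\in\mathbbm{P}^{1}_{xy}$ with the span $T_{z}\mathfrak{J}(\mathfrak{X},\mathfrak{Y})=\langle T_{x}\mathfrak{X},T_{y}\mathfrak{Y}\rangle$; passing to affine cones, the defect $\delta:=d_{1}+d_{2}+1-\dim\mathfrak{J}$ equals $\dim(\hat{T}_{x}\mathfrak{X}\cap\hat{T}_{y}\mathfrak{Y})$. Hence the join is nondefective precisely when the general tangent cones of $\mathfrak{X}$ and $\mathfrak{Y}$ meet only at the origin, and running the same Terracini computation for the relative tangent star shows that in this nondefective case $\dim\mathfrak{T}(\mathfrak{X},\mathfrak{Y})=d_{1}+d_{2}$, settling alternative (i).

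The decisive case is $\delta>0$, where I must promote the numerical defect into the equality of varieties $\mathfrak{J}=\mathfrak{T}$. The engine here is the Fulton--Hansen connectedness theorem underlying \Cref{theo:FH97}, of which the present statement is the relative generalization (the special case $\mathfrak{Y}=\mathfrak{X}$ being exactly \Cref{theo:FH97}). After the standard reduction by general projection to the range where it applies, it guarantees that the entry locus $E_{z}$ is connected, and likewise the contact loci along which $\mathfrak{X}$ and $\mathfrak{Y}$ touch the fixed linear space $T_{z}\mathfrak{J}$. When $\delta>0$ this locus is moreover positive dimensional, because $T_{x}\mathfrak{X}$ and $T_{y}\mathfrak{Y}$ share a common $\delta$-dimensional linear subspace. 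Sliding $(x,y)$ along a connected positive-dimensional component of $E_{z}$ while holding $z$ fixed lets the two endpoints of the secant $\mathbbm{P}^{1}_{xy}$ approach a common point $y_{0}\in\mathfrak{Y}$; the limiting line then lies in the relative tangent star $\mathfrak{T}^{\star}_{\mathfrak{X},\mathfrak{Y},y_{0}}$ yet still passes through $z$. This exhibits the general point $z$ of $\mathfrak{J}(\mathfrak{X},\mathfrak{Y})$ as a point of $\mathfrak{T}(\mathfrak{X},\mathfrak{Y})$, giving $\mathfrak{J}\subseteq\mathfrak{T}$, and with the unconditional reverse inclusion we obtain $\mathfrak{J}(\mathfrak{X},\mathfrak{Y})=\mathfrak{T}(\mathfrak{X},\mathfrak{Y})$, which is alternative (ii).

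I expect the main obstacle to be exactly this limiting step in the defective case: making rigorous that a connected, positive-dimensional entry locus degenerates a generic secant into a tangent line through the same point. This needs a careful semicontinuity argument for the family of lines parametrized by $E_{z}$, together with the verification that the connectedness hypothesis of Fulton--Hansen is genuinely met---which in the high-codimension regime typical of $\mathfrak{J}$ requires first cutting down by general linear sections to reach the range $\dim f(\mathfrak{X}\times\mathfrak{Y})>D-1$ for the associated map $f$ into $\mathbbm{P}^{D-1}\times\mathbbm{P}^{D-1}$. A secondary but necessary care, since $\mathfrak{X}$ may be singular along $\mathfrak{Y}$, is to phrase the whole tangent-space analysis via tangent stars rather than Zariski tangent spaces, so that Terracini's lemma and the contact-locus argument remain valid at singular points.
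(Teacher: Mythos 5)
The paper offers no proof of this statement: it is reproduced as a classical theorem of Zak with a pointer to \cite{Zak}, so there is no in-paper argument against which to measure your proposal, and what follows assesses the sketch on its own terms. The strategy you outline --- Terracini's lemma to translate the join defect into a nontrivial intersection of affine tangent cones, together with the Fulton--Hansen connectedness theorem of \cite{FH79} (the genuine connectedness statement, not its corollary \Cref{theo:FH97}) applied after a general linear projection --- is indeed the route Zak himself takes, and your two unconditional reductions (the inclusion $\mathfrak{T}(\mathfrak{X},\mathfrak{Y})\subseteq\mathfrak{J}(\mathfrak{X},\mathfrak{Y})$ and the upper bounds $d_1+d_2$ and $d_1+d_2+1$) are correct in substance. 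One small caveat: the bound $\dim\mathfrak{T}(\mathfrak{X},\mathfrak{Y})\leq d_1+d_2$ is not a ``naive parameter count,'' since a tangent star at a point where $\mathfrak{X}$ is singular along $\mathfrak{Y}$ can have dimension exceeding $d_1$; the clean justification is that the locus over the diagonal is a proper closed subvariety of the irreducible $(d_1+d_2+1)$-dimensional abstract join.

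Two steps are genuinely gapped. First, alternative (i) requires the \emph{exact} equality $\dim\mathfrak{T}(\mathfrak{X},\mathfrak{Y})=d_1+d_2$, and ``running the same Terracini computation'' does not deliver the lower bound: one must show that the general fibre of the family of relative tangent stars over a general point of $\mathfrak{T}$ is finite, and excluding a positive-dimensional fibre again invokes the connectedness theorem (equivalently, one proves $\dim\mathfrak{J}\leq\dim\mathfrak{T}+1$ directly); it is not a formal consequence of nondefectivity of the join. Second, and more seriously, the limiting argument in the defective case draws its conclusion from the wrong object. Fulton--Hansen gives connectedness of the full preimage of the diagonal under $\pi_z\times\pi_z$, which is $E_z\cup\Delta_{\mathfrak{Y}}$, not of $E_z$ alone; a connected, positive-dimensional $E_z$ by itself need not approach the diagonal, so ``sliding $(x,y)$ along $E_z$ until the endpoints merge'' is not forced by anything you have established. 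The correct deduction is that, since $E_z$ and $\Delta_{\mathfrak{Y}}$ are both nonempty and $\overline{E_z}\cup\Delta_{\mathfrak{Y}}$ is closed and connected, $\overline{E_z}$ must meet $\Delta_{\mathfrak{Y}}$, and a point of that intersection is precisely a degenerating family of secants through $z$ with both endpoints tending to some $y_0$, i.e.\ $z\in\mathfrak{T}^{\star}_{\mathfrak{X},\mathfrak{Y},y_0}$. Note also that the hypothesis of the connectedness theorem, after projecting onto a $\mathbbm{P}^{m}$ with $m=\dim\mathfrak{J}$, reads $d_1+d_2>m-1$ and is therefore available exactly in the defective case, which is what makes the dichotomy close. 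With these substitutions your outline becomes Zak's proof, but as written the decisive step does not go through.
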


These two theorems provide us some information if we have the dimension of the $\sigma_2(\mathfrak{X})$ and the dimension of $\mathfrak{J}(\mathfrak{X},\mathfrak{Y})$, for the Fulton-Hansen and Zak theorems, respectively. Since the secant variety is a special case of join variety, we present the general form of the Terracini's lemma that tell us if varieties $\mathfrak{X}$ and $\mathfrak{Y}$ are of dimensions $d_1$ and $d_2$, respectively, then the expected dimension of join variety $\mathfrak{J}(\mathfrak{X},\mathfrak{Y})$ is $d_1+d_2+1$.

\begin{lemma}[Terracini's lemma]
Let $(x,y)\in(\mathfrak{X}\times\mathfrak{Y})_{\text{smooth}}$\footnote{If $\dim\mathfrak{T}_{p}\mathfrak{X}$ is locally constant near $p$, we say $p$ is a smooth point of $\mathfrak{X}$ and $\mathfrak{X}_{\text{smooth}}$ denotes the set of smooth points of $\mathfrak{X}$. So, $\mathfrak{X}_{\text{smooth}}$ is a complex manifold.} and $[z]=[x+y]\in\mathfrak{J}(\mathfrak{X},\mathfrak{Y})_{\text{smooth}}$. Then
\begin{equation}
\mathfrak{T}_{[z]}\mathfrak{J}(\mathfrak{X},\mathfrak{Y})=\mathfrak{T}_{[x]}\mathfrak{X}+\mathfrak{T}_{[y]}\mathfrak{Y}\,.
\end{equation}
\end{lemma}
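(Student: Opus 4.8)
The plan is to linearize the join construction by passing to affine cones, where forming secant lines becomes plain vector addition, and then to read off the projective tangent space of the join from the image of a single differential. First I would pass to affine cones: write $\hat{\mathfrak{X}},\hat{\mathfrak{Y}}\subset\mathbbm{C}^{D+1}$ for the affine cones over $\mathfrak{X},\mathfrak{Y}\subset\mathbbm{P}^{D}$, and recall that the embedded projective tangent space $\mathfrak{T}_{[x]}\mathfrak{X}$ is precisely the projectivization of the affine tangent space $T_{x}\hat{\mathfrak{X}}$ taken at any representative $x$. Because the cones are stable under rescaling, every point of a secant line $\mathbbm{P}^{1}_{xy}$ lifts to a vector of the form $x+y$ with $x\in\hat{\mathfrak{X}}$ and $y\in\hat{\mathfrak{Y}}$; hence the affine cone $\hat{\mathfrak{J}}$ over $\mathfrak{J}(\mathfrak{X},\mathfrak{Y})$ is the Zariski closure of the image of the addition morphism
\[
\Phi\colon\hat{\mathfrak{X}}\times\hat{\mathfrak{Y}}\to\mathbbm{C}^{D+1}\,,\qquad(x,y)\mapsto x+y\,.
\]

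Next I would compute the differential of $\Phi$. Since $\Phi$ is the restriction of a linear map, at a smooth pair $(x,y)$ one has $d\Phi_{(x,y)}(\dot{x},\dot{y})=\dot{x}+\dot{y}$ with $\dot{x}\in T_{x}\hat{\mathfrak{X}}$ and $\dot{y}\in T_{y}\hat{\mathfrak{Y}}$, so that $\mathrm{im}\,d\Phi_{(x,y)}=T_{x}\hat{\mathfrak{X}}+T_{y}\hat{\mathfrak{Y}}$. Factoring $\Phi$ through its image $\hat{\mathfrak{J}}$ shows $\mathrm{im}\,d\Phi_{(x,y)}\subseteq T_{z}\hat{\mathfrak{J}}$, which upon projectivizing already gives the automatic inclusion
\[
\langle\mathfrak{T}_{[x]}\mathfrak{X},\mathfrak{T}_{[y]}\mathfrak{Y}\rangle\subseteq\mathfrak{T}_{[z]}\mathfrak{J}(\mathfrak{X},\mathfrak{Y})\,.
\]
To upgrade this to equality I would invoke generic smoothness: since $\hat{\mathfrak{X}}\times\hat{\mathfrak{Y}}$ is irreducible and $\Phi$ dominates $\hat{\mathfrak{J}}$, over the characteristic-zero field $\mathbbm{C}$ the differential $d\Phi$ is surjective onto $T_{z}\hat{\mathfrak{J}}$ on a dense open set of pairs. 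Interpreting the smoothness hypotheses on $(x,y)$ and $[z]$ as the requirement that $(x,y)$ be general within this locus, the two linear spaces then have equal dimension, forcing $T_{z}\hat{\mathfrak{J}}=T_{x}\hat{\mathfrak{X}}+T_{y}\hat{\mathfrak{Y}}$. Projectivizing yields the claim, because the sum of the affine tangent spaces (linear subspaces through the origin) projectivizes to the linear span of $\mathfrak{T}_{[x]}\mathfrak{X}$ and $\mathfrak{T}_{[y]}\mathfrak{Y}$.

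The main obstacle is exactly this last equality rather than the inclusion: the containment $\mathrm{im}\,d\Phi\subseteq T_{z}\hat{\mathfrak{J}}$ is formal and holds at any smooth point, but the surjectivity of the differential---equivalently, that the addition map is a submersion onto the cone over the join at the relevant point---is precisely where the genericity of $(x,y)$ and the characteristic-zero hypothesis are indispensable, and the statement can fail at special smooth points or in positive characteristic. A secondary point to verify is the compatibility of the hypotheses, namely that a general smooth pair $(x,y)$ maps into the locus of $\mathfrak{J}(\mathfrak{X},\mathfrak{Y})_{\text{smooth}}$ on which generic smoothness applies; this follows from the dominance of $\Phi$ together with the irreducibility of the product $\hat{\mathfrak{X}}\times\hat{\mathfrak{Y}}$.
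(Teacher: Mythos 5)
The paper states Terracini's lemma without proof, treating it as a standard result quoted from the algebraic geometry literature (Refs.\ such as Landsberg and Harris), so there is no in-paper argument to compare against. Your proof is the standard textbook derivation --- pass to affine cones, observe that the cone over the join is the closure of the image of the addition map $\Phi(x,y)=x+y$, compute $\mathrm{im}\,d\Phi_{(x,y)}=T_x\hat{\mathfrak{X}}+T_y\hat{\mathfrak{Y}}$, and invoke generic smoothness in characteristic zero to upgrade the formal inclusion to an equality --- and it is essentially correct. The caveat you flag is real and worth keeping: the equality genuinely requires $(x,y)$ to be a \emph{general} point (the statement can fail at special smooth points and in positive characteristic), so the hypothesis as quoted in the lemma is slightly too weak; this is an imprecision in the statement rather than a gap in your argument.
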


Moreover, since the algebraic closure of the $\ell$-multirank is known to be the subspace variety \cite{Landsberg}, we have the following corollary.

\begin{corollary}\label{coro:MR-secant}
$\ell$-multiranks of a given tensor in the $k$-secant are at most $k$.
\end{corollary}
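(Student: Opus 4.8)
The plan is to reduce the statement to two facts: that a single point of the Segre variety has all its flattenings of rank one, and that the condition ``every flattening has rank at most $k$'' is Zariski-closed. First I would recall that a point of $\Sigma^{n}_{\mathbf{d}-\mathbf{1}}$ is a simple tensor $v_1\otimes\cdots\otimes v_n$, for which every flattening $\mathcal{M}_I[v_1\otimes\cdots\otimes v_n]$ is the outer product of the vector $\otimes_{i\in I}v_i$ with the vector $\otimes_{j\in\bar I}v_j$, hence a rank-one matrix (this is precisely the fully-separable case noted after Eq.~\eqref{segre-SM}). A point lying in the open dense subset of $\sigma_k(\Sigma)$ consisting of honest $k$-secant points can be written as $\mathcal{T}=\sum_{p=1}^{k}v_1^{(p)}\otimes\cdots\otimes v_n^{(p)}$, so each flattening decomposes as $\mathcal{M}_I[\mathcal{T}]=\sum_{p=1}^{k}\mathcal{M}_I[v_1^{(p)}\otimes\cdots\otimes v_n^{(p)}]$, a sum of $k$ rank-one matrices. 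Subadditivity of matrix rank then gives $\rk\mathcal{M}_I[\mathcal{T}]\leq k$ for every $\ell$-partition $I$, and therefore every $\ell$-multirank of $\mathcal{T}$ is at most $k$.

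Next I would promote this bound from the dense subset to the whole secant variety using its definition as a Zariski closure. The key observation, as recalled in a remark above, is that $\rk\mathcal{M}_I[\psi]\leq k$ holds iff all $(k+1)\times(k+1)$ minors of $\mathcal{M}_I[\psi]$ vanish, and these minors are homogeneous polynomials in the homogeneous coordinates of the ambient $\mathbbm{P}^{D}$. Hence the locus $Z_k=\{[\psi]\mid\rk\mathcal{M}_I[\psi]\leq k~\forall I\}$ is a finite intersection of determinantal varieties, in particular Zariski-closed; this is exactly the subspace variety whose role as the algebraic closure of the multirank condition was invoked just above the corollary. Since the honest $k$-secant points all lie in $Z_k$ and are dense in $\sigma_k(\Sigma)$, taking closures yields $\sigma_k(\Sigma)\subseteq\overline{Z_k}=Z_k$. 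Thus every tensor in $\sigma_k(\Sigma)$ satisfies $\rk\mathcal{M}_I[\psi]\leq k$ for all $I$, which is the claim.

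The only genuine subtlety --- the step I would present most carefully --- is the passage through the closure, since $\sigma_k(\Sigma)$ contains border-rank-$k$ tensors whose ordinary tensor rank may exceed $k$. One must therefore argue not with ranks of limits of matrices but with the closedness of the rank-$\leq k$ determinantal condition: it is the flattening (matrix) rank that is lower semicontinuous and hence bounded on the closure, whereas the tensor rank can jump in the limit. Once the argument is phrased as ``$Z_k$ is closed and contains a dense subset of $\sigma_k(\Sigma)$,'' the conclusion is immediate, and no case analysis on defectivity or on the exceptional dimensions is required.
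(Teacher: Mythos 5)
Your proof is correct and follows essentially the same route the paper takes: the paper derives the corollary from the fact that the closure of the $\ell$-multirank locus is the subspace variety (a determinantal, hence Zariski-closed, condition), which is exactly your combination of rank subadditivity on honest $k$-secant points with the closedness of the $(k+1)\times(k+1)$-minor conditions. You simply spell out in full the argument the paper compresses into a one-line citation, and your care about distinguishing the semicontinuous flattening rank from the possibly jumping tensor rank is exactly the right point to emphasize.
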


\begin{theorem}\label{k-secant-SLOCC}
The $k$-secant variety of Segre variety is invariant under the action of projective linear group and therefore is an SLOCC invariant.
\end{theorem}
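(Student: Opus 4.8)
The plan is to exploit the recursive definition \eqref{recursivesecant} of the secant varieties together with the equivariance of the Segre embedding, and to induct on $k$. The central observation is that every SLOCC operation $A_1\otimes\cdots\otimes A_n$ with $A_i\in{\rm{SL}}(d_i,\mathbbm{C})$ induces on the ambient space $\mathbbm{P}^D$ an invertible projective linear transformation $g\in{\rm{PGL}}(D+1,\mathbbm{C})$; being a linear isomorphism, $g$ sends projective lines to projective lines, and, being a biregular automorphism of varieties, it is a homeomorphism for the Zariski topology and therefore commutes with taking closures. These two facts are precisely what the join construction \eqref{joinvariety} consumes, so invariance will propagate from $\Sigma$ to every $\sigma_k(\Sigma)$.

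First I would establish the base case $\sigma_1(\Sigma)=\Sigma$. Here one uses the equivariance of the Segre map: for $g$ induced by $(A_1,\ldots,A_n)$ one has $g\cdot\Sigma([a_1],\ldots,[a_n])=\Sigma([A_1a_1],\ldots,[A_na_n])$, because $A_1\otimes\cdots\otimes A_n$ carries the simple tensor $a_1\otimes\cdots\otimes a_n$ to the simple tensor $(A_1a_1)\otimes\cdots\otimes(A_na_n)$. Hence $g(\Sigma)\subseteq\Sigma$, and applying the same argument to $g^{-1}$ gives $g(\Sigma)=\Sigma$. Note that scalars act trivially on $\mathbbm{P}^D$, so it is irrelevant whether the $A_i$ are taken in ${\rm{SL}}(d_i,\mathbbm{C})$ or ${\rm{GL}}(d_i,\mathbbm{C})$, or whether one passes to ${\rm{PGL}}$; this is why the statement may be phrased for the projective linear group.

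For the inductive step, assume $g(\sigma_{k-1}(\Sigma))=\sigma_{k-1}(\Sigma)$. Since $g$ maps the line $\mathbbm{P}^1_{xy}$ onto $\mathbbm{P}^1_{g(x)g(y)}$ and preserves the relation $x\neq y$, it carries the union $\bigcup_{x\in\sigma_{k-1}(\Sigma),\,y\in\Sigma,\,x\neq y}\mathbbm{P}^1_{xy}$ onto itself; because $g$ is a Zariski homeomorphism it also carries the closure of this union onto the closure of the union, so by \eqref{recursivesecant} we obtain $g(\sigma_k(\Sigma))=\mathfrak{J}(\sigma_{k-1}(\Sigma),\Sigma)=\sigma_k(\Sigma)$. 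Induction then yields invariance of $\sigma_k(\Sigma)$ for all $k$.

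Finally, to deduce the SLOCC statement, I would note that if $|\psi\rangle\extoverset[6pt]{SLOCC}{\sim}|\varphi\rangle$ then their classes in $\mathbbm{P}^D$ differ by such a $g$, whence $[\psi]\in\sigma_k(\Sigma)\Leftrightarrow[\varphi]\in\sigma_k(\Sigma)$; thus membership in a given $k$-secant is an SLOCC invariant, which is exactly what makes it usable in the classification. The only point that genuinely requires care is the compatibility of the group action with the Zariski closure in \eqref{joinvariety}: this is where one must invoke that $g$ is a biregular automorphism, so that $\overline{g(S)}=g(\overline{S})$, rather than merely a set-theoretic bijection. I expect this to be the main—though entirely standard—obstacle, the rest of the argument being bookkeeping on the recursive join.
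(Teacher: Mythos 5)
Your argument is correct and is essentially a careful unpacking of the paper's own one-line proof, which simply observes that any auxiliary variety built from the points of a $G$-invariant variety is itself $G$-invariant. The details you supply — equivariance of the Segre embedding, lines mapping to lines, and the compatibility of a biregular automorphism with Zariski closure in the recursive join — are exactly the ingredients implicit in that remark, so the two proofs coincide in substance.
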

{\it Proof.}
If the points of variety $\mathfrak{X}$ remains invariant under the action of a group $G$, then so is any of its auxiliary variety which is built from points of $\mathfrak{X}$.
\qed

That is why the Schmidt rank, which indeed is tensor rank, is a SLOCC invariant. On the other hand, since tangent lines can be seen as the limits of the secant lines, there exist asymptotic SLOCC equivalence between two different SLOCC classes and, hence, we can find exceptional states as defined in Ref. \cite{ST13}.

\subsection{Equations of higher secant varieties}

Regarding Eq. \eqref{segre-SM}, in order to distinguish the elements of higher secant varieties with the same $\ell$-multiranks, one can think about $\mathfrak{m}$ copies of the projective Hilbert space and utilize $\mathfrak{m}^{\rm{th}}$ Veronese embedding, i.e.,
\begin{equation}\label{m-veronese}
\mathcal{V}_{D}^{\mathfrak{m}}\colon\mathbbm{P}(\mathcal{H}_{n})\to\mathbbm{P}({\rm{Sym}}^{\mathfrak{m}}[\mathcal{H}_{n}])\,,
\end{equation}
where ${\rm{Sym}}^{\mathfrak{m}}[\mathcal{H}_{n}]$ is the $\mathfrak{m}^{\rm{th}}$ symmetric power of Hilbert space $\mathcal{H}_{n}$ (${\rm{Sym}}^{\mathfrak{m}}[\mathcal{H}_{n}]\sim{\rm{Sym}}[\mathcal{H}_{n}^{\otimes\mathfrak{m}}]$). According to this embedding, one can use minors of catalecticant matrices \cite{LO13}, to find the elements of higher secants.

Although, in principle, the minors of catalecticant matrices from Eq. (\ref{m-veronese}) provide us the invariant homogeneous polynomials\footnote{These provide some necessary but typically not all the polynomials that are needed.}, one can devise a more effective method. One of these, similar to the spirit of Ref. \cite{OS16}, could be based on projective invariants via an interpolation of representation theory \cite{Ottaviani13}. As we know, the vanishing locus of the minors of the catalecticant matrices are determinantal varieties and are invariant under the action of group $G={\rm{SL}}(d_{1},\mathbbm{C})\times\cdots\times{\rm{SL}}(d_{n},\mathbbm{C})$. Here, we should similarly provide homogeneous polynomials of degree $\mathfrak{m}$ which are invariant under the action of group $G$.
Given complex vector spaces $V_1\equiv\mathbbm{C}^{d_1},\ldots, V_n\equiv\mathbbm{C}^{d_n}$, the group $G$ acts over the tensor space ${\mathcal{H}}_{n}=\otimes_{i=1}^{n}V_i$ and, hence, on the polynomial ring,
\begin{equation}\label{polyring}
S=\sum_{\mathfrak{m}\ge 0}\mathrm{Sym}^{\mathfrak{m}}\left[{\mathcal{H}}_{n}\right]\,,
\end{equation}
where ${\mathcal{H}}_{n}^{\otimes\mathfrak{m}}\cong\left(V_{1}^{\otimes\mathfrak{m}}\right)\otimes\cdots\otimes\left(V_{n}^{\otimes\mathfrak{m}}\right)$. Since $G$ is a reductive group, every summand of degree $\mathfrak{m}$ of $S$ in Eq. (\ref{polyring}) decomposes as the sum of irreducible representations of $G$, which have the form $\otimes_{i=1}^{n}\mathfrak{S}_{\lambda_i}V_i$ for certain Young diagrams $\lambda_1,\ldots, \lambda_{n}$, each representation occurring with a multiplicity $m_{\lambda_{1}\cdots\lambda_{n}}$. When each $\lambda_i$ has a rectangular shape, with exactly $\dim{V_i}=d_{i}$ rows, all of the same length, we get that $\dim{\otimes_{i=1}^{n}\mathfrak{S}_{\lambda_i}V_i}=1$ and a generator of this space is known to be an invariant of degree $\mathfrak{m}$ and, indeed, all invariants occur in this way. In addition, these one-dimensional subspaces fill altogether the invariant subring $S^G$ of $S$, consisting of all invariant polynomials. It is known that such an invariant ring is finitely generated and in principle its generators and relations can be computed \cite{GoodmanWallach}. Note that the ideal of any $G$-invariant subvariety of the projective space $\mathbbm{P}(\mathcal{H}_{n})$, like the secant varieties, is generated by the generators of a finite number of summands of the form $\otimes_{i=1}^{n}\mathfrak{S}_{\lambda_i}V_i$. These subspaces are generally known as covariants, so an invariant is a covariant of dimension one, generated by a single $G$-invariant polynomial. A special case is given by codimension one $G$-invariant subvarieties of the projective space $\mathbbm{P}(\mathcal{H}_{n})$. Their ideal is principal and it is generated by a single invariant polynomial. Since the equations of any $k$-secant variety can be found among the $G$-covariants, which are invariant sets of polynomials, we give an explicit definition of a covariant and basic tools for constructing a complete set of covariants.

The $n$-partite state $|\psi\rangle$ in Eq. (\ref{n-partite}) can be interpreted as an $n$-linear form:
\begin{equation}\label{n-linear}
f({\bf{x}}^{1},\ldots,{\bf{x}}^{n})=\sum_{\alpha=1}^{n}\sum_{i_{\alpha}=0}^{d_{\alpha}-1}\mathbf{c}_{i_{1}\cdots i_{n}}{x}^{1}_{i_{1}}\cdots{x}^{n}_{i_{n}}\,.
\end{equation}
A covariant of $f$ is a multi-homogeneous $G$-invariant polynomial in the coefficients $\mathbf{c}_{i_{1}\cdots i_{n}}$ and the variables ${\bf{x}}^{\alpha}=\{{x}^{\alpha}_{i_{\alpha}}\}_{\alpha=1}^{n}$.
To construct covariants, we move on from Gour and Wallach \cite{GW13} who write all possible $\rm{SL}$ invariant polynomials for the action of $G$ over ${\mathcal{H}_n}$, following Schur-Weyl duality. Let $P_{d,\mathfrak{m}}$ denote the orthogonal projection of $\otimes^{\mathfrak{m}}\mathbbm{C}^d$ onto 
 $(\otimes^{\mathfrak{m}}\mathbbm{C}^d)^{{\rm{SL}}(d,\mathbbm{C})}$. Then, $P(v)=(P_{d_1,\mathfrak{m}}\otimes\cdots\otimes P_{d_n,\mathfrak{m}}(v^T))^T$, where $T$ stands for the intertwining map defined in Ref. \cite{GW13}, is the orthogonal projection from $\otimes^{\mathfrak{m}}{\mathcal{H}_n}$ to $(\otimes^{\mathfrak{m}}{\mathcal{H}_n})^G$. To compute $P_{d,\mathfrak{m}}$, first observe that it is zero if $\mathfrak{m}/d\notin{\mathbbm{Z}}$, while if $\mathfrak{m}=dr$ denote by $\chi_{d,r}$ the character of ${\mathfrak{S}}_\mathfrak{m}$ corresponding to the partition $\mathfrak{m}=r+\cdots+r$, and we get up to scalar multiples
\begin{equation}\label{inv-p}
P_{d,\mathfrak{m}}=\frac{d_{d,r}}{\mathfrak{m}!}\sum_{\pi\in{\mathfrak{S}}_{\mathfrak{m}}}\chi_{d,r}(\pi)\pi\,,
\end{equation}
where $d_{d,r}$ is the dimension of the irreducible representation corresponding to the partition $\mathfrak{m}=r+\cdots+r$ that can be calculated by the hook-length formula. This construction can be generalized to write all covariants of the above action, an invariant being a covariant of dimension $1$ as mentioned before. Every covariant of degree $\mathfrak{m}$ corresponds to $\otimes_{i=1}^{n}\mathfrak{S}_{\lambda_i}V_i$ for certain partitions $\lambda_i$ of $\mathfrak{m}$. Denoted by $\chi_{\lambda_i}$ the character of $\mathfrak{S}_{\mathfrak{m}}$ corresponding to the partition $\lambda_i$, we get again that up to scalar multiples,
\begin{equation}\label{cov-p}
P_{\lambda_i}=\frac{d_{\lambda_i}}{\mathfrak{m}!}\sum_{\pi\in{\mathfrak{S}}_{\mathfrak{m}}}\chi_{\lambda_i}(\pi)\pi\,,
\end{equation}
is the orthogonal projection from $\otimes^{\mathfrak{m}}V_{i}$ to the isotypical summand containing $\mathfrak{S}_{\lambda_i}V_{i}$, so the orthogonal projection from $\otimes^{\mathfrak{m}}{\mathcal{H}_n}$ to $\otimes_{i=1}^{n}\mathfrak{S}_{\lambda_i}V_i$ is $P(v)=(P_{\lambda_1}\otimes\cdots\otimes P_{\lambda_n}(v^T))^T$. The drawback of this construction is the difficulty to check in advance which linear combinations of the $P_{\lambda_i}$'s appear in a covariant of degree $\mathfrak{m}$, that is when $\otimes_{i=1}^{n}\mathfrak{S}_{\lambda_i}V_i$ comes from the subspace ${\mathrm{Sym}}^{\mathfrak{m}}[{\mathcal{H}_n}]\subset\otimes^{\mathfrak{m}}{\mathcal{H}_n}$,
this problem is known as plethysm. For example, the partition $4=2+1+1$ gives the projection in Eq. (\ref{cov-p}),
\begin{align}\nonumber
&v_1\otimes v_2\,\otimes \,v_3\otimes v_4 \mapsto \frac{1}{8}\Big(3\, v_1\otimes v_2\otimes v_3\otimes v_4-{\hspace{-2mm}\sum_{\pi\in(12)}}\hspace{-1mm}v_{\pi(1)}\otimes v_{\pi(2)}\otimes v_{\pi(3)}\otimes v_{\pi(4)}
\\ \nonumber
&\quad+{\hspace{-2mm}\sum_{\pi\in(1234)}}\!\!v_{\pi(1)}\otimes v_{\pi(2)}\otimes v_{\pi(3)}\otimes v_{\pi(4)}-{\hspace{-2mm}\sum_{\pi\in(12)(34)}}\!\!v_{\pi(1)}\otimes v_{\pi(2)}\otimes v_{\pi(3)}\otimes v_{\pi(4)}\Big),
\end{align}
where $(12)$ is the conjugacy class containing the six simple swaps and so on for the other conjugacy classes.

For the ``symmetric'' systems, there is also another well-known process in mathematics literature to construct the complete set of covariants. To interpolate physics and mathematics literatures, for a symmetric multiqubit system, the set of covariants is actually the set of joint covariants of binary forms and similarly for a symmetric multiqudit system, the set of covariants is the set of joint covariants of $d$-ary forms. A general method for constructing a complete set of covariants is known as transvectants, which are based on Cayley's omega process and are basic tools for this aim \cite{Olver}. Here, we give the procedure of creating transvectants for symmetric multiqudit systems [$d_{\alpha}=d$ for all $\alpha$ in Eq. (\ref{n-linear})]. Let functions $f_{1},\ldots,f_{d}$ be forms in variable ${\bf{x}}=(x_1,\ldots,x_d)$, and tensor product notation $f_1\otimes\cdots\otimes f_d$ denotes the polynomial $f_{1}({\bf{y}}_{1})\cdots f_{d}({\bf{y}}_{d})$ in $d$  independent variables (note that ${\bf{y}}_{\gamma}=(y_{\gamma,1},\ldots,y_{\gamma,d})$, $\gamma=1,\ldots,d$). The $d$-dimensional Cayley omega process is the $d^{\rm{th}}$-order partial differential operator:
\begin{equation}\label{cayley-omega}
\Omega_{\bf{x}}=\left|\begin{array}{ccc} \frac{\partial}{\partial{y_{1,1}}} & \cdots & \frac{\partial}{\partial{y_{d,1}}} \\ \vdots & \ddots & \vdots \\ \frac{\partial}{\partial{y_{1,d}}} & \cdots & \frac{\partial}{\partial{y_{d,d}}} \end{array}\right|\,.
\end{equation}
The $r^{\rm{th}}$ transvectant of functions $f_{1},\ldots,f_{d}$ is
\begin{equation}\label{transvectant}
\left(f_{1},\ldots,f_{d}\right)^{(r)}={\rm{tr}}~\Omega^{r}_{\bf{x}}(f_{1}\otimes\cdots\otimes f_{d})\,,
\end{equation}
where ${\rm{tr}}$ sets all variables equal, i.e., ${\bf{y}}_{1}=\cdots={\bf{y}}_{d}={\bf{x}}$. For instance, the first and second transvectants are known as the Jacobian determinant and polarized form of Hessian. Now, if functions $f_{1},\ldots,f_{d}$ are $n$-tuple forms in $n$ independent $d$-ary variables ${\bf{x}}^{1},\ldots,{\bf{x}}^{n}$, one can define a multiple transvectant for any $\vec{\jmath}=(j_{1},\ldots,j_{n})\in\mathbbm{N}^{n}$ as follows:
\begin{equation}\label{n-transvectant}
\left(f_{1},\ldots,f_{d}\right)^{(\vec{\jmath})}={\rm{tr}}\prod_{i=1}^{n}\Omega^{j_{i}}_{{\bf{x}}^{i}}(f_{1}\otimes\cdots\otimes f_{d})\,.
\end{equation}
By building iterative tansvectants in the multigraded setting and starting with the covariant of degree 1, i.e., Eq. (\ref{n-linear}), one can provide a complete system of covariants for multiqudit systems. For instance, in Ref. \cite{BLT03} the complete set of covariants has been found for four-qubit systems with this method.

%%%%%%%%%%%%%%%%%%%%%%%%%%%%%%%%%%%%%%%%%%%%%%%%%%%%%%%%%%%%%%%%%%%%%%%%%%%%%%%%%%%%%%%%%%%%%%%%%%%%%%%%%%
\chapter{Fine-Structure Classification of Multiqubit Entanglement}\label{chap4}

\epigraph{``I like crossing the imaginary boundaries people set up between different fields.''}{\textit{Maryam Mirzakhani}}
%\epigraph{``The beauty of mathematics only shows itself to more patient followers.''}{\textit{Maryam Mirzakhani}}

In this chapter, which is based on the Ref. \cite{GMO20}, we present a fine-structure entanglement classification under stochastic local operation and classical communication (SLOCC) for multiqubit pure states. To this end, we employ specific algebraic-geometry tools that are SLOCC invariants, namely $k$-secant varieties, to show that for $n$-qubit systems there are $\lceil\frac{2^{n}}{n+1}\rceil$ entanglement families. By using another invariant, $\ell$-multilinear ranks, each family can be further split into a finite number of subfamilies. Not only does this method facilitate the classification of multipartite entanglement, but it also turns out to be operationally meaningful as it quantifies entanglement as a resource.

\section{Classification algorithm}\label{sec.4.1}
Let us consider an $n$-qubit state:
\begin{equation}\label{qubitstate}
|\psi\rangle = \sum_{i\in\{0,1\}^{n}}\mathfrak{c}_{i}|i\rangle\,.
\end{equation}
The space of states $|\psi\rangle$ that are fully separable has the structure of a Segre variety \cite{Miyake03, Heydari08} which is embedded in the ambient space as follows:
\begin{equation}\label{segre}
\Sigma^{n}_{\textbf{1}}:~\mathbbm{P}^{1}\times\mathbbm{P}^{1}\times\cdots\times\mathbbm{P}^{1}\hookrightarrow\mathbbm{P}^{2^{n}-1}\,,
\end{equation}
where $\textbf{1}=(1,\ldots,1)$ and $\times$ is the Cartesian product of sets.
A $k$-secant of the Segre variety joins its $k$ points, each of which represents a possibly distinct separable state. Thus, the joining of points corresponds to an entangled state being a superposition of separable states. The closure of the union of $k$-secants of the Segre variety $\Sigma^{n}_{\textbf{1}}$ gives rise to the $k$-secant variety $\sigma_k(\Sigma^{n}_{\textbf{1}})$. This is as much as the set of entangled states arising from the superposition of $k$ separable states.
Since $k$-secant varieties are SLOCC invariants (see \cref{k-secant-SLOCC}), SLOCC classes congregate naturally into entanglement families. Therefore, the dimension of the highest $k$-secant, which fills the projective Hilbert space of $n$ qubits, can indicate the number of entanglement families in a coarse classification by border rank.
The $k$-secant varieties in $\mathbbm{P}(\otimes^n\mathbbm{C}^2)$, have the expected dimension
\begin{equation}
{{\rm{dim}}}~\sigma_{k}(\Sigma^{n}_{\textbf{1}})= \min\{k(n+1)-1, 2^{n}-1\}\,,
\end{equation}
for every $k$ and $n$, except $\sigma_{3}(\Sigma^{4}_{\textbf{1}})$ which has dimension $13$ \cite{CGG11}. Consequently, the $k$-secant fills the ambient space, when
\begin{equation}
k=\left\lceil\frac{2^{n}}{n+1}\right\rceil\,.
\end{equation}
This $k$ indicates the number of entanglement families and remains finite (although growing exponentially) with the number of qubits.

The proper $k$-secant (the states that belongs to $k$-secant but not to ($k-1$)-secant), i.e., the set $\sigma_{k}(\Sigma^{n}_{\textbf{1}})\setminus\sigma_{k-1}(\Sigma^{n}_{\textbf{1}})$, is the union of the $k$-secant hyperplanes $\mathcal{S}_{k}\subset\sigma_{k}(\Sigma^{n}_{\textbf{1}})$ represented by
\vspace{-3mm}
\begin{equation}\label{G-secant}
\mathcal{S}_{k}=\sum_{i=1}^{k}\lambda_{i}p_{i}\,,
\end{equation}
with $\{\lambda_{i}\}_{i=1}^{k}\neq{0}$ and each $p_{i}$ is an independent point in $\Sigma^{n}_{\textbf{1}}$.

It is worth saying that each secant, with regards to its dimension, could have tangents as its closure (see \cref{theo:Zak}) which discriminate subfamilies with the same $\ell$-multiranks and provide us exceptional states \cite{ST13}. Let us now consider the limits of secants to obtain the tangents. Let $(i_{1},i_{2},\ldots,i_{k})$ be a rearrangement of points indices in Eq. (\ref{G-secant}). The first limit type is when one point tends to another one, i.e., $p_{i_{2}}\to{p_{i_{1}}}$, and let us call the result $p'_{i_{1}}$. The second limit type can be considered as the closure of the first limit type so the third point is  approaching $p_{i_{1}}+\eta p'_{i_{1}}$. The third limit type can be considered as the closure of the second limit type so two points tend to $p_{i_{1}}$ and $p_{i_{2}}$ (if the join of $p_{i_{1}}$ and $p_{i_{2}}$ is still in $\Sigma^{n}_{\textbf{1}}$) \cite{BL14}. As we can always redefine Eq. (\ref{G-secant}) to have the desired form and new coefficients rather than $\lambda_{j}$, we can formulate these limits as
\vspace{-2mm}
\begin{align}\label{G-tangent-g1}
T^{(1)}_{k}=&\lim_{\epsilon\to{0}}\frac{\lambda_{i_{2}}}{\epsilon}\big(p_{i_{2}}(\epsilon)-p_{i_{1}}\big)+\sum_{j=i_{3}}^{i_{k}}\lambda_{j}p_{j}\,,
\\ \label{G-tangent-g2}
T^{(2)}_{k}=&\mu_{1}p'_{i_{1}}+\lim_{\eta\to{0}}\frac{\mu_{2}}{\eta^{2}}\big(p_{i_{3}}(\eta)-(p_{i_{1}}+\eta\, p'_{i_{1}})\big)+\sum_{j=i_{4}}^{i_{k}}\lambda_{j}p_{j}\,,
\\ \label{G-tangent-g3}
T^{(3)}_{k}=&\lim_{\epsilon\to{0}}\frac{\nu_{1}}{\epsilon}\big(p_{i_{3}}(\epsilon)-p_{i_{1}}\big)+\lim_{\epsilon\to{0}}\frac{\nu_{2}}{\epsilon}\big(p_{i_{4}}(\epsilon)-p_{i_{2}}\big)+\sum_{j=i_{4}}^{i_{k}}\lambda_{j}p_{j}\,.
\end{align}
These processes can be generalized if we consider all extra limit types which may occur by adding the next points \cite{LT10}. This will provide us with higher tangential varieties, although these are not all of the types of limits that one can consider.

On the other hand, using ordered $\ell$-tuples $I=(i_{1},i_{2},\ldots,i_{\ell})$, where $1\leq\ell\leq\lfloor\frac{n}{2}\rfloor$ (see Section \ref{sec.flattening}), and $(n-\ell)$-tuples $\bar{I}$ such that $I\cup\bar{I}=(1,2,\ldots,n)$, we can flatten the $n$-fold tensor product Hilbert space $\mathcal{H}_{n}=\otimes_{i=1}^{n}\mathbbm{C}^{2}$ to two-fold tensor product Hilbert spaces with higher dimensions, i.e.,
\begin{equation}
\mathcal{H}_{n}\simeq\mathcal{H}_{I}\otimes\mathcal{H}_{\bar{I}}\,,
\qquad
\mathcal{H}_{I}=\mathbb{C}^{2^{\ell}}\,,
\quad
\mathcal{H}_{\bar{I}}=\mathbb{C}^{2^{n-\ell}}\,.
\end{equation}
Using Dirac notation, the flattening of $|\psi\rangle$ reads
\begin{equation}
\mathcal{M}_{I}[\psi]=\left(\langle e_0|\psi\rangle,\langle e_2|\psi\rangle, \ldots,\langle e_{2^{\ell}-1}|\psi\rangle\right)^{\rm{T}}\,,
\end{equation}
where $|e_j\rangle\equiv|j\rangle$, with $j\in\{0,1\}^\ell$, are the canonical basis of $\mathcal{H}_{I}$. For the $n$-qubit systems, the order of such matrices can be from $2 \times 2^{n-1}$ to $2^{\lfloor\frac{n}{2}\rfloor} \times 2^{\lceil\frac{n}{2}\rceil}$ and the number of these matrices ranges from ${\binom{n}{1}}$ to $(1/2)^{n+1~{\rm{mod}}~2} {\binom{n}{\lfloor\frac{n}{2}\rfloor}}$. The rank of these matrices give us $\ell$-multiranks. Since $\ell$-multiranks are also SLOCC invariants (see \cref{l-multirank-SLOCC}), the SLOCC classes in each family can be grouped into subfamilies.

Therefore, we use $k$-secant varieties and $\ell$-multiranks as the SLOCC invariants to group orbits (classes) into finite number of families and subfamilies. In addition, one can split $k$-secant families, according to \cref{theo:Zak}, by identifying their closure as $k$-tangent. Hence, the classification algorithm can be summarized as follows:
\begin{enumerate}
\item[\bf(i)] find families by identifying $\Sigma^{n}_{\textbf{1}}$, $\sigma_{2}(\Sigma^{n}_{\textbf{1}}), \ldots, \sigma_{k}(\Sigma^{n}_{\textbf{1}})$, 

\item[\bf(ii)] split families to secants and tangents by identifying $\tau_{2}(\Sigma^{n}_{\textbf{1}}), \ldots, \tau_{k}(\Sigma^{n}_{\textbf{1}})$,

\item[\bf(iii)] find subfamilies by identifying $\ell$-multiranks.
\end{enumerate}

\section{Case study}\label{sec.4.2}

\subsection{Two-qubit entanglement}\label{subsec.4.2.1}
The classification of two-qubit states is fairly trivial, nonetheless it can be instructive for working out the developed concepts. For the Segre surface $\Sigma^{2}_{\textbf{1}}$, we shall use homogeneous coordinates associated with the induced basis $\left\{|00\rangle,|01\rangle,|10\rangle,|11\rangle\right\}$. That is to say, a point $p\in\mathbbm{P}^3$ is written in homogenous coordinates $\left[\mathfrak{c}_0:\mathfrak{c}_1:\mathfrak{c}_2:\mathfrak{c}_3\right]$ whenever $p$ is the projective class of the equivalent two-qubit state of Eq. (\ref{qubitstate}). Then, the Segre surface $\Sigma^{2}_{\textbf{1}}$ is the projective variety with points in an open set given by affine coordinates $[1:a:b:ab]$, where $a$ and $b$ are complex parameters. This expression must be properly understood, in that the limits of $a$ and/or $b$ going to  infinity, must be included.

Let us now move on to the proper two-secant variety, i.e., the set $\sigma_2(\Sigma^2_{\mathbf{1}})\setminus\Sigma^2_{\mathbf{1}}$, which is the union of the secant planes $\mathcal{S}_2$ represented by Eq. \eqref{G-secant}. Hence, the proper two-secant variety is given by $\sigma_{2}=[\lambda_{1}+\lambda_{2}:\lambda a_{1}+\lambda_{2} a_{2}:\lambda_{1} b_{1}+\lambda_{2} b_2:\lambda_{1} a_{1}b_{1}+\lambda_{2} a_{2}b_{2}]$. It is easy to see that
\begin{align}\nonumber
&|\Phi^{\pm}\rangle=[1:0:0:\pm 1]\in\sigma_{2}(\Sigma^{2}_{\textbf{1}})\,, \\
&|\Psi^{\pm}\rangle=[0:1:\pm 1:0]\in\sigma_{2}(\Sigma^{2}_{\textbf{1}})
\end{align}
that are well-known Bell states.

To create the closure of the two-secant, let consider $p_{2}(\epsilon)=[1:a_{1}+\epsilon:b_{1}+\epsilon:(a_{1}+\epsilon)(b_{1}+\epsilon)]$. Using Eq. (\ref{G-tangent-g1}) we have the special situation that all points on the tangent lines $T^{(1)}_{2}$ lie also on two-secants. Since
\begin{equation}
T^{(1)}_{2}=\lim_{\epsilon\to{0}}\frac{\lambda}{\epsilon}\big(p_2(\epsilon)-p_1\big)=[0:\lambda:\lambda:\lambda(a_1+b_1)],
\end{equation}
and
\begin{equation}
T^{(1)}_{2}=[1:a:b:ab]-[1:c:d:cd],
\end{equation}
with $a=a_1+\frac{\lambda}{2}$, $b=b_1+\frac{\lambda}{2}$, $c=a_1-\frac{\lambda}{2}$, and $d=b_1-\frac{\lambda}{2}$. It means that all elements of $\mathbbm{P}^{3}$ are elements of $\sigma_{2}(\Sigma^{2}_{\textbf{1}})$.

One can thus conclude that all entangled states of two qubits are linear combinations of two independent separable states, which is the same result obtainable by the Schmidt decomposition. Here, the two entanglement families coincide with the two SLOCC classes, namely, separable and entangled (see Table \ref{table:2qubit}).

\begin{table}[t]
\centering
\caption[Fine-structure classification of two-qubit entanglement]{\label{table:2qubit} Fine-structure classification of two-qubit entanglement.}
\begin{tabularx}{\linewidth}{XX}
\hline\hline
& \\ [-2ex]
$\Sigma^{3}_{\textbf{1}}$ & $\sigma_{2}$ \\ [0.5ex]
\hline
& \\ [-2ex]
$|{\rm{Sep}}\rangle$ & $|{\rm{Bell}}\rangle$ \\ [0.5ex]
\hline\hline
\end{tabularx}
\end{table}

Already from this example we can draw a general conclusion. That is, for $n\geq 2$ we have
\begin{equation}\label{general2}
\mathfrak{p}\{|{\rm{Bell}}\rangle|q_2
\rangle^{\otimes{(n-2)}}\}\in\sigma_{2}(\Sigma^{n}_{\textbf{1}})\,,
\end{equation}
where $\mathfrak{p}\{\cdot\}$ denotes all possible permutations and $|q_2\rangle$ is a general one-qubit state.

\subsection{Three-qubit entanglement}\label{subsec.4.2.2}
For three qubits the Segre three-fold $\Sigma^{3}_{\textbf{1}}\subset\mathbbm{P}^{7}$ consists of general points given by afﬁne coordinates $[1:a:b:ab:c:ac:bc:abc]$ with the possibility of $a$ and/or $b$ and/or $c$ going to infinity.

Moving on to the proper two-secant variety, i.e., $\sigma_2(\Sigma^3_{\mathbf{1}})\setminus\Sigma^3_{\mathbf{1}}$, we have generic elements as $[\lambda_{1}+\lambda_{2}:\lambda_{1} a_{1}+\lambda_{2} a_{2}:\lambda_{1} b_{1}+\lambda_{2} b_{2}:\lambda_{1} a_{1}b_{1}+\lambda_{2} a_{2}b_{2}: \lambda_{1} c_{1}+\lambda_{2} c_{2}:\lambda_{1} a_{1}c_{1}+\lambda_{2} a_{2}c_{2}:\lambda_{1} b_{1}c_{1}+\lambda_{2} b_{2}c_{2}:\lambda_{1} a_{1}b_{1}c_{1}+\lambda_{2} a_{2}b_{2}c_{2}]$. One can check that
\begin{equation}
|{\rm{GHZ}}_{3}\rangle=[1:0:0:0:0:0:0:1]\,,
\end{equation}
is an elelemnt of $\sigma_{2}(\Sigma^{3}_{\textbf{1}})$.

We also need to consider situations in which one or more parameters in the coordinate of proper two-secant tend to infinity. As an example, let us take $a_{1}=b_{1}=\sqrt{\lambda_{2}}\to\infty$ with $c_{1}=c_{2}$, which gives the biseparable state
\begin{equation}
|{\rm{B}}_{A-BC}\rangle=[1:a:b:c:d:ad:bd:cd]\,,
\end{equation}
is an element of $\sigma_{2}(\Sigma^{3}_{\textbf{1}})$.

Hence, the state $|{\rm{GHZ}}_{3}\rangle$ with one-multirank equal to $(222)$ and all three biseparable states $|{\rm{B}}_{i}\rangle_{i=1}^{3}$ with the same form as Eq. (\ref{general2}) and one-multiranks equal to $(122)$, up to a permutation, are elements of $\sigma_{2}(\Sigma^{3}_{\textbf{1}})$.

However, the tangent points defined in Eq. \eqref{G-tangent-g1} cannot be expressed as elements of $\sigma_{2}(\Sigma^{3}_{\textbf{1}})$, which spans all $\mathbbm{P}^{7}$ only if the tangential variety is included as its closure. If we consider the tangent to $p_{1}=[1:0:0:0:0:0:0:0]$ (equivalent to all points on $\Sigma^{3}_{\textbf{1}}$ by an SLOCC), we have $T^{(1)}_{2}=[1:\lambda:\lambda:0:\lambda:0:0:0]$.
%\begin{equation}
%T^{(1)}_{2}=[1:\lambda:\lambda:0:\lambda:0:0:0]\,.
%\end{equation}
For instance,
\begin{equation}
|{\rm{W}}_{3}\rangle=\lim_{\lambda\to\infty}T^{(1)}_{2}=[0:1:1:0:1:0:0:0]\,,
\end{equation}
with one-multirank equal to $(222)$, is an element of $\tau_{2}(\Sigma^{3}_{\textbf{1}})$.

We saw that one-multirank equal to $(222)$ can be discriminated by secant and/or tangent classification. From now on, we use a prime for the states in tangent to discriminate secant and tangent families where they have same $\ell$-multiranks.

In summary, this classification provides us two secant families (three secant / tangent families), and six subfamilies (Table \ref{table:3qubit}, see also Ref. \cite[Example 14.4.5]{GKZ}) that coincide with the six SLOCC classes of Ref. \cite{DVC00}.

\begin{table}[t]
\centering
\caption[Fine-structure classification of three-qubit entanglement]{\label{table:3qubit} Fine-structure classification of three-qubit entanglement. Each column corresponds to a family ($\tau_2$ is the closure of $\sigma_2$). Within a column, each row corresponds to a subfamily.}
\begin{tabularx}{\linewidth}{XXX}
\hline\hline
& & \\ [-2ex]
$\Sigma^{3}_{\textbf{1}}$ & $\sigma_{2}$ & $\tau_{2}$ \\ [0.5ex]
\hline
& & \\ [-2ex]
$|{\rm{Sep}}\rangle$ & $|{\rm{GHZ}}_{3}\rangle$ & $|{\rm{W}}_{3}\rangle$ \\ [0.5ex]
& $|{\rm{B}}_{i}\rangle_{i=1}^{3}$ &  \\ [0.5ex]
\hline\hline
\end{tabularx}
\end{table}
\pagebreak
Also from this example we can extrapolate general results. That is, for $n\geq{r}\geq{3}$, we have
\begin{align}\label{general3}\nonumber
\hspace{8mm}&|{\rm{GHZ}}_{n}\rangle=|0\rangle^{\otimes{n}}+|1\rangle^{\otimes{n}} &\in \sigma_{2}(\Sigma^{n}_{\textbf{1}})\,,\hspace{8mm} \\ \nonumber
&\mathfrak{p}\{|{\rm{GHZ}}_{r}\rangle|q_2\rangle^{\otimes{(n-r)}}\} &\in \sigma_{2}(\Sigma^{n}_{\textbf{1}})\,,\hspace{8mm} \\ \nonumber
&|{\rm{W}}_{n}\rangle=|{\rm{D}}_{n}^{1}\rangle &\in \tau_{2}(\Sigma^{n}_{\textbf{1}})\,,\hspace{8mm} \\
&\mathfrak{p}\{|{\rm{W}}_{r}\rangle|q_2\rangle^{\otimes{(n-r)}}\} &\in \tau_{2}(\Sigma^{n}_{\textbf{1}})\,,\hspace{8mm}
\end{align}
where
\begin{equation}\label{Dicke-n-qubit}
|{\rm{D}}_{n}^{l}\rangle={\binom{n}{l}}^{-(1/2)}\sum_{i}\mathfrak{p}_{i}\{|0\rangle^{\otimes{(n-l)}}\otimes|1\rangle^{\otimes{l}}\}\,,
\end{equation}
are the so-called Dicke states (with $l$ excitations).

\subsection{Four-qubit entanglement}\label{subsec.4.2.3}
Due to \cref{remark-MR-GE} and \cref{coro:MR-secant} in Chapter \ref{chap3} and classification of two- and three-qubit states, we have
\begin{enumerate}
\item All triseparable states $|{\rm{T}}_{i}\rangle_{i=1}^{6}$ from Eq. (\ref{general2}) are elements of $\sigma_{2}(\Sigma^{4}_{\textbf{1}})$.
\item All biseparable states $|{\rm{B}}_{i}^{{\rm{GHZ}}_{3}}\rangle_{i=1}^{4}$ and $|{\rm{B}}_{i}^{{\rm{W}}_{3}}\rangle_{i=1}^{4}$ from Eq. (\ref{general3}) are, respectively, elements of $\sigma_{2}(\Sigma^{4}_{\textbf{1}})$ and $\tau_{2}(\Sigma^{4}_{\textbf{1}})$.
\item The states $|{\rm{GHZ}}_{4}\rangle$ and $|{\rm{W}}_{4}\rangle$ are elements of $\sigma_{2}(\Sigma^{4}_{\textbf{1}})$ and $\tau_{2}(\Sigma^{4}_{\textbf{1}})$, respectively.
\end{enumerate}

The rest of the subfamilies of four-qubit states can be identified by considering the elements of three- and four-secants and their closures.

The proper three-secant, i.e., the set $\sigma_{3}(\Sigma^{4}_{\textbf{1}})\setminus\sigma_{2}(\Sigma^{4}_{\textbf{1}})$, is the union of the secant hyperplanes $\mathcal{S}_{3}$ represented by Eq. (\ref{G-secant}). For instance,
\begin{equation}
|{\rm{M}}_4\rangle=|0000\rangle+|1111\rangle+\mathfrak{p}\{|0011\rangle\}\,,
\end{equation}
which comes from joining $|{\rm{GHZ}}_{4}\rangle$ and an independent element of $\Sigma^{4}_{\textbf{1}}$ is and element of $\sigma_{3}(\Sigma^{4}_{\textbf{1}})$ with two-multirank equal to a permutation of (233).

Using Eq. \eqref{Dicke-n-qubit} one can see that four-qubit Dicke state with two excitations
\begin{equation}\label{D-4-2}
|{\rm{D}}_{4}^{2}\rangle=\frac{1}{\sqrt{6}}(|0011\rangle+|0101\rangle+|0110\rangle+|1001\rangle+|1010\rangle+|1100\rangle)\,,
\end{equation}
is an element of $\sigma_{3}(\Sigma^{4}_{\textbf{1}})$ with two-multirank equal to $(333)$. This is because we can relate the above-mentioned symmetric state to the monomial $x^2y^2$and we can decompose this monomial as follows:
\begin{equation}
x^2y^2=\frac{1}{18\omega} \big(\omega^4(x+y)^4+(x+\omega^2y)^4+(\omega^2x+y)^4\big)\,,
\end{equation}
where $\omega$ is the nonreal cube root of unity. So, using the Dirac notation, we can rewrite the state $|{\rm{D}}_{4}^{2}\rangle$ in Eq. \eqref{D-4-2} based on the above decomposition as follows:
\begin{equation}
|{\rm{D}}_4^2\rangle=\frac{1}{3\omega}\big(\omega^4(|0\rangle+|1\rangle)^{\otimes 4}+(|0\rangle+\omega^2|1\rangle)^{\otimes 4}+(\omega^2|0\rangle+|1\rangle)^{\otimes 4}\big)\,.
\end{equation}

To construct the closure of $\sigma_{3}$, we consider different limit types as in Eqs. (\ref{G-tangent-g1})-(\ref{G-tangent-g3}) at $p_{1}=[1:0:\cdots:0]$, equivalent to all points on $\Sigma^{4}_{\textbf{1}}$ by an SLOCC. Then,
\begin{enumerate}
\item Regarding the first limit type, i.e., Eq. (\ref{G-tangent-g1}), the following states
\begin{equation}
|{\rm{W}}_{4}\rangle+|1111\rangle\,,
%\quad \text{and} \quad |{\rm{W}}_{4}\rangle+\mathfrak{p}\{|0011\rangle\}\,,
\end{equation}
and
\begin{equation}
|{\rm{W}}_{4}\rangle+\mathfrak{p}\{|0011\rangle\}\,,
\end{equation}
are elements of $\tau_{3}(\Sigma^{4}_{\textbf{1}})$ with two-multirank equal to $(333)$ and a permutation of $(233)$, respectively.

\item Regarding the second limit type, i.e., Eq. (\ref{G-tangent-g2}), the following state
\begin{equation}
|{\rm{W}}_{4}\rangle+|{\rm{D}}_{4}^2\rangle\,,
\end{equation}
is an element of $\tau_{3}(\Sigma^{4}_{\textbf{1}})$ with two-multirank equal to $(333)$.

\item For the third limit type (Eq. (\ref{G-tangent-g3})), one can take $p_{1}=[0:1:0:\cdots:0]$ as the second point, where $\lambda_{1}p_{1}+\lambda_{2}p_{2}\in\Sigma^{4}_{\textbf{1}}$ and hence
\begin{equation}
|{\rm{W}}_{4}\rangle+\alpha|0011\rangle+\beta|0101\rangle+\gamma|1001\rangle\,,
\end{equation}
is an element in $\tau_{3}(\Sigma^{4}_{\textbf{1}})$ with two-multirank equal to $(333)$.
\end{enumerate}
We denote the union of these points as the tangential variety $\tau_{3}(\Sigma^{4}_{\textbf{1}})$.

An important observation is that, all elements in the three-secant variety are genuinely entangled. This can be useful for characterizing genuine multilevel entanglement when we look at four qubits as two ququarts \cite{KRBHG18}.

The proper four-secant, i.e., the set $\sigma_{4}(\Sigma^{4}_{\textbf{1}})\backslash\sigma_{3}(\Sigma^{4}_{\textbf{1}})$, is the union of the secant hyperplanes $\mathcal{S}_{4}$ represented by Eq. (\ref{G-secant}). For instance, all biseparable states
\begin{equation}
|{\rm{BB}}_{i}\rangle_{i=1}^{3}=|{\rm{Bell}}\rangle_{j}|{\rm{Bell}}\rangle_{k}\,,
\end{equation}
which are tensor products of two Bell states of different parts, that is $j\in\{12,13,14\}$ and k is the complementary such that $j\cup k=1234$, is an element in $\sigma_{4}(\Sigma^{4}_{\textbf{1}})$ with two-multirank equal to $(144)$, up to a permutation.

In the proper three-secant we have other elements with different two-multiranks. For instance,
\begin{align}\nonumber
&|{\rm{Cl}}_{4}^{(1)}\rangle=\frac{1}{2}(|0000\rangle+|0011\rangle+|1100\rangle-|1111\rangle)\,, \\ \nonumber
& |{\rm{Cl}}_{4}^{(2)}\rangle=\frac{1}{2}(|0000\rangle+|0101\rangle+|1010\rangle-|1111\rangle)\,, \\
& |{\rm{Cl}}_{4}^{(3)}\rangle=\frac{1}{2}(|0000\rangle+|0110\rangle+|1001\rangle-|1111\rangle)\,,
\end{align}
which are known as cluster states \cite{BR01}, and are elements in $\sigma_{4}(\Sigma^{4}_{\textbf{1}})$ with two-multirank equal to $(244)$, $(424)$ and $(442)$, respectively. Also, the following states
\begin{align}\nonumber
&|{\rm{Cl}}_{4}^{(1)}\rangle+|0101\rangle\,, \\ \nonumber
&|{\rm{Cl}}_{4}^{(2)}\rangle+|0110\rangle\,, \\
&|{\rm{Cl}}_{4}^{(3)}\rangle+|0011\rangle\,,
\end{align}
are elements in $\sigma_{4}(\Sigma^{4}_{\textbf{1}})$ with two-multirank equal to $(344)$, $(434)$ and $(443)$, respectively.

Since the highest tensor rank for a four-qubit state is 4 \cite{Brylinski02}, we do not need to construct the four-tangent. Therefore, any general state of four-qubit system with two-multirank equal to $(444)$ can be considered as an element of four-secant family.

To have an exhaustive classification, we have written each subfamily of three- and four-secant families in terms of their two-multiranks in Table \ref{table:4qubit}. Also, we have used a prime for the states in tangent to discriminate secant and tangent families where they have same one- and two-multiranks.

Briefly, this classification provide us four secant families (six secant/tangent families), and $35$ subfamilies (Table \ref{table:4qubit}). The petal-like classification of SLOCC orbits is presented in Fig. \ref{fig:Flower4}.

\begin{table}[t]
\centering
\caption[Fine-structure classification of four-qubit entanglement]{\label{table:4qubit} Fine-structure classification of four-qubit entanglement.}
\begin{tabularx}{\linewidth}{XXXXXX}
\hline\hline
& & & & & \\ [-2ex]
$\Sigma^{4}_{\textbf{1}}$ & $\sigma_{2}$ & $\tau_{2}$ & $\sigma_{3}$ & $\tau_{3}$ & $\sigma_{4}$\\ [0.5ex]
\hline
& & & & & \\ [-2ex]
$|{\rm{Sep}}\rangle$ & $|{\rm{GHZ}}_{4}\rangle$ & $|{\rm{W}}_{4}\rangle$ & $|(333)\rangle$ & $|(333)'\rangle$ & $|(444)\rangle$\\ [0.5ex]
& $|{\rm{B}}_{i}^{{\rm{GHZ}}_{3}}\rangle_{i=1}^{4}$ & $|{\rm{B}}_{i}^{{\rm{W}}_{3}}\rangle_{i=1}^{4}$ & $|(332)\rangle$ & $|(332)'\rangle$
 & $|(443)\rangle$\\ [0.5ex]
& $|{\rm{T}_{i}}\rangle_{i=1}^{6}$ & & $|(323)\rangle$ & $|(323)'\rangle$
 & $|(434)\rangle$\\ [0.5ex]
& & & $|(233)\rangle$ & $|(233)'\rangle$
 & $|(344)\rangle$\\ [0.5ex]
& & & & & $|(442)\rangle$\\ [0.5ex]
& & & & & $|(424)\rangle$\\ [0.5ex]
& & & & & $|(244)\rangle$\\ [0.5ex]
& & & & & $|{\rm{BB}}_{i}\rangle_{i=1}^{3}$\\ [0.5ex]
\hline\hline
\end{tabularx}
\end{table}

\begin{figure}[H]
\center{\includegraphics[width=10cm]{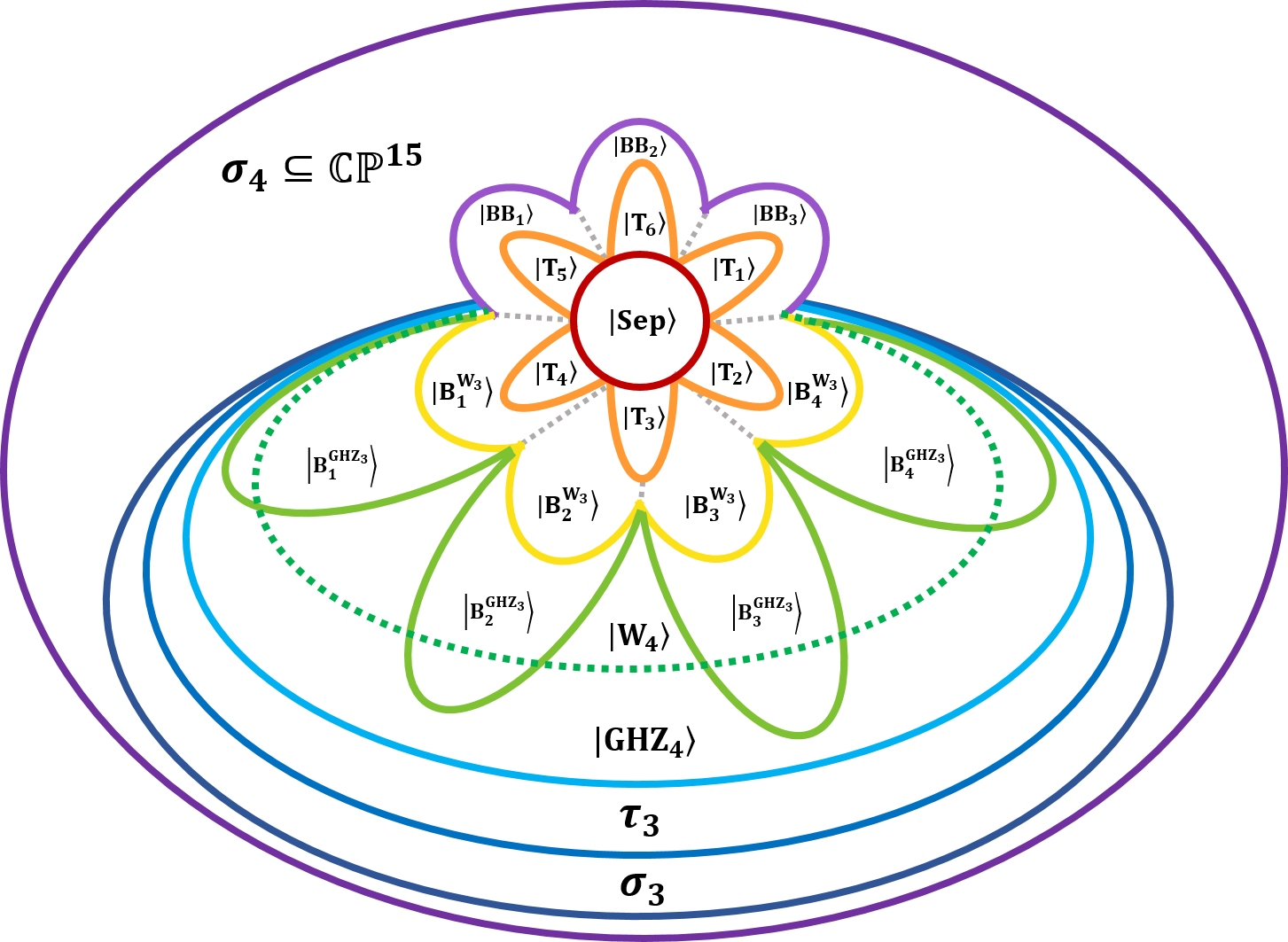}}
\caption[Petal-like classification of four-qubit entanglement]{\label{fig:Flower4} Petal-like classification of SLOCC orbits of four-qubit states. Dashed gray lines in the core show that each $|{\rm{BB}}_{i}\rangle$ encompasses two triseparable subfamilies, while each $|{\rm{B}}_{i}^{{\rm{W}}_{3}}\rangle$ encompasses three triseparable subfamilies. The convex hull of $|{\rm{W}}_{4}\rangle$ (dashed green curve) indicates that this family does not encompass biseparable states $|{\rm{B}}_{i}^{{\rm{GHZ}}_{3}}\rangle$, while both encompass the yellow, orange, and red subsets. From the outer classes, one can go to the inner ones by noninvertible SLOCC (from $\sigma_k$ to $\tau_k$ also in an approximate way), thus generating the entanglement hierarchy. (See Fig. \ref{fig:Hasse4} in the following subsection for more details.)}
\end{figure}

\subsubsection{Much ado about two-multiranks for four-qubit systems}\label{subsubsec.4.3.3.1}
Carlini and Kleppe have classified all possible one-multiranks for any number of qudits \cite{CK11}. The case of two-multiranks is more subtle. The partial result of two-multiranks of four-qubit states which is related to the Fig. \ref{fig:Flower4} can be seen in Hasse diagram in Fig. \ref{fig:Hasse4}. A partial classification was given classically in Ref. \cite{Segre}, where the case $(442)$ and its permutations were forgotten. The full classification is achieved by the following

\begin{figure}[t]
\center{\includegraphics[width=9cm]{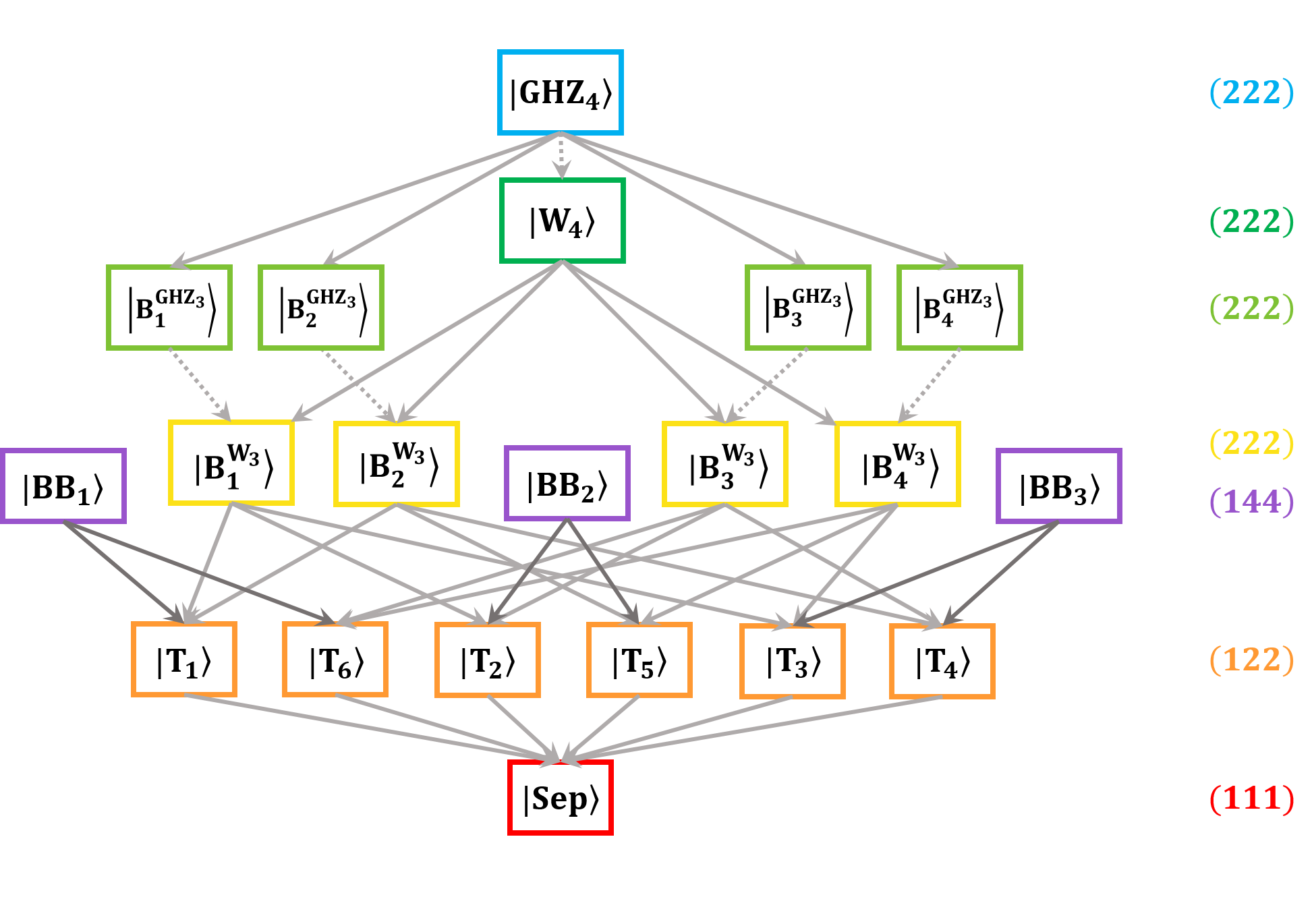}}
\caption[Hasse diagram SLOCC classification of four-qubit]{\label{fig:Hasse4} Hasse diagram of the central SLOCC classification of four-qubit states and their corresponding two-multiranks. The arrows denote noninvertible SLOCC transformations. When the arrow is dashed, the transformation is also approximated.}
\end{figure}

\begin{theorem}\label{theo:two-multirank}
(i) For any four-qubit system, the maximum among the three two-multiranks is attained at least twice. \\
(ii) The constraint in (i) is the only constraint for triples of two-multiranks of four-qubit systems, with the only exception of the triple $(133)$, which cannot be achieved.
\end{theorem}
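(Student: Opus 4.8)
The plan is to trade the four-qubit tensor for a single bipartite operator and read the three balanced flattenings off it. Writing $|\psi\rangle$ as a $2\times2\times2\times2$ array $T_{abcd}$ and collecting the four $2\times2$ blocks $(P_{ac})_{bd}:=T_{abcd}$, the operator $A=\sum_{a,c}|a\rangle\langle c|\otimes P_{ac}\in\mathrm{Mat}_2\otimes\mathrm{Mat}_2$ carries all the data. The three inequivalent two-flattenings (the $\binom{4}{2}=6$ bipartitions collapse to $3$ up to transposition, each a $4\times4$ matrix) then become three classical rank invariants of $A$: the composition rank $r_A=\rk(A)$; the realignment (reshuffle) rank $r_B=\dim\Span\{P_{00},P_{01},P_{10},P_{11}\}$; and the partial-transpose rank $r_C=\rk(A^{T_2})$, the identification being verified directly from the index bookkeeping. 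Each lies in $\{1,2,3,4\}$, and by \Cref{l-multirank-SLOCC} each is invariant under the local group, so the whole statement may be checked on normal forms.

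First I would dispose of the rank-one stratum, which delivers the $(133)$ obstruction for free. If any flattening has rank $1$ — say $r_A=1$ — then $|\psi\rangle$ factorizes across that cut as $|\phi\rangle\otimes|\chi\rangle$, and a direct computation shows the remaining two flattenings are Kronecker products, whence $r_B=r_C=\rk(\Phi)\cdot\rk(X)$ with $\Phi,X$ the $2\times2$ coefficient matrices of $|\phi\rangle,|\chi\rangle$. Since $\rk(\Phi),\rk(X)\in\{1,2\}$, this product lies in $\{1,2,4\}$: the two ranks coincide (so the maximum is attained twice) and neither can equal $3$. This rules out every permutation of $(133)$ and already settles part (i) whenever some rank equals $1$.

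For the general case of part (i) I would prove the symmetric inequality $r_X\le\max$ of the other two, for each of the three invariants; this is equivalent to the maximum being attained at least twice. The tool is the realignment decomposition $A=\sum_{i=1}^{r_B}u_i\otimes w_i$ together with Schur complements: using local $GL_2$ moves (which fix all three ranks) to normalize the blocks $P_{00},P_{01},P_{10}$, the conditions $\det A=0$ and $\det A^{T_2}=0$ become explicit polynomial identities in the remaining block, and one shows they cannot hold simultaneously while $r_B$ stays maximal. The model computation is that the two Schur-complement determinants differ by $s-p$ (the off-diagonal blocks being $N,N^T$), so simultaneous singularity forces the diagonal entries to agree and collapses $r_B$. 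I expect the intermediate values $r_B\in\{2,3\}$ to be the real obstacle: one must organize the finitely many block normal forms so that the determinantal checks are genuinely exhaustive, rather than verifying only the convenient representatives.

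Finally, for part (ii) I would exhibit a state for every admissible triple — entries in $\{1,2,3,4\}$, maximum attained at least twice, and not a permutation of $(133)$ — thereby showing these are the only constraints. Working again in the block picture, I would prescribe the four $2\times2$ blocks $P_{ac}$ to hit target values of $(r_A,r_B,r_C)$ and use qubit permutations to realize all orderings; many cases are already covered by the explicit states of the preceding subsections (the Bell-pair products $|\mathrm{BB}_i\rangle$ for the $(144)$-type, the cluster states for $(244)$ and its relatives, the Dicke, $\mathrm{W}_4$ and $\mathrm{GHZ}_4$ states, and generic states for $(444)$). SLOCC-invariance then promotes each representative to a certificate for its entire orbit, so the finite list of constructions proves achievability of the whole admissible set. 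The subtlety to watch is completeness: I must confirm that no further triple is secretly excluded, which the rank-one analysis of the second step pins down to be exactly $(133)$.
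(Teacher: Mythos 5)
Your reduction to the three rank invariants of the block operator $A$ is sound, and your rank-one stratum is essentially the paper's own argument for excluding $(133)$: when one flattening has rank $1$ the state factors across that cut, the other two flattenings become Kronecker products, and the achievable triples are exactly $(111)$, $(122)$, $(144)$. Your treatment of part (ii) by exhibiting representatives and invoking SLOCC-invariance also matches the paper. The genuine gap is in the heart of part (i). Your Schur-complement strategy for the remaining cases is only a sketch: the quantities $s$ and $p$ are never defined, the claimed determinantal identity is not derived, and you yourself flag that the cases $r_B\in\{2,3\}$ are ``the real obstacle'' without resolving them. As written, nothing is actually proved beyond the stratum where some flattening has rank $1$, so the statement that the maximum is attained twice remains open for almost all states.

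The paper closes this gap with two tools you do not invoke. First, for the case where all three two-multiranks are at least $3$, it uses Segre's classical identity that the three $4\times4$ flattening determinants sum to zero; hence if the maximum is $4$ then at least two of the determinants are nonzero (a sum with a single nonzero term cannot vanish), and if the maximum is $3$ then all three ranks equal $3$. Second, for the case where the minimum is at most $2$, attained say by $\mathcal{M}_{xy}$, it classifies the tensor up to SLOCC into three normal forms (secant, tangent, isotropic in the $xy$-factor); in each, the two remaining flattenings acquire a block structure (for instance $\mathrm{diag}(A,B)$ versus $\mathrm{diag}(A^{T},B^{T})$) from which their ranks are visibly equal, so the maximum is attained at least twice. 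To salvage your route you would need either to prove the Segre determinant identity yourself or to carry out, exhaustively, the normal-form case analysis you defer; without one of these the intermediate-rank cases are not covered.
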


{\it Proof.}
If the minimum of the three two-multiranks is $\ge{3}$, the result follows from the fact that the three $4\times{4}$ determinants of the three flattenings sum to zero, as proved a century ago by Segre \cite{Segre}. Then, we assume that the minimum is $\le{2}$, attained by ${\mathcal{M}}_{xy}$ and we have three distinct cases as follows up to SLOCC [referring to Eq. (\ref{n-linear})]; here, multi-homogeneous coordinates for the four-qubit system are $x_iy_jz_kt_l$ for $i,j,k,l=\{0,1\}$).
\begin{enumerate}
\item[(1)] Secant:
$$f=x_0y_0(\sum a_{ij}z_it_j)+x_1y_1(\sum b_{ij}z_it_j)\,.$$
Here, the two-flattenings are $4\times 4$ matrices with the block form
$${\mathcal{M}}_{xz}=\left(\begin{array}{c|c}A&0 \\ \vspace{-3.5mm} \\ \hline \vspace{-3.5mm} \\ 0&B\end{array}\right)\,, \qquad {\mathcal{M}}_{xt}=\left(\begin{array}{c|c}A^{T}&0 \\ \vspace{-3.5mm} \\ \hline \vspace{-3.5mm} \\ 0&B^{T}\end{array}\right)\,,$$
which have the same rank. If this rank is one, then $A=0$ or $B=0$ and $f$ is a decomposable tensor.
\item[(2)] Tangent:
$$f=x_0y_0(\sum a_{ij}z_it_j)+(x_0y_1+x_1y_0)(\sum b_{ij}z_it_j)\,.$$
The two-flattenings have the block form
$${\mathcal{M}}_{xz}=\left(\begin{array}{c|c}A&B \\ \vspace{-3.5mm} \\ \hline \vspace{-3.5mm} \\ B&0\end{array}\right)\,, \qquad {\mathcal{M}}_{xt}=\left(\begin{array}{c|c}A^{T}&B^{T} \\ \vspace{-3.5mm} \\ \hline \vspace{-3.5mm} \\ B^{T}&0\end{array}\right)\,,$$
which again have the same rank. If this rank is one then $B=0$ and $f$ is a decomposable tensor.
\item[(3)] Isotropic:
$$f=x_0y_0(\sum a_{ij}z_it_j)+x_0y_1(\sum b_{ij}z_it_j)\,.$$
Here ${\mathcal{M}}_{xy}$ has rank $1$ iff $a$ and $b$ are proportional.
The two-flattenings have the block form
$${\mathcal{M}}_{xz}=\left(\begin{array}{c|c}A&B \\ \vspace{-3.5mm} \\ \hline \vspace{-3.5mm} \\ 0&0\end{array}\right)\,, \qquad {\mathcal{M}}_{xt}=\left(\begin{array}{c|c}A^{T}&B^{T} \\ \vspace{-3.5mm} \\ \hline \vspace{-3.5mm} \\ 0&0\end{array}\right)\,,$$
which have both rank $\le{2}$. If they have both rank one, then $A$ and $B$ are proportional, moreover $\mathrm{rk}(A)=\mathrm{rk}(B)=1$. This concludes the proof of (i). (ii) follows by exhibiting a representative for each case, as in Table \ref{table:4qubit}. The nonexistence of case $(133)$ follows since when one two-multirank is $1$, then we may assume $f=(\sum a_{ij}x_iy_j)(\sum b_{ij}z_it_j)$
and depending on the pair $(\mathrm{rk}(A),\mathrm{rk}(B))=(1,1), (1,2), (2,2)$
we have, correspondingly, the triples $(111)$, $(122)$, $(144)$, so $(133)$ is not achieved.
\end{enumerate}
\qed

\begin{remark}\label{remark1}
States living in the higher secant and/or tangent can produce all states in the lower secants and/or tangents by means of degenerations, that is performing some limits.
\end{remark}

As for what concerns the possibility of producing states in the lower secants and/or tangents from states in the higher secant and/or tangent by degeneration (Remark \ref{remark1}), from Fig. \ref{fig:Hasse4}, it follows that we can asymptotically produce $|{\rm{W}}_4\rangle$ from $|{\rm{GHZ}}_4\rangle$ with a noninvertible SLOCC transformation, i.e., we cannot produce $|{\rm{GHZ}}_4\rangle$ from $|{\rm{W}}_4\rangle$. As a matter of fact, employing the singular (for $\epsilon\to{0}$) SLOCC transformation \begin{equation}
A_{\epsilon}=\epsilon^{-1/4}\begin{pmatrix}
\sqrt[4]{-1} & 1 \\
\epsilon & 0 \\
\end{pmatrix},
\end{equation}
we get
\begin{equation}
\lim_{\epsilon\to0}A_{\epsilon}^{\otimes4}|\rm{GHZ}_4\rangle=|\rm{W}_4\rangle\,.
\end{equation}
Let $|\rm{X}_{4}\rangle=\mathfrak{d}_{1}(|0001\rangle+|0010\rangle+|0100\rangle+|1000\rangle)+\mathfrak{d}_{4}|1111\rangle=\mathfrak{d}_{1}|\rm{W}_{4}\rangle+\mathfrak{d}_{4}|1111\rangle$. It is a symmetric state in $\tau_{3}(\Sigma^{4}_{\textbf{1}})$. It is obvious that if $\mathfrak{d}_4$ tends to zero we can approximately produce $|\rm{W}_4\rangle$ from $|\rm{X}_4\rangle$. As a matter of fact, employing the singular (for $\epsilon\to{0}$) SLOCC transformation 
\begin{equation}
B_{\epsilon}=\epsilon^{-\frac{1}{4}}\begin{pmatrix}
\sqrt[4]{-1} & (-1)^{7/12} 2^{2/3} \\
\epsilon & 0 \\
\end{pmatrix},
\end{equation}
we can get
\begin{equation}
\lim_{\epsilon\to0}B_{\epsilon}^{\otimes4}|\rm{X}_4\rangle=|\rm{W}_4\rangle\,.
\end{equation}
As another example, employing the singular (for $\epsilon\to{0}$) SLOCC transformation
\begin{equation}
C_{\epsilon}=\epsilon^{-\frac{1}{4}}\begin{pmatrix}
\sqrt[4]{-1} & \pm\sqrt{\frac{1}{2} \left(-\sqrt{3}-\mathbf{i}\right)} \\
\epsilon & 0 \\
\end{pmatrix},
\end{equation}
we can asymptotically produce $|\rm{W}_{4}\rangle$ from $|\rm{M}_4\rangle=\alpha|0000\rangle+\beta|0011\rangle+\gamma|1111\rangle$ belonging to $\sigma_{3}(\Sigma^{4}_{\textbf{1}})$, i.e.,
\begin{equation}
\lim_{\epsilon\to0}C_{\epsilon}^{\otimes4}|\rm{M}_4\rangle=|\rm{W}_4\rangle\,.
\end{equation}
It is also obvious that we can approximately produce $|\rm{GHZ}_4\rangle$ from $|\rm{M}_4\rangle$ by letting $\beta$ go to zero.

\subsubsection{($n\geq4$)-qubit entanglement}\label{subsub4.3.3.2}
We can draw the following conclusions for $n\geq{4}$:
\begin{align}\label{general4}\nonumber
&|{\rm{M}}_{n}^{r}\rangle:=|{\rm{GHZ}}_{n}\rangle+\mathfrak{p}\{|0\rangle^{\otimes{r}}|1\rangle^{\otimes{(n-r)}}\} &\in \sigma_{3}(\Sigma^{n}_{\textbf{1}})\,, \\ \nonumber
&|0\rangle_{i}|{\rm{GHZ}}_{n-1}\rangle+|1\rangle_{i}~\mathfrak{p}\{|1\rangle^{\otimes{s}}|0\rangle^{\otimes{(n-s-1)}}\} &\in \sigma_{3}(\Sigma^{n}_{\textbf{1}})\,, \\ \nonumber
&|{\rm{N}}_{n}^{t}\rangle:=|{\rm{W}}_{n}\rangle+\mathfrak{p}\{|1\rangle^{\otimes{t}}|0\rangle^{\otimes{(n-t)}}\} &\in \tau_{3}(\Sigma^{n}_{\textbf{1}})\,,\\ \nonumber
&|0\rangle_{i}|{\rm{W}}_{n-1}\rangle+|1\rangle_{i}~\mathfrak{p}\{|1\rangle^{\otimes{(t-1)}}|0\rangle^{\otimes{(n-t)}}\} &\in \tau_{3}(\Sigma^{n}_{\textbf{1}})\,, \\ \nonumber
&|{\rm{G}}_{n}^{r}\rangle:=\mathfrak{p}\{\alpha|0\rangle^{\otimes{n}}+\beta|0\rangle^{\otimes{r}}|1\rangle^{\otimes{(n-r)}} & \\
&\qquad\qquad\quad+\gamma|1\rangle^{\otimes{r}}|0\rangle^{\otimes{(n-r)}}+\delta|1\rangle^{\otimes{n}}\} &\in \sigma_{4}(\Sigma^{n}_{\textbf{1}})\,,
\end{align}
where $2\leq{r}\leq{n-2}$, $1\leq{s}\leq n-2$, $2\leq{t}\leq n$, and $i\in\{1,\ldots,n\}$. It is worth noting that the state $|{\rm{G}}_{n}^{r}\rangle$ is a generalization of bipartite state $\alpha|00\rangle+\beta|01\rangle+\gamma|10\rangle+\delta|11\rangle$ and its minor is $2|\alpha\delta-\beta\gamma|$, which coincides with the definition of concurrence \cite{Wootters98}. Therefore, if $\alpha\delta\neq\beta\gamma$, the state $|{\rm{G}}_{n}^{r}\rangle$ is genuinely entangled, otherwise it is biseparable (a tensor product of two $r$- and ($n-r$)-partite entangled states).

\begin{proposition}
For $n\geq{4}$ qubits, there is no symmetric entangled state in the proper locus of the $k$-secant variety of the Segre variety, with $k>\lceil\frac{n+1}{2}\rceil$.
\end{proposition}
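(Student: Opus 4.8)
The plan is to reduce the statement, which concerns the ordinary tensor (Segre) border rank of symmetric states, to the well-understood Waring rank of binary forms, by exploiting that the relevant symmetric secant variety already fills the symmetric subspace at $k=\lceil\frac{n+1}{2}\rceil$. First I would recall the standard dictionary: a symmetric $n$-qubit state is a point of $\mathbbm{P}(\mathrm{Sym}^n\mathbbm{C}^2)\cong\mathbbm{P}^n$, i.e.\ a binary form of degree $n$; the symmetric fully separable states are exactly the simple symmetric tensors $v^{\otimes n}$; and these constitute the rational normal curve $C_n=\mathcal{V}_1^n(\mathbbm{P}^1)\subset\mathbbm{P}^n$ from \Cref{subsec3.4.1}. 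Two structural facts then drive the argument: (a) since a symmetric simple tensor is in particular fully separable, $C_n\subset\Sigma^{n}_{\textbf{1}}$, whence $\sigma_k(C_n)\subseteq\sigma_k(\Sigma^{n}_{\textbf{1}})$ for every $k$; and (b) $C_n$ is a nondegenerate curve in $\mathbbm{P}^n$, so by the expected-dimension count of Eq.~\eqref{exp-k-secant} we have $\dim\sigma_k(C_n)=\min\{2k-1,\,n\}$.

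Next I would invoke \Cref{theo:AH95} for $d=2$: the generic Waring rank of a binary form of degree $n$ equals $\lceil\binom{n+1}{n}/2\rceil=\lceil\frac{n+1}{2}\rceil$, and none of the Alexander–Hirschowitz exceptional pairs $(n,d)=(3,5),(4,3),(4,4),(4,5)$ occurs for $d=2$ (nor does the $n=2$ caveat, since $n\geq4$). Equivalently, writing $m:=\lceil\frac{n+1}{2}\rceil$ gives $2m-1\geq n$, so $\sigma_m(C_n)$ is an irreducible variety of dimension $n$ inside $\mathbbm{P}^n=\mathbbm{P}(\mathrm{Sym}^n\mathbbm{C}^2)$ and hence coincides with the whole symmetric projective space. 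Because secants of a variety contained in a linear subspace remain inside that subspace, this filling statement is intrinsic and unaffected by the embedding $\mathbbm{P}^n\hookrightarrow\mathbbm{P}^{2^n-1}$.

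Combining (a) with this filling statement closes the argument. Any symmetric state $s$ lies in $\mathbbm{P}(\mathrm{Sym}^n\mathbbm{C}^2)=\sigma_m(C_n)\subseteq\sigma_m(\Sigma^{n}_{\textbf{1}})$, so its Segre border rank is at most $m$. Using the nested chain $\Sigma^{n}_{\textbf{1}}\subset\sigma_2(\Sigma^{n}_{\textbf{1}})\subset\cdots$, for any $k>m$ we get $s\in\sigma_m(\Sigma^{n}_{\textbf{1}})\subseteq\sigma_{k-1}(\Sigma^{n}_{\textbf{1}})$, so $s$ cannot lie in the proper locus $\sigma_k(\Sigma^{n}_{\textbf{1}})\setminus\sigma_{k-1}(\Sigma^{n}_{\textbf{1}})$. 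Since this holds for every symmetric state, in particular for every entangled one, the proposition follows.

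The hard part is essentially bookkeeping rather than a deep difficulty: I must ensure the reduction uses only the \emph{easy} inequality direction, namely that Segre border rank is dominated by border rank relative to $C_n$, which is automatic from $C_n\subseteq\Sigma^{n}_{\textbf{1}}$ and requires no Comon-type equality of symmetric and ordinary border rank. The one point genuinely needing care is the assertion that $\sigma_m(C_n)$ fills $\mathbbm{P}^n$, but this is the classical nondefectivity of the rational normal curve, already subsumed in \Cref{theo:AH95}; so once these two ingredients are cited correctly there is no substantive obstacle.
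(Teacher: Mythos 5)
Your proposal is correct and follows essentially the same route as the paper: both arguments rest on the fact that the symmetric separable states form the Veronese (rational normal) curve whose $\lceil\frac{n+1}{2}\rceil$-secant already fills $\mathbbm{P}(\mathrm{Sym}^n\mathbbm{C}^2)$, so every symmetric state sits in $\sigma_{\lceil\frac{n+1}{2}\rceil}$ of the Segre variety and cannot appear in the proper locus of any higher secant. Your write-up merely makes explicit two steps the paper leaves implicit --- the containment $\sigma_k(\mathcal{V}_1^n)\subseteq\sigma_k(\Sigma^n_{\mathbf{1}})$ and the non-defectivity of the rational normal curve via \Cref{theo:AH95} --- which is a welcome tightening but not a different proof.
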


The superposition of $n$-qubit Dicke states with all possible excitations  
\begin{equation}\label{symmetric}
|\psi_{n}^{\rm{Sym}}\rangle=\sum_{l=0}^{n}\mathfrak{d}_{l}|{\rm{D}}_{n}^{l}\rangle\,,
\end{equation}
is the most general symmetric entangled state. The symmetric $n$-qubit separable states have the structure of the Veronese variety ($\mathcal{V}^{n}_{1}$) and its $k$-secant varieties are SLOCC families \cite{BH01, ST13, SBSE17}. The highest $k$-secant variety fills the ambient space for $k=\lceil\frac{n+1}{2}\rceil$. Comparing with the highest $k$-secant variety in the Segre embedding ($k=\lceil\frac{2^{n}}{n+1}\rceil$), it proves the proposition. Moreover, we will show below that each Dicke state with $1\leq{l}\leq\lfloor\frac{n}{2}\rfloor$ (the same for the spin-flipped version, i.e., $|{\rm{D}}_{n}^{n-l}\rangle$) is in a $k$-secant family of Veronese embedding, and hence, Segre embedding for $2\leq{k}\leq\lfloor\frac{n}{2}\rfloor+1$, respectively. Thus, this method can be useful to classify entanglement of symmetric states and the corresponding number of families grows slower than Ref. \cite{BKMGLS09}.

Consider the following $n$-qubit separable state:
\begin{equation}
|{\rm{S}}_n(\varepsilon)\rangle=(|0\rangle+\varepsilon|1\rangle)^{\otimes{n}}=\sum_{l=0}^{n}\varepsilon^{l}|{\rm{D}}_{n}^{l}\rangle\,.
\end{equation}
Thanks to the definition of tangent star and Eqs. (\ref{recursivesecant}) and (\ref{G-tangent}) in \cref{chap3}, we can write
\begin{equation}\label{tangent-dicke}
\lim_{\varepsilon\to{0}}\frac{1}{\varepsilon^{m+1}}\Big(|{\rm{S}}_n(\varepsilon)\rangle-\sum_{i=0}^{m}\varepsilon^{i}|{\rm{D}}_{n}^{i}\rangle\Big)=|{\rm{D}}_{n}^{m+1}\rangle\in\tau_{m+2}(\Sigma^{n}_{\textbf{1}})\,,
\end{equation}
where $0\leq{m}\leq\lfloor\frac{n}{2}\rfloor-1$. Furthermore, $\lfloor\frac{n}{2}\rfloor$-multiranks of the Dicke states with $1\leq l\leq\lfloor\frac{n}{2}\rfloor$ (and similarly $|{\rm{D}}_{n}^{n-l}\rangle$) are $l+1=k$ ($\ell$-multiranks with $\ell<\lfloor\frac{n}{2}\rfloor$ have the same value or maximum rank). We guess that this is a general behavior which holds true for symmetric multiqudit systems as well. In a similar way, one can check that the states $|{\rm{N}}_{n}^{r}\rangle$ are on the limiting lines of the states $|{\rm{M}}_{n}^{r}\rangle$ in Eq. (\ref{general4}), and therefore, are exceptional states.

Consider now $|\psi_{4}^{\rm{Sym}}\rangle$ from Eq. (\ref{symmetric}) which belongs to $\tau_{3}(\Sigma^{4}_{\textbf{1}})$. It can asymptotically produce lower tangent elements, like $|{\rm{W}}_{4}\rangle$.

\subsection{Five-qubit entanglement}\label{subsec.4.2.4}
For five-qubit states, due to \cref{remark-MR-GE}, \cref{coro:MR-secant}, and classification of two-, three-, and four-qubit states, we have
\begin{enumerate}
\item All quadriseparable states $|{\rm{Q}}_{i}\rangle_{i=1}^{10}$ from Eq. (\ref{general2}) are elements of $\sigma_{2}(\Sigma^{5}_{\textbf{1}})$.
\item All triseparable states $|{\rm{T}}_{i}^{{\rm{GHZ}}_{3}}\rangle_{i=1}^{10}$ and $|{\rm{T}}_{i}^{{\rm{W}}_{3}}\rangle_{i=1}^{10}$ from Eq. (\ref{general3}) are, respectively, elements of $\sigma_{2}(\Sigma^{5}_{\textbf{1}})$ and $\tau_{2}(\Sigma^{5}_{\textbf{1}})$.
\item All biseparable states $|{\rm{B}}_{i}^{{\rm{GHZ}}_{4}}\rangle_{i=1}^{5}$ and $|{\rm{B}}_{i}^{{\rm{W}}_{4}}\rangle_{i=1}^{5}$ from Eq. (\ref{general3}) are, respectively, elements of $\sigma_{2}(\Sigma^{5}_{\textbf{1}})$ and $\tau_{2}(\Sigma^{5}_{\textbf{1}})$.
\end{enumerate}
Considering Eq. (\ref{general3}), we can also find that states $|{\rm{GHZ}}_{5}\rangle$ and $|{\rm{W}}_{5}\rangle$ are elements of $\sigma_{2}(\Sigma^{5}_{\textbf{1}})$ and $\tau_{2}(\Sigma^{5}_{\textbf{1}})$, respectively. These results are given in Table \ref{table:5qubit1}.

In a similar way to Eq. (\ref{general4}), all biseparable states of the form $|\sigma_{3}(\Sigma^{4}_{\textbf{1}})\rangle|q_2\rangle$ and $|\tau_{3}(\Sigma^{4}_{\textbf{1}})\rangle|q_2\rangle$ are elements of $\sigma_{3}(\Sigma^{5}_{\textbf{1}})$ and $\tau_{3}(\Sigma^{5}_{\textbf{1}})$, respectively. Note that the number of distinct subfamilies that these biseparable states create in each $\sigma_{3}(\Sigma^{5}_{\textbf{1}})$ and $\tau_{3}(\Sigma^{5}_{\textbf{1}})$, according to the permutations of the one-qubit state, is, respectively, four times the number of subfamiles in $\sigma_{3}(\Sigma^{4}_{\textbf{1}})$ and $\tau_{3}(\Sigma^{4}_{\textbf{1}})$, i.e., $16$ subfamilies. Other elements of three-secant can be written in a similar way to Eq. (\ref{general4}) with a two-multirank including at least one 3 and no 4 (see \cref{coro:MR-secant}). We denote these elements as $|(3\cdots)\rangle\in\sigma_{3}(\Sigma^{5}_{\textbf{1}})$ and $|(3\cdots)'\rangle\in\tau_{3}(\Sigma^{5}_{\textbf{1}})$ (see Table \ref{table:5qubit2}).

The remaining families of five-qubit states have different two-multiranks, including at least one 4.

\begin{table*}[th]
\centering
\caption[Fine-structure classification of five-qubit entanglement (Segre \& two-secant)]{\label{table:5qubit1} Fine-structure classification of five-qubit entanglement (Segre \& two-secant).}
\begin{tabularx}{\linewidth}{XXX}
\hline\hline
& & \\ [-2ex]
$\Sigma^{5}_{\textbf{1}}$ & $\sigma_{2}$ & $\tau_{2}$ \\ [0.5ex]
\hline
& & \\ [-2ex]
$|{\rm{Sep}}\rangle$ & $|{\rm{GHZ}}_{5}\rangle$ & $|{\rm{W}}_{5}\rangle$ \\ [0.5ex]
& $|{\rm{B}}_{i}^{{\rm{GHZ}}_{4}}\rangle_{i=1}^{5}$ & $|{\rm{B}}_{i}^{{\rm{W}}_{4}}\rangle_{i=1}^{5}$ \\ [0.5ex]
& $|{\rm{T}}_{i}^{{\rm{GHZ}}_{3}}\rangle_{i=1}^{10}$ & $|{\rm{T}}_{i}^{{\rm{W}}_{3}}\rangle_{i=1}^{10}$ \\ [0.5ex]
& $|{\rm{Q}}_{i}\rangle_{i=1}^{10}$ & \\ [0.5ex]
\hline\hline
\end{tabularx}
\end{table*}
\begin{table*}[hbt!]
\centering
\caption[Fine-structure classification of five-qubit entanglement (three- \& four-secant)]{\label{table:5qubit2} Fine-structure classification of five-qubit entanglement (three- \& four-secant).}
\begin{tabularx}{\linewidth}{XXXX}
\hline\hline
& & & \\ [-2ex]
$\sigma_{3}$ & $\tau_{3}$ & $\sigma_{4}$ & $\tau_{4}$\\ [0.5ex]
\hline
& & & \\ [-2ex]
$|(3333333333)\rangle$ & $|(3333333333)'\rangle$ & $|(4444444444)_{4}\rangle$ & $|(4444444444)'_{4}\rangle$ \\ [0.5ex]
$\vdots$ & $\vdots$ & $\vdots$ & $\vdots$ \\ [0.5ex]
$|(3\cdots)\rangle$ & $|(3\cdots)'\rangle$ & $|(4\cdots)_{4}\rangle$ & $|(4\cdots)'_{4}\rangle$ \\ [0.5ex]
$\mathfrak{p}_{i}\{|\sigma_{3}(\Sigma^{4}_{\textbf{1}})\rangle|q_2\rangle\}_{i=1}^{16}$ & & $\mathfrak{p}_{i}\{|\sigma_{4}(\Sigma^{4}_{\textbf{1}})\rangle|q_2\rangle\}_{i=1}^{40}$ & \\ [0.5ex]
\hline\hline
\end{tabularx}
\end{table*}
\begin{table*}[hbt!]
\centering
\caption[Fine-structure classification of five-qubit entanglement (five \& six-secant)]{\label{table:5qubit3} Fine-structure classification of five-qubit entanglement (five \& six-secant).}
\begin{tabularx}{\linewidth}{XXXX}
\hline\hline
& & & \\ [-2ex]
$\sigma_{5}$ & $\tau_{5}$ & $\sigma_{6}$ & $\tau_{6}$ \\ [0.5ex]
\hline
& & & \\ [-2ex]
$|(4444444444)_{5}\rangle$ & $|(4444444444)'_{5}\rangle$ & $|(4444444444)_{6}\rangle$ & $|(4444444444)'_{6}\rangle$ \\ [0.5ex]
$\vdots$ & $\vdots$ & $\vdots$ & $\vdots$ \\ [0.5ex]
$|(4\cdots)_{5}\rangle$ & $|(4\cdots)'_{5}\rangle$ & $|(4\cdots)_{6}\rangle$ & $|(4\cdots)'_{6}\rangle$ \\ [0.5ex]
\hline\hline
\end{tabularx}
\end{table*}

Considering classification of four-qubit as the core structure of five-qubit classification, all biseparable state of the form $|\sigma_{4}(\Sigma^{4}_{\textbf{1}})\rangle|q_2\rangle$ are elements of $\sigma_{4}(\Sigma^{5}_{\textbf{1}})$ ($40$ subfamilies). Here, we have a new type of biseparable state in our five-qubit classification, i.e., $\mathfrak{p}\{|{\rm{Bell}}\rangle|{\rm{GHZ}}_{3}\rangle\}$, which creates 10 subfamilies in $\sigma_{4}(\Sigma^{5}_{\textbf{1}})$ (see Table \ref{table:5qubit2}). Note that one can generate genuine entangled states from them which would be of the form $|{\rm{G}}_{5}^{2}\rangle$ ($\sim|{\rm{G}}_{5}^{3}\rangle$) in Eq. (\ref{general4}). On the limiting lines of these states, one can find the biseparable states $\mathfrak{p}\{|{\rm{Bell}}\rangle|{\rm{W}}_{3}\rangle\}$ and the genuine entangled versions as the elements of $\tau_{4}(\Sigma^{5}_{\textbf{1}})$.
As another example, using reasoning similar to Eq. (\ref{tangent-dicke}), we can draw the following results for $n\geq 5$:
\begin{align}\label{general5}\nonumber
|{\rm{W}}_{n}\rangle+|1\rangle^{\otimes{n}}+\mathfrak{p}\{|0\rangle^{\otimes{r}}|1\rangle^{\otimes{(n-r)}}\} &\in \tau_{4}(\Sigma^{n}_{\textbf{1}})\,, \\
|{\rm{D}}_{n}^{2}\rangle+\mathfrak{p}\{|1\rangle^{\otimes{s}}|0\rangle^{\otimes{(n-s)}}\} &\in \tau_{4}(\Sigma^{n}_{\textbf{1}})\,,
\end{align}
where $2\leq{r}\leq{n-2}$ and $3\leq{s}\leq n-1$.

It is worth noting that since in the five-qubit states ($\otimes^5\mathbbm{C}^{2}$), we just have flattenings of sizes $2\times{16}$ and $4\times{8}$ with maximum ranks of 2 and 4, respectively, they do not provide nontrivial equations to find the elements of five-secant. Hence, with the method of \cref{chap3}, one can find, as in Ref. \cite{OS16}, homogeneous polynomials of degrees $6$ and $16$ where the rank of the Jacobian of these two equations gives the desired information (if the point is not singular for the five-secant then it cannot stay in the four-secant, i.e., it is an element of the proper five-secant family).

To have an exhaustive classification, we denote the other elements of four-, five-, and six-secants as $|(4\cdots)_{i}\rangle\in\sigma_{i}(\Sigma^{5}_{\textbf{1}})$ and $|(4\cdots)'_{i}\rangle\in\tau_{i}(\Sigma^{5}_{\textbf{1}})$ where $i\in\{4,5,6\}$ (see Tables \ref{table:5qubit2} and \ref{table:5qubit3}). It is worth noting that in the classification of five-qubit states, all the elements in five- and six-secant families are genuinely entangled.

\subsection{$n$-qubit Dicke states}\label{subsec.4.2.5}
Regarding Theorem \ref{theo:Warink-rk} and Conjecture \ref{conj:Waring-brk}, we have the following result for the $n$-qubit Dicke states $|{\rm{D}}_{n}^l\rangle$ (with $l$ excitations). If $1\leq{l}<\lfloor\frac{n}{2}\rfloor$, the tensor rank and border rank of $|{\rm{D}}_n^l\rangle$ are equal to $n-l+1$ and $l+1$, respectively. For $l=\lfloor\frac{n}{2}\rfloor$, we have two situations; (1) if $n=\text{even}$, the tensor rank and border rank are both equal to $\frac{n}{2}+1$, and (2) if $n=\text{odd}$, the tensor rank and border rank are equal to $\lceil\frac{n}{2}\rceil+1$ and $\lfloor\frac{n}{2}\rfloor+1$, respectively. Hence, the relation between tensor rank and border rank of $n$-qubit Dicke states is as follows:
\begin{equation}
\rk\big(|{\rm{D}}_{n}^l\rangle\big)+\brk\big(|{\rm{D}}_{n}^l\rangle\big)=n+2\,.
\end{equation}
Based on this fact, we draw the following result
\begin{equation}
|{\rm{D}}_{n}^{\lfloor\frac{n}{2}\rfloor}\rangle\in
\begin{cases}
  \sigma_{\frac{n}{2}+1}(\Sigma^{n}_{\textbf{1}}) & \text{if $n=$ even}\,, \\
  \tau_{\lfloor\frac{n}{2}\rfloor+1}(\Sigma^{n}_{\textbf{1}})  & \text{if $n=$ odd}\,.
\end{cases}
\end{equation}
Therefore, for an even number of qubits, regarding the rank and border rank information the Dicke state $|{\rm{D}}_{n}^{\frac{n}{2}}\rangle$ is in the proper $(\frac{n}{2}+1)$-secant family while based on the higher derivative information it is in the osculating hyperplane that we take it in the tangent family. Geometrically, it means that this special state is in the intersection of the proper $(\frac{n}{2}+1)$-secant family and the $(\frac{n}{2}+1)$-tangent family.

%%%%%%%%%%%%%%%%%%%%%%%%%%%%%%%%%%%%%%%%%%%%%%%%%%%%%%%%%%%%%%%%%%%%%%%%%%%%%%%%%%%%%%%%%%%%%%%%%%%%%%%%%%
\chapter{Fine-Structure Classification of Tripartite Entanglement}\label{chap5}

\epigraph{``All happy families are alike, but every unhappy family is unhappy in its own way''}{\textit{Lew Tolstoy}}

In this chapter of the dissertation, which is based on the Ref. \cite{GM21}, we characterize entanglement of tripartite $\mathbbm{C}^d\otimes\mathbbm{C}^d\otimes\mathbbm{C}^d$ systems. To this aim, we employ algebraic-geometric tools that are invariants under Stochastic Local Operation and Classical Communication (SLOCC), namely $k$-secant varieties and one-multilinear ranks. Indeed, by means of them, we present a classification of pure tripartite states in terms of a finite number of families and subfamilies. At the core of it stands out a fine-structure grouping of three-qutrit entanglement.

\section{Classification method}\label{sec.5.1}
At the core of Ref. \cite{GMO20} was the identification of determinantal and secant varieties of the Segre variety as those that classify multiqubit entanglement. Here, we shall extend this approach to classify three-qudit pure states
\begin{equation}\label{3quditstate}
|\psi\rangle = \sum_{i\in\{0,\ldots,d-1\}^{3}}\mathfrak{c}_{i}|i\rangle\,.
\end{equation}
To this end, we shall be examining maps $\mathcal{M}$ that are produced by tensor flattening \cite{Landsberg} from the quantum states in Eq. (\ref{3quditstate}). Consider the tensor Hilbert space 
$\mathcal{H}=\mathcal{H}_1\otimes\mathcal{H}_2\otimes\mathcal{H}_3$, where $\mathcal{H}_i\simeq \mathbbm{C}^d$. We shall define $\ell$-partitions as ordered $\ell$-tuples $I=\left(i_1,\ldots,i_{\ell}\right)$, where $1\leq\ell\leq{2}$, and $1\leq i_1<\cdots<i_{\ell}\leq{3}$. Given an $\ell$-partition $I$, we define the complementary partition $\bar{I}$ as the $(3-\ell)$-partition such that $I\cup\bar{I}=\{1,2,3\}$. Therefore, $\mathcal{H}\simeq\mathcal{H}_I\otimes\mathcal{H}_{\bar{I}}$, where $\mathcal{H}_I=\otimes^{\ell}\mathbbm{C}^{d}$ and $\mathcal{H}_{\bar{I}}$ is the Hilbert space with complementary indices. For any state $\psi$ with vector representation $|\psi\rangle\in\mathcal{H}$, the $\ell$-partition $I$ leads to a linear operator $\mathcal{M}_{I}[\psi]$ (see Eq. \eqref{flattening}). Given a state $\psi$ and a number $1\leq\ell\leq{2}$, we call the sequence of ranks $r_{I}[\psi]={\rm{rank}}\left(\mathcal{M}_I[\psi]\right)$ for all $\ell$-partitions $I$, the $\ell$-multilinear rank (hereafter $\ell$-multirank) of the state $\psi$. Although there are six partitions, with three complementary pairs $(1) \leftrightarrow (23)$, $(2) \leftrightarrow (13$), $(3) \leftrightarrow (12)$, it is enough to check $\ell$-multiranks for partition $I$ with $\ell=1$. Note that for the complementary partition $\bar{I}$ the matrices $\mathcal{M}_{\bar{I}}[\psi]$ are just the transpose of $\mathcal{M}_{I}[\psi]$ and transposition does not change the rank of the matrix.

An important observation is that $\ell$-multirank is an invariant under SLOCC (see \cref{l-multirank-SLOCC}).

Since $\ell$-multiranks only depend on the state vector and, furthermore, because statements about rank can be rephrased as statements about minors which are determinants, it follows that a given $\ell$-multirank configuration determines a determinantal variety in the projective Hilbert space and pure multipartite states which have $\ell$-multiranks bounded by a given integer sequence make a subvariety of $\mathbbm{P}(\mathcal{H}_{n})$. Indeed, these determinantal varieties are subvarieties of secant varieties of the projective variety of fully separable states. For a tripartite quantum state, the space of fully separable states is the Segre variety \cite{Miyake03, Heydari08}: with embedding
\begin{equation}\label{Segre-3qudit}
\Sigma^{3}_{\textbf{d}-\textbf{1}}:~\mathbbm{P}^{d-1}\times\mathbbm{P}^{d-1}\times\mathbbm{P}^{d-1}\hookrightarrow\mathbbm{P}^{d^{3}-1}\,,
\end{equation}
where $\textbf{d}-\textbf{1}=(d-1,d-1,d-1)$ and $\times$ is the Cartesian product of sets. One can easily check that $\Sigma$ is the projective variety of fully separable states. Indeed, if all partial traces give pure states, the corresponding ranks are all one. Conversely, if all $\ell$-multiranks are one, the state is fully separable. 

Since $k$-secant varieties of the Segre variety are invariant under the action of the projective linear group therefore they are SLOCC invariants (see \cref{k-secant-SLOCC}). This means that SLOCC classes can be grouped naturally into entanglement families. For this reason, the dimension of the highest $k$-secant variety, that fills the projective Hilbert space of three qudits, can indicate the number of entanglement families.
The higher $k$-secant variety fills the ambient space $\mathbbm{P}(\mathbbm{C}^{d}\otimes\mathbbm{C}^{d}\otimes\mathbbm{C}^{d})$ when
\begin{equation}\label{TripartiteGenericRank}
k=\left\lceil\frac{d^3}{3d-2}\right\rceil\,,
\end{equation}
except for $d=3$ where the generic rank is five \cite{Strassen,Lickteig}. This $k$ indicates the number of entanglement families which remains finite with the dimension of parties.

Since $\sigma_{k-1}\subset\sigma_k$ we need to distinguish the elements of each $k$-secant family by defining the proper secant. If it exists, the proper $k$-secant [the states that belongs to $k$-secant but not to $(k-1)$-secant], i.e., the set $\sigma_{k}(\Sigma^{3}_{\textbf{d}-\textbf{1}})\setminus\sigma_{k-1}(\Sigma^{3}_{\textbf{d}-\textbf{1}})$, is the union of the $k$-secant hyperplanes $\mathcal{S}_{k}\subset\sigma_{k}(\Sigma^{3}_{\textbf{d}-\textbf{1}})$ represented by
\begin{equation}\label{S-plane}
\mathcal{S}_{k}=\sum_{i=1}^{k}\lambda_{i}p_{i}\,,
\end{equation}
with $\{\lambda_{i}\}_{i=1}^{k}\neq{0}$ and each $p_{i}$ is an independent point in Segre variety.

It is worth noting that in addition to the standard flattenings, as the standard tensor contraction shown in Eq. (\ref{flattening}), for tripartite systems $\mathbbm{C}^{2m+1}\otimes\mathbbm{C}^{2m+1}\otimes\mathbbm{C}^{2m+1}$, we have another flattening map (it often called a Koszul flattening, or more generally one of the Young flattenings) as follows:
\begin{equation}\label{flattening2}
\Lambda^{m}\mathcal{H}_1\otimes\mathcal{H}^{\vee}_2\to\Lambda^{m+1}\mathcal{H}_1\otimes\mathcal{H}_3\,,
\end{equation}
where $\Lambda^{m}$ denotes the $m^{\text{th}}$ exterior power. Hence, the size $(k+1)\binom{2m}{m}$ minors of Eq. \eqref{flattening2} provides equations for $k$-secant varieties up to $k=(2m+1)^2/(m+1)$ (see Refs. \cite{Landsberg, LO13}).

Therefore, similar to the spirit of Ref. \cite{GMO20}, we use $k$-secant varieties and one-multiranks as the SLOCC invariants to bunch entanglement orbits (classes) of tripartite 
$\otimes^3\mathbbm{C}^d$ systems into a finite number of families and subfamilies. Hence, the classification algorithm can be summarized as:
\begin{itemize}
\item[{\bf (i)}] find families by identifying $k$-secant varieties $\Sigma^{3}_{\textbf{d}-\textbf{1}}, \sigma_{2}(\Sigma^{3}_{\textbf{d}-\textbf{1}}), \ldots, \sigma_{k}(\Sigma^{3}_{\textbf{d}-\textbf{1}})$;
\item[{\bf (ii)}] split families to secants and tangents by identifying $\tau_{2}(\Sigma^{3}_{\textbf{d}-\textbf{1}}), \ldots, \tau_{k}(\Sigma^{3}_{\textbf{d}-\textbf{1}})$;
\item[{\bf (iii)}] find subfamilies by identifying one-multiranks.
\end{itemize}

\section{Fine-structure classification of two-qutrit entanglement}\label{sec.5.2}

Although two-qutrit states do not belong to tripartite systems we provide a full entanglement classification for two-qutrit states which can be used as the core for the entanglement classification of three-qutrit states.

For two-qutrit states, the Segre four-fold $\Sigma^{2}_{\textbf{2}}\subset\mathbbm{P}^{8}$, i.e., the set of fully separable states of two qutrits, consists of general points given by affine coordinates $p=[1:a:b:c:ac:bc:d:ad:bd]$ where $a$, $b$, $c$, and $d$ are complex parameters and one or more parameters can tend to infinity.

Moving on to the proper two-secant variety, i.e., the union of the secant planes $\mathcal{S}_{2}=\lambda_1p_1+\lambda_2p_2$, we have generic elements given by the following coordinates:
\begin{align}\nonumber
[& \lambda_1+\lambda_2:\lambda_1a_1+\lambda_2a_2:\lambda_1b_1+\lambda_2b_2:\lambda_1c_1+\lambda_2c_2: \\ \nonumber
& \lambda_1a_1c_1+\lambda_2a_2c_2:\lambda_1b_1c_1+\lambda_2b_2c_2:\lambda_1d_1+\lambda_2d_2: \\
& \lambda_1a_1d_1+\lambda_2a_2d_2:\lambda_1b_1d_1+\lambda_2b_2d_2]\,.
\end{align}
It is easy to see that $[1:0:0:0:1:0:0:0:0]$ is an elements of $\sigma_2(\Sigma^{2}_{\textbf{2}})$. Actually, the following general state:
\begin{equation}\label{GHZ2-1}
|{\rm{GHZ}}_{2}^{(1)}\rangle=|\alpha\alpha\rangle+|\beta\beta\rangle\,,
\end{equation}
where $\alpha\neq\beta\in\{0,1,2\}$, can represent all elements of proper two-secant family with one-multiranks equal to $(22)$.

Obviously, one can rewrite the secant planes as $\mathcal{S}_{2}=p_1+\mu(p_2-p_1)$ where 
$\lambda_1=1-\mu$ and $\lambda_2=\mu$. Now, we consider the situation where second point tends to the first one, i.e., $p_2\to{p_1}$, by taking $p_2(\epsilon)=[1:a_1+\epsilon:b_1+\epsilon:c_1+\epsilon:(a_1+\epsilon)(c_1+\epsilon):(b_1+\epsilon)(c_1+\epsilon):d_1+\epsilon:(a_1+\epsilon)(d_1+\epsilon):(b_1+\epsilon)(d_1+\epsilon)]$. This will give us the coordinates of the elements in the proper two-tangent variety. However, for two-qutrit states, we have the special situation that all points on the tangent, i.e.,
\begin{align}\nonumber
p'=&\lim_{\epsilon\rightarrow 0}\Big(p_1+\frac{\mu}{\epsilon}\big(p_2(\epsilon)-p_1\big)\Big)=[1:a_1+\mu:b_1+\mu: \\ \nonumber
& c_1+\mu:a_1c_1+\mu(a_1+c_1):b_1c_1+\mu(b_1+c_1): \\
& d_1+\mu:a_1d_1+\mu(a_1+d_1):b_1d_1+\mu(b_1+d_1)]\,,
\end{align}
lie also on the proper two-secant since
\begin{align}\nonumber
p'=[& 1:a_1+\mu:b_1+\mu:c_1+\mu:(a_1+\mu)(c_1+\mu): \\ \nonumber
& (b_1+\mu)(c_1+\mu):d_1+\mu:(a_1+\mu)(d_1+\mu): \\
& (b_1+\mu)(d_1+\mu)]-\mu^2[0:0:0:0:1:1:0:1:1]\,,
\end{align}
which explicitly comes from joining of two independent points in the Segre variety, i.e., superposition of two fully separable states. It means that the proper two-tangent is equal to the proper two-secant.

The proper three-secant, i.e., the set $\sigma_3(\Sigma^{2}_{\textbf{2}})/\sigma_2(\Sigma^{2}_{\textbf{2}})$,  is the union of the secant hyperplanes $\mathcal{S}_3$ represented by Eq. (\ref{S-plane}). Indeed, joining of three independent points in the Segre variety gives rise to elements of three-secant family. For instance,
\begin{equation}\label{GHZ2-2}
|{\rm{GHZ}}_{2}^{(2)}\rangle=|00\rangle+|11\rangle+|22\rangle\,,
\end{equation}
is an element of $\sigma_3(\Sigma^{2}_{\textbf{2}})$ with one-multirank equals to $(33)$. In a similar way to two-secant variety, one can see that the proper three-tangent is equal to the proper three-secant.

\begin{table}[t]
\centering
\caption[Fine-structure classification of two-qutrit entanglement]{\label{table:5.2} Fine-structure classification of two-qutrit entanglement.}
\vspace{8pt}
\begin{tabularx}{\linewidth}{XXX}
\hline\hline
& & \\ [-2ex]
$\Sigma^{2}_{\textbf{2}}$ & $\sigma_{2}$ & $\sigma_{3}$ \\ [0.5ex]
\hline
& & \\ [-2ex]
$|\rm{Sep}\rangle$ & $|{\rm{GHZ}}_{2}^{(1)}\rangle$ & $|{\rm{GHZ}}_{2}^{(2)}\rangle$ \\ [0.5ex]
\hline\hline
\end{tabularx}
\end{table}

Briefly, this classification provide us three secant families that coincide with the three SLOCC classes, namely, separable and two inequivalently entangled states that come from superposition of two and three fully separable states (Table \ref{table:5.2}).

Already from this classification we can draw a general conclusion. That is, for $n\geq{2}$ qutrits we have
\begin{align}\label{general-qutrit}
&\mathfrak{p}\{|{\rm{GHZ}}_{2}^{(1)}\rangle|q_3\rangle^{\otimes{(n-2)}}\} \in \sigma_2(\Sigma^{n}_{\textbf{2}})\,, \\
&\mathfrak{p}\{|{\rm{GHZ}}_{2}^{(2)}\rangle|q_3\rangle^{\otimes{(n-2)}}\} \in \sigma_3(\Sigma^{n}_{\textbf{2}})\,,
\end{align}
where $\mathfrak{p}\{\cdot\}$ denotes all possible permutations of subsystems and $|q_3\rangle$ is a general one-qutrit state.

\section{Fine-structure classification of three-qutrit entanglement}\label{sec.5.3}
For the Segre variety $\Sigma^{3}_{\textbf{2}}\subset\mathbbm{P}^{26}$, we shall use homogeneous coordinates associated with the induced basis $\left\{|000\rangle,|001\rangle,\ldots,|222\rangle\right\}$. That is to say, a point $p\in\mathbbm{P}^{26}$ is written in homogeneous coordinates $\left[\mathfrak{c}_0:\mathfrak{c}_1:\cdots:\mathfrak{c}_{26}\right]$ whenever $p$ is the projective class of the three-qutrit state of Eq. (\ref{3quditstate}). Then, the Segre surface $\Sigma^{3}_{\textbf{2}}$ is the projective variety with points given by affine coordinates $[1:a:b:c:ac:bc:d:ad:bd:e:ae:be:ce:ace:bce:de:ade:
bde:f:af:bf:cf:acf:bcf:df:adf:bdf]$, where $a$, $b$, $c$, $d$, $e$, and $f$ are complex parameters. This expression must be properly understood, in that the limits of $a$ and/or $b$ and/or $c$ and/or $d$ and/or $e$ and/or $f$ going to infinity, must be included. For instance, also points of the form $[0:1:0:0:c:0:0:d:0:0:e:0:0:ce:0:0:de:0:0:f:0:0:cf:0:0:df:0]$, which corresponds to $a\rightarrow\infty$, are part of $\Sigma^{3}_{\textbf{2}}$.

Thanks to Ref. \cite{CK11}, all one-multiranks can be found for states of any number of qudits. For three-qutrit states we have
\begin{equation}
r_i\leq\prod_{j\neq{i}}r_j\quad\forall~i,j\in\{1,2,3\}\,,
\end{equation}
where $0\leq{r_i}\leq{3}$ stands for the rank of the corresponding flattening. Therefore, all the 
one-multiranks of three-qutrit states are: (111) which indicates a fully separable states; (122) and (133) and their permutations, which indicate biseparable states; (222), all permutations of (223), all permutations of (233), and (333), which indicate genuinely entangled states.

Standard flattenings are not enough to construct higher secant families in $\mathbbm{P}^{26}$. So based on Eq. \eqref{flattening2} we have the following flattening:
\begin{equation}\label{flat2}
\mathcal{F}:~\mathcal{H}_1\otimes\mathcal{H}^{\vee}_2\to\Lambda^2\mathcal{H}_1\otimes\mathcal{H}_3\,,
\end{equation}
that can be considered as the composition of
$$
\mathcal{H}_1\otimes\mathcal{H}^{\vee}_2\xrightarrow[]{{\rm{Id}}_{\mathcal{H}_1}\otimes\mathcal{M}_2}\mathcal{H}_1\otimes\mathcal{H}_1\otimes\mathcal{H}_3\,,
$$
and
$$
\mathcal{H}_1\otimes\mathcal{H}_1\otimes\mathcal{H}_3\xrightarrow[]{P_\wedge\otimes{\rm{Id}}_{\mathcal{H}_3}}\Lambda^2\mathcal{H}_1\otimes\mathcal{H}_3\,,
$$
where $\mathcal{M}_2:\mathcal{H}^{\vee}_{2}\to\mathcal{H}_1\otimes\mathcal{H}_3$ is the standard flattening and $P_\wedge:\mathcal{H}_1\otimes\mathcal{H}_1\to\Lambda^2\mathcal{H}_1$ is the projection onto the skew-symmetric component \cite{Ottaviani09}. Based on the map in Eq. \eqref{flat2}, we have the following $9\times{9}$ matrix (known as the Ottaviani-Strassen matrix) for the general three-qutrit state of Eq. (\ref{3quditstate}),
\begin{equation}\label{strassen}
\mathcal{F}=
\begin{pmatrix}
 0 & 0 & 0 & \mathfrak{c}_{0} & \mathfrak{c}_{1} & \mathfrak{c}_{2} & -\mathfrak{c}_{9} & -\mathfrak{c}_{10} & -\mathfrak{c}_{11} \\
 0 & 0 & 0 & \mathfrak{c}_{3} & \mathfrak{c}_{4} & \mathfrak{c}_{5} & -\mathfrak{c}_{12} & -\mathfrak{c}_{13} & -\mathfrak{c}_{14} \\
 0 & 0 & 0 & \mathfrak{c}_{6} & \mathfrak{c}_{7} & \mathfrak{c}_{8} & -\mathfrak{c}_{15} & -\mathfrak{c}_{16} & -\mathfrak{c}_{17} \\
-\mathfrak{c}_{0} & -\mathfrak{c}_{1} & -\mathfrak{c}_{2}  & 0 & 0 & 0 & \mathfrak{c}_{18} & \mathfrak{c}_{19} & \mathfrak{c}_{20} \\
-\mathfrak{c}_{3} & -\mathfrak{c}_{4} & -\mathfrak{c}_{5} & 0 & 0 & 0 & \mathfrak{c}_{21} & \mathfrak{c}_{22} & \mathfrak{c}_{23} \\
-\mathfrak{c}_{6} & -\mathfrak{c}_{7} & -\mathfrak{c}_{8} & 0 & 0 & 0 & \mathfrak{c}_{24} & \mathfrak{c}_{25} & \mathfrak{c}_{26} \\
\mathfrak{c}_{9} & \mathfrak{c}_{10} & \mathfrak{c}_{11} & -\mathfrak{c}_{18} & -\mathfrak{c}_{19} & -\mathfrak{c}_{20} & 0 & 0 & 0 \\
\mathfrak{c}_{12} & \mathfrak{c}_{13} & \mathfrak{c}_{14} & -\mathfrak{c}_{21} & -\mathfrak{c}_{22} & -\mathfrak{c}_{23} & 0 & 0 & 0 \\
\mathfrak{c}_{15} & \mathfrak{c}_{16} & \mathfrak{c}_{17}  & -\mathfrak{c}_{24} & -\mathfrak{c}_{25} & -\mathfrak{c}_{26} & 0 & 0 & 0
\end{pmatrix}\,.
\end{equation}
Actually, the determinant of matrix $\mathcal{F}$, which is an ${\rm{SL}}(3,\mathbbm{C})^{\times{3}}$-invariant of degree nine, indicates the four-secant hyperplane. It means that if $\mathcal{F}$ is full rank for a given state, i.e., rank of the matrix $\mathcal{F}$ is nine, that state is in five-secant family. Indeed the quantity
\begin{equation}
k=\left\lceil\frac{\text{rank}\mathcal{F}}{2}\right\rceil\,,
\end{equation}
indicates the $k$-secant family to which the state belongs.

Let us now move on to the proper two-secant variety, i.e., the set $\sigma_{2}(\Sigma^{3}_{\textbf{2}})\setminus\Sigma^{3}_{\textbf{2}}$, which is the union of the secant planes $\mathcal{S}_{2}$ represented by Eq. (\ref{S-plane}). The generic element in the proper two-secant comes from joining two independent points (superposition of two fully separable states), i.e., $\lambda_1 p_1+\lambda_2 p_2$ with $\lambda_1,\lambda_2\neq{0}$. For instance, it is easy to see that
\begin{equation}\label{GHZ3-1}
|{\rm{GHZ}}_3^{(1)}\rangle=|\alpha\alpha\alpha\rangle+|\beta\beta\beta\rangle\,,
\end{equation}
where $\alpha\neq\beta\in\{0,1,2\}$ is an element of $\sigma_{2}(\Sigma^{3}_{\textbf{2}})$ with one-multirank equal to $(222)$. 

Also, the classification of two-qutrit states provides us the following biseparable states with one-multirank equal to $(122)$, up to a permutation, as other elements of 
$\sigma_{2}(\Sigma^{3}_{\textbf{2}})$:
\begin{equation}
|{\rm{B}}_{i}^{(1)}\rangle_{i=1}^{3}=\mathfrak{p}\{|{\rm{GHZ}}_{2}^{(1)}\rangle|q_3\rangle\}\,,
\end{equation}
where $\mathfrak{p}\{\cdot\}$ denotes all possible permutations of subsystems, $|q_3\rangle$ is a generic one-qutrit state, and similarly to Eq. \eqref{GHZ3-1} $|{\rm{GHZ}}_{2}^{(1)}\rangle=|\alpha\alpha\rangle+|\beta\beta\rangle$. Note that this is the situation in which one or more parameters on the proper two-secant variety tend to infinity.

Now, considering the tangent to the point $p_1=[1:0:\cdots:0]$ (equivalent to all points on $\Sigma^{3}_{\textbf{2}}$ by an SLOCC), we have the affine coordinate $[1:\mu:\mu:\mu:0:0:\mu:0:0:\mu:0:0:0:0:0:0:0:0:\mu:0:0:0:0:0:0:0:0]$. Letting $\mu\to\infty$, we have the state $|00\upsilon \rangle+|0\upsilon0\rangle+|\upsilon00\rangle$ with $|\upsilon\rangle=|1\rangle+|2\rangle$ which is obviously a three-qutrit $\rm{W}$-type state. Bearing in mind this result, we can derive the following state as an element of $\tau_{2}(\Sigma^{3}_{\textbf{2}})$ with one-multirank equal to $(222)$:
\begin{equation}\label{W3}
|{\rm{W}}_3\rangle=|{\rm{D}}_{3}^{\mathfrak{p}(2,1,0)}\rangle=\sum_{i}\mathfrak{p}_{i}\{|\alpha\alpha\beta\rangle\}\,,
\end{equation}
where $\alpha\neq\beta\in\{0,1,2\}$ and
\begin{equation}\label{Dicke-3-qudit}
|{\rm{D}}_{3}^{\jmath}\rangle=\sqrt{\frac{\prod_{i}{j_i!}}{3!}}\sum_{\pi\in\mathfrak{S}_{3}}\pi\{|0\rangle^{\otimes{j_1}}\otimes\cdots\otimes|d-1\rangle^{\otimes{j_d}}\}\,,
\end{equation}
are the so-called $3$-qudit Dicke states (with excitations shown as $\jmath=(j_1,\ldots,j_d)$ where $j_1+\cdots+j_d=3$). Also, we can explicitly see that $|{\rm{W}}_3\rangle$ can asymptotically be obtained from $|{\rm{GHZ}}_3^{(1)}\rangle$ as follows:
\begin{equation}
|{\rm{W}}_3\rangle=\lim_{\varepsilon\to{0}}\frac{1}{\varepsilon}\left((|\alpha\rangle+\varepsilon|\beta\rangle)^{\otimes{3}}-|\alpha\alpha\alpha\rangle\right)\,.
\end{equation}

The proper three-secant, i.e., the set $\sigma_{3}(\Sigma^{3}_{\textbf{2}})\setminus\sigma_{2}(\Sigma^{3}_{\textbf{2}})$, is the union of the secant hyperplanes $\mathcal{S}_{3}$ represented by Eq. (\ref{S-plane}). So, joining three independent points in the Segre variety (superposition of three fully separable states) that satisfies Eq. (\ref{S-plane}), gives rise to elements of proper three-secant family. For instance,
\begin{equation}\label{GHZ3-2}
|{\rm{GHZ}}_{3}^{(2)}\rangle=|000\rangle+|111\rangle+|222\rangle\,, 
\end{equation}
is an element of $\sigma_{3}(\Sigma^{3}_{\textbf{2}})$ with one-multirank equal to $(333)$. In the proper three-secant we have other elements with different one-multiranks. For instance,
\begin{equation}\label{(332)}
|{\rm{GHZ}}_3^{(1)}\rangle+\mathfrak{p}\{|\alpha\gamma\gamma\rangle\}\,,
\end{equation}
and
\begin{equation}\label{(322)}
|{\rm{GHZ}}_3^{(1)}\rangle+\mathfrak{p}\{|\alpha\beta\gamma\rangle\}\,,
\end{equation}
where $\alpha\neq\beta\neq\gamma\in\{0,1,2\}$ are all elements of $\sigma_{3}(\Sigma^{3}_{\textbf{2}})$ with one-multirank equal to $(233)$ and $(223)$, up to a permutation, respectively. The states in Eqs. \eqref{(332)} and \eqref{(322)} are the joining of a $|{\rm{GHZ}}_3^{(1)}\rangle$ state and an independent point in the Segre variety. One can write these elements of proper three-secant in terms of joining biseparable states $|{\rm{B}}^{(1)}\rangle$ and an independent point of Segre variety as $|\alpha\rangle(|\alpha\alpha\rangle+|\beta\gamma\rangle)+|\beta\beta\beta\rangle$ and $|\alpha\rangle(|\alpha\alpha\rangle+|\gamma\gamma\rangle)+|\beta\beta\beta\rangle$, respectively.

From the classification of two-qutrit states, we have biseparable states with one-multirank equal to $(133)$, up to a permutation, as other elements of 
$\sigma_{3}(\Sigma^{3}_{\textbf{2}})$:
\begin{equation}
|{\rm{B}}_{i}^{(2)}\rangle_{i=1}^{3}=\mathfrak{p}\{|{\rm{GHZ}}_{2}^{(2)}\rangle|q_3\rangle\}\,.
\end{equation}

To construct the closure of the three-secant variety, i.e., the three-tangent, one can use different limit types at $p_{1}=[1:0:\cdots:0]$.  For instance, we can consider the first limit type which is the addition of Eq. \eqref{W3} with an extra point from the Segre variety (see Ref. \cite{GMO20}). Then, we get
\begin{equation}\label{X3}
|{\rm{X}}_3\rangle=|{\rm{W}}_3\rangle+|\gamma\gamma\gamma\rangle\,,
\end{equation}
where $\alpha\neq\beta\neq\gamma\in\{0,1,2\}$ as an element of $\tau_{3}(\Sigma^{3}_{\textbf{2}})$ with one-multirank equal to $(333)$. 
Indeed, based on the inclusion $\tau_3\subset\sigma_3$, we can conclude that $|{\rm{X}}_3\rangle$ can asymptotically be produced by $|{\rm{GHZ}}_{3}^{(2)}\rangle$. This can be shown by considering the following points:
\begin{equation}
p(\varepsilon)=\frac{1}{\varepsilon}\left((|0\rangle+\varepsilon|1\rangle+\varepsilon|2\rangle)^{\otimes{3}}+\varepsilon|222\rangle-|000\rangle\right)\,.
\end{equation}
For all $\varepsilon\neq{0}$ they correspond to ${\rm{GHZ}}^{(2)}$-type states and indicate a smooth curve in $\sigma_{3}(\Sigma^{3}_{\textbf{2}})$. When $\varepsilon\to{0}$ we have
\begin{equation}
\lim_{\varepsilon\to{0}}p(\varepsilon)=|00\upsilon \rangle+|0\upsilon0\rangle+|\upsilon00\rangle+|222\rangle\,,
\end{equation}
that is equivalent to the state in Eq. \eqref{X3}.

In a similar way, we can also derive as limit process the following states from Eq. (\ref{(332)}), in order to get other elements of $\tau_{3}(\Sigma^{3}_{\textbf{2}})$ with one-multiranks equal to a permutation of $(233)$
\begin{equation}
|{\rm{W}}_3\rangle+\mathfrak{p}\{|\alpha\gamma\gamma\rangle\}\,,
\end{equation}
where $\alpha\neq\beta\neq\gamma\in\{0,1,2\}$. Additionally, the states
\begin{equation}
|\alpha\rangle_i|{\rm{GHZ}}_{2}^{(2)}\rangle_{jk}+|\beta\rangle_i|q_3\rangle^{\otimes2}_{jk}\,,
\end{equation}
where $\{i,j,k\}=\{1,2,3\}$, $\langle{\rm{GHZ}}_{2}^{(2)}|q_3\rangle^{\otimes2}=0$, and $\langle\alpha|\beta\rangle=0$, have tensor rank and border rank equal to three and four, respectively. So they can be as well considered as elements of $\tau_{3}(\Sigma^{3}_{\textbf{2}})$ with one-multiranks equal to a permutation of $(233)$.

Note that in the three-tangent family, we do not have any element with one-multirank equals to $(223)$ and its permutations. In fact, if the one-multirank of a given sate is equal to $(223)$, then the state lives in a smaller tensor product space, here is $\mathbbm{C}^2\otimes\mathbbm{C}^2\otimes\mathbbm{C}^3$, and its border rank is bounded by three, and it is a balanced case \cite{AOP09,BL13}. Let us consider two cases which have one-multiranks equal to a permutations of $(223)$: (1) Concerning Eq. (\ref{(322)}), one can consider the states $|{\rm{W}}_3\rangle+|\alpha\beta\gamma\rangle$. It is obvious that we can write these states as $|\alpha\alpha\beta\rangle+|\alpha\beta\rangle(|\alpha\rangle+|\gamma\rangle)+|\beta\alpha\alpha\rangle$ which clearly have tensor rank and border rank equal to three. (2) With a better choice of basis one can also consider the sates $|{\rm{W}}_3\rangle+|\beta\beta\gamma\rangle$. These states can be easily written as 
$|\alpha\alpha\rangle(-|\alpha\rangle+|\beta\rangle)+(|\alpha\rangle+|\beta\rangle)(|\alpha\rangle+|\beta\rangle)|\alpha\rangle+|\beta\beta\rangle(-|\alpha\rangle+|\gamma\rangle)$ which clearly have tensor rank and border rank equal to three.

The proper four-secant, i.e., the set $\sigma_{4}(\Sigma^{3}_{\textbf{2}})\backslash\sigma_{3}(\Sigma^{3}_{\textbf{2}})$, is the union of the secant hyperplanes $\mathcal{S}_{4}$ represented by Eq. (\ref{S-plane}). For instance, the following states which explicitly come from joining of four independent points in the Segre variety are elements of $\sigma_{4}(\Sigma^{3}_{\textbf{2}})$ with one-multirank equal to $(333)$
\begin{align}\nonumber
& |000\rangle+|011\rangle+|122\rangle+|221\rangle\,, \\
& |000\rangle+|111\rangle+|122\rangle+|221\rangle\,,
\end{align}
which can be considered as adding two different types of biseparable states $|{\rm{B}}^{(1)}\rangle$, or adding two different types of $|{\rm{GHZ}}_3^{(1)}\rangle$ states, or adding a biseparable state $|{\rm{B}}^{(1)}\rangle$ and a $|{\rm{GHZ}}_3^{(1)}\rangle$ state. Other examples of the proper four-secant family with one-multirank equals to $(333)$ can be considered as joining an independent point to the state in Eq. (\ref{GHZ3-2}) as follows:
\begin{equation}\label{GHZ-G}
|{\rm{GHZ}}_{3}^{(2)}\rangle+\mathfrak{p}\{|012\rangle\}\,,
\end{equation}
and
\begin{equation}\label{G3}
|{\rm{G}}_3\rangle=|{\rm{GHZ}}_{3}^{(2)}\rangle+|\omega_1\omega_1\omega_1\rangle\,,
\end{equation}
where $|\omega_1\rangle=|0\rangle+|1\rangle+|2\rangle$.

Using Eq. (\ref{Dicke-3-qudit}) one can see that the higher symmetric entangled state
\begin{equation}\label{D(111)}
|{\rm{D}}_{3}^{(1,1,1)}\rangle=|012\rangle+|021\rangle+|102\rangle+|120\rangle+|201\rangle+|210\rangle\,,
\end{equation}
is also an element of $\sigma_{4}(\Sigma^{3}_{\textbf{2}})$ with one-multirank equal to $(333)$. This is because we can relate the above-mentioned symmetric state to the monomial $xyz$ (actually all symmetric states can be related to some homogeneous polynomials since the variables in polynomials are invariant under permutation and each variable can be related to a basis) and we can decompose this monomial as follows
\begin{equation}\label{xyz}
xyz=\frac{1}{24}\big((x+y+z)^3+(-x-y+z)^3+(-x+y-z)^3+(x-y-z)^3\big)\,.
\end{equation}
So, using the Dirac notation, we can rewrite the state $|{\rm{D}}_{3}^{(1,1,1)}\rangle$ in Eq. (\ref{D(111)}) based on the above decomposition as follows:
\begin{equation}
|{\rm{D}}_{3}^{(1,1,1)}\rangle=\frac{1}{4}\big(|\omega_1\rangle^{\otimes 3}+|\omega_2\rangle^{\otimes 3}+|\omega_3\rangle^{\otimes 3}+|\omega_4\rangle^{\otimes 3}\big)\,,
\end{equation}
where $|\omega_2\rangle=-|0\rangle-|1\rangle+|2\rangle$, $|\omega_3\rangle=-|0\rangle+|1\rangle-|2\rangle$, and $|\omega_4\rangle=|0\rangle-|1\rangle-|2\rangle$. So the tensor rank and border rank of this state are at most 4. Moreover, using the Eq. \eqref{strassen}, one can see that the rank and border rank of this state are at least 4. Hence, both the tensor rank and the border rank of $|{\rm{D}}_{3}^{(1,1,1)}\rangle$ are four.

In the four-secant family, we do not have any element with one-multirank equals to $(233)$ and its permutations. Indeed, if one-multirank of a given sate is equal to $(233)$ then the state lives in $\mathbbm{C}^2\otimes\mathbbm{C}^3\otimes\mathbbm{C}^3$, and its border rank is bounded by three, but tensor rank can be three or four \cite{BL13}.

Concerning the closure of the four-secant variety, i.e., the four-tangent, we use the results of Ref. \cite{Bernardi}. The following state which has tensor rank and border rank equal to five and four, respectively,
\begin{equation}
|010\rangle+|100\rangle+|112\rangle+|201\rangle+|222\rangle\,,
\end{equation}
is an element of $\tau_{4}(\Sigma^{3}_{\textbf{2}})$ with one-multirank equal to $(333)$.

\begin{table*}[t]
\centering
\caption[Fine-structure classification of three-qutrit entanglement]{\label{table:5.1} Fine-structure classification of three-qutrit entanglement. Each column corresponds to a family ($\tau_k$ is the closure of $\sigma_k$ family that is split based on tensor rank). Within a column, each row corresponds to a subfamily. A subscript $k$ is used to indicate members appearing in different $k$-secant families while having the same one-multirank. A prime symbol is used for states in the $k$-tangent variety that appear (with the same one-multirank) in the boundary of the $k$-secant variety.}
\begin{tabularx}{\linewidth}{XXXXXXXX}
\hline\hline
& & & & & & & \\ [-2ex]
$\Sigma^{3}_{\textbf{2}}$ & $\sigma_{2}$ & $\tau_{2}$ & $\sigma_{3}$ & $\tau_{3}$ & $\sigma_{4}$ & $\tau_{4}$ & $\sigma_{5}$\\ [0.5ex]
\hline
& & & & & & & \\ [-2ex]
$|{\rm{Sep}}\rangle$ & $|{\rm{GHZ}}_3^{(1)}\rangle$ & $|{\rm{W}}_3\rangle$ & $|{\rm{GHZ}}_3^{(2)}\rangle$ & $|(333)'_{3}\rangle$ & $|(333)_{4}\rangle$ & $|(333)'_{4}\rangle$ & $|(333)_{5}\rangle$ \\ [0.5ex]
& $|{\rm{B}}_{i}^{(1)}\rangle_{i=1}^{3}$ & & $|(332)\rangle$ & $|(332)'\rangle$ & & &  \\ [0.5ex]
& & & $|(323)\rangle$ & $|(323)'\rangle$ & & & \\ [0.5ex]
& & & $|(233)\rangle$ & $|(233)'\rangle$ & & &  \\ [0.5ex]
& & & $|(322)\rangle$ & & & &  \\ [0.5ex]
& & & $|(232)\rangle$ & & & &  \\ [0.5ex]
& & & $|(223)\rangle$ & & & &  \\ [0.5ex]
& & & $|{\rm{B}}_{i}^{(2)}\rangle_{i=1}^{3}$ & & & &  \\ [0.5ex]
\hline\hline
\end{tabularx}
\end{table*}

Although any general state of three-qutrit system that has a non-zero determinant of matrix $\mathcal{F}$ in Eq. (\ref{strassen}) can be considered as an element of proper five-secant family, the following state which explicitly comes from joining of five independent points in the Segre variety and obeys Eq. (\ref{S-plane}), is an element of $\sigma_{5}(\Sigma^{3}_{\textbf{2}})$ with one-multirak equal to $(333)$
\begin{equation}
|{\rm{G}}_{3}\rangle+t~(|1\rangle+|2\rangle)\otimes(|0\rangle+|2\rangle)\otimes(|0\rangle+|1\rangle)\,,
\end{equation}
where $t\in\mathbbm{C}\setminus\{0,1\}$. The determinant of the matrix $\mathcal{F}$ for this state is $2t(1-t)$. Note that for $t=1$ the border rank is four and the tensor rank is also four, so the state belongs to the four-secant family in this case.

Since the highest tensor rank for a three-qutrit state is five \cite{BH13}, we do not need to construct the Zariski closure of the five-secant family.

It is worth noting that in the classification of three-qutrit states, all the elements in the proper four- and five-secant families are genuinely entangled.

To have an exhaustive classification, we have written each subfamily of three-, four-, and five-secant families in terms of their one-multiranks in Table \ref{table:5.1}. Also, we have used a prime for the states in tangent to discriminate secant and tangent families where they have same one-multiranks. In addition, we have put a subscript $k$ to indicate members appearing in different $k$-secant families with the same one-multirank.

\begin{figure}[th]
\center{\includegraphics[width=10cm]{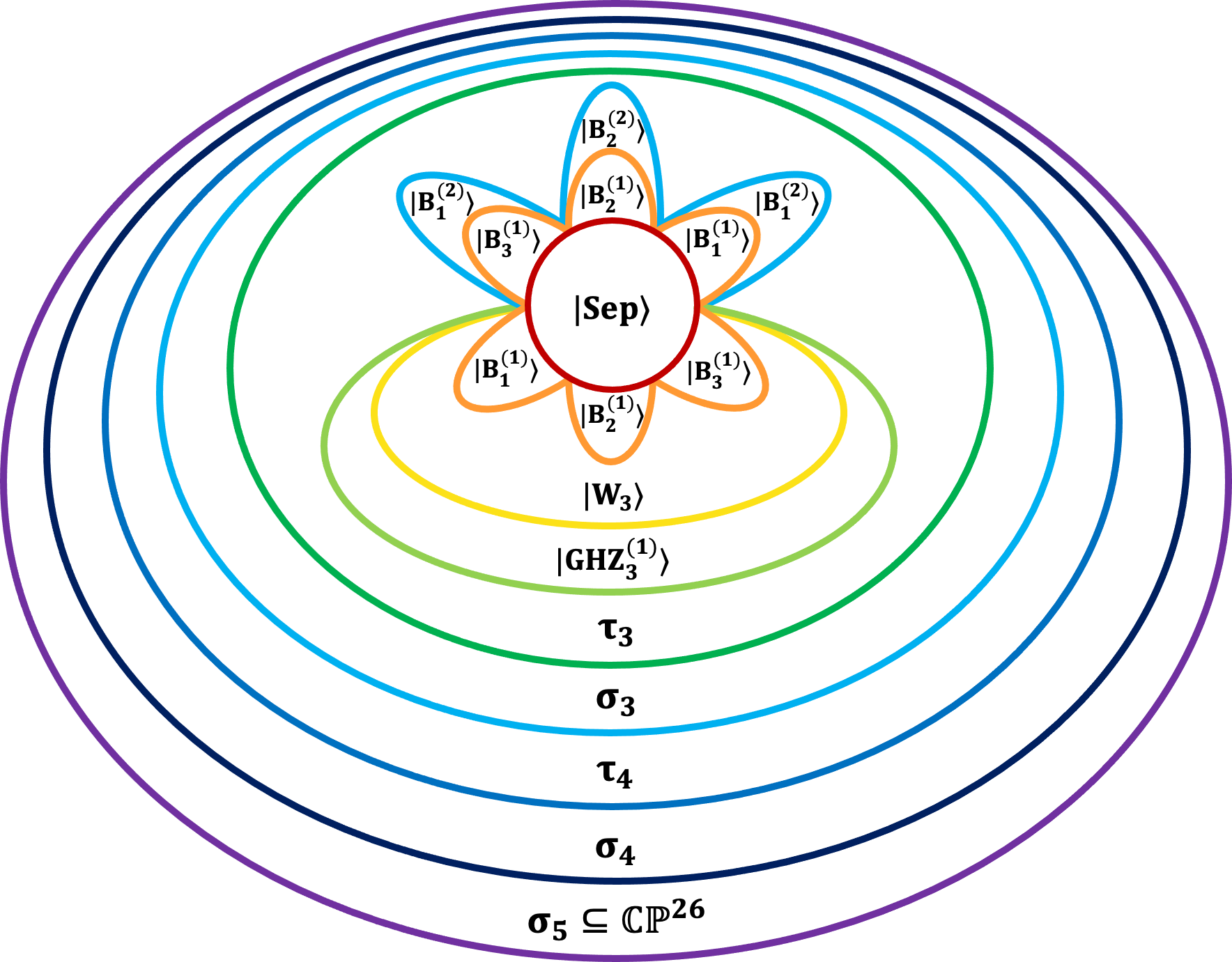}}
\caption[Petal-like classification of three-qutrit entanglement]{\label{fig:5.1} Petal-like classification of SLOCC orbits of three-qutrit states. By noninvertible SLOCC one can go from the outer classes to the inner ones  (from $\sigma_k$ to $\tau_k$ also in an approximate way), thus generating the entanglement hierarchy. Note that states $|{\rm{B}}_i^{(1)}\rangle$ appear with a double petal because to emphasize that they can be obtained starting from either $|{\rm{W}}_3\rangle$ states or $|{\rm{B}}_i^{(2)}\rangle$ states. In contrast, $|{\rm{B}}_i^{(2)}\rangle$ states cannot be obtained from $|{\rm{W}}_3\rangle$ states.}
\end{figure}

In summary, this classification provides us five secant families (eight secant / tangent families), and 23 subfamilies (Table \ref{table:5.1}). These classes are pictorially represented in Fig. \ref{fig:5.1}. {Obviously, a finer classification can be obtained by utilizing an extra SLOCC invariant (see \Cref{subsec.5.3.1}).}

\subsection{Finer classification of three-qutrit entanglement}\label{subsec.5.3.1}

Since the Schmidt measure can be defined as the logarithm of the tensor rank of a quantum state, one can conclude that tensor rank is itself an SLOCC invariant. Therefore, we can employ it to improve the classification algorithm by eventually splitting subfamilies into sub-subfamilies with the same tensor rank. Although determining the tensor rank of a given quantum state is NP hard \cite{Haastad90}, it 
could also results a useful tool for studying SLOCC interconversions among specific quantum states.

In Ref. \cite{BHO14}, a classification of three-qutrit entanglement is presented in five families according to the description of fundamental invariants provided in Refs. \cite{Nurmiev1,Nurmiev2}. It is also determined which fundamental invariants of ${\rm{SL}}(3,\mathbbm{C})^{\times{3}}$ vanish on tensors for each possible tensor rank. Here, we utilize tensor rank as an extra SLOCC invariant to present a finer classification of  three-qutrit entanglement with respect to the classification presented in Table \ref{table:5.1}, such that it contains the information of Ref. \cite{BHO14}.

To this end, consider the following state:
\begin{equation}\label{Y3}
|{\rm{Y}}_3\rangle=|002\rangle+|020\rangle+|200\rangle+|011 \rangle+|101\rangle+|110\rangle\,,
\end{equation}
and the following points:
\begin{equation}
q(\varepsilon)=\frac{1}{\varepsilon^2}\big((|0\rangle+\frac{\varepsilon}{\sqrt{2}}|1\rangle+\varepsilon^{2}|2\rangle)^{\otimes{3}}+(|0\rangle-\frac{\varepsilon}{\sqrt{2}}|1\rangle)^{\otimes{3}}-2|000\rangle\big)\,,
\end{equation}
that for all $\varepsilon\neq{0}$ correspond to ${\rm{GHZ}}^{(2)}$-type states. When $\varepsilon\to{0}$ we have
\begin{equation}\label{limY3}
\lim_{\varepsilon\to{0}}q(\varepsilon)=|002\rangle+|020\rangle+|200\rangle+|011 \rangle+|101\rangle+|110\rangle\,,
\end{equation}
that is equivalent to the state in Eq. \eqref{Y3}. So $|{\rm{Y}}_3\rangle$ can be considered as another element of $\tau_{3}(\Sigma^{3}_{\textbf{2}})$ with one-multirank equal to $(333)$. Moreover, it can asymptotically be obtained from $|{\rm{GHZ}}_{3}^{(2)}\rangle$. It is worth noting that the states in Eqs. \eqref{X3} and \eqref{Y3} are not equivalent since the tensor rank of the former is four, while of the later is five
%\footnote{Indeed, $|{\rm{Y}}_3\rangle$ is a persistent tensor. In the next chapter we will see that the tensor rank of this state is five (See Eq. \eqref{Y}).}.
In fact, we can rewrite Eq. \eqref{Y3} as follows:
\begin{align}\nonumber
|{\rm{Y}}_3\rangle=& \frac{1}{3}\left[(2|0\rangle+|2\rangle)^{\otimes{3}}-2(|0\rangle+|2\rangle)^{\otimes{3}}+|222\rangle\right] \\
&+\frac{1}{2\sqrt{3}i}\left[(2\xi+1)|0\rangle-|1\rangle)^{\otimes{3}}-((2\xi^2+1)|0\rangle-|1\rangle)^{\otimes{3}}\right],
\end{align}
with $\xi=\exp(2\pi i / 3)$. Hence, using the tensor rank as the third SLOCC invariant, we can split the subfamily $|(333)'_3\rangle\in\tau_{3}(\Sigma^{3}_{\textbf{2}})$ into two sub-subfamilies with tensor ranks equal to four and five, respectively (see Fig. \ref{fig:5.2}).
\begin{figure}[t]
\center{\includegraphics[width=6cm]{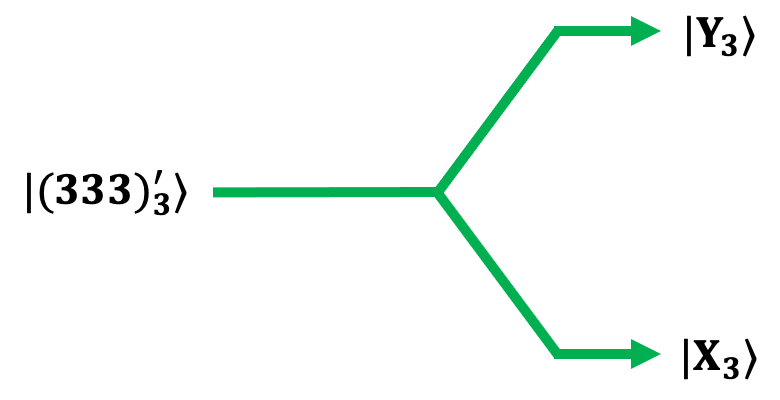}}
\caption[Sub-subfamilies of $|(333)'_3\rangle$]{\label{fig:5.2} Pictorial representation of the fact that using tensor rank as the third SLOCC invariant, the subfamily $|(333)'_3\rangle$ of Table \ref{table:5.1} can be split into two sub-subfamilies $|{\rm{X}}_3\rangle$ and $|{\rm{Y}}_3\rangle$ with tensor ranks equal to four and five, respectively.}
\end{figure}

\section{Generalization}\label{sec.5.4}
We generalize here some of the results found in the previous section to tripartite $\mathbbm{C}^d\otimes\mathbbm{C}^d\otimes\mathbbm{C}^d$ systems as well to $n$-qudit systems.

As one can see, going beyond the qubit setting, there are several types of $\rm{GHZ}$ states (see for instance, Eqs. (\ref{GHZ3-1}) and (\ref{GHZ3-2})). This is because we have different types of excitations rather than qubit systems. So we can draw the following conclusions for $d,n\geq{3}$:
\begin{equation}\label{GHZn-z}
|{\rm{GHZ}}_{n}^{(\zeta)}\rangle=|\alpha_1\rangle^{\otimes{n}}+\cdots+|\alpha_{\zeta+1}\rangle^{\otimes{n}}\in\sigma_{\zeta+1}(\Sigma^{n}_{\textbf{d}-\textbf{1}})\,,
\end{equation}
where $1\leq\zeta\leq{d-1}$ and $\alpha_i\neq\alpha_j\in\{0,1,\ldots,d-1\}$. Then, based on Eq. (\ref{GHZn-z}), we can create $(n-m+1)$-separable states as follows:
\begin{equation}
\mathfrak{p}\{|{\rm{GHZ}}_{m}^{(\zeta)}\rangle|q_d\rangle^{\otimes{n-m}}\}\in\sigma_{\zeta+1}(\Sigma^{n}_{\textbf{d}-\textbf{1}})\,,
\end{equation}
where $2\leq{m}\leq{n-1}$ and $|q_d\rangle$ is a general one-qudit state.

From Eq. (\ref{W3}), we can draw the following conclusion for $d,n\geq{3}$:
\begin{equation}\label{Wn}
|{\rm{W}}_n\rangle=|{\rm{D}}_{n}^{\mathfrak{p}(n-1,1,0,\cdots,0)}\rangle=\sum_{i}\mathfrak{p}_{i}\{|\alpha\rangle^{\otimes{n-1}}\otimes|\beta\rangle\}\in\tau_{2}(\Sigma^{n}_{\textbf{d}-\textbf{1}})\,,
\end{equation}
where $\alpha\neq\beta\in\{0,1,\ldots,d-1\}$ and
\begin{equation}\label{Dicke-n-qudit}
|{\rm{D}}_{n}^{\jmath}\rangle=\sqrt{\frac{\prod_{i}{j_i!}}{n!}}\sum_{\pi\in\mathfrak{S}_{n}}\pi\{|0\rangle^{\otimes{j_1}}\otimes\cdots\otimes|d-1\rangle^{\otimes{j_d}}\}\,,
\end{equation}
are the so-called $n$-qudit Dicke states, with excitations shown as $\jmath=(j_1,\ldots,j_d)$ where $j_1+\cdots+j_d=n$.

Furthermore, from Eq. (\ref{X3}) we can conclude, for $d,n\geq{3}$:
\begin{equation}\label{Xn}
|{\rm{X}}_n\rangle=|{\rm{W}}_n\rangle+|\gamma\gamma\gamma\rangle\in\tau_{3}(\Sigma^{n}_{\textbf{d}-\textbf{1}})\,,
\end{equation}
where $\gamma\in\{0,1,\ldots,d-1\}$ is different from $\alpha$ and $\beta$ in Eq. (\ref{Wn}).

For $d$-qudit states we have the following results, which respectively comes from Eqs. (\ref{GHZ-G}) and (\ref{G3}),
\begin{align}
&|{\rm{GHZ}}_{d}^{(d-1)}\rangle+\mathfrak{p}\{|01\cdots(d-1)\rangle\}&\in~\sigma_{d+1}(\Sigma^{d}_{\textbf{d}-\textbf{1}})\,, \\
&|{\rm{G}}_{d}\rangle=|{\rm{GHZ}}_{d}^{(d-1)}\rangle+|\Omega\rangle^{\otimes{d}}&\in~\sigma_{d+1}(\Sigma^{d}_{\textbf{d}-\textbf{1}})\,,
\end{align}
where $|\Omega\rangle=|0\rangle+\cdots+|d-1\rangle$.

Let us now discuss Dicke states. Since they correspond to monomials, up to scaling the variables, they are symmetric, i.e., they are invariant under any permutation of the parties. Thus, their symmetric tensor rank can be computed as the Waring rank of the corresponding monomials. We might expect that for monomials the symmetric rank and the tensor rank agree, especially since the smallest known counterexample to Comon’s conjecture (rank and symmetric rank of symmetric tensors are equal \cite{CGLM08}) by Shitov  \cite{Shitov18}  is of size $800\times800\times800$. It is widely expected that Comon's conjecture should be true for tensors of small size.

\begin{proposition}
For $d\geq{3}$, there is no symmetric entangled state in the proper locus of the $k$-secant variety of the Segre variety of $\mathbbm{P}^d\times\mathbbm{P}^d\times\mathbbm{P}^d$ with $k>\lceil\frac{\binom{d+2}{3}}{d}\rceil$, except in the case $d=5$, where the bound becomes $k>8$.
\end{proposition}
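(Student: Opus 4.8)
The plan is to transplant the reasoning used for the multiqubit proposition in \cref{chap4} to the ternary symmetric setting: I would confine all symmetric states to the symmetric subspace $\mathrm{Sym}^{3}\mathbbm{C}^{d}\subset\otimes^{3}\mathbbm{C}^{d}$, where the separable locus is a Veronese variety sitting inside the Segre variety, and then exploit this inclusion together with Alexander-Hirschowitz.

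First I would observe that a symmetric fully separable three-qudit state is necessarily of the form $v^{\otimes 3}$ with $v\in\mathbbm{C}^{d}$, so the symmetric separable states are exactly the image of the third Veronese embedding $\mathcal{V}\equiv\mathcal{V}^{3}_{d-1}\colon\mathbbm{P}^{d-1}\hookrightarrow\mathbbm{P}(\mathrm{Sym}^{3}\mathbbm{C}^{d})$ of \cref{subsec3.4.1}. Since every symmetric product tensor is in particular a product tensor, one has $\mathcal{V}\subseteq\Sigma^{3}_{\textbf{d}-\textbf{1}}$. The formation of $k$-secants is monotone under this inclusion: each $k$-secant plane of $\mathcal{V}$ is the span of $k$ points that also lie on $\Sigma^{3}_{\textbf{d}-\textbf{1}}$, and, as Zariski closure preserves inclusions, this yields $\sigma_{k}(\mathcal{V})\subseteq\sigma_{k}(\Sigma^{3}_{\textbf{d}-\textbf{1}})$ for every $k$.

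Next I would apply Alexander-Hirschowitz (\cref{theo:AH95}) in degree $n=3$, which gives the generic symmetric rank $\rk_{\text{gen}}=\lceil\binom{d+2}{3}/d\rceil$ for all $d\geq 3$, with the single exceptional pair $(n,d)=(3,5)$, where it equals the expected value plus one, i.e.\ $\rk_{\text{gen}}=8$. By the very definition of the generic rank, the secant variety $\sigma_{\rk_{\text{gen}}}(\mathcal{V})$ fills the ambient projective space $\mathbbm{P}(\mathrm{Sym}^{3}\mathbbm{C}^{d})$; hence every symmetric state — not merely the generic one — belongs to $\sigma_{\rk_{\text{gen}}}(\mathcal{V})\subseteq\sigma_{\rk_{\text{gen}}}(\Sigma^{3}_{\textbf{d}-\textbf{1}})$ and therefore has border rank at most $\rk_{\text{gen}}$ with respect to the Segre variety.

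Finally, the proper locus $\sigma_{k}(\Sigma^{3}_{\textbf{d}-\textbf{1}})\setminus\sigma_{k-1}(\Sigma^{3}_{\textbf{d}-\textbf{1}})$ consists precisely of states whose Segre border rank is exactly $k$. For $k>\rk_{\text{gen}}$ — that is, $k>\lceil\binom{d+2}{3}/d\rceil$ in general and $k>8$ when $d=5$ — a symmetric state sitting in this proper locus would have border rank $k>\rk_{\text{gen}}$, contradicting the bound just established, since $\rk_{\text{gen}}\leq k-1$ already places every symmetric state in $\sigma_{k-1}(\Sigma^{3}_{\textbf{d}-\textbf{1}})$. This establishes the claim. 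I expect the only genuinely delicate points to be the bookkeeping of the exceptional value at $d=5$ and the careful reading of ``filling the ambient space'' as capturing every symmetric point (so that the border-rank bound is universal and not merely generic); the inclusion $\sigma_{k}(\mathcal{V})\subseteq\sigma_{k}(\Sigma^{3}_{\textbf{d}-\textbf{1}})$ itself is routine once the Veronese-inside-Segre picture is in place. To see that the statement is non-vacuous, I would also note that $\rk_{\text{gen}}$ stays strictly below the generic rank of the full Segre variety (cf.\ \cref{TripartiteGenericRank}), so there is a genuine range of Segre $k$-secant families containing no symmetric state.
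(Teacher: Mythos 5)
Your argument is correct and is essentially the paper's own proof, which consists of a single sentence invoking the Alexander--Hirschowitz theorem (Theorem \ref{theo:AH95}) for $n=3$ and comparing the resulting symmetric generic rank $\lceil\binom{d+2}{3}/d\rceil$ (with the $(3,5)$ exception giving $8$) against the Segre generic rank of Eq.~\eqref{TripartiteGenericRank}. Your write-up simply makes explicit the steps the paper leaves implicit --- the Veronese-inside-Segre inclusion, the monotonicity of secant varieties under that inclusion, and the fact that filling $\mathbbm{P}(\mathrm{Sym}^{3}\mathbbm{C}^{d})$ bounds the border rank of \emph{every} symmetric state --- all of which are sound.
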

{\it Proof.}
The proof of the proposition follows directly from Theorem \ref{theo:AH95}, together with a comparison of Eq. (\ref{SymGenericRank}) for $n = 3$ and Eq. (\ref{TripartiteGenericRank}).
\qed

\begin{proposition}
For $n\geq{4}$ qutrits, Dicke states are not in the highest secant variety of $\mathbbm{P}({\rm{Sym}}^n\mathbbm{C}^3)$.
\end{proposition}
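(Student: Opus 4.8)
The plan is to identify each $n$-qutrit Dicke state with a degree-$n$ monomial in three variables and to control its symmetric border rank, showing that it stays strictly below the generic symmetric rank of ${\rm{Sym}}^n\mathbbm{C}^3$. Since membership of a point in the \emph{proper} locus of the highest secant variety $\sigma_{\rk_{\text{gen}}}(\mathcal{V}^n_2)$ is equivalent to having symmetric border rank exactly $\rk_{\text{gen}}$, it suffices to prove that every Dicke state has $\brk<\rk_{\text{gen}}$, i.e. that it already lies in a lower secant variety of the Veronese $\mathcal{V}^n_2\subset\mathbbm{P}({\rm{Sym}}^n\mathbbm{C}^3)$.

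First I would use the correspondence between symmetric states and homogeneous polynomials: the Dicke state with excitation vector $(j_0,j_1,j_2)$, $j_0+j_1+j_2=n$, corresponds up to scaling of the variables to the monomial $x_0^{j_0}x_1^{j_1}x_2^{j_2}$. Writing its exponents in ascending order as $0\le a\le b\le c$ with $a+b+c=n$, I would invoke \cref{conj:Waring-brk} in the case $d=3$, using only its upper-bound direction $\brk\le (a+1)(b+1)$, which is the elementary half obtained from an explicit limit of rank-$(a+1)(b+1)$ decompositions of the monomial. When one or two exponents vanish, the monomial is a binary or single-variable form whose border rank is even smaller, so those cases only decrease the rank and may be ignored in the maximization.

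Next I would maximize $(a+1)(b+1)$ over all ascending triples with $a+b+c=n$. For a fixed sum $a+b$ the product is largest when $a$ and $b$ are balanced, and enlarging $a+b$ (equivalently shrinking the top exponent $c$ subject to $c\ge b$) only helps, so the maximum is attained at the most balanced composition. Writing $n=3q+r$, this yields $(q+1)^2$ when $r\in\{0,1\}$ and $(q+1)(q+2)$ when $r=2$. On the other side, by \cref{theo:AH95} the generic symmetric rank is $\rk_{\text{gen}}=\lceil (n+1)(n+2)/6\rceil$, with the single exceptional value $\rk_{\text{gen}}=6$ occurring at $n=4$. Comparing the two, one checks the elementary inequalities $(q+1)^2<(n+1)(n+2)/6$ for $n\equiv 0,1 \pmod 3$ and $(q+1)(q+2)<(n+1)(n+2)/6$ for $n\equiv 2\pmod 3$; after cancelling the common factor $(q+1)$ where applicable these reduce to conditions of the form $q\ge 1$ (for $n\equiv 1,2$) and $q\ge 2$ (for $n\equiv 0$), each of which holds throughout the admissible range $n\ge 4$ (the threshold being exactly where $n=3$ would fail). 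Since the left-hand side is an integer strictly below the real quantity $(n+1)(n+2)/6\le\rk_{\text{gen}}$, we obtain $\brk<\rk_{\text{gen}}$ for every Dicke state, so none of them lies in the proper highest secant variety.

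I expect the main obstacle to be the border-rank input rather than the arithmetic: the exact value in \cref{conj:Waring-brk} is only conjectural, so the argument must be arranged to rely solely on the upper bound $\brk\le (a+1)(b+1)$, and one should verify that this upper bound is available unconditionally (through the explicit degeneration of the monomial) instead of the full conjecture. The balancing maximization and the final inequality are routine, so the whole statement reduces to securing that one border-rank estimate.
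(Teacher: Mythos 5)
Your proof is correct and follows essentially the same route as the paper: identify each Dicke state with a monomial, bound its symmetric border rank by $(a+1)(b+1)$ for ascending exponents $a\le b\le c$, maximize over balanced compositions, and compare with the Alexander--Hirschowitz generic rank $\lceil (n+1)(n+2)/6\rceil$ (handling the $n=4$ exception). Your observation that only the unconditional upper-bound half of \cref{conj:Waring-brk} is needed is a worthwhile refinement, since the paper's proof as written cites the full conjecture.
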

{\it Proof.}
Based on Theorem \ref{theo:Warink-rk} and Conjecture \ref{conj:Waring-brk}, for an $n$-qutrit Dicke state, the maximum border rank achieved when $\jmath=(\lceil\frac{n}{3}\rceil,\lfloor\frac{n}{3}\rfloor,n-\lceil\frac{n}{3}\rceil-\lfloor\frac{n}{3}\rfloor)$ in Eq. (\ref{Dicke-n-qudit}). So,
\begin{equation}
|{\rm{D}}_{n}^{(\lceil\frac{n}{3}\rceil,\lfloor\frac{n}{3}\rfloor,n-\lceil\frac{n}{3}\rceil-\lfloor\frac{n}{3}\rfloor)}\rangle\in\begin{cases}
  \sigma_{(\lfloor\frac{n}{3}\rfloor+1)(n-\lceil\frac{n}{3}\rceil-\lfloor\frac{n}{3}\rfloor+1)}(\Sigma^{n}_{\textbf{2}}) & \text{if}~n=3i~(i\in\mathbbm{N})\,, \\
  \tau_{(\lfloor\frac{n}{3}\rfloor+1)(n-\lceil\frac{n}{3}\rceil-\lfloor\frac{n}{3}\rfloor+1)}(\Sigma^{n}_{\textbf{2}})  & \text{otherwise}\,.
\end{cases}
\end{equation}
On the other hand, the generic symmetric rank of a tensor in ${\rm{Sym}}^n\mathbbm{C}^3$ is equal to $\left\lceil\frac{(n+1)(n+2)}{6}\right\rceil\,,$ except for $n=4$ where it is six. Hence, in contrast to multiqubit Dicke states, multiqutrit Dicke states are not in the highest secant variety in $\mathbbm{P}({\rm{Sym}}^n\mathbbm{C}^3)$.
\qed

Moreover, since there is no symmetric entangled state in the higher secant family of $3$-qutrit systems, it turns out that for $n\geq{3}$ qutrits, there is no symmetric entangled state in the higher secant variety.

%%%%%%%%%%%%%%%%%%%%%%%%%%%%%%%%%%%%%%%%%%%%%%%%%%%%%%%%%%%%%%%%%%%%%%%%%%%%%%%%%%%%%%%%%%%%%%%%%%
\chapter{Persistent Tensors}\label{chap6}

\epigraph{``Beauty will save the world.''}{\textit{Fyodor Dostoevsky}}
%\epigraph{``A hundred suspicions don't make a proof.''}{\textit{Fyodor Dostoevsky}}

Within this chapter, which is based on the Ref. \cite{GL22}, we construct a lower bound of the tensor rank for a new class of tensors, which we call \emph{persistent tensors}. We present three specific families of persistent tensors, of which the lower bound is tight. We show that there is a chain of degenerations between these three families of minimal-rank persistent tensors that can be used to study the entanglement transformation between them. In addition, we show that these three families of persistent tensors are indeed different generalizations of multiqubit $\rm{W}$ state within multiqudit systems and are geometrically in the orbit closure of multiqudit $\rm{GHZ}$ states. Consequently, we show that one can obtain every one of the generalizations of the $\rm{W}$ state from a multiqudit $\rm{GHZ}$ state via asymptotic Stochastic Local Operations and Classical Communication (SLOCC) with rate one. Finally, we extend the obtained lower bound of the tensor rank to direct sums with persistent summands and to even more general combinations of tensors, which we call \emph{block pyramidal tensors}. As a result, we show that the tensor rank is multiplicative under the Kronecker and tensor products of minimal-rank persistent tensors with the $\rm{GHZ}$ tensor.

%%%%%%%%%%%%%%%%%%%%%%%%%%
\section{Preliminaries}\label{sec.6.2}

\subsection{Multipartite quantum states as multipartite tensors}\label{subsec.6.2.1}
A state of a multipartite quantum system can be considered as a multipartite tensor in the tensor product of Hilbert spaces of each individual subsystem. Let $\mathcal{H}_n^{\mathbf{d}}=\otimes_{i=1}^{n}\mathbbm{C}^{d_i}$ be the Hilbert space representing the state space of an $n$-partite quantum system where $\mathbf{d}=(d_1,\ldots,d_n)$ indicates the dimensions of the Hilbert spaces of each individual subsystem.

Here, we are dealing with two different notions of product for tensors. Suppose that we have two tensors $\mathcal{T}_1 \in\mathcal{H}_{n_1}^{\mathbf{d}}=\otimes_{i=1}^{n_1}\mathbbm{C}^{d_i}$ and $\mathcal{T}_2\in \mathcal{H}_{n_2}^{\mathbf{d'}}=\otimes_{i=1}^{n_2}\mathbbm{C}^{d'_i}$ corresponding to two multipartite quantum systems, the first with $n_1$ parties and the second with $n_2$ parties, respectively. Assume $n_1\leq n_2$ (without any loss of generality). The first product is the tensor product that corresponds to an ($n_1+n_2$)-partite system, and we denote it by $\mathcal{T}_1\otimes\mathcal{T}_2\in\mathcal{H}_{n_1}^{\mathbf{d}}\otimes\mathcal{H}_{n_2}^{\mathbf{d'}}$. The second product is the Kronecker product, which corresponds to an $n_2$-partite system, and we denote it by $\mathcal{T}_1\boxtimes\mathcal{T}_2 \in \mathcal{H}_{n_1}^{\mathbf{d}}\boxtimes\mathcal{H}_{n_2}^{\mathbf{d'}}$. In fact, $\mathcal{H}_{n_1}^{\mathbf{d}}\otimes\mathcal{H}_{n_2}^{\mathbf{d'}}=(\otimes_{i=1}^{n_1}\mathbbm{C}^{d_i})\otimes(\otimes_{j=1}^{n_2}\mathbbm{C}^{d'_j})$ and $\mathcal{H}_{n_1}^{\mathbf{d}}\boxtimes\mathcal{H}_{n_2}^{\mathbf{d'}}=(\otimes_{i=1}^{n_1}\mathbbm{C}^{d_i+d'_i})\otimes(\otimes_{j=n_1+1}^{n_2}\mathbbm{C}^{d'_j})$.

We also need the notion of direct sum of tensors. Let $\mathcal{T}_1\in\mathcal{H}_{n}^{\mathbf{d}}=\otimes_{i=1}^{n}\mathbbm{C}^{d_i}$ and $\mathcal{T}_2\in\mathcal{H}_{n}^{\mathbf{d'}}=\otimes_{i=1}^{n}\mathbbm{C}^{d'_i}$ be two tensors with the same number of factors. By considering the spaces $\mathbbm{C}^{d_i}$ and $\mathbbm{C}^{d'_i}$ as two summands of $(\mathbbm{C}^{d_i}\oplus \mathbbm{C}^{d'_i})\cong\mathbbm{C}^{d_i+d'_i}$, we can embed $\mathcal{T}_1$ and $\mathcal{T}_2$ into a larger Hilbert space $\mathcal{H}_{n}^{\mathbf{d}+\mathbf{d'}}\cong\otimes_{i=1}^{n}(\mathbbm{C}^{d_i}\oplus\mathbbm{C}^{d'_i})$. The direct sum $\mathcal{T}_1\oplus\mathcal{T}_2 \in \mathcal{H}_{n}^{\mathbf{d}+\mathbf{d'}}$ is the sum of the two tensors embedded in this way.

In this chapter, we take $\{|j\rangle\mid j\in\mathbbm{Z}_d\}$ as the canonical basis of $\mathbbm{C}^d$. We do not distinguish multipartite quantum states from the tensors that represent them. We denote specific tensors by calligraphic capital letters. It should be noted that we do not consider the normalization of the quantum states, since all properties that we work with can be defined for tensors in general and are invariant under scaling.

The state of a composite system is always expressible as a superposition of tensor products of the states of individual subsystems. A quantum state is called fully separable (or unentangled) if it can be written as a tensor product of individual subsystem states, i.e., $|\psi\rangle=|\varphi_1\rangle\otimes\cdots\otimes|\varphi_n\rangle$. Therefore, it is desirable to characterize the entanglement in a composite system. The tensor rank is a good tool for this purpose.

For example, consider the $n$-qubit $\rm{W}$ and $\rm{GHZ}$ states. An $n$-qubit $\rm{W}$ state, i.e.,
\begin{equation}\label{W}
\mathcal{W}_n=\sum_{\mathfrak{p}\in\mathfrak{S}_n}\mathfrak{p}\big\{|0\rangle^{\otimes(n-1)}|1\rangle\big\}=\sum_{i=0}^{n-1}|0\rangle^{\otimes(n-i-1)}|1\rangle|0\rangle^{\otimes{i}}\,,
\end{equation}
where $\mathfrak{p}$ denotes non-redundant elements of the symmetric group $\mathfrak{S}_n$, corresponds to a symmetric tensor in $\otimes^{n}\mathbbm{C}^2$ and its tensor rank and border rank are known to be $\rk(\mathcal{W}_n)=n\,$\footnote{A complete proof seems to be missing in the literature. In Ref. \cite[Theorem 3]{CCDJW10} the tensor rank of multiqubit Dicke states is presented. The proof is based on induction and the base case is cited to Ref. \cite{DVC00}. While the techniques from~\cite{DVC00} can be used to infer that the tensor rank of $n$-qubit $\W$ state is $n$, it does not seem clear at first sight.} and $\brk(\mathcal{W}_n)=2$, respectively. We will prove this fact that the tensor rank of a $n$-qubit $\rm{W}$ state is indeed $n$. A generalized $n$-qudit $\rm{GHZ}$ state, i.e.,
\begin{equation}\label{GHZ}
\mathcal{G}(d,n)=\sum_{j=0}^{d-1}|j\rangle^{\otimes{n}}\,,
\end{equation}
corresponds to a symmetric tensor in $\otimes^{n}\mathbbm{C}^d$ and its tensor rank and border rank are $\rk(\mathcal{G}(d,n))=\brk(\mathcal{G}(d,n))=d$.

It is known that a multiqudit $\rm{GHZ}$-equivalent state can be transformed into a quantum state $|\psi\rangle$ iff $d \geq \rk(|\psi\rangle)$ \cite{CDS08}. 
Therefore, the tensor rank of a quantum state can be characterized as follows
\begin{equation}\label{rank-SLOCC}
\rk(\mathcal{T})=\min\Big\{d~\big|~\mathcal{G}(d,n)\xrightarrow[]{\text{SLOCC}}\mathcal{T}\Big\}\,.
\end{equation}
Similarly, the border rank of a quantum state is the smallest $d$ such that a multiqudit $\rm{GHZ}$-equivalent state degenerates into it.
\begin{equation}\label{brank-degeneration}
\brk(\mathcal{T})=\min\Big\{d~\big|~\mathcal{G}(d,n)\xdashrightarrow[]{\text{SLOCC}}\mathcal{T}\Big\}\,.
\end{equation}

Now, assume that $|{\rm{S}}\rangle$ and $|{\rm{T}}\rangle$ are two quantum states in Hilbert spaces $\mathcal{H}$ and $\mathcal{H}'$ whose tensor ranks are $\rk(\mathcal{S})$ and $\rk(\mathcal{T})$, respectively. Then $\mathcal{T}\boxtimes\mathcal{S}\in\mathcal{H}\boxtimes\mathcal{H}'$, $\mathcal{T}\otimes\mathcal{S}\in\mathcal{H}\otimes\mathcal{H}'$, and we have the following inequalities
\begin{equation}\label{rank-inequalities}
\rk(\mathcal{S}\boxtimes\mathcal{T})\leq\rk(\mathcal{S}\otimes\mathcal{T})\leq\rk(\mathcal{S})\rk(\mathcal{T})\,.
\end{equation}
These operations (the Kronecker product and the tensor product) can be applied to the study of the (asymptotic) SLOCC interconversion between multipartite entangled states \cite{CDS08,CCDJW10,GMO20,YCGD10,CDS10,YGD14,VC15,VC17}. It is known that the tensor rank is not multiplicative under the Kronecker product. This is the reason why multicopy entanglement transformation by SLOCC is quite challenging. In Refs. \cite{CDS08,CCDJW10}, the tensor rank of two copies of the three-qubit $\rm{W}$ state is shown to be $\rk(\mathcal{W}_3\boxtimes\mathcal{W}_3)=7$, and in general the tensor rank of two copies of the $n$-qubit $\rm{W}$ state is shown to be $\rk(\mathcal{W}_n\boxtimes\mathcal{W}_n)=3n-2$. The tensor rank has also been shown to not be multiplicative under the tensor product \cite{CJZ18}. In Ref. \cite{CF18} it is shown that $\rk(\mathcal{W}_3\otimes\mathcal{W}_3)=8$ but the tensor rank of the tensor product of two $n$-qubit $\rm{W}$ states is still unknown.

In the following, we present the definitions of the SLOCC and asymptotic SLOCC transformations that are, respectively, known as restriction and degeneration in algebraic geometry and algebraic complexity theory. Although we have already given a definition for SLOCC transformation in \cref{chap2}, the following definition is more general.

\begin{definition}[SLOCC transformation]\label{def:SLOCC-transformation}
Let $|\psi\rangle\in U_1\otimes\cdots\otimes U_n$ and $|\varphi\rangle\in V_1\otimes\cdots\otimes V_n$ be two $n$-partite quantum states, where $U_i$ and $V_i$ are the Hilbert spaces of individual subsystems. We say that $|\psi\rangle$ can be transformed into $|\varphi\rangle$ via SLOCC (denoted by $|\psi\rangle\xrightarrow[]{\text{SLOCC}}|\varphi\rangle$) if there exist linear maps $A_i\colon U_i\to V_i$ such that
\begin{equation}\label{SLOCC}
(\otimes_{i=1}^{n}A_i)|\psi\rangle=|\varphi\rangle\,.
\end{equation}
\end{definition}

A generalization of the concept of SLOCC conversion is that of asymptotic SLOCC conversion. Here, instead of an exact transformation according to Eq. \eqref{SLOCC}, we consider asymptotic transformations between quantum states by local operations.

\begin{definition}[Degeneration]\label{def:degeneration}
Let $|\psi\rangle\in U_1\otimes\cdots\otimes U_n$ and $|\varphi\rangle\in V_1\otimes\cdots\otimes V_n$ be two $n$-partite quantum states, where $U_i$ and $V_i$ are the Hilbert spaces of individual subsystems. We say that $|\psi\rangle$ degenerates into $|\varphi\rangle$ with error degree $e$ via SLOCC (denoted by $|\psi\rangle\xdashrightarrow[]{\text{SLOCC}}|\varphi\rangle$) if there exist linear maps $A_i(\varepsilon)\colon U_i\to V_i$ depending polynomially on $\varepsilon$ such that
\begin{equation}\label{degenration}
(\otimes_{i=1}^{n}A_i(\varepsilon))|\psi\rangle=\varepsilon^d|\varphi\rangle+\sum_{l=1}^e\varepsilon^{d+l}|\tilde{\varphi}_l\rangle\,,
\end{equation}
for some state $|\tilde{\varphi}_l\rangle$ and $d\in\mathbbm{N}$ which is called the approximation degree.
\end{definition}
Indeed, if the quantum state $|\psi\rangle$ degenerates into the quantum state $|\varphi\rangle$, then $|\varphi\rangle$ can be approximated to an arbitrary precision by restrictions of $|\psi\rangle$, i.e.,
\begin{equation}\label{asymptoticSLOCC}
\lim_{\varepsilon\to 0}\frac{1}{\varepsilon^d}(\otimes_{i=1}^{n}A_i(\varepsilon))|\psi\rangle=|\varphi\rangle\,.
\end{equation}

In a similar spirit to the LOCC-based entanglement dilution \cite{Nielsen-Chuang}, we can use a quantity that indicates the minimum number of copies of a source quantum state $|\psi\rangle$ that can be used to obtain a single copy of the target quantum state $|\varphi\rangle$ by SLOCC transformation, in an asymptotic setting. This quantity is the rate of asymptotic SLOCC transformation from $|\psi\rangle$ into $|\varphi\rangle$ and is defined as follows
\begin{equation}\label{rate}
\omega(\psi,\varphi)=\lim_{n\to\infty}\frac{1}{n}\inf\left\{m\in\mathbbm{N}\,\big\vert~|\psi\rangle^{\boxtimes m}\xrightarrow[]{\text{SLOCC}}|\varphi\rangle^{\boxtimes n}\right\}\,.
\end{equation}

%%%%%%%%%%%%%%%%%%%%%%%%%%%%%%%%%%%%%
\subsection{Concise tensors}\label{subsec.6.2.2}
Informally, a tensor is concise if it cannot be written as a tensor in a smaller ambient space.
For example, a tensor $\mathcal{T}\in\mathbbm{C}^{a}\otimes\mathbbm{C}^{b}\otimes\mathbbm{C}^{c}$ is concise if its multilinear rank is $(a,b,c)$, which means that the tensor $\mathcal{T}$ uses all dimensions of the local spaces. In the following, we define concise tensors for multipartite systems.

\begin{definition}[Concise tensor]\label{def:concise}
A tensor $\mathcal{T}\in V_1\otimes\cdots\otimes V_n$ is called \emph{concise in the first factor}, or \emph{$1$-concise}, if $\mathcal{T}\notin V'_1\otimes V_2\otimes\cdots\otimes V_n$ with $V'_1\subsetneq V_1$. Conciseness in other factors is defined analogously. A tensor is called \emph{concise} if it is $i$-concise for all $i\in\{1,\ldots,n\}$.
\end{definition}

\begin{remark}
In quantum information theory, concise tensors correspond to quantum states with maximal one-to-group marginal entanglement (bipartite entanglement between each single party and the remaining parties). Conciseness can be alternatively characterized as all single-party reduced density matrices being full rank.
\end{remark}

The following lemma gives several equivalent characterizations of $1$-concise tensors.

\begin{lemma}\label{lem:concise-tfae}
Let $\mathcal{T}\in V_1\otimes\cdots\otimes V_n$ be a tensor and $\dim V_i=d_i$.
The following statements are equivalent:
\begin{enumerate}
    \item $\mathcal{T}$ is $1$-concise;
    \item For every non-zero covector $\langle f| \in V_1^{\vee}$ the contraction $\langle f| \mathcal{T}$ is non-zero;
    \item For every basis $\{|e_j\rangle\mid j\in\mathbbm{Z}_{d_1}\}$ of $V_1$ the decomposition $\mathcal{T} = \sum_{j = 0}^{d_1-1} |e_j\rangle\otimes\mathcal{T}_j$ has all $\mathcal{T}_j$ non-zero.
\end{enumerate}
\end{lemma}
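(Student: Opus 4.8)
The plan is to reduce the statement to bipartite linear algebra and then prove the three conditions equivalent through the cycle $(1)\Rightarrow(2)\Rightarrow(3)\Rightarrow(1)$, reinforced by a direct duality argument for $(1)\Leftrightarrow(2)$. Write $W=V_2\otimes\cdots\otimes V_n$, so that $\mathcal{T}\in V_1\otimes W$ and every operation in the statement concerns only the first tensor slot. For a covector $\langle f|\in V_1^{\vee}$ the contraction $\langle f|\mathcal{T}\in W$ is the image of $\langle f|$ under the flattening $\mathcal{M}\colon V_1^{\vee}\to W$ induced by $\mathcal{T}$, so condition $(2)$ is precisely the injectivity of $\mathcal{M}$. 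Conciseness in the first factor asks that the smallest subspace $V_1'\subseteq V_1$ with $\mathcal{T}\in V_1'\otimes W$ equals $V_1$; this smallest subspace is the image of the complementary flattening $\mathcal{M}'\colon W^{\vee}\to V_1$, so $(1)$ is the surjectivity of $\mathcal{M}'$. The conceptual heart of the proof is that $\mathcal{M}$ and $\mathcal{M}'$ are transposes of one another through $\mathcal{T}$, so that $\mathcal{M}=(\mathcal{M}')^{\vee}$; since a linear map is surjective iff its dual is injective, this already yields $(1)\Leftrightarrow(2)$, and it is the step I would set up most carefully.

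For $(2)\Rightarrow(3)$ I would fix a basis $\{|e_j\rangle\mid j\in\mathbbm{Z}_{d_1}\}$ of $V_1$ with dual basis $\{\langle e^j|\}$ and observe that in $\mathcal{T}=\sum_{j}|e_j\rangle\otimes\mathcal{T}_j$ one has $\mathcal{T}_k=\langle e^k|\mathcal{T}$; since each $\langle e^k|$ is non-zero, condition $(2)$ forces every slice $\mathcal{T}_k$ to be non-zero, for every choice of basis. The implication $(3)\Rightarrow(1)$ I would prove by contraposition: if $\mathcal{T}$ is not $1$-concise, choose $V_1'\subsetneq V_1$ with $\mathcal{T}\in V_1'\otimes W$ and a basis $\{|e_j\rangle\}$ adapted to $V_1'$, with the first $\dim V_1'$ vectors spanning $V_1'$. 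Then every slice indexed outside $V_1'$ vanishes, and since $\dim V_1'<d_1$ at least one slice is zero, contradicting $(3)$. Together with $(1)\Leftrightarrow(2)$ these implications close the cycle and establish the equivalence of all three statements.

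The only genuine subtlety, and the step I expect to be the main obstacle, is the identification of the minimal first-factor subspace with the image of $\mathcal{M}'$ together with the transpose duality linking $\mathcal{M}'$ to $\mathcal{M}$; once this linear-algebra fact is in place, the remaining steps are bookkeeping with adapted and dual bases. I would therefore take care to justify that a well-defined \emph{smallest} subspace $V_1'$ with $\mathcal{T}\in V_1'\otimes W$ exists in the first place, namely as the intersection of all subspaces $U\subseteq V_1$ with $\mathcal{T}\in U\otimes W$, which itself enjoys this property, so that $1$-conciseness can be tested against this single subspace rather than against all proper subspaces of $V_1$.
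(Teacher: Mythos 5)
Your proposal is correct. It follows the same overall skeleton as the paper's proof (establish $(1)\Leftrightarrow(2)$ directly, then connect $(3)$ via dual and adapted bases), but the mechanism for $(1)\Leftrightarrow(2)$ is packaged differently. The paper argues elementarily from the single observation that $\langle f|\mathcal{T}=0$ iff $\mathcal{T}\in(\ker\langle f|)\otimes V_2\otimes\cdots\otimes V_n$: conciseness then forces $\ker\langle f|=V_1$, and conversely a proper $V_1'$ admits a non-zero annihilating covector. You instead identify $(2)$ with injectivity of the flattening $\mathcal{M}\colon V_1^{\vee}\to W$ and $(1)$ with surjectivity of the complementary flattening $\mathcal{M}'\colon W^{\vee}\to V_1$, and invoke the transpose duality $\mathcal{M}=(\mathcal{M}')^{\vee}$. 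This is valid and arguably more conceptual, but it requires two auxiliary facts the paper avoids: that the minimal first-factor subspace exists and equals $\mathrm{im}\,\mathcal{M}'$ (for which your intersection argument implicitly uses $(U_1\otimes W)\cap(U_2\otimes W)=(U_1\cap U_2)\otimes W$), and the surjective-iff-dual-injective criterion. You correctly flag this as the delicate step. The remaining edges differ only trivially: the paper proves $(3)\Rightarrow(2)$ by choosing a basis adapted to a given covector, whereas you prove $(3)\Rightarrow(1)$ by contraposition with a basis adapted to $V_1'$; both close the equivalence. No gaps.
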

\begin{proof}
$(1) \Leftrightarrow (2)$: 
Note that $\langle f| \mathcal{T} = 0$ iff $\mathcal{T}$ is in $(\ker\langle f|)\otimes V_2\otimes\cdots\otimes V_n$.
If $\mathcal{T}$ is $1$-concise, then $\langle f|\mathcal{T}=0$ iff $\ker\langle f|=V_1$, that is, $\langle f|=0$.
Conversely, assume that $\mathcal{T}$ is not $1$-concise, that is, $\mathcal{T} \in V_1'\otimes V_2 \otimes\cdots\otimes V_n$ with $V_1'\subsetneq V_1$.
There exists a non-zero covector $\langle f|$ vanishing on $V_1'$, and for this covector we have $\langle f|\mathcal{T}=0$.

$(2) \Rightarrow (3)$: Let $\{\langle f_j|\mid j\in\mathbbm{Z}_{d_1}\}$ be the dual basis. $\mathcal{T}_j = \langle f_{j}| \mathcal{T}$ is non-zero.

$(3) \Rightarrow (2)$: For every non-zero $\langle f|$ there exists a basis $\{|e_j\rangle\mid j\in\mathbbm{Z}_{d_1}\}$ such that $\langle f|e_{j}\rangle=0$ for $j>0$ and $\langle f|e_0\rangle=1$. We have $\langle f|\mathcal{T}=\mathcal{T}_0\neq0$.
\end{proof}

\begin{corollary}\label{cor:concise-matrix}
A tensor $\mathcal{T}\in V_1\otimes V_2$ is $1$-concise iff $\rk(\mathcal{T})=\dim V_1$.
\end{corollary}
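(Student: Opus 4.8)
The plan is to prove Corollary~\ref{cor:concise-matrix} as a direct specialization of Lemma~\ref{lem:concise-tfae} to the bipartite case $n=2$, where a tensor $\mathcal{T}\in V_1\otimes V_2$ is precisely a matrix (a linear map) and conciseness in the first factor becomes a statement about its matrix rank. The key observation is that for $n=2$ the contraction $\langle f|\mathcal{T}$ appearing in condition~(2) of the lemma is exactly the image of the covector $\langle f|\in V_1^{\vee}$ under the transpose of $\mathcal{T}$, viewed as a map $\mathcal{T}\colon V_1^{\vee}\to V_2$.

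First I would fix the identification: by the correspondence in Eq.~\eqref{tensor-map}, the tensor $\mathcal{T}\in V_1\otimes V_2$ corresponds to a linear map $\mathcal{T}\colon V_1^{\vee}\to V_2$, and $\rk(\mathcal{T})$ as a tensor equals the matrix rank of this linear map (as noted immediately after Definition~\ref{def:rank}). Under this identification, the contraction $\langle f|\mathcal{T}$ for $\langle f|\in V_1^{\vee}$ is simply the vector $\mathcal{T}(\langle f|)\in V_2$. Therefore condition~(2) of Lemma~\ref{lem:concise-tfae}, namely that $\langle f|\mathcal{T}\neq0$ for every non-zero $\langle f|$, translates precisely to the statement that the linear map $\mathcal{T}\colon V_1^{\vee}\to V_2$ has trivial kernel, i.e. is injective.

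Next I would invoke the rank-nullity theorem: the map $\mathcal{T}\colon V_1^{\vee}\to V_2$ is injective iff $\dim\ker\mathcal{T}=0$, which by rank-nullity (using $\dim V_1^{\vee}=\dim V_1$) is equivalent to $\rk(\mathcal{T})=\dim V_1$. Chaining the equivalences: $\mathcal{T}$ is $1$-concise $\Leftrightarrow$ condition~(2) holds (by Lemma~\ref{lem:concise-tfae}) $\Leftrightarrow$ $\mathcal{T}\colon V_1^{\vee}\to V_2$ is injective $\Leftrightarrow$ $\rk(\mathcal{T})=\dim V_1$. This yields exactly the claim of the corollary.

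I do not anticipate a genuine obstacle here, since the corollary is essentially a dictionary translation of the already-established lemma into matrix language; the only point requiring mild care is to keep the dualization straight — one must remember that conciseness in the \emph{first} factor corresponds to injectivity of $\mathcal{T}$ regarded as a map \emph{out of} $V_1^{\vee}$, so that the relevant rank equals $\dim V_1$ rather than $\dim V_2$. If one preferred to avoid citing the lemma, an equally short self-contained argument would note directly that $\mathcal{T}\in V_1'\otimes V_2$ for some proper subspace $V_1'\subsetneq V_1$ iff the column space (the image of $\mathcal{T}^{\vee}$, equivalently the span of the $V_1$-components) fails to span $V_1$, which again is exactly the failure of $\rk(\mathcal{T})$ to reach $\dim V_1$.
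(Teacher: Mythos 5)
Your argument is correct and is exactly the route the paper intends: the corollary is stated as an immediate consequence of Lemma~\ref{lem:concise-tfae}, and your chain (condition~(2) of the lemma $\Leftrightarrow$ injectivity of $\mathcal{T}\colon V_1^{\vee}\to V_2$ $\Leftrightarrow$ $\rk(\mathcal{T})=\dim V_1$ by rank--nullity, using that tensor rank equals matrix rank for $n=2$) is the standard dictionary the authors rely on. No gaps.
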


We will need the following property of rank decompositions for $i$-concise tensors.
\begin{lemma}\label{lem:concise-rank}
Let $\mathcal{T}\in V_1\otimes\cdots\otimes V_n$ be an $i$-concise tensor. If
\begin{equation}\label{rank-decomposition}
\mathcal{T}= \sum_{p = 1}^r v_1^{(p)}\otimes\cdots\otimes v_n^{(p)}\,,
\end{equation}
is a tensor rank decomposition of $\mathcal{T}$, then the vectors $\{v_i^{(1)},\ldots,v_i^{(r)}\}$ span $V_i$.
\end{lemma}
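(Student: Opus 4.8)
The plan is to prove the contrapositive by showing that if the vectors $\{v_i^{(1)},\ldots,v_i^{(r)}\}$ fail to span $V_i$, then $\mathcal{T}$ cannot be $i$-concise, contradicting the hypothesis. Without loss of generality I would take $i=1$, since the argument is symmetric in the factors. The key observation is that conciseness is detected by contraction with covectors, as established in \Cref{lem:concise-tfae}.

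First I would suppose, for contradiction, that $W := \Span\{v_1^{(1)},\ldots,v_1^{(r)}\}$ is a proper subspace of $V_1$, i.e.\ $W \subsetneq V_1$. Then there exists a non-zero covector $\langle f| \in V_1^{\vee}$ that vanishes on all of $W$, so that $\langle f|v_1^{(p)}\rangle = 0$ for every $p \in \{1,\ldots,r\}$. The next step is to contract the rank decomposition in Eq.~\eqref{rank-decomposition} with $\langle f|$ in the first factor. Since contraction is linear and acts only on the first tensor factor, I would compute
\begin{equation}
\langle f|\mathcal{T} = \sum_{p=1}^{r}\langle f|v_1^{(p)}\rangle\, v_2^{(p)}\otimes\cdots\otimes v_n^{(p)} = 0\,,
\end{equation}
where every summand vanishes because its scalar coefficient $\langle f|v_1^{(p)}\rangle$ is zero.

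Finally, I would invoke the equivalence $(1)\Leftrightarrow(2)$ of \Cref{lem:concise-tfae}: a tensor is $1$-concise if and only if $\langle f|\mathcal{T}\neq 0$ for every non-zero covector $\langle f|\in V_1^{\vee}$. Having exhibited a non-zero covector $\langle f|$ with $\langle f|\mathcal{T}=0$, this directly contradicts the assumption that $\mathcal{T}$ is $1$-concise. Hence $W$ cannot be a proper subspace, so $\{v_1^{(1)},\ldots,v_1^{(r)}\}$ must span $V_1$, and by the symmetric argument the claim holds for each $i$ with $\mathcal{T}$ being $i$-concise.

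I do not anticipate a serious obstacle here, as the statement follows almost immediately once the contraction is set up; the only point requiring slight care is ensuring that the covector $\langle f|$ is genuinely non-zero (guaranteed by $W\subsetneq V_1$) so that the contradiction with \Cref{lem:concise-tfae} is legitimate. The entire argument is a clean application of the characterization of conciseness via contractions, and the linearity of the contraction operation is what makes the vanishing of $\langle f|\mathcal{T}$ transparent.
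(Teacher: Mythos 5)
Your proof is correct and is essentially the paper's argument in contrapositive form: the paper notes directly that $\mathcal{T}$ lies in $\Span\{v_1^{(1)},\ldots,v_1^{(r)}\}\otimes V_2\otimes\cdots\otimes V_n$ and concludes from the definition of $1$-conciseness that this span is all of $V_1$, while you reach the same conclusion by choosing an annihilating covector and invoking the equivalence $(1)\Leftrightarrow(2)$ of \Cref{lem:concise-tfae}, which is the dual formulation of that same containment. Both are valid; no gap.
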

{\it Proof.}
We prove the statement for $1$-concise tensors. Let the set of vectors $\{v_1^{(1)},\ldots, v_1^{(r)}\}$ span a vector space $U$. Note that all summands of rank decomposition are contained in $U\otimes V_2\otimes\cdots\otimes V_n$. It follows that $\mathcal{T}$ also lies in this space. Since $\mathcal{T}$ is a $1$-concise tensor, we then have $U=V_1$.

The proof for $i$-concise tensors is analogous.
\qed

%%%%%%%%%%%%%%%%%%%%%%%%%%%%%%%%%%%%%%%%%%%%%%%%%%%%%%%%%%%%%%%%%%%%%%%%%%%
\section{Persistent Tensors}\label{sec.6.3}

The substitution method is a method to obtain lower bounds for the tensor rank by zeroing the summands in a rank decomposition by applying appropriate projection maps to the tensor factors (see \cite[Ch.17]{BCS97} or \cite[Appx.~B]{AFT11}). A lower bound is obtained by keeping track of the number of summands zeroed. In this section, we introduce a class of tensors for which we can prove the tensor rank lower bounds by repeated application of the substitution method.
We call these tensors persistent.
For persistent tensors in $\otimes^n\mathbbm{C}^d$ we get a lower bound of $(n-1)(d-1)+1$.

\begin{definition}[Persistent tensor]\label{def:persistent-tensor}
We define \emph{persistent tensors} inductively.
\begin{itemize}
\item[(i)] A tensor $\mathcal{P}\in V_1\otimes V_2$ is persistent if it is $1$-concise.
\item[(ii)] A tensor $\mathcal{P}\in V_1\otimes\cdots\otimes V_n$ with $n > 2$ is persistent if it is $1$-concise and there exists a subspace $S \subsetneq V_1^{\vee}$ such that the contraction $\langle f|\mathcal{P}\in V_2\otimes\cdots\otimes V_n$ is persistent whenever $\langle f|\notin S$.
\end{itemize}
\end{definition}

The following lemma gives different characterizations of the class of persistent tensors which are useful for checking persistence.
\begin{lemma}\label{lem:persistence-equivalent}
Let $\mathcal{P}\in V_1\otimes\cdots\otimes V_n$ ($n>2$) be a persistent tensor with $\dim V_i=d_i$. The following statements are equivalent:
\begin{enumerate}
\item $\mathcal{P}$ is persistent.
\item $\mathcal{P}$ is $1$-concise and there exists a non-zero vector $|e\rangle\in V_1$ such that the following implication holds:
\[
\langle f|e\rangle\neq0 \Rightarrow \text{$\langle f|\mathcal{P}$ is persistent.}
\]
\item $\mathcal{P}$ is $1$-concise and there exists a non-zero vector $|e\rangle \in V_1$ such that the following implication holds:
\[
\langle f|e\rangle=1 \Rightarrow \text{$\langle f|\mathcal{P}$ is persistent.}
\]
\item For every basis $\{|e_j\rangle\mid j\in\mathbbm{Z}_{d_1}\}$ of $V_1$ the decomposition
\begin{equation}\label{eq:basis-decomposition-persistent}
\mathcal{P}=\sum_{j=0}^{d_1-1}|e_j\rangle\otimes\mathcal{P}_j\,,
\end{equation}
has all $\mathcal{P}_j$ non-zero and at least one of them is persistent.
\end{enumerate}
\end{lemma}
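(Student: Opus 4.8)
The plan is to prove the chain of equivalences by establishing the cycle $(1)\Rightarrow(2)\Rightarrow(3)\Rightarrow(4)\Rightarrow(1)$, exploiting the fact that by \Cref{def:persistent-tensor} the $1$-conciseness hypothesis is common to all four statements, so the real content lies in translating the quantifier structure ``$\langle f|\notin S$'' into statements about a single vector $|e\rangle$ and about basis decompositions. First I would record the underlying linear-algebra observation: a proper subspace $S\subsetneq V_1^{\vee}$ can always be avoided by the annihilator of a suitable vector, i.e.\ there exists a non-zero $|e\rangle\in V_1$ whose dual hyperplane $\{\langle f| : \langle f|e\rangle=0\}$ contains $S$. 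This is the bridge between the subspace formulation in \Cref{def:persistent-tensor} and the single-vector formulations in items (2) and (3).

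For $(1)\Rightarrow(2)$, given the subspace $S$ from the definition, I would pick any non-zero $|e\rangle\in V_1$ annihilated by all of $S$; then $\langle f|e\rangle\neq0$ forces $\langle f|\notin S$, so $\langle f|\mathcal{P}$ is persistent by hypothesis. The implication $(2)\Rightarrow(3)$ is immediate by rescaling: if $\langle f|e\rangle=1$ then in particular $\langle f|e\rangle\neq0$, so (2) applies directly (note no normalization of $\mathcal{P}$ is assumed, consistent with the conventions of \Cref{subsec.6.2.1}). For $(3)\Rightarrow(4)$, fix an arbitrary basis $\{|e_j\rangle\}$ and write the decomposition \eqref{eq:basis-decomposition-persistent}; by \Cref{lem:concise-tfae}$(3)$ the $1$-conciseness of $\mathcal{P}$ already guarantees every $\mathcal{P}_j$ is non-zero, so it only remains to produce one persistent summand. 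Here I would use that $\{|e_j\rangle\}$ is a basis to find a covector $\langle f|$ in the dual basis with $\langle f|e\rangle=1$ for the distinguished vector $|e\rangle$ of (3); since $\mathcal{P}_j=\langle f_j|\mathcal{P}$, expanding $|e\rangle$ in the basis and using linearity of contraction shows that at least one $\mathcal{P}_j$ with $\langle f_j|e\rangle\neq0$ must inherit persistence. The mild subtlety to handle carefully is that $\langle f|\mathcal{P}$ persistent for the specific $\langle f|$ dual to $|e\rangle$ need not coincide with a single $\mathcal{P}_j$ unless $|e\rangle$ is itself a basis vector, so I would argue via a change of basis adapted to $|e\rangle$ and then transport persistence back, invoking that persistence is preserved under the $\mathrm{GL}$-action implicit in relabeling factors.

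Finally, $(4)\Rightarrow(1)$ closes the cycle: from (4) applied to any fixed basis we obtain a persistent summand $\mathcal{P}_{j_0}=\langle f_{j_0}|\mathcal{P}$, and I would set $S$ to be the hyperplane $\{\langle f| : \langle f|e_{j_0}\rangle=0\}\subsetneq V_1^{\vee}$; for $\langle f|\notin S$ one has $\langle f|e_{j_0}\rangle\neq0$, and a short argument shows the contraction is $\mathrm{GL}$-equivalent to $\mathcal{P}_{j_0}$, hence persistent, matching \Cref{def:persistent-tensor}(ii). The main obstacle I anticipate is not any single implication but the bookkeeping needed to show that persistence of a contraction $\langle f|\mathcal{P}$ is invariant when $\langle f|$ ranges over a whole open set (the complement of a hyperplane) rather than over basis covectors only; the cleanest way around this is to prove once, as a preliminary remark, that if $\langle f|\mathcal{P}$ is persistent and $g\in\mathrm{GL}(V_1)$ fixes the relevant contraction data, then $\langle f|g^{-1}\cdot(g\mathcal{P})$ remains persistent, so that persistence transfers freely between covectors related by a change of basis. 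With that remark in hand, each implication reduces to the elementary quantifier manipulations sketched above.
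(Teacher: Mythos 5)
Your implications $(1)\Rightarrow(2)\Rightarrow(3)$ match the paper, and $(3)\Rightarrow(4)$ is salvageable (though the clean argument needs no change of basis or $\mathrm{GL}$-transport: for some $j$ the dual covector satisfies $\langle f_j|e\rangle=c_j\neq0$, so $c_j^{-1}\langle f_j|$ pairs to $1$ with $|e\rangle$, whence $c_j^{-1}\mathcal{P}_j=c_j^{-1}\langle f_j|\mathcal{P}$ is persistent by $(3)$ and $\mathcal{P}_j$ is persistent by scaling-invariance). The genuine gap is in $(4)\Rightarrow(1)$. You take $S=\{\langle f|:\langle f|e_{j_0}\rangle=0\}$ and assert that for $\langle f|\notin S$ ``a short argument shows the contraction is $\mathrm{GL}$-equivalent to $\mathcal{P}_{j_0}$.'' It is not: $\langle f|\mathcal{P}=\sum_j\langle f|e_j\rangle\mathcal{P}_j$ is a nontrivial linear combination of all the slices, not a rescaling of $\mathcal{P}_{j_0}$, and no $\mathrm{GL}$-action on $V_2\otimes\cdots\otimes V_n$ identifies the two tensors (consider $\mathcal{P}_{j_0}$ persistent, $\mathcal{P}_1$ not, and $\langle f|=\langle f_{j_0}|+\langle f_1|$). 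Persistence is not a linear condition, so persistence of $\mathcal{P}_{j_0}$ says nothing a priori about persistence of $\mathcal{P}_{j_0}+\sum_{j\neq j_0}\langle f|e_j\rangle\mathcal{P}_j$. This linear combination is precisely the hard content of the lemma, and your preliminary remark about transporting persistence along a change of basis does not touch it, since the two contractions are taken with genuinely different covectors.

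The paper closes this direction ($(4)\Rightarrow(3)$) by exploiting the fact that $(4)$ is quantified over \emph{every} basis of $V_1$, which your argument never uses: one picks a basis minimizing the number of persistent slices, observes that for any scalars $\alpha_1,\ldots,\alpha_{d_1-1}$ the tensor $\mathcal{P}_0+\sum_j\alpha_j\mathcal{P}_j$ occurs as a slice of the decomposition in a modified basis $|e_j\rangle-\alpha_j|e_0\rangle$, and concludes from minimality that this combination must itself be persistent; since every $\langle f|$ with $\langle f|e_0\rangle=1$ yields a contraction of exactly this form, $(3)$ follows. Without this minimality device (or an equivalent one), your cycle does not close: applying $(4)$ to a single basis, or even to a basis extending a given $\langle f|$, only tells you that \emph{some} slice is persistent, not the one you need.
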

\begin{proof}
$(1) \Rightarrow (2)$:
Let $S \subsetneq V_1^{\vee}$ be the subspace in the definition of persistence. Choose any non-zero $|e\rangle \in S^{\perp}$.

$(2) \Rightarrow (1)$:
Take $S=|e\rangle^{\perp}$.

$(2) \Rightarrow (3)$: Trivial

$(3) \Rightarrow (2)$: If $\langle f|e\rangle=a\neq 0$, then $\langle f'|e\rangle=1$ for $\langle f'|=\frac{1}{a}\langle f|$. Hence $\frac{1}{a}\langle f|\mathcal{P}$ is persistent, and $\langle f|\mathcal{P}$ is persistent because persistence is scaling-invariant.

$(1) \Rightarrow (4)$:
Because $\mathcal{P}$ is $1$-concise, all $\mathcal{P}_i$ are non-zero. Let $\{\langle f_j|\mid j\in\mathbbm{Z}_{d_1}\}$ be the dual basis to $\{|e_j\rangle\mid j\in\mathbbm{Z}_{d_1}\}$. At least one $\langle f_j|$ does not lie in the subspace $S\subsetneq V_1^{\vee}$ from the definition of persistence. It follows that $\mathcal{P}_j=\langle f_j|\mathcal{P}$ is persistent.

$(4) \Rightarrow (3)$:
Let $\{|e_j\rangle\mid j\in\mathbbm{Z}_{d_1}\}$ be a basis such that the decomposition~\eqref{eq:basis-decomposition-persistent} has the minimum possible number of persistent tensors $\mathcal{P}_j$.
Assume without loss of generality that $\mathcal{P}_0$ is persistent.

For every $\alpha_1,\ldots,\alpha_{d_1-1}$ we can rewrite the decomposition~\eqref{eq:basis-decomposition-persistent} to get
\begin{equation}
\mathcal{P}=|e_0\rangle\otimes\big(\mathcal{P}_0+\sum_{j=1}^{d_1-1} \alpha_j\mathcal{P}_j\big)+\sum_{j=1}^{d_1-1}(|e_j\rangle-\alpha_j|e_0\rangle)\otimes\mathcal{P}_j.
\end{equation}
This is a decomposition corresponding to a different basis $|e_0'\rangle=\frac{1}{\alpha_0}|e_0\rangle, |e_j'\rangle=|e_j\rangle-\frac{\alpha_j}{\alpha_0}|e_0\rangle$.
Since the number of persistent slices in this decomposition cannot be less than that in the original, the tensor $\mathcal{P}_0+\sum_{j=1}^{d_1-1}\alpha_j\mathcal{P}_j$ is persistent.

Let $\langle f|\in V_1^{\vee}$ be a covector such that $\langle f|e_0\rangle=1$.
Note that $\langle f|\mathcal{P}=\mathcal{P}_0+\sum_{j=1}^{d_1-1}\langle f|e_j\rangle\mathcal{P}_j$ is persistent by the previous discussion.
We have proven $(3)$ with $|e\rangle=|e_0\rangle$.
\end{proof}

In the following, we present some examples of persistent and non-persistent tensors.

\begin{itemize}
\item[(i).] Non-persistent tensors:
\begin{itemize}
\item[1.] The diagonal tensor $\mathcal{G}(d,n)$ (correspondingly, $n$-qudit $\rm{GHZ}$ state) is not a persistent tensor for $n>2$ and $d\geq{2}$. This can be understood using \Cref{lem:persistence-equivalent}(4). We have
\begin{equation}
\mathcal{G}(d,n)=\sum_{j=0}^{d - 1}|j\rangle\otimes\mathcal{T}_j \quad \text{and} \quad \mathcal{T}_j=|j\rangle^{\otimes(n-1)}\,,
\end{equation}
where all $\mathcal{T}_j$ are not $1$-concise and therefore not persistent.
\item[2.] The Dicke state $\mathcal{D}_4^2=|0011\rangle+|0101\rangle+|0110\rangle+|1001\rangle+|1010\rangle+|1100\rangle$ is not a persistent tensor. This can be seen from the decomposition~\eqref{eq:basis-decomposition-persistent} corresponding to the basis $|\pm\rangle=|0\rangle\pm|1\rangle$ and the fact that $\mathcal{W}_3\pm\overline{\mathcal{W}}_3\equiv\mathcal{G}(2,3)$, where $\overline{\mathcal{W}}_3=|011\rangle+|101\rangle+|110\rangle$.
\item[3.] All unnormalized multiqubit Dicke states
\begin{equation}\label{Dicke-qubit}
\mathcal{D}_n^l=\sum_{\mathfrak{p}\in\mathfrak{S}_n}\mathfrak{p}\big\{|0\rangle^{\otimes(n-l)}\otimes|1\rangle^{\otimes l}\big\}\,,
\end{equation}
with $l$ excitations are not persistent tensors except when $l=1$. This can be understood from the previous example.
\end{itemize}
\item[(ii).] Persistent tensors:
\begin{itemize}
\item[1.] $\mathcal{W}_n$ is a persistent tensor because for every $\langle f|$ such that $\langle f|0\rangle=1$ we have
\begin{equation}
\langle f|\mathcal{W}_n=\mathcal{W}_{n-1}+\langle f|1\rangle|0\rangle^{\otimes(n-1)}\,,
\end{equation}
which is equivalent to $\mathcal{W}_{n-1}$. Repeating this construction, we arrive at the base case $\mathcal{W}_2=|01\rangle+|10\rangle$ which is a persistent tensor. Indeed, the $n$-qubit $\rm{W}$ state is the only symmetric persistent tensor in multiqubit systems.
\item[2.] An example of a nonsymmetric persistent tensor is the four-qubit state $\mathcal{T} = \alpha^2|0011\rangle+\beta^2|0101\rangle+(\alpha \pm \beta)^2|0110\rangle+|1001\rangle+|1010\rangle+|1100\rangle$. For every $\langle f|$ such that $\langle f|1\rangle = 1$ the contraction $\langle f|\mathcal{T}$ is equivalent to $\mathcal{W}_3$. This can be checked by computing the tangle~\cite{LLHL06}.
\item[3.] As another example, 3-qutrit $\mathcal{Y}_3=|002\rangle+|020\rangle+|200\rangle+|011\rangle+|101\rangle+|110\rangle$ is a persistent tensor. In the following, we will show that the $n$-qutrit $\Y$ state given by
\begin{equation}\label{Y}
\mathcal{Y}_n=\sum_{\mathfrak{p}\in\mathfrak{S}_n}\mathfrak{p}\{|0\rangle^{\otimes(n-2)}(|02\rangle+|11\rangle)\}\,,
\end{equation}
which corresponds to a symmetric tensor in $\otimes^n\mathbbm{C}^3$, is a persistent tensor.
\end{itemize}
\end{itemize}

\begin{theorem}\label{theo:PTLB}
If $\mathcal{P}\in V_1\otimes\cdots\otimes V_n$ is a persistent tensor and $\dim{V_k}=d_k$, then
\begin{equation}\label{PTLB}
\rk(\mathcal{P})\geq\sum_{k=1}^{n-1}(d_k-1) + 1\,.
\end{equation}
Moreover, in every rank decomposition
\begin{equation}\label{P-rank-decomposition}
\mathcal{P}=\sum_{p=1}^r u_1^{(p)}\otimes\cdots\otimes u_n^{(p)}\,,
\end{equation}
one can permute the summands in such a way that the rearranged decomposition
\begin{equation}\label{P-rank-decomposition-2}
\mathcal{P}=\sum_{p=1}^r v_1^{(p)}\otimes\cdots\otimes v_n^{(p)}\,,
\end{equation}
has the following property:
for every $j<n$ the vectors $\{v_j^{(D_j+1)},\ldots, v_j^{(D_j + d_j)}\}$ form a basis of $V_j$, where $D_j = \sum_{k = 1}^{j - 1} (d_k - 1)$ (for $j = 1$ we take $D_1 = 0$).
\end{theorem}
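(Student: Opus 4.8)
The plan is to prove both the rank lower bound and the basis-extraction property simultaneously by induction on $n$, using the substitution method on the first tensor factor. The key idea is that persistence is precisely the property that survives the substitution step, so the inductive hypothesis can be reapplied to a contracted tensor.

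First I would set up the base case $n=2$. Here $\mathcal{P}\in V_1\otimes V_2$ is persistent, which by \Cref{def:persistent-tensor}(i) means it is $1$-concise, so by \Cref{cor:concise-matrix} we have $\rk(\mathcal{P})=\dim V_1=d_1$. Since the sum $\sum_{k=1}^{n-1}(d_k-1)+1=(d_1-1)+1=d_1$, the bound \eqref{PTLB} holds with equality. Moreover, by \Cref{lem:concise-rank} applied in the first factor, the vectors $\{v_1^{(1)},\ldots,v_1^{(r)}\}$ span $V_1$, so one can select a subset of $d_1$ of them forming a basis of $V_1=V_{j}$ for $j=1$ (here $D_1=0$, $d_1$ vectors), and permute the summands so that these occupy positions $1,\ldots,d_1$; this verifies the basis property.

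For the inductive step, assume the statement holds for persistent tensors with fewer than $n$ factors, and let $\mathcal{P}\in V_1\otimes\cdots\otimes V_n$ be persistent. Starting from a rank decomposition \eqref{P-rank-decomposition}, the plan is to apply the substitution method in the first factor. Because $\mathcal{P}$ is $1$-concise, by \Cref{lem:concise-rank} the first-factor vectors $\{u_1^{(1)},\ldots,u_1^{(r)}\}$ span $V_1$; choose $d_1-1$ of them, say (after a permutation placing them last among the relevant block) spanning a hyperplane complementary to a chosen $|e\rangle\in V_1$ as in \Cref{lem:persistence-equivalent}(3). Applying a projection $\pi\colon V_1\to\langle e\rangle$ along this hyperplane, equivalently contracting with a covector $\langle f|$ with $\langle f|e\rangle=1$ that annihilates those $d_1-1$ chosen vectors, zeroes out those $d_1-1$ summands and yields a decomposition of $\langle f|\mathcal{P}$ into at most $r-(d_1-1)$ simple tensors. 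By \Cref{lem:persistence-equivalent}(3), $\langle f|\mathcal{P}\in V_2\otimes\cdots\otimes V_n$ is a persistent tensor of $n-1$ factors. The inductive hypothesis then gives
\begin{equation}
r-(d_1-1)\geq\rk(\langle f|\mathcal{P})\geq\sum_{k=2}^{n-1}(d_k-1)+1\,,
\end{equation}
so $r\geq\sum_{k=1}^{n-1}(d_k-1)+1$, establishing \eqref{PTLB}.

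It remains to propagate the basis property through the induction. The plan is to maintain a running bookkeeping of which summands have been used. The $d_1-1$ summands zeroed in the substitution step, together with one more summand whose first-factor vector completes them to a basis of $V_1$ (such a summand exists since the full set spans $V_1$ and we removed only a hyperplane's worth), furnish the required basis $\{v_1^{(D_1+1)},\ldots,v_1^{(D_1+d_1)}\}$ of $V_1$; these are placed in positions $1,\ldots,d_1$. The inductive hypothesis applied to $\langle f|\mathcal{P}$ provides, after a permutation of the remaining summands, the bases of $V_2,\ldots,V_{n-1}$ in the correct positions, where one must check that the contraction does not alter the later factors $v_j^{(p)}$ for $j\geq 2$ — which holds since contracting the first factor leaves the tensor products in the remaining factors intact up to the scalar $\langle f|u_1^{(p)}\rangle$, absorbable by rescaling. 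The main obstacle I anticipate is precisely this bookkeeping: ensuring that the indices $D_j$ line up consistently after the permutations at each level, and that the single ``completing'' summand can always be chosen disjoint from those used to build the later bases. I would handle this by stating the inductive claim carefully so that the basis for $V_1$ occupies the first $d_1$ slots and the inductive call reorders only the remaining $r-d_1$ slots, so the blocks for different factors never collide.
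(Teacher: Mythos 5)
Your proof of the rank inequality \eqref{PTLB} is correct, and it follows essentially the route of the paper's second (substitution-method) proof: contract with a covector $\langle f|$ whose kernel is spanned by $d_1-1$ of the first-factor vectors of the decomposition and which pairs to $1$ with the vector $|e\rangle$ supplied by \Cref{lem:persistence-equivalent}, note that $\langle f|\mathcal{P}$ is persistent, and induct. The base case and the use of \Cref{lem:concise-rank} and \Cref{cor:concise-matrix} are all fine.

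The gap is in the ``moreover'' part, and it sits exactly where you flag the bookkeeping. The blocks in the statement are \emph{not} disjoint: block $j$ occupies positions $D_j+1,\ldots,D_j+d_j$ and block $j+1$ begins at $D_{j+1}+1=D_j+d_j$, so consecutive blocks share one position. In particular, the summand at position $d_1$ must simultaneously complete the basis of $V_1$ \emph{and} serve as the first summand of the decomposition of $\langle f|\mathcal{P}$ to which the inductive hypothesis is applied (it survives the contraction, so it genuinely belongs to that decomposition). Your plan to have ``the inductive call reorder only the remaining $r-d_1$ slots, so the blocks for different factors never collide'' therefore cannot be carried out: if the inductive permutation is forbidden from touching position $d_1$, you are not applying the hypothesis to the full decomposition of $\langle f|\mathcal{P}$, and the $V_2$-block lands at positions $d_1+1,\ldots,d_1+d_2$ rather than $D_2+1,\ldots,D_2+d_2$; if it is allowed to touch it, your fixed ``completing summand'' may be moved out of position $d_1$ and the $V_1$-basis claim is no longer justified. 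Insisting on genuinely disjoint blocks would force $r\geq\sum_{k=1}^{n-1}d_k$, which is false (e.g.\ $\rk(\mathcal{W}_n)=n$, not $2(n-1)$). The paper's resolution is to embrace the overlap: arrange positions $1,\ldots,d_1-1$ to carry first-factor vectors spanning the hyperplane $V_1'=\ker\langle f|$, push all \emph{other} summands with first-factor vector in $V_1'$ past position $s$, and observe that every summand between positions $d_1$ and $s$ has first-factor vector outside $V_1'$ --- so whichever of them the inductive permutation places at position $d_1$, the vectors at positions $1,\ldots,d_1$ still form a basis of $V_1$. With that observation (and the omitted step of segregating the extra killed summands at the end) your induction closes; without it the basis property does not propagate.
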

{\it Proof.}
We prove the statement by induction.

Base case: $n = 2$. If $n = 2$, then $\mathcal{P}$ is persistent iff it is $1$-concise. By \Cref{cor:concise-matrix} we have $\rk(\mathcal{P}) = d_1$, so the required lower bound is maintained.
Moreover, in any decomposition
\begin{equation}
\mathcal{P}=\sum_{p=1}^r u_1^{(p)}\otimes u_2^{(p)}\,,
\end{equation}
Regarding \Cref{lem:concise-rank}, $\Span\{u_1^{(1)},\ldots,u_1^{(r)}\}=V_1$. Therefore, we can permute the summands to get a decomposition
\begin{equation}
\mathcal{P}=\sum_{p=1}^r v_1^{(p)}\otimes v_2^{(p)}\,,
\end{equation}
where $\{v_1^{(1)},\ldots,v_1^{(d_1)}\}$ form a basis of $V_1$.

Consider now the case $n > 2$. Let $\mathcal{P}$ be a persistent tensor in $V_1 \otimes \cdots \otimes V_n$ and let Eq. \eqref{P-rank-decomposition} be a rank decomposition of $\mathcal{P}$. We rearrange the summands of this decomposition in several steps.

First, since $\mathcal{P}$ is $1$-concise, based on \Cref{lem:concise-rank}, $V_1=\Span\{u_1^{(1)}, \ldots, u_1^{(r)}\}$. So, we can permute the summands to obtain a rearranged decomposition in Eq. \eqref{P-rank-decomposition-2} such that $\{v_1^{(1)},\ldots,v_1^{(d_1)}\}$ form a basis of $V_1$. We choose the order of this basis in such a way that in the decomposition
\begin{equation}
\mathcal{P} = \sum_{k = 1}^{d_1} v_1^{(k)} \otimes \mathcal{P}_k\,,
\end{equation}
the tensor $\mathcal{P}_{d_1}$ is persistent.

Let $V_1'=\Span\{v_1^{(1)}, \ldots, v_1^{(d_1 - 1)}\}$.
As a second step, we separate the summands with $v_1^{(p)} \notin V_1'$ from those with $v_1^{(p)} \in V_1'$.
We rearrange the summands with indices from $d_1 + 1$ to $r$ to get a second rearranged decomposition
\begin{equation}\label{P-rank-decomposition-3}
\mathcal{P}=\sum_{p=1}^r w_1^{(p)}\otimes\cdots\otimes w_n^{(p)}\,,
\end{equation}
such that $w_1^{(p)}\notin V_1'$ if $d_1\leq p\leq s$ and $w_1^{(p)}\in V_1'$ if $p > s$ for an appropriate $s \leq r$.

Let $\langle f|$ be a covector such that $\langle f|v_1^{(k)}\rangle=0$ if $k<d_1$ and $\langle f|v_1^{(d_1)}\rangle=1$.
We have
\begin{equation}
\mathcal{P}_{d_1} =\langle f|\mathcal{P}=\sum_{p=d_1}^s \langle f|w_1^{(p)}\rangle w_2^{(p)}\otimes\cdots\otimes w_n^{(p)}\,,
\end{equation}
which is a rank decomposition for $\mathcal{P}_{d_1}$.

By the induction hypothesis, the number of summands in this decomposition, $s-d_1+1$, is at least $\sum_{k=2}^{n-1}(d_k-1)+1$, from which we obtain $r \geq s \geq \sum_{k = 1}^{n - 1} (d_k - 1) + 1$ as required.
Moreover, we can rearrange the summands to get the following
\begin{equation}
\mathcal{P}_{d_1}=\langle f|\mathcal{P} =\sum_{p = d_1}^s\langle f|y_1^{(p)}\rangle y_2^{(p)}\otimes\cdots\otimes y_n^{(p)},
\end{equation}
with $\{y_j^{(D_j + 1)},\ldots, y_j^{(D_j + d_j)}\}$ being a basis of $V_j$.
Applying the same permutation to the summands with indices from $d_1$ to $s$ in Eq. \eqref{P-rank-decomposition}, we get
\begin{equation}
\mathcal{P}=\sum_{p=1}^{r} y_1^{(p)}\otimes\cdots\otimes y_n^{(p)}\,.
\end{equation}
Note that~~~$\{y_1^{(1)},\ldots, y_1^{(d_1 - 1)}, y_1^{(d_1)}\}$~~is still a basis of~~ $V_1$ since~~ $y_1^{(d_1)}\notin V_1'$ ~~and \\ $V_1'=\Span\{y_1^{(1)},\ldots, y_1^{(d_1-1)}\}$.
\qed

Due to the following lemma, which is the essence of the substitution method (see also Ref. \cite[Appx.~B]{AFT11}) we have the following alternative proof of \Cref{theo:PTLB}.

\begin{lemma}\label{lem:substitution-method}
Let $\mathcal{T}\in V_1\otimes\cdots\otimes V_n$ be an $i$-concise tensor. For every subspace $V'_i\subsetneq V_i$ there exists a projection $\pi_i\colon V_i\to V'_i$ such that 
\begin{equation}\label{substitution-method}
\rk(\mathcal{T})-\rk(\pi_i\mathcal{T})\geq\dim{V_i}-\dim{V'_i}\,,
\end{equation}
where $\pi_i\mathcal{T}$ denotes the application of $\pi_i$ to the $i$-th factor of the tensor $\mathcal{T}$, i.e., $(\mathbb{1}^{\otimes(i-1)}\otimes\pi_i\otimes\mathbb{1}^{\otimes(n-i)})\mathcal{T}$.
\end{lemma}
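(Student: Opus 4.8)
The plan is to prove Lemma~\ref{lem:substitution-method} using the standard substitution method, building the projection $\pi_i$ one dimension at a time. Without loss of generality I will work with $i=1$, since conciseness and the statement are symmetric in the factors; the other cases are verbatim analogous. The strategy is to reduce the codimension of $V_1'$ by one at each step: if I can show that for any $1$-concise tensor $\mathcal{T}\in V_1\otimes\cdots\otimes V_n$ and any hyperplane $H\subsetneq V_1$ there exists a projection $\pi\colon V_1\to H$ with $\rk(\mathcal{T})-\rk(\pi\mathcal{T})\geq 1$, and moreover that $\pi\mathcal{T}$ is again $1$-concise as a tensor in $H\otimes V_2\otimes\cdots\otimes V_n$ (or at least that I can iterate the argument), then composing $\dim V_1-\dim V_1'$ such one-step projections yields the claimed bound by telescoping.

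For the single-step claim I would fix a basis $\{|e_0\rangle,|e_1\rangle,\ldots,|e_{d_1-1}\rangle\}$ of $V_1$ adapted to $H=\Span\{|e_1\rangle,\ldots,|e_{d_1-1}\rangle\}$, so that a projection onto $H$ has the form $\pi_c|e_0\rangle=\sum_{j=1}^{d_1-1}c_j|e_j\rangle$ and $\pi_c|e_j\rangle=|e_j\rangle$ for $j\geq 1$, parametrized by $c=(c_1,\ldots,c_{d_1-1})$. Take a rank decomposition $\mathcal{T}=\sum_{p=1}^r u_1^{(p)}\otimes\cdots\otimes u_n^{(p)}$ with $r=\rk(\mathcal{T})$. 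The key observation, by \Cref{lem:concise-rank}, is that the vectors $\{u_1^{(1)},\ldots,u_1^{(r)}\}$ span $V_1$; hence at least one summand, say after reindexing $u_1^{(1)}$, has a non-zero $|e_0\rangle$-component when expanded in the adapted basis. For a generic choice of $c$ one can arrange that the projection $\pi_c$ annihilates this distinguished summand, i.e.\ $\pi_c u_1^{(1)}=0$, while leaving a valid decomposition of $\pi_c\mathcal{T}$ with the remaining $r-1$ summands. This gives $\rk(\pi_c\mathcal{T})\leq r-1$, which is exactly the one-dimensional drop.

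The main obstacle I anticipate is not the existence of the single-step drop but ensuring the argument iterates cleanly down to the target subspace $V_1'$ — concretely, I must verify that after projecting I may continue to invoke \Cref{lem:concise-rank}, and that the projections compose to a single projection $V_1\to V_1'$ rather than merely a chain of maps landing in shrinking hyperplanes. I would handle this by choosing, at each stage, the hyperplane to contain $V_1'$, so that the composite restricts correctly; the spanning property needed at each stage follows because the image tensor still expresses all of the relevant coordinate directions, and a generic projection preserves conciseness onto the intermediate hyperplane. An alternative, cleaner route that avoids the iteration bookkeeping is to argue directly: restrict $\mathcal{T}$ to $W\otimes V_2\otimes\cdots\otimes V_n$ where $W$ is a complement of $V_1'$ in $V_1$ of dimension $\dim V_1-\dim V_1'$, observe that this restriction $\langle\,\cdot\,|\mathcal{T}$ is itself $1$-concise in the $W$-factor, and show a generic projection kills at least $\dim W$ summands by successively picking coefficients so that one new independent combination of $u_1^{(p)}$'s is annihilated each time. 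Either way, the essential content is the spanning conclusion of \Cref{lem:concise-rank} combined with a dimension count, so I expect the proof to be short once the iteration is set up correctly; the delicate point to write carefully is the genericity argument guaranteeing that each projection step removes a summand without accidentally creating new cancellations that would invalidate the count.
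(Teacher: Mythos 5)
Your iterative argument is correct in substance but takes a genuinely different route from the paper, which proves the bound in one shot: given a rank decomposition $\mathcal{T}=\sum_{p=1}^{r}v_1^{(p)}\otimes\cdots\otimes v_n^{(p)}$, \Cref{lem:concise-rank} says the vectors $v_i^{(1)},\ldots,v_i^{(r)}$ span $V_i$, so their images span the quotient $V_i/V_i'$ and one can select $c_i=\dim V_i-\dim V_i'$ of them whose span $W_i$ is a complement of $V_i'$; the projection onto $V_i'$ along $W_i$ then annihilates those $c_i$ summands simultaneously, giving a decomposition of $\pi_i\mathcal{T}$ with at most $r-c_i$ terms. This sidesteps both issues you flag. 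That said, your iteration does go through, and the ``delicate point'' you worry about is in fact automatic rather than a matter of genericity: if $\pi\colon V_1\to H$ is surjective and a covector $f\in H^{\vee}$ contracts $\pi\mathcal{T}$ to zero, then $f\circ\pi\in V_1^{\vee}$ contracts $\mathcal{T}$ to zero, so $1$-conciseness forces $f\circ\pi=0$ and hence $f=0$; thus $\pi\mathcal{T}$ is $1$-concise in $H$ and \Cref{lem:concise-rank} applies at every stage, after which the codimension-one drops telescope and the maps compose to a single projection onto $V_1'$ because each intermediate hyperplane is chosen to contain $V_1'$. Two small corrections to your write-up: the projection killing a chosen summand is a specific choice, not a generic one (its kernel must be exactly the line spanned by that summand's first vector), and the worry about ``accidentally creating new cancellations'' is moot, since the substitution method only needs an upper bound on the number of surviving summands, so additional cancellation can only help. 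What the paper's version buys is brevity and the fact that it works inside one fixed decomposition, whereas your iteration must re-decompose the projected tensor at every step; what your version buys is a slightly more elementary step (only hyperplane projections) at the cost of the bookkeeping you correctly identified.
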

{\it Proof.}
Suppose Eq. \eqref{rank-decomposition} is a tensor rank decomposition of $\mathcal{T}$, i.e., $\rk(\mathcal{T})=r$. By \Cref{lem:concise-rank} the vectors $\{v_i^{(1)},\ldots,v_i^{(r)}\}$ span $V_i$. Thus, there exists a subset $S_i\subset\{v_i^{(1)},\ldots,v_i^{(r)}\}$ consisting of $c_i=\dim{V_i}-\dim{V'_i}$ vectors such that $W_i=\Span\{S_i\}$ is complementary to $V'_i$. Consider the projection $\pi_i$ onto $V'_i$ along $W_i$. Applying it to the $i$-th factor of each summand of the tensor rank decomposition in Eq. \eqref{rank-decomposition} we obtain a decomposition of $\pi_i\mathcal{T}$ with at most $r - c_i$ summands, because the summands containing vectors from $S_i$ are sent to $0$. It follows that $\rk(\pi_i\mathcal{T})\leq\rk(\mathcal{T})-(\dim{V_i}-\dim{V'_i})$, and we obtain the required statement by rearranging the terms.
\qed

Therefore, the proof of \Cref{theo:PTLB} can be given as follows:

{\it Proof.}
We prove the statement by induction.

If $n = 2$, then $\mathcal{P}$ is persistent iff it is $1$-concise. By \Cref{cor:concise-matrix} we have $\rk(\mathcal{P}) = d_1$, so the required lower bound is maintained.

Consider now the case $n > 2$. Since $\mathcal{P}$ is a persistent tensor, by \Cref{lem:persistence-equivalent} there exists a vector $|e\rangle\in V_1$ such that for every covector $\langle f|$ in the dual space of $V_1$, $\langle f|\mathcal{P}$ is a persistent tensor whenever $\langle f|e\rangle\neq 0$. Let $V'_1=\Span\{|e\rangle\}$ be a $1$-dimensional subspace of $V_1$. Apply \Cref{lem:substitution-method} to find the projection $\pi_1\colon V_1\to V'_1$ such that $\rk(\mathcal{P})-\rk(\pi_1\mathcal{P})\geq d_1-1$. Since $V'_1$ is a $1$-dimensional subspace, $\pi_1\mathcal{P}=|e\rangle\otimes\mathcal{P}'$ for some $\mathcal{P}'\in V_2\otimes\cdots\otimes V_n$. It follows that $\rk(\mathcal{P}')=\rk(\pi_1\mathcal{P})$ and thus $\rk(\mathcal{P})\geq d_1-1+\rk(\mathcal{P}')$. Note that $\mathcal{P}'=\langle f|\mathcal{P}$ where $\langle f|\in V_1^{\vee}$ is the composition of $\pi_1$ with the linear map $V'_1\to\mathbbm{C}$ that sends $|e\rangle$ to $1$. So, we have $\langle f|e\rangle=1$ and $\mathcal{P}'$ is a persistent tensor.
By the induction hypothesis, we have $\rk(\mathcal{P}')\geq\sum_{k=2}^{n-1}(d_k-1)+1$, and therefore $\rk(\mathcal{P})\geq d_1-1+\rk(\mathcal{P}')\geq\sum_{k=1}^{n-1}(d_k-1)+1$.
\qed

\begin{corollary}\label{cor:W-rank}
The tensor rank of the $n$-qubit $\rm{W}$ state is $n$.
\end{corollary}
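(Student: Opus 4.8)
The plan is to derive Corollary~\ref{cor:W-rank} directly by combining the lower bound of \Cref{theo:PTLB} with the elementary upper bound furnished by the explicit decomposition of $\mathcal{W}_n$ in Eq.~\eqref{W}. Since the corollary concerns the $n$-qubit $\rm{W}$ state, every local space is $V_i\cong\mathbbm{C}^2$, so $d_k=2$ for all $k$. I would begin by invoking the fact, already established in the examples following \Cref{lem:persistence-equivalent}, that $\mathcal{W}_n$ is a persistent tensor: contracting with any $\langle f|$ satisfying $\langle f|0\rangle=1$ gives $\langle f|\mathcal{W}_n=\mathcal{W}_{n-1}+\langle f|1\rangle|0\rangle^{\otimes(n-1)}$, which is equivalent (under an invertible local operation on the first remaining factor) to $\mathcal{W}_{n-1}$, with the base case $\mathcal{W}_2=|01\rangle+|10\rangle$ being $1$-concise and hence persistent.

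With persistence in hand, the lower bound is immediate from \Cref{theo:PTLB}: substituting $d_k=2$ into Eq.~\eqref{PTLB} yields
\begin{equation}
\rk(\mathcal{W}_n)\geq\sum_{k=1}^{n-1}(d_k-1)+1=\sum_{k=1}^{n-1}1+1=(n-1)+1=n\,.
\end{equation}
For the matching upper bound, the defining expression in Eq.~\eqref{W} already writes $\mathcal{W}_n$ as a sum of $n$ simple tensors, each of the form $|0\rangle^{\otimes(n-i-1)}|1\rangle|0\rangle^{\otimes i}$, so by the definition of tensor rank (\Cref{def:rank}) we have $\rk(\mathcal{W}_n)\leq n$. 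Combining the two inequalities gives $\rk(\mathcal{W}_n)=n$, which is the assertion of the corollary.

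I do not anticipate a genuine obstacle here, since the corollary is essentially a specialization of \Cref{theo:PTLB}; the only point requiring a small amount of care is confirming that $\mathcal{W}_n$ genuinely falls within the class of persistent tensors, i.e.\ verifying the inductive structure of \Cref{def:persistent-tensor}. The subtlety worth flagging is that the footnote to Eq.~\eqref{W} points out that a clean proof of $\rk(\mathcal{W}_n)=n$ was previously considered to be missing from the literature; the value of this corollary is precisely that the persistence framework supplies such a proof cleanly and uniformly, bypassing the induction-on-Dicke-states argument of Ref.~\cite{CCDJW10}. Thus the write-up should emphasize that the lower bound—the nontrivial direction—is now a one-line consequence of the general machinery, rather than re-deriving it by ad hoc means.
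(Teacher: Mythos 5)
Your proposal is correct and follows exactly the paper's own argument: the lower bound comes from \Cref{theo:PTLB} applied with $d_k=2$ (using the persistence of $\mathcal{W}_n$ established in the examples after \Cref{lem:persistence-equivalent}), and the upper bound is the $n$-summand decomposition in Eq.~\eqref{W}. No gaps; nothing further is needed.
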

{\it Proof.}
The upper bound is obvious from the definition of $\mathcal{W}_n$, which has $n$ summands. According to \Cref{theo:PTLB}, the lower bound of the tensor rank of $\mathcal{W}_n$ is $n$ as it is a persistent tensor.
\qed

%%%%%%%%%%%%%%%%%%%%%%%%%%%%%%%%%%%%%%%%%%%%%%%%%%%%%%%%%%%%%%%%%%%%%%%%%%%
\section{Multiqudit generalization of $\rm{W}$ state}\label{sec.6.4}

We now introduce several families of multipartite tensors in $\otimes^n\mathbbm{C}^d$ (corresponding to $n$-qudit states) which can be thought as different generalizations of multiqubit $\rm{W}$ state within multiqudit systems. In the tripartite case, these tensors have been studied before in the context of matrix multiplication complexity in connection with the Coppersmith-Winograd algorithm~\cite{CW90}.

The first family we present we call $n$-qudit $\LL$ states
\begin{equation}\label{L-state}
\mathcal{L}(d,n)=\sum_{j_1+\cdots+j_n=d-1}|j_1\cdots j_n\rangle\,.
\end{equation}
These states are a special case of the weight states considered by Christandl et al. in Ref. \cite{CGFW21}.
For $n=d=3$ this tensor appeared as the $\Y$ state in Ref. \cite{GM21}.
In algebraic complexity theory, a tensor equivalent to $\mathcal{L}(d, 3)$ appeared as the structure tensor of truncated polynomial multiplication or as a ``lower triangular'' version of the cyclic group tensor~\cite{AVW18}.

The second family we introduce is the family of $n$-qudit $\M$ states, which generalizes the Coppersmith-Winograd tensors used in matrix multiplication algorithms~\cite{CW90} to the multipartite case.
We give two versions of these tensors, which are SLOCC equivalent.
The first version is
\begin{align}\nonumber
&\mathcal{M}(d,n)=\sum_{\mathfrak{p}\in\mathfrak{S}_n}\mathfrak{p}\Big\{\sum_{j=0}^{\lfloor\frac{d-1}{2}\rfloor}|0\rangle^{\otimes(n-2)}|j\rangle|d-j-1\rangle\Big\} \\ \label{M-state}
&=\sum_{i=0}^{n-1}|0\rangle^{\otimes(n-i-1)}|d-1\rangle|0\rangle^{\otimes i}+\sum_{i+k+l=n-2}\sum_{j=1}^{d-2}|0\rangle^{\otimes i}|j\rangle|0\rangle^{\otimes k}|d-j-1\rangle|0\rangle^{\otimes l}\,.
\end{align}
The second version is (for $d\geq3$)
\begin{align}\nonumber
\mathcal{M}'(d,n)&=\sum_{\mathfrak{p}\in\mathfrak{S}_n}\mathfrak{p}\Big\{|0\rangle^{\otimes(n-1)}|d-1\rangle+\sum_{j=1}^{d-2}|0\rangle^{\otimes(n-2)}|jj\rangle\Big\} \\ \label{M-state-2}
&=\sum_{i=0}^{n-1}|0\rangle^{\otimes(n-i-1)}|d-1\rangle|0\rangle^{\otimes i}+\sum_{i+k+l=n-2}\sum_{j=1}^{d-2}|0\rangle^{\otimes i}|j\rangle|0\rangle^{\otimes k}|j\rangle|0\rangle^{\otimes l}\,.
\end{align}
For $d=3$, the two versions are equal. For $d\ge4$, the two versions are SLOCC equivalent, because $\mathcal{M}$ is transformed into $\mathcal{M}'$ by applying to each factor the following change of basis
\begin{equation}
\begin{cases}
|j\rangle\mapsto\frac{1}{\sqrt{2}}(|j\rangle+\mathbf{i}|d-j-1\rangle) \\
|d-j-1\rangle\mapsto\frac{1}{\sqrt{2}}(|j\rangle-\mathbf{i}|d-j-1\rangle)
\end{cases} \text{for} \quad 1\leq j\leq\left\lfloor\frac{d-2}{2}\right\rfloor\,,
\end{equation}
where $\mathbf{i}=\sqrt{-1}$. In fact, it is a direct consequence of the Schmidt decomposition that $|j\rangle|d-j-1\rangle+|d-j-1\rangle|j\rangle$ is equivalent to $|j\rangle|j\rangle+|d-j-1\rangle|d-j-1\rangle$ \cite{Schmidt}.

The last family we present is the family of $n$-qudit $\N$ states defined as
\begin{equation}\label{N-state}
\mathcal{N}(d,n)=|0\rangle^{\otimes(n-1)}|d-1\rangle+\sum_{i=0}^{n-2}\sum_{j=1}^{d-1}|0\rangle^{\otimes(n-i-2)}|j\rangle|0\rangle^{\otimes i}|d-j-1\rangle\,.
\end{equation}
by applying the map $|j\rangle\mapsto|d-j-1\rangle$ to the last tensor factor, we get an equivalent tensor as follows
\begin{equation}\label{N-state-2}
\mathcal{N}'(d,n)=|0\rangle^{\otimes n}+\sum_{i=0}^{n-2}\sum_{j=1}^{d-1}|0\rangle^{\otimes(n-i-2)}|j\rangle|0\rangle^{\otimes i}|j\rangle\,.
\end{equation}
This form of $n$-qudit $\N$ state generalizes the tripartite tensor considered by Copersmith-Winograd as the asymmetric version of its construction~\cite{CW90}.

All three families of tensors generalize the $n$-qubit $\rm{W}$ state in the sense that $\mathcal{W}_n=\mathcal{L}(2, n)=\mathcal{M}(2, n)=\mathcal{N}(2, n)$. Moreover, it is easy to see that $\mathcal{W}_n^{\boxtimes 2}$ is equivalent to $\mathcal{M}(4, n)$ under the identification $|00\rangle \mapsto |0\rangle, |01\rangle \mapsto |1\rangle, |10\rangle \mapsto |2\rangle, |11\rangle \mapsto |3\rangle$.

All these families consist of persistent tensors, which follows from a more general statement below.

\begin{theorem}
Let $\mathcal{T}\in\otimes^n\mathbbm{C}^d$ be a tensor of the form
\begin{equation}
\mathcal{T}=\sum_{j_1+\cdots+j_n< d} t_{j_1\cdots j_n} |j_1\cdots j_n\rangle\,.
\end{equation}
If the coefficients before $|0\rangle^{\otimes(n-i-2)}|j\rangle|0\rangle^{\otimes i}|d-j-1\rangle$ are non-zero for all $i\leq n-2$ and $j\leq d-1$ then $\mathcal{T}$ is a persistent tensor.
\end{theorem}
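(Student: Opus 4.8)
The plan is to prove persistence by induction on $n$, directly verifying the inductive characterization of \Cref{lem:persistence-equivalent}(2), which requires exhibiting a distinguished vector $|e\rangle\in V_1$ such that contracting any covector $\langle f|$ with $\langle f|e\rangle\neq0$ yields a persistent tensor of one lower order. The natural choice here is $|e\rangle=|0\rangle$, motivated by the structure of all the examples: contracting against a covector that pairs nontrivially with $|0\rangle$ leaves a tensor of the same triangular form $\sum_{j_2+\cdots+j_n<d}t'_{j_2\cdots j_n}|j_2\cdots j_n\rangle$ on the remaining $n-1$ factors. The two ingredients to check at each stage are therefore: (i) $1$-conciseness of $\mathcal{T}$, and (ii) that the contraction $\langle f|\mathcal{T}$ still satisfies the structural hypothesis (nonvanishing of the designated ``anti-diagonal'' coefficients) so that the induction can proceed.

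First I would establish the base case $n=2$: a tensor $\mathcal{T}=\sum_{j_1+j_2<d}t_{j_1j_2}|j_1j_2\rangle$ with the hypothesis that the coefficients of $|j\rangle|d-j-1\rangle$ are non-zero for all $j\leq d-1$ is persistent iff it is $1$-concise (by \Cref{def:persistent-tensor}(i)), and by \Cref{cor:concise-matrix} this amounts to showing its $d\times d$ coefficient matrix has rank $d$. The matrix is lower-triangular with respect to the anti-diagonal ordering, with the nonzero anti-diagonal entries $t_{j,d-j-1}$ guaranteeing full rank. Next, for the inductive step, I would write $\mathcal{T}=\sum_{j_1=0}^{d-1}|j_1\rangle\otimes\mathcal{T}_{j_1}$ where $\mathcal{T}_{j_1}=\sum_{j_2+\cdots+j_n<d-j_1}t_{j_1 j_2\cdots j_n}|j_2\cdots j_n\rangle$. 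I would verify $1$-conciseness via \Cref{lem:concise-tfae}(3) by showing each slice $\mathcal{T}_{j_1}$ is non-zero, which holds because the hypothesis forces, for each $j_1\leq d-1$, at least one surviving anti-diagonal coefficient.

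The crucial step is to show that for a covector $\langle f|$ with $\langle f|0\rangle\neq0$ (normalizing to $\langle f|0\rangle=1$), the contraction $\langle f|\mathcal{T}=\sum_{j_2+\cdots+j_n<d}\big(\sum_{j_1}\langle f|j_1\rangle\, t_{j_1 j_2\cdots j_n}\big)|j_2\cdots j_n\rangle$ again satisfies the structural hypothesis on $\otimes^{n-1}\mathbbm{C}^d$. The essential observation is that the coefficient of the term $|0\rangle^{\otimes(n-i-2)}|j\rangle|0\rangle^{\otimes i}|d-j-1\rangle$ in $\langle f|\mathcal{T}$ is a sum over $j_1$, but the \emph{only} way a basis monomial of total degree exactly $d-1$ on the last $n-1$ factors can arise from a degree-$<d$ monomial of $\mathcal{T}$ is with $j_1=0$; every contribution with $j_1\geq1$ would require the remaining indices to sum to strictly less than $d-1$. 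Hence the relevant anti-diagonal coefficient of $\langle f|\mathcal{T}$ equals $\langle f|0\rangle\,t_{0\,j_2\cdots j_n}$, which is non-zero by hypothesis since $\langle f|0\rangle=1$. This degree-filtration argument is the main obstacle, since one must argue carefully that lower-order slices cannot interfere with the designated top-degree coefficients; once it is in place, the inductive hypothesis applies to $\langle f|\mathcal{T}$ and \Cref{lem:persistence-equivalent}(2) delivers persistence of $\mathcal{T}$. Applying \Cref{theo:PTLB} then immediately yields the rank lower bound $(n-1)(d-1)+1$ for $\mathcal{L}$, $\mathcal{M}$, and $\mathcal{N}$ as a corollary.
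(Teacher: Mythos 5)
Your overall strategy coincides with the paper's: induction on $n$, a base case resting on the anti-triangular $d\times d$ coefficient matrix with non-zero anti-diagonal, and an inductive step via \Cref{lem:persistence-equivalent} with $|e\rangle=|0\rangle$, where the key observation is that a multi-index of total degree $d-1$ on the last $n-1$ factors forces $j_1=0$, so the relevant anti-diagonal coefficients of $\langle f|\mathcal{T}$ reduce to $\langle f|0\rangle\,t_{0j_2\cdots j_n}$. That "crucial step" of yours is correct and is exactly the argument in the paper.

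There is, however, a genuine gap in your verification of $1$-conciseness in the inductive step. You appeal to \Cref{lem:concise-tfae}(3), but that condition quantifies over \emph{every} basis of $V_1$; checking that the slices $\mathcal{T}_{j_1}$ of the single standard-basis decomposition are non-zero does not establish it. Non-vanishing of the standard slices is strictly weaker than conciseness: for instance $|0\rangle\otimes v+|1\rangle\otimes v=(|0\rangle+|1\rangle)\otimes v$ has both standard slices non-zero yet is not $1$-concise. What is actually needed is linear independence of the slices, equivalently condition (2) of \Cref{lem:concise-tfae}: $\langle f|\mathcal{T}\neq0$ for every non-zero covector $\langle f|=\sum_j f_j\langle j|$. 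The paper closes this using the same triangular structure you exploited in the base case: if $j$ is the minimal index with $f_j\neq0$, then the coefficient of $|0\cdots0\,(d-j-1)\rangle$ in $\langle f|\mathcal{T}$ equals $\sum_{j_1=0}^{j}f_{j_1}t_{j_1 0\cdots 0, d-j-1}=f_j\,t_{j0\cdots0,d-j-1}\neq0$, since terms with $j_1<j$ vanish because $f_{j_1}=0$ and terms with $j_1>j$ do not occur in $\mathcal{T}$ because $j_1+(d-j-1)\geq d$. Inserting this argument repairs the proof; everything else in your proposal stands.
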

\begin{proof}
We prove the statement by induction on $n$.

If $n=2$, then 
\begin{equation}
\mathcal{T} =\sum_{j=0}^{d-1} t_{j,d-j-1}|j\rangle\otimes\Big(|d-j-1\rangle+\sum_{k = 0}^{d-j-2} \frac{t_{jk}}{t_{j,d-j-1}} |k\rangle\Big)\,,
\end{equation}
has matrix rank $d$ and therefore is $1$-concise.

For $n>2$, note that for $\langle f|=\sum_{j=0}^{d-1} f_j\langle j|$ we have
\begin{equation}
\langle f|\mathcal{T}={\hspace{-2mm}\sum_{j_2+\cdots+j_n< d}} {\hspace{-2mm}s_{j_2\cdots j_n}|j_2 \cdots j_n\rangle} \quad \text{where} \quad s_{j_2\cdots j_n}={\hspace{-2mm}\sum_{j_1=0}^{d-(j_2+\cdots+ j_n)-1}} {\hspace{-4mm} f_{j_1} t_{j_1\cdots j_n}}\,,
\end{equation}

If $\langle f|\neq 0$ and $j$ is the minimum index such that $f_j \neq 0$, then $s_{0\cdots0,d-j-1}= f_j t_{j0\cdots0,d-j-1}\neq0$, so $\langle f|\mathcal{T} \neq 0$. By \Cref{lem:concise-tfae} $\mathcal{T}$ is $1$-concise. Additionally, if $\langle f|0\rangle= 1$ and $j_2+\cdots+j_n=d-1$, then $s_{j_2\cdots j_n}=t_{0j_2\cdots j_n}$. In particular, the coefficients before $|0\rangle^{\otimes (n-i-3)} |j\rangle |0\rangle^{\otimes i} |d-j-1\rangle$ in $\langle f|\mathcal{T}$ are non-zero, so by the induction hypothesis $\langle f|\mathcal{T}$ is persistent.
Therefore, $\mathcal{T}$ is persistent by \Cref{lem:persistence-equivalent}(3) with $|e\rangle=|0\rangle$.
\end{proof}

\begin{corollary}
The tensors $\mathcal{L}(d,n)$, $\mathcal{M}(d,n)$ and $\mathcal{N}(d,n)$ are persistent.
\end{corollary}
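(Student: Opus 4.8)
The plan is to verify that each of the three families $\mathcal{L}(d,n)$, $\mathcal{M}(d,n)$ and $\mathcal{N}(d,n)$ satisfies the hypothesis of the preceding theorem, namely that each is a tensor supported on indices with $j_1+\cdots+j_n<d$ and has non-zero coefficients in front of all the ``diagonal antipode'' terms $|0\rangle^{\otimes(n-i-2)}|j\rangle|0\rangle^{\otimes i}|d-j-1\rangle$ for every $i\leq n-2$ and $j\leq d-1$. Once this is checked, persistence follows immediately by applying the theorem, with no further work needed. So the corollary is really a matter of reading off the defining formulas \eqref{L-state}, \eqref{M-state}, and \eqref{N-state} and matching them against the theorem's support condition and coefficient condition.

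First I would treat $\mathcal{N}(d,n)$, since its definition in \eqref{N-state} is essentially already written in the form the theorem demands: it consists of the single term $|0\rangle^{\otimes(n-1)}|d-1\rangle$ together with the family $|0\rangle^{\otimes(n-i-2)}|j\rangle|0\rangle^{\otimes i}|d-j-1\rangle$ ranging over $0\leq i\leq n-2$ and $1\leq j\leq d-1$, all with coefficient $1$. Every index tuple appearing has digit sum exactly $d-1<d$, so the support condition holds, and the required diagonal-antipode coefficients are manifestly all equal to $1$, hence non-zero. Next I would handle $\mathcal{L}(d,n)$: by \eqref{L-state} it is the sum of all $|j_1\cdots j_n\rangle$ with $j_1+\cdots+j_n=d-1$, each with coefficient $1$, so again the support lies in $\{j_1+\cdots+j_n<d\}$ and in particular every term of the special form (which has digit sum $j+(d-j-1)=d-1$) is present with coefficient $1$. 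Finally, for $\mathcal{M}(d,n)$ I would read off \eqref{M-state}: the first sum contributes the terms with a single non-zero digit $d-1$ (the case $j=d-1$ of the special family, since then $d-j-1=0$), and the double sum contributes $|0\rangle^{\otimes i}|j\rangle|0\rangle^{\otimes k}|d-j-1\rangle|0\rangle^{\otimes l}$ for $1\leq j\leq d-2$, which upon specializing to $k=0$ and $l=0$ (so that $i=n-2$) gives exactly the needed antipodal terms with coefficient $1$. Every term here has digit sum $d-1$ as well.

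The only point requiring a little care — and the place I would be most careful — is confirming that the coefficient condition is met for \emph{every} index $i$ in the range $0\le i\le n-2$ and every $j$, not merely for some representative placement. For $\mathcal{N}$ and $\mathcal{L}$ this is transparent because all coefficients in the defining sum are $1$ and the relevant tuples are visibly in the support. For $\mathcal{M}$ one must observe that the theorem only demands the antipodal terms of the specific shape $|0\rangle^{\otimes(n-i-2)}|j\rangle|0\rangle^{\otimes i}|d-j-1\rangle$ (i.e.\ with the two non-zero digits in the last two ``slots'' relative to the padding pattern), and these all occur in \eqref{M-state} with coefficient $1$; the extra permuted terms present in $\mathcal{M}$ do not interfere with the hypothesis, which is a one-directional sufficient condition. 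Thus all three families meet the theorem's premises, and persistence of $\mathcal{L}(d,n)$, $\mathcal{M}(d,n)$, $\mathcal{N}(d,n)$ follows at once; combined with \Cref{theo:PTLB} this yields the rank lower bound $(n-1)(d-1)+1$ for each, which is the ultimate payoff of the construction.
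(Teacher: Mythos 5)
Your proof is correct and is exactly the argument the paper intends: the corollary is stated as an immediate consequence of the preceding theorem, and all that is required is to read off from Eqs.~\eqref{L-state}, \eqref{M-state}, and \eqref{N-state} that each family is supported on index tuples of digit sum $d-1<d$ and contains every term $|0\rangle^{\otimes(n-i-2)}|j\rangle|0\rangle^{\otimes i}|d-j-1\rangle$ with coefficient $1$. One tiny indexing slip: for $\mathcal{M}(d,n)$ the needed term with $|j\rangle$ in position $n-i-1$ is obtained from the double sum by taking $l=0$ and $k=i$ (with $k$ ranging over $0,\ldots,n-2$), not by fixing $k=0$; your later assertion that all such terms occur with coefficient $1$ is nonetheless correct, so the argument stands.
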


The persistence allows us to use the lower bound of~\Cref{theo:PTLB} to find the ranks of $\mathcal{L}(d, n)$, $\mathcal{M}(d, n)$ and $\mathcal{N}(d, n)$.

\begin{theorem}
The tensor rank of the $n$-qudit $\LL$ state is $\rk(\mathcal{L}(d,n))=(n-1)(d-1)+1$.
\end{theorem}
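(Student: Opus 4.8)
The plan is to prove the claimed rank by establishing matching upper and lower bounds. The lower bound is the easy half: since $\mathcal{L}(d,n)$ is a persistent tensor (as established in the preceding corollary), \Cref{theo:PTLB} immediately gives $\rk(\mathcal{L}(d,n))\geq\sum_{k=1}^{n-1}(d-1)+1=(n-1)(d-1)+1$, because here every local dimension $d_k$ equals $d$. So the substance of the proof is to exhibit an explicit rank decomposition of $\mathcal{L}(d,n)$ using exactly $(n-1)(d-1)+1$ simple tensors.

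For the upper bound, the natural approach is to use the symmetry of $\mathcal{L}(d,n)$ and its correspondence with a homogeneous polynomial. Since $\mathcal{L}(d,n)$ is a symmetric tensor whose entries are indexed by $(j_1,\ldots,j_n)$ with $j_1+\cdots+j_n=d-1$, it corresponds to the degree-$n$ polynomial obtained by summing all monomials in $x_0,\ldots,x_{d-1}$ of the appropriate multi-degree; more usefully, one can view $\mathcal{L}(d,n)$ through a generating-function/apolarity lens. The cleanest route I would pursue is to construct the decomposition directly by an interpolation argument: write $\mathcal{L}(d,n)$ as a sum of terms of the form $|v(t_s)\rangle^{\otimes n}$ (and possibly lower-order ``tangent'' corrections), where $v(t)=(1,t,t^2,\ldots)$-style vectors are evaluated at well-chosen points $t_s$, exploiting that $\mathcal{L}(d,n)$ is the ``truncated'' power-sum tensor. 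Concretely, I expect the decomposition to combine $(d-1)$ rank-one symmetric contributions from each of the first $n-1$ slots with a single remaining term, mirroring exactly how the lower bound is assembled slice-by-slice in the proof of \Cref{theo:PTLB}.

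A more transparent alternative, which I would likely present, is an inductive construction paralleling the persistence recursion. Decomposing along the first factor in the canonical basis gives
\begin{equation}
\mathcal{L}(d,n)=\sum_{j=0}^{d-1}|j\rangle\otimes\mathcal{L}_j\,,
\end{equation}
where $\mathcal{L}_j=\sum_{j_2+\cdots+j_n=d-1-j}|j_2\cdots j_n\rangle$ is (a copy of) $\mathcal{L}(d-j,n-1)$ embedded in $\otimes^{n-1}\mathbbm{C}^d$. One then builds a rank decomposition by grouping these slices cleverly: the induction hypothesis supplies a decomposition of $\mathcal{L}(d,n-1)$ with $(n-2)(d-1)+1$ terms, and I would show that extending to $n$ factors costs exactly $d-1$ additional simple tensors. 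The bookkeeping here is to check that the ``base'' slice $\mathcal{L}(d,n-1)$ and the lower slices $\mathcal{L}(d-j,n-1)$ can be absorbed into a common set of $(d-1)$ new rank-one terms plus the already-constructed ones, so that the total is $(n-2)(d-1)+1+(d-1)=(n-1)(d-1)+1$.

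The main obstacle will be producing a genuinely explicit and verifiably correct upper-bound decomposition rather than merely a dimension count, since the summands $\mathcal{L}_j$ do not share a common last-factor basis and naive concatenation would overshoot the target count. I would handle this either by the interpolation/Waring-type construction (choosing evaluation points so that unwanted monomial contributions telescope) or by carefully tracking, in the inductive step, which rank-one terms from the $(n-1)$-factor decomposition can be reused when passing to $n$ factors. Since the lower bound already forces the answer to be at least $(n-1)(d-1)+1$, it suffices to hit this number exactly once, so the whole proof reduces to verifying that one such explicit decomposition is valid, which is a finite (if slightly intricate) computation.
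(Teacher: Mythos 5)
Your lower bound is exactly the paper's: persistence of $\mathcal{L}(d,n)$ together with \Cref{theo:PTLB} gives $(n-1)(d-1)+1$ at once. For the upper bound, your first route (interpolation with power vectors $v(t)=\sum_j t^j|j\rangle$ at well-chosen points) is precisely the paper's construction, but you stop short of the one step that carries all the content: naming the evaluation points and explaining why exactly $(n-1)(d-1)+1$ of them suffice. The key observation is that
\[
\Big(\sum_{j=0}^{d-1}t^j|j\rangle\Big)^{\otimes n}=\sum_{s=0}^{n(d-1)}t^s\sum_{j_1+\cdots+j_n=s}|j_1\cdots j_n\rangle\,,
\]
so $\mathcal{L}(d,n)$ is the coefficient of $t^{d-1}$ in a vector-valued polynomial of degree $n(d-1)$ in $t$. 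Setting $r=(n-1)(d-1)+1$ and $\zeta=\exp(2\pi\mathbf{i}/r)$, the averaging identity $\frac{1}{r}\sum_{p=0}^{r-1}\zeta^{p(s-d+1)}$ equals $1$ if $r$ divides $s-d+1$ and $0$ otherwise, hence extracts the sum of all coefficients with $s\equiv d-1\pmod{r}$; since $r$ exceeds both $d-1$ and $(n-1)(d-1)$, the only such $s$ in $\{0,\ldots,n(d-1)\}$ is $s=d-1$ itself, whence
\[
\mathcal{L}(d,n)=\frac{1}{r}\sum_{p=0}^{r-1}\zeta^{-p(d-1)}\Big(\sum_{j=0}^{d-1}\zeta^{pj}|j\rangle\Big)^{\otimes n}
\]
is a decomposition into exactly $r$ symmetric simple tensors. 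This is not merely ``a finite computation'' left to verify: the choice of the modulus $r$ is the whole point, being the smallest modulus for which the residue class of $d-1$ meets $\{0,\ldots,n(d-1)\}$ only once, which is exactly what makes the count land on the lower bound.

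Your second, inductive route is not what the paper does, and I would caution against it: the slices $\mathcal{L}_j$ are truncations at different levels (they are not copies of a single tensor), they do not share a common rank decomposition, and the proposed absorption of all of them into $d-1$ new rank-one terms plus reused ones has no evident mechanism --- this is the very overshoot problem you flag yourself. The symmetric interpolation argument sidesteps all of that bookkeeping.
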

\begin{proof}
The lower bound follows from~\Cref{theo:PTLB}.
For the upper bound, we give an explicit decomposition.

Let $r=(n-1)(d-1)+1$ and let $\zeta =\exp(\frac{2\pi\mathbf{i}}{r})$ be a primitive $r$-th root of unity.
Using the property of roots of unity
\begin{equation}\label{roots-unity-2}
\sum_{p = 0}^{r-1} \zeta^{p q}=
\begin{cases}
r & \text{if $r\,|\,q$} \\
0 & \text{otherwise,}
\end{cases}
\end{equation}
we see that
\begin{align}\nonumber
\mathcal{L}(d,n)&=\sum_{j_1+\cdots+j_n=d-1}|j_1\cdots j_n\rangle
\\ \nonumber
&=\frac{1}{r}\sum_{s=0}^{n(d-1)}\sum_{j_1+\cdots+j_n=s} \Big(\sum_{p=0}^{r-1}\zeta^{p(s-d+1)}\Big)|j_1\cdots j_n\rangle  \\ \nonumber
&=\frac{1}{r}\sum_{j_1+\cdots+j_n=s}\Big(\sum_{p=0}^{r-1}\zeta^{p(\sum_{i=1}^n j_i-d+1)}\Big)|j_1\cdots j_n\rangle
\\
&=\frac{1}{r}\sum_{p=0}^{r-1}\zeta^{p(-d+1)}\Big(\sum_{j=0}^{d-1}\zeta^{p j}|j\rangle\Big)^{\otimes n}\,.
\end{align}
\end{proof}

\begin{corollary}\label{cor:tensor-rank-Y}
The tensor rank of the $n$-qutrit $\Y$ state ($\mathcal{Y}_n = \mathcal{L}(3, n)$) is $2n-1$.
\end{corollary}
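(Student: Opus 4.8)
The plan is to obtain this as an immediate specialization of the preceding theorem, which asserts $\rk(\mathcal{L}(d,n)) = (n-1)(d-1)+1$ for all $d$ and $n$. First I would confirm the identification $\mathcal{Y}_n = \mathcal{L}(3,n)$ already recorded in the statement. Expanding $\mathcal{L}(3,n) = \sum_{j_1+\cdots+j_n = 2}|j_1\cdots j_n\rangle$ over multi-indices with entries in $\{0,1,2\}$ summing to $2$, the admissible terms are exactly those with a single entry equal to $2$ (and the rest $0$) or with two entries equal to $1$ (and the rest $0$). These are precisely the permutations of $|0\rangle^{\otimes(n-1)}|2\rangle$ and of $|0\rangle^{\otimes(n-2)}|11\rangle$, so the sum coincides with $\sum_{\mathfrak{p}\in\mathfrak{S}_n}\mathfrak{p}\{|0\rangle^{\otimes(n-2)}(|02\rangle+|11\rangle)\} = \mathcal{Y}_n$.

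With the identification in hand, substituting $d=3$ into the general formula gives $\rk(\mathcal{Y}_n) = (n-1)(3-1)+1 = 2(n-1)+1 = 2n-1$, which is the claim. For completeness I would recall that both directions of the general theorem already supply the two halves of this equality in the case at hand: the lower bound $\rk(\mathcal{Y}_n) \geq 2n-1$ follows from \Cref{theo:PTLB} once persistence of $\mathcal{L}(3,n)$ is known (established by the preceding corollary on persistence of $\mathcal{L}$, $\mathcal{M}$, $\mathcal{N}$), while the matching upper bound is furnished by the explicit decomposition built from the $r$-th roots of unity with $r = 2n-1$, which exhibits $\mathcal{Y}_n$ as a sum of $2n-1$ simple tensors.

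Since the general theorem on $\rk(\mathcal{L}(d,n))$ is already proved, there is no genuine obstacle here; the corollary is purely a matter of substitution. The only point requiring any care is the bookkeeping in the multi-index expansion confirming $\mathcal{Y}_n = \mathcal{L}(3,n)$, and even this is routine once one observes that the partitions of $2$ into parts of size at most $2$ are exactly $2 = 2$ and $2 = 1+1$.
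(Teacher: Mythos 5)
Your proposal is correct and matches the paper, which states this corollary without proof precisely because it is the immediate specialization $d=3$ of the preceding theorem $\rk(\mathcal{L}(d,n))=(n-1)(d-1)+1$; your multi-index check that $\mathcal{Y}_n=\mathcal{L}(3,n)$ and the substitution $(n-1)\cdot 2+1=2n-1$ are exactly what is needed. Nothing is missing.
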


\begin{theorem}\label{lem:Tensor-rank-M}
The tensor rank of the $n$-qudit $\M$ state is $\rk(\mathcal{M}(d,n))=(n-1)(d-1)+1$.
\end{theorem}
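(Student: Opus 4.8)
The plan is to establish both bounds separately. The lower bound, $\rk(\mathcal{M}(d,n)) \geq (n-1)(d-1)+1$, comes for free: by the preceding corollary $\mathcal{M}(d,n)$ is a persistent tensor, so \Cref{theo:PTLB} applies immediately with all local dimensions $d_k = d$, giving $\rk(\mathcal{M}(d,n)) \geq \sum_{k=1}^{n-1}(d-1)+1 = (n-1)(d-1)+1$. Thus the entire content of the theorem is the matching upper bound, for which I would exhibit an explicit rank decomposition into exactly $(n-1)(d-1)+1$ simple tensors.

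For the upper bound, I would work with the symmetric version $\mathcal{M}(d,n)$ in Eq.~\eqref{M-state} and seek a decomposition of Coppersmith--Winograd type, mirroring the roots-of-unity construction that worked for $\mathcal{L}(d,n)$. The idea is to write $\mathcal{M}(d,n)$ as a linear combination of $n$-th powers of vectors of the form $|0\rangle + \sum_{j\geq 1} c_j |j\rangle$, exploiting that expanding $\big(|0\rangle + \sum_{j=1}^{d-1} c_j|j\rangle\big)^{\otimes n}$ produces, at multidegree summing to $d-1$ in the ``excitation'' count, exactly the monomial pattern appearing in $\mathcal{M}$. Concretely, I expect the relevant terms to be the weight-one contributions $|0\rangle^{\otimes(n-1)}|d-1\rangle$ (and permutations) together with the weight-two pair contributions $|j\rangle|d-j-1\rangle$; a careful choice of the coefficients $c_j$ and of finitely many evaluation points, combined with a derivative/limit trick to kill the unwanted higher-excitation terms, should isolate precisely $\mathcal{M}(d,n)$. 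Counting the number of powers needed, I anticipate $(n-1)(d-1)+1$ points suffice, matching the lower bound.

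An alternative and likely cleaner route is to reduce to the already-proven result for $\mathcal{L}$. Since $\mathcal{M}'(d,n)$ (the SLOCC-equivalent second version in Eq.~\eqref{M-state-2}) has a structure closely parallel to $\mathcal{L}(d,n)$ and $\mathcal{N}'(d,n)$, I would try to produce a decomposition by the same $r$-th root of unity averaging used in the $\mathcal{L}$ proof, where $r = (n-1)(d-1)+1$ and $\zeta = \exp(\tfrac{2\pi\mathbf{i}}{r})$. The key algebraic identity is again Eq.~\eqref{roots-unity-2}: averaging $\zeta^{p\cdot(\text{weight})}$ over $p$ annihilates every monomial whose total weight $\sum_i j_i$ is not congruent to $d-1$ modulo $r$, and because the tensor lives in $\otimes^n\mathbbm{C}^d$ the achievable weights range only up to $n(d-1) < $ something controllable, so the congruence selects exactly the weight-$(d-1)$ slice. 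The subtlety is that $\mathcal{M}$ contains \emph{only} the weight-$(d-1)$ monomials of a specific ``two-body'' shape, not all of them, so pure averaging of a single power family will overshoot and include cross terms like $|j\rangle|k\rangle$ with $j+k = d-1$ but $j,k$ both positive in non-adjacent-to-$0$ patterns already present, plus genuinely unwanted ones.

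The main obstacle, then, is precisely this selectivity: the $\mathcal{L}$ state is the \emph{full} weight-$(d-1)$ symmetric slice, so a single averaged power family reproduces it exactly, whereas $\mathcal{M}(d,n)$ is a strict subtensor supported only on monomials with at most two nonzero indices. I therefore expect to need either (i) a modified set of generating vectors whose powers already avoid the three-or-more-body monomials, or (ii) to realize $\mathcal{M}(d,n)$ as a restriction (under a non-invertible local map) of $\mathcal{L}(d',n)$ for a suitable $d'$ and then control the rank under restriction, or (iii) to invoke the Kronecker-power observation that $\mathcal{W}_n^{\boxtimes 2} \simeq \mathcal{M}(4,n)$ and its generalizations, reducing the problem to known $\mathcal{W}$-state ranks. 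Since the lower bound is already tight and the value $(n-1)(d-1)+1$ is identical to that of $\mathcal{L}(d,n)$, I am confident a decomposition of the same size exists; the work is in choosing the generators so that exactly the two-body monomials survive. I would verify correctness by expanding the proposed decomposition and checking that every coefficient of $\mathcal{M}(d,n)$ is reproduced while all spurious monomials cancel, using Eq.~\eqref{roots-unity-2} as the cancellation engine.
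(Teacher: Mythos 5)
Your lower bound is exactly the paper's: persistence of $\mathcal{M}(d,n)$ plus \Cref{theo:PTLB} with all $d_k=d$. The problem is the upper bound, which you correctly identify as the entire content of the theorem but do not actually establish. Your primary route — roots-of-unity averaging as in the $\mathcal{L}(d,n)$ proof — founders on precisely the obstacle you name: averaging $\zeta^{p\cdot(\text{weight})}$ selects the \emph{full} weight-$(d-1)$ slice, which is $\mathcal{L}(d,n)$, whereas $\mathcal{M}(d,n)$ is supported only on monomials with at most two nonzero indices. None of your three proposed workarounds is carried out: (i) is left as a hope, (ii) would require controlling rank under a non-invertible restriction (which in general only gives $\rk(\text{restriction})\leq\rk(\mathcal{L})$, the right inequality but for the wrong tensor unless you actually exhibit the map and check the image), and (iii) only covers $d=4$ via $\mathcal{W}_n^{\boxtimes 2}\simeq\mathcal{M}(4,n)$. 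A statement of confidence that a decomposition exists is not a decomposition.

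The paper's upper bound is far more elementary and avoids roots of unity entirely. Work with the SLOCC-equivalent form $\mathcal{M}'(d,n)$ of Eq.~\eqref{M-state-2}: it is by inspection the sum of $d-2$ tensors of the form $\sum_{\mathfrak{p}\in\mathfrak{S}_n}\mathfrak{p}\{|0\rangle^{\otimes(n-2)}|jj\rangle\}$, each equivalent to the Dicke state $\mathcal{D}_n^2$ of rank $n-1$, plus the tensor $\sum_{\mathfrak{p}\in\mathfrak{S}_n}\mathfrak{p}\{|0\rangle^{\otimes(n-1)}|d-1\rangle\}$, equivalent to $\mathcal{W}_n$ of rank $n$. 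Subadditivity of tensor rank under addition then gives
\begin{equation*}
\rk(\mathcal{M}'(d,n))\leq(d-2)(n-1)+n=(n-1)(d-1)+1\,,
\end{equation*}
matching the lower bound. This is the decomposition your sketch is missing: rather than forcing a single averaged power family to be selective, split $\mathcal{M}'$ into pieces whose ranks are already known. If you want to salvage your approach, you would need to either supply explicit generators whose powers avoid the three-or-more-body monomials (nontrivial), or adopt this additive splitting.
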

\begin{proof}
For the lower bound, we again use~\Cref{theo:PTLB}.

We prove the upper bound for the SLOCC equivalent tensor $\mathcal{M}'(d, n)$.
Note that $\mathcal{M}'(d,n)$ is the sum of $d - 2$ tensors of the form $\sum_{\mathfrak{p} \in \mathfrak{S}_n} \mathfrak{p}\lbrace|0\rangle^{\otimes(n-2)}|jj\rangle\rbrace$, which are equivalent to the Dicke state $\mathcal{D}^2_n$, and the tensor $\sum_{\mathfrak{p}\in\mathfrak{S}_n} \mathfrak{p}\lbrace|0\rangle^{\otimes(n-1)}|d-1\rangle\rbrace$, which is equivalent to $\mathcal{W}_n$.
The rank of $\mathcal{D}^2_n$ is $n-1$ \cite{CDS10} and the rank of $\mathcal{W}_n$ is $n$. We obtain the required upper bound by summing these ranks.
\end{proof}

\begin{corollary}\label{cor:tensor-rank-W2}
$\rk(\mathcal{W}_n\boxtimes\mathcal{W}_n)=\rk(\mathcal{M}(4,n))=3n-2$ which already has been obtained in Ref. \cite{CCDJW10}.
\end{corollary}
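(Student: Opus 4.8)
The plan is to prove the statement $\rk(\mathcal{W}_n\boxtimes\mathcal{W}_n)=3n-2$ as a corollary of \Cref{lem:Tensor-rank-M}, which we have just established. The key observation, already noted in the paragraph preceding the theorem on the families, is that the Kronecker square $\mathcal{W}_n^{\boxtimes 2}$ is SLOCC equivalent to $\mathcal{M}(4,n)$ under the identification $|00\rangle\mapsto|0\rangle$, $|01\rangle\mapsto|1\rangle$, $|10\rangle\mapsto|2\rangle$, $|11\rangle\mapsto|3\rangle$ applied to each tensor factor. Since tensor rank is invariant under SLOCC equivalence (it is an SLOCC monotone in both directions for invertible local maps, and here the identification is a bijective relabelling of basis vectors, hence invertible), we immediately get $\rk(\mathcal{W}_n\boxtimes\mathcal{W}_n)=\rk(\mathcal{M}(4,n))$.

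First I would verify the identification carefully: writing out $\mathcal{W}_n\boxtimes\mathcal{W}_n$ in the product basis $\mathbbm{C}^2\boxtimes\mathbbm{C}^2\cong\mathbbm{C}^4$, the Kronecker product of two $\rm{W}$ states produces, in each factor, a four-dimensional local space whose basis is identified with $\{|0\rangle,|1\rangle,|2\rangle,|3\rangle\}$ as above. Expanding $\mathcal{W}_n=\sum_{i=0}^{n-1}|0\rangle^{\otimes(n-i-1)}|1\rangle|0\rangle^{\otimes i}$ and taking the Kronecker product slot by slot, the terms split into a ``diagonal'' part where both copies place their excitation in the same slot (contributing $|0\rangle^{\otimes(n-i-1)}|3\rangle|0\rangle^{\otimes i}$, i.e.\ the $\mathcal{W}_n$-like piece landing in the top basis vector $|d-1\rangle=|3\rangle$) and an ``off-diagonal'' part where the two excitations occupy distinct slots (contributing the $|j\rangle\cdots|d-j-1\rangle$ cross terms with $j\in\{1,2\}$). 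This exactly reproduces the summation structure of $\mathcal{M}(4,n)$ in Eq.~\eqref{M-state}, confirming the equivalence.

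Then the computation of $\rk(\mathcal{M}(4,n))$ is supplied directly by \Cref{lem:Tensor-rank-M}: setting $d=4$ gives $\rk(\mathcal{M}(4,n))=(n-1)(4-1)+1=3n-2$. Chaining the two equalities yields $\rk(\mathcal{W}_n\boxtimes\mathcal{W}_n)=3n-2$, which matches the value obtained earlier in Ref.~\cite{CCDJW10}, providing an independent confirmation via the persistent-tensor machinery.

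I expect the main obstacle to be purely bookkeeping rather than conceptual: one must confirm that the basis relabelling really carries $\mathcal{W}_n^{\boxtimes 2}$ onto $\mathcal{M}(4,n)$ term-for-term, in particular that the off-diagonal cross terms of the Kronecker product align with the $\sum_{j=1}^{d-2}$ summand of Eq.~\eqref{M-state} and that no coefficients are introduced or lost. Everything else is immediate from SLOCC-invariance of tensor rank together with the already-proven value of $\rk(\mathcal{M}(d,n))$, so no further genuinely new argument is required.
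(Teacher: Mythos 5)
Your proposal is correct and follows exactly the route the paper intends: the corollary is stated immediately after \Cref{lem:Tensor-rank-M} and rests on the equivalence $\mathcal{W}_n^{\boxtimes 2}\cong\mathcal{M}(4,n)$ under the relabelling $|00\rangle\mapsto|0\rangle$, $|01\rangle\mapsto|1\rangle$, $|10\rangle\mapsto|2\rangle$, $|11\rangle\mapsto|3\rangle$, which the paper records just before introducing the three families. Your explicit term-by-term check of the diagonal and off-diagonal pieces of the Kronecker square against the two sums in Eq.~\eqref{M-state} is the only bookkeeping the paper leaves implicit, and it is carried out correctly.
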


\begin{theorem}
The tensor rank of the $n$-qudit $\N$ state is $\rk(\mathcal{N}(d,n))=(n-1)(d-1)+1$.
\end{theorem}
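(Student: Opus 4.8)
The plan is to prove that $\rk(\mathcal{N}(d,n)) = (n-1)(d-1)+1$ by establishing matching lower and upper bounds. The lower bound is immediate: since $\mathcal{N}(d,n)$ is a persistent tensor by the corollary preceding this statement (it has the required nonzero coefficients before each $|0\rangle^{\otimes(n-i-2)}|j\rangle|0\rangle^{\otimes i}|d-j-1\rangle$), \Cref{theo:PTLB} applied with all $d_k=d$ gives $\rk(\mathcal{N}(d,n)) \geq \sum_{k=1}^{n-1}(d-1)+1 = (n-1)(d-1)+1$. So the entire content of the proof lies in constructing an explicit rank decomposition of $\mathcal{N}(d,n)$ (or the SLOCC-equivalent $\mathcal{N}'(d,n)$) using exactly $(n-1)(d-1)+1$ simple tensors.

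For the upper bound, I would mirror the root-of-unity technique used for $\mathcal{L}(d,n)$, but adapted to the structure of $\mathcal{N}'(d,n)=|0\rangle^{\otimes n}+\sum_{i=0}^{n-2}\sum_{j=1}^{d-1}|0\rangle^{\otimes(n-i-2)}|j\rangle|0\rangle^{\otimes i}|j\rangle$. The key observation is that $\mathcal{N}'$ only involves monomials in which at most two factors are nonzero, and the two nonzero indices are equal. This is a ``second-order'' pattern similar to the one appearing in $\mathcal{M}(d,n)$, so I expect the cleanest route is to set $r=(n-1)(d-1)+1$, let $\zeta=\exp(\frac{2\pi\mathbf{i}}{r})$ be a primitive $r$-th root of unity, and seek a decomposition of the form $\mathcal{N}'(d,n)=\sum_{p=0}^{r-1} c_p\, g_p^{(1)}\otimes\cdots\otimes g_p^{(n)}$ where each local vector $g_p^{(m)}$ is built from $|0\rangle$ plus a $\zeta$-weighted combination of the excited basis vectors $|1\rangle,\ldots,|d-1\rangle$. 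Expanding such a product and summing over $p$, the orthogonality relation \eqref{roots-unity-2} annihilates all unwanted cross terms and the surviving terms must be tuned (via the constants $c_p$ and the exponents attached to each $|j\rangle$) to reproduce exactly the $|0\rangle^{\otimes n}$ term and each single- and double-excitation term of $\mathcal{N}'$.

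An alternative, and possibly simpler, upper-bound strategy is the additive decomposition used for $\mathcal{M}$ in \Cref{lem:Tensor-rank-M}: write $\mathcal{N}'(d,n)$ (or $\mathcal{N}(d,n)$) as a sum over the excitation level $j$ of pieces that are individually recognizable tensors of known rank. Concretely, for each fixed $j$ the partial sum $\sum_{i=0}^{n-2}|0\rangle^{\otimes(n-i-2)}|j\rangle|0\rangle^{\otimes i}|j\rangle$ is supported on configurations with two equal excitations and should be equivalent to a Dicke-type tensor $\mathcal{D}_n^2$ of rank $n-1$; together with the single ``$\mathcal{W}_n$-like'' piece carrying the $|0\rangle^{\otimes n}$ term and one excitation, summing the ranks would give $(d-2)(n-1)+n=(n-1)(d-1)+1$. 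I would check carefully whether the $|j\rangle$-components for distinct $j$ really occupy independent slots so that no cancellation or overlap inflates the count, and whether the last factor's asymmetry in $\mathcal{N}$ versus $\mathcal{N}'$ matters.

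The main obstacle I anticipate is bookkeeping the interference in the root-of-unity approach: one must simultaneously kill every monomial whose total excitation weight differs from the target while correctly producing both the single-excitation terms $|0\rangle^{\otimes(n-1)}|j\rangle$-type contributions and the double-excitation diagonal terms $|0\rangle^{\cdots}|j\rangle\cdots|j\rangle$, all with a single family of $r$ rank-one tensors. Getting the local vectors and the scalar weights $c_p$ to match all $1+(n-1)(d-1)$ coefficients at once is the delicate step; if a direct closed-form ansatz proves unwieldy, I would fall back on the additive Dicke-plus-$\mathcal{W}$ decomposition, which sidesteps the interference calculation at the cost of invoking the known ranks $\rk(\mathcal{D}_n^2)=n-1$ and $\rk(\mathcal{W}_n)=n$ (the latter from \Cref{cor:W-rank}).
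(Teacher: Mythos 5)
Your lower bound is exactly the paper's: persistence of $\mathcal{N}(d,n)$ plus \Cref{theo:PTLB}. The gap is in the upper bound, where you have overlooked the one observation that makes it immediate: the defining expression
\begin{equation*}
\mathcal{N}(d,n)=|0\rangle^{\otimes(n-1)}|d-1\rangle+\sum_{i=0}^{n-2}\sum_{j=1}^{d-1}|0\rangle^{\otimes(n-i-2)}|j\rangle|0\rangle^{\otimes i}|d-j-1\rangle
\end{equation*}
is \emph{already} a rank decomposition. Every summand is a product of computational basis vectors, hence a simple tensor, and there are exactly $1+(n-1)(d-1)$ of them. That single sentence is the paper's entire upper-bound argument; no root-of-unity identity and no appeal to known ranks of $\mathcal{D}_n^2$ or $\mathcal{W}_n$ is needed.

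Both of your proposed substitutes are problematic. The root-of-unity ansatz works for $\mathcal{L}(d,n)$ because all of its terms share the same total excitation weight $d-1$, so the filter \eqref{roots-unity-2} selects exactly the wanted monomials; in $\mathcal{N}'(d,n)$ the surviving terms would have to be the diagonal pairs $(j,j)$ whose total weight $2j$ varies with $j$, and a weight filter cannot isolate those from off-diagonal terms of equal weight, so I do not expect that ansatz to close. The additive route rests on a misidentification: for fixed $j$ the partial sum $\sum_{i=0}^{n-2}|0\rangle^{\otimes(n-i-2)}|j\rangle|0\rangle^{\otimes i}|j\rangle$ is not a Dicke tensor $\mathcal{D}_n^2$ (that is the structure of $\mathcal{M}'(d,n)$, where the two excitations range over all position pairs); here the second excitation is pinned to the last factor, so the piece is a $\mathcal{W}_{n-1}$-pattern on the first $n-1$ factors tensored with $|j\rangle$ — which is again manifestly a sum of $n-1$ basis products, bringing you back to the trivial count. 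Your arithmetic $(d-2)(n-1)+n$ happens to equal $(n-1)(d-1)+1$, but the decomposition it purports to count does not describe $\mathcal{N}$.
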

\begin{proof}
The lower bound again follows from~\Cref{theo:PTLB}, and the upper bound is obvious from the definition of $\mathcal{N}$, which has $(n-1)(d-1)+1$ summands.
\end{proof}

Thus, the three families of persistent tensors we introduced have rank $(n-1)(d-1)+1$, which is the minimum possible rank for a persistent tensor in $\otimes^n\mathbbm{C}^d$.
We can show that their border rank also has the minimum possible value $d$.

\begin{theorem}\label{theo:LMN-degenerations}
For $n\geq3$ we have a chain of degenerations
\begin{equation}\label{LMN-degenerations}
\mathcal{L}(d,n)\xdashrightarrow[]{\text{SLOCC}}\mathcal{M}(d,n)\xdashrightarrow[]{\text{SLOCC}}\mathcal{N}(d,n)\,.
\end{equation}
\end{theorem}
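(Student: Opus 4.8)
The plan is to realize both arrows by \emph{explicit diagonal degenerations}, exploiting the fact that the three tensors differ only by which low-support monomials they retain. Writing a basis tensor $|j_1\cdots j_n\rangle$ with $\sum_i j_i = d-1$ and calling its \emph{support} the number of nonzero entries, one checks directly from \eqref{L-state}, \eqref{M-state}, \eqref{N-state} that: $\mathcal{L}(d,n)$ is the sum of \emph{all} such monomials; $\mathcal{M}(d,n)$ is its truncation to supports $1$ and $2$; and $\mathcal{N}(d,n)$ is the further truncation of $\mathcal{M}(d,n)$ in which every support-$2$ monomial is required to have one of its two nonzero entries in the last tensor factor. Thus for the first arrow I must push all support-$\ge 3$ monomials to higher order in $\varepsilon$ while keeping supports $1$ and $2$ at the same leading order; for the second I must additionally push to higher order those support-$2$ monomials supported on the first $n-1$ factors only.

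For $\mathcal{L}(d,n)\xdashrightarrow[]{\text{SLOCC}}\mathcal{M}(d,n)$ I apply the \emph{same} diagonal map $A(\varepsilon)$ to every factor, defined by $A(\varepsilon)|0\rangle=|0\rangle$, $A(\varepsilon)|j\rangle=\varepsilon|j\rangle$ for $1\le j\le d-2$, and $A(\varepsilon)|d-1\rangle=\varepsilon^2|d-1\rangle$. On a monomial $|j_1\cdots j_n\rangle$ the tensor power $A(\varepsilon)^{\otimes n}$ acts as multiplication by $\varepsilon^{\sum_i w_{j_i}}$, where $w_0=0$, $w_j=1$ for $1\le j\le d-2$, and $w_{d-1}=2$. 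A support-$1$ monomial has its unique nonzero entry equal to $d-1$, hence weight $2$; a support-$2$ monomial has entries $j,d-1-j$ with $1\le j\le d-2$, hence weight $1+1=2$; and any monomial of support $s\ge 3$ has all entries in $\{1,\dots,d-2\}$, hence weight $s\ge 3$. Therefore
\[
A(\varepsilon)^{\otimes n}\,\mathcal{L}(d,n)=\varepsilon^{2}\,\mathcal{M}(d,n)+\sum_{s\ge 3}\varepsilon^{s}\,\mathcal{L}^{(s)}(d,n),
\]
where $\mathcal{L}^{(s)}$ collects the support-$s$ monomials, which is exactly the form required by \Cref{def:degeneration} with approximation degree $2$.

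For $\mathcal{M}(d,n)\xdashrightarrow[]{\text{SLOCC}}\mathcal{N}(d,n)$ I break the symmetry between factors: to each of the first $n-1$ factors I apply $B(\varepsilon)=\mathrm{diag}(1,\varepsilon,\dots,\varepsilon)$ (scaling every level $j\ge 1$ by $\varepsilon$), and to the last factor I apply $C(\varepsilon)=\mathrm{diag}(1,\dots,1,\varepsilon)$ (scaling only the top level $d-1$ by $\varepsilon$). Then a support-$1$ monomial picks up a single factor $\varepsilon$ whether its $|d-1\rangle$ sits in one of the first $n-1$ slots (via $B$) or in the last slot (via $C$); a support-$2$ monomial with both nonzero entries among the first $n-1$ factors picks up $\varepsilon^2$; and a support-$2$ monomial with one entry in the last slot picks up $\varepsilon\cdot 1=\varepsilon$, since its last-slot entry lies in $\{1,\dots,d-2\}$ and is fixed by $C$. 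Hence the order-$\varepsilon$ part of $(B(\varepsilon)^{\otimes(n-1)}\otimes C(\varepsilon))\,\mathcal{M}(d,n)$ is precisely the sum of all support-$1$ monomials together with the support-$2$ monomials touching the last factor, that is $\mathcal{N}(d,n)$, while the remaining support-$2$ monomials appear at order $\varepsilon^2$. This is again the form of \Cref{def:degeneration}, with approximation degree $1$, and composing the two arrows gives the claimed chain.

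The only genuinely nonobvious point—and the one I would emphasize—is the choice of weights in the first step: a \emph{monotone} assignment $w_j$ cannot make supports $1$ and $2$ share a leading order while strictly penalizing larger supports, because the pairs $(j,d-1-j)$ would then not all receive the common weight $w_{d-1}$. The resolution is the \emph{non-monotone} profile $w_{d-1}=2>w_j=1$ for intermediate $j$, which forces every support-$1$ and support-$2$ monomial to weight exactly $2$ while every additional nonzero entry adds at least $1$. The remaining verifications are purely combinatorial bookkeeping of monomials; no cancellation can occur since all coefficients are $+1$ and the maps are diagonal, and all maps are manifestly polynomial in $\varepsilon$. The degenerate cases $d\le 3$ (where $\mathcal{L}=\mathcal{M}$, and all support-$\ge 3$ monomials are absent) are covered automatically.
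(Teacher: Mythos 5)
Your proof is correct and takes essentially the same route as the paper: both arrows are realized by explicit diagonal degenerations, and your weight profiles $(0,1,\dots,1,2)$ and $(B,C)$ are just rescaled/renormalized versions of the paper's $\mathrm{diag}(\varepsilon^{-2},\varepsilon^{n-2},\dots,\varepsilon^{n-2},\varepsilon^{2(n-1)})$ and $\mathrm{diag}(1,\varepsilon,\dots,\varepsilon,1)$ with inverse on the last factor. The paper normalizes so that the target appears at order $\varepsilon^{0}$, whereas you extract it at approximation degree $2$ (resp.\ $1$); both satisfy \Cref{def:degeneration}, and your support-counting justification is a welcome addition the paper leaves implicit.
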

\begin{proof}
To degenerate from $\mathcal{L}(d, n)$ to $\mathcal{M}(d, n)$, apply the family of linear maps $A(\varepsilon)=\mathrm{diag}(\varepsilon^{-2},\varepsilon^{n-2},\ldots,\varepsilon^{n-2}, \varepsilon^{2(n-1)})$ to each tensor factor and let $\varepsilon\to0$.

To degenerate from $\mathcal{M}(d, n)$ to $\mathcal{N}(d, n)$, apply $A(\varepsilon)=\mathrm{diag}(1,\varepsilon,\ldots,\varepsilon,1)$ to the first $n-1$ factors and $A(\varepsilon)^{-1}$ to the last factor, and let $\varepsilon\to 0$.
\end{proof}

\begin{theorem}\label{theo:LMN-brank}
$\brk(\mathcal{L}(d,n)) = \brk(\mathcal{M}(d,n)) = \brk(\mathcal{N}(d,n)) = d$.
\end{theorem}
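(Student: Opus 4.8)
The plan is to establish the two inequalities $\brk \geq d$ and $\brk \leq d$ for each of the three families. The lower bound is the easier direction: each tensor $\mathcal{L}(d,n)$, $\mathcal{M}(d,n)$, $\mathcal{N}(d,n)$ is concise in every factor (indeed, conciseness follows from the persistence already proven, since a $1$-concise tensor has full-rank first flattening, and by the symmetry of these constructions the same holds for every factor). A concise tensor in $\otimes^n\mathbbm{C}^d$ must have border rank at least $d$, because border rank is upper semicontinuous and dominates the multilinear rank: any flattening $\mathcal{M}_I[\psi]$ has rank at most $\brk(\mathcal{T})$, and conciseness forces this rank to equal $d$. Hence $\brk \geq d$ for all three.

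For the upper bound I would exploit the chain of degenerations already established in \Cref{theo:LMN-degenerations}, together with the monotonicity of border rank under degeneration. Since $\mathcal{L}(d,n)\xdashrightarrow{\text{SLOCC}}\mathcal{M}(d,n)\xdashrightarrow{\text{SLOCC}}\mathcal{N}(d,n)$ and degeneration can only decrease (or preserve) border rank, it suffices to prove $\brk(\mathcal{L}(d,n))\leq d$ at the top of the chain; the bounds for $\mathcal{M}$ and $\mathcal{N}$ then follow immediately. To bound $\brk(\mathcal{L}(d,n))$, I would write an explicit limit of $d$ simple tensors converging to $\mathcal{L}(d,n)$. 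The natural candidate mimics the structure already seen in the rank decomposition: consider the curve
\begin{equation}
\gamma(\varepsilon)=\frac{1}{\varepsilon^{\,d-1}}\sum_{j=0}^{d-1}c_j\Big(\sum_{k=0}^{d-1}\varepsilon^{k}\,\omega^{jk}|k\rangle\Big)^{\otimes n}\,,
\end{equation}
where $\omega=\exp(2\pi\mathbf{i}/d)$ and the $c_j$ are fixed scalars chosen so that, after expanding the $n$-th powers and collecting terms by total degree in $\varepsilon$, all contributions of $\varepsilon$-order below $d-1$ cancel and the order-$(d-1)$ term equals exactly $\sum_{j_1+\cdots+j_n=d-1}|j_1\cdots j_n\rangle=\mathcal{L}(d,n)$. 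The discrete-Fourier cancellation identity (the same root-of-unity identity used in the proof that $\rk(\mathcal{L}(d,n))=(n-1)(d-1)+1$) is what makes the lower-order terms vanish, leaving only the desired weight-$(d-1)$ slice in the limit.

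The step I expect to be the main obstacle is verifying that this $d$-term curve actually isolates the correct weight stratum: one must check that for each $s<d-1$ the coefficient of the degree-$s$ monomials vanishes and that the degree-$(d-1)$ piece collects with the right (uniform) coefficient. This is a bookkeeping computation over multi-indices $(j_1,\dots,j_n)$ with $\sum_i j_i = s$, weighted by $\sum_j c_j\,\omega^{js}$, so the $c_j$ must be selected to kill the unwanted sums $\sum_j c_j\omega^{js}$ for $s<d-1$ while normalizing the $s=d-1$ sum. This is a nonsingular linear system in the $c_j$ (a Vandermonde-type condition in the powers of $\omega$), so a valid choice always exists. Once this is confirmed, $\gamma(\varepsilon)$ is a rank-$d$ tensor for all $\varepsilon\neq 0$ with $\lim_{\varepsilon\to 0}\gamma(\varepsilon)=\mathcal{L}(d,n)$, giving $\brk(\mathcal{L}(d,n))\leq d$. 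Combined with the conciseness lower bound and the degeneration chain, this yields $\brk(\mathcal{L}(d,n))=\brk(\mathcal{M}(d,n))=\brk(\mathcal{N}(d,n))=d$, completing the proof.
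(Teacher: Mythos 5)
Your proposal is correct and follows essentially the same route as the paper: the lower bound from conciseness (flattening ranks are bounded by border rank), the upper bound for $\mathcal{L}(d,n)$ via a root-of-unity curve of $d$ simple tensors, and the transfer to $\mathcal{M}$ and $\mathcal{N}$ through the degeneration chain of \Cref{theo:LMN-degenerations}. The only difference is that the paper writes the coefficients explicitly as $c_p=\xi^p$ (so that $\sum_p\xi^{p(s+1)}$ vanishes for $0\leq s<d-1$ and equals $d$ at $s=d-1$), whereas you leave them to be determined by the Vandermonde system — which does pin down exactly that choice.
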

\begin{proof}
The lower bound follows from the $1$-conciseness of the tensors.

We give an explicit approximation for $\mathcal{L}(d, n)$.
Let $\xi=\exp(\frac{2\pi\mathbf{i}}{d})$ be the primitive $d$-th root of unity.
Using the property of roots of unity as follows
\begin{equation}\label{roots-unity}
\sum_{p = 0}^{d-1} \xi^{p q}=
\begin{cases} d & \text{if $d\,|\,q$} \\
0 & \text{otherwise,}
\end{cases}
\end{equation}
we can give an approximation of the $n$-qudit $\LL$ state as follows
\begin{equation}\label{border-rank-L}
\mathcal{L}(d,n)=\lim_{\varepsilon\to 0} \frac{1}{d\,\varepsilon^{d-1}}\sum_{p = 0}^{d-1} \xi^p \Big(\sum_{j=0}^{d-1} \varepsilon^j\xi^{p j}|j\rangle\Big)^{\otimes n}\,.
\end{equation}
The upper bound for $\mathcal{M}(d,n)$ and $\mathcal{N}(d,n)$ can be transferred from $\mathcal{L}(d,n)$ using degenerations from~\Cref{theo:LMN-degenerations}.
\end{proof}

Alternatively, we can give approximate decompositions for $\mathcal{M}'(d, n)$ and $\mathcal{N}'(d, n)$ as follows
\begin{align}\nonumber
\mathcal{M}'(d,n)=\lim_{\varepsilon\to 0}\frac{1}{\varepsilon^2}\Big(&\sum_{j=1}^{d-2}(|0\rangle+\varepsilon|j\rangle+\frac{\varepsilon^2}{d-2}|d-1\rangle)^{\otimes n}
\\ \label{M-state-BR}
&-\frac{1}{\varepsilon}\big(|0\rangle+\varepsilon^2\sum_{j=1}^{d-2}|j\rangle\big)^{\otimes n}-(d-2-\frac{1}{\varepsilon})|0\rangle^{\otimes n}\Big)\,,
\end{align}
\begin{equation}\label{N-state-BR}
\mathcal{N}'(d,n)=\lim_{\varepsilon\to 0}\frac{1}{\varepsilon}\Big(\sum_{j=1}^{d-1}(|0\rangle+\varepsilon|j\rangle)^{\otimes (n-1)} \otimes |j\rangle + |0\rangle^{\otimes (n-1)}\otimes (\varepsilon|0\rangle - \sum_{j = 1}^{d - 1}|j\rangle)\Big)\,.
\end{equation}

An immediate result of \Cref{theo:LMN-brank} is that the multiqudit $\LL$, $\M$, and $\N$ states are geometrically in the orbit closure of the multiqudit $\rm{GHZ}$ states, similarly to how the $\rm{W}$ state lie in the orbit closure of the $\rm{GHZ}$ state. Again, we see that our families of minimal-rank persistent tensors can be considered as generalizations of multiqubit $\rm{W}$ state within multiqudit systems.

%%%%%%%%%%%%%%%%%%%%%%%%%%%%%%%%%%%%%%%%%%%%%%%%%%%%%%%%%%%%%%%%%%%%%%%%%%%
\section{Multiqudit entanglement transformation}\label{sec.6.5}
Here, we study the SLOCC interconversion between the $n$-qudit $\rm{GHZ}$ state and each generalization of the $\rm{W}$ state, that is, the $n$-qudit $\LL$, $\M$, and $\N$ states. Concerning the chain of degenerations between $\mathcal{L}(d,n)$, $\mathcal{M}(d,n)$, and $\mathcal{N}(d,n)$ in Eq. \eqref{LMN-degenerations}, we are able to study the asymptotic SLOCC transformation between them.

The following proposition relates the Schmidt rank to the asymptotic SLOCC transformation \cite{VC15}.
\begin{proposition}[Ref. \cite{VC15}]\label{prop:rate-SchmidtRank}
Let $|\psi\rangle\in V_1\otimes\cdots\otimes V_n$ be an $n$-partite quantum state and let $\rk_S(\psi)$ denotes the Schmidt rank of $|\psi\rangle$ for any bipartite cut (bipartition) $S|\overline{S}$ where $S\subseteq[n]$ and $\overline{S}=[n]\setminus S$. For any bipartitions, we have
\begin{equation}\label{rate-SchmidtRank}
\omega(\psi,\varphi)\geq\max_{S\subseteq[n]}\frac{\log\rk_S(\varphi)}{\log\rk_S(\psi)}\,,
\end{equation}
where $\omega(\psi,\phi)$ is the rate of asymptotic SLOCC transformation from $|\psi\rangle$  into $|\phi\rangle$ (see Eq. \eqref{rate}).
\end{proposition}

The following theorem relates degeneration to the asymptotic SLOCC transformation (see Ref. \cite{VC15} for a proof).

\begin{theorem}[Ref. \cite{VC15}]\label{theo:rate-degenration}
Let $|\psi\rangle$ and $|\varphi\rangle$ be two n-partite quantum states. If $|\psi\rangle$ degenerates into $|\varphi\rangle$ via SLOCC, then $\omega(\psi,\varphi)\leq 1$.
\end{theorem}

\begin{theorem}\label{theo:SLOCC-LMN}
An $n$-qudit $\LL$ state can be transformed into an $n$-qudit $\M$ state by asymptotic SLOCC with rate one. An $n$-qudit $\M$ state can be transformed into an $n$-qudit $\N$ state by asymptotic SLOCC with rate one. Formally,
\begin{align}\label{rate-LM}
&\omega(\mathcal{L}(d,n),\mathcal{M}(d,n))=1\,,
\\ \label{rate-MN}
&\omega(\mathcal{M}(d,n),\mathcal{N}(d,n))=1\,.
\end{align}
\end{theorem}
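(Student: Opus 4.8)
The plan is to prove both rate equalities by sandwiching the rate between the lower bound from \Cref{prop:rate-SchmidtRank} and the upper bound from \Cref{theo:rate-degenration}. The key observation is that all three families $\mathcal{L}(d,n)$, $\mathcal{M}(d,n)$, and $\mathcal{N}(d,n)$ are $1$-concise (indeed concise, being persistent), so for any single-party bipartite cut $S=\{i\}$ their Schmidt rank equals $d$. More generally, I expect that for every bipartition $S|\overline{S}$ the three tensors have the same Schmidt rank, because they differ only by the degenerations (invertible-in-the-limit local maps) of \Cref{theo:LMN-degenerations}, which cannot increase the Schmidt rank across any cut, combined with the fact that each is concise. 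This will force the maximum in Eq.~\eqref{rate-SchmidtRank} to equal $1$.

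First I would establish the upper bounds. By \Cref{theo:LMN-degenerations} we have the chain of degenerations $\mathcal{L}(d,n)\xdashrightarrow{\text{SLOCC}}\mathcal{M}(d,n)\xdashrightarrow{\text{SLOCC}}\mathcal{N}(d,n)$. Applying \Cref{theo:rate-degenration} directly to each link yields
\begin{equation}
\omega(\mathcal{L}(d,n),\mathcal{M}(d,n))\leq1 \qquad \text{and} \qquad \omega(\mathcal{M}(d,n),\mathcal{N}(d,n))\leq1\,.
\end{equation}
This step is immediate and requires no computation beyond citing the already-proven degenerations.

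Next I would establish the matching lower bounds via \Cref{prop:rate-SchmidtRank}. For the pair $(\mathcal{L},\mathcal{M})$, I would compute $\rk_S$ for a convenient cut, the simplest being a single party $S=\{i\}$. Since both $\mathcal{L}(d,n)$ and $\mathcal{M}(d,n)$ are persistent, hence $1$-concise, each single-party reduced density matrix is full rank $d$ (see the \Cref{def:concise} remark), so $\rk_{\{i\}}(\mathcal{L}(d,n))=\rk_{\{i\}}(\mathcal{M}(d,n))=d$. Plugging into Eq.~\eqref{rate-SchmidtRank} gives
\begin{equation}
\omega(\mathcal{L}(d,n),\mathcal{M}(d,n))\geq\frac{\log\rk_{\{i\}}(\mathcal{M}(d,n))}{\log\rk_{\{i\}}(\mathcal{L}(d,n))}=\frac{\log d}{\log d}=1\,,
\end{equation}
and the identical argument applies to $(\mathcal{M},\mathcal{N})$. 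Combining with the upper bounds yields the two equalities in Eqs.~\eqref{rate-LM} and~\eqref{rate-MN}.

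The main subtlety I anticipate is ensuring that conciseness is genuinely enough to pin the single-party Schmidt rank at $d$ for the \emph{target} tensor, so that the ratio is exactly $1$ rather than something smaller; this is why I deliberately use a single-party cut, where conciseness gives full rank $d$ on the nose for every party. One should double-check that $\mathcal{N}(d,n)$ is concise in \emph{every} factor (not merely $1$-concise), since the lower-bound argument for $(\mathcal{M},\mathcal{N})$ needs $\rk_{\{i\}}(\mathcal{N}(d,n))=d$; this follows from the preceding \Cref{cor:LMN-persistent} corollary establishing that $\mathcal{L},\mathcal{M},\mathcal{N}$ are persistent together with symmetry of the construction across the factors. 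No further estimate on multi-party cuts is needed, so the proof reduces to these two sandwiching inequalities.
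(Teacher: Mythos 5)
Your proposal is correct and follows essentially the same route as the paper: the lower bound comes from \Cref{prop:rate-SchmidtRank} using that the Schmidt rank across a (single-party) cut is $d$ for all three tensors, and the upper bound comes from applying \Cref{theo:rate-degenration} to the degeneration chain of \Cref{theo:LMN-degenerations}. Note that for the cut $S=\{1\}$ the $1$-conciseness guaranteed by persistence already pins both ranks at $d$, so your worry about conciseness in every factor is not even needed.
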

{\it Proof.}
The Schmidt rank of the $n$-qudit $\LL$, $\M$, and $\N$ states across any bipartition is $d$. Therefore, with respect to \Cref{prop:rate-SchmidtRank}, the rates the aforementioned asymptotic SLOCC transformations is greater than one. Regarding \Cref{theo:LMN-degenerations}, the upper bounds of the both rates are one. This concludes the proof.
\qed

\begin{corollary}
Based on \Cref{theo:SLOCC-LMN}, we can conclude
\begin{equation}\label{rate-LN}
\omega(\mathcal{L}(d,n),\mathcal{N}(d,n))=1\,.
\end{equation}
\end{corollary}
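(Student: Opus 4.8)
The statement to prove is the corollary
\[
\omega(\mathcal{L}(d,n),\mathcal{N}(d,n))=1\,,
\]
which should follow directly from \Cref{theo:SLOCC-LMN}. The plan is to establish this by combining the two asymptotic SLOCC rates from \Cref{theo:SLOCC-LMN} via a composition argument, together with the Schmidt rank lower bound of \Cref{prop:rate-SchmidtRank}.

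First I would argue the upper bound $\omega(\mathcal{L}(d,n),\mathcal{N}(d,n))\leq 1$. The cleanest route is to invoke the chain of degenerations in \Cref{theo:LMN-degenerations}, namely $\mathcal{L}(d,n)\xdashrightarrow{\text{SLOCC}}\mathcal{M}(d,n)\xdashrightarrow{\text{SLOCC}}\mathcal{N}(d,n)$. Since degeneration is transitive (the composition of two families of linear maps depending polynomially on $\varepsilon$ again depends polynomially on $\varepsilon$, possibly after reparametrizing the error degree), we directly obtain $\mathcal{L}(d,n)\xdashrightarrow{\text{SLOCC}}\mathcal{N}(d,n)$, and then \Cref{theo:rate-degenration} gives $\omega(\mathcal{L}(d,n),\mathcal{N}(d,n))\leq 1$. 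Alternatively, one can bypass transitivity of degeneration and use submultiplicativity of the rate under composition of transformations, $\omega(\mathcal{L},\mathcal{N})\leq\omega(\mathcal{L},\mathcal{M})\cdot\omega(\mathcal{M},\mathcal{N})=1\cdot1=1$, drawing the two factors straight from Eqs. \eqref{rate-LM} and \eqref{rate-MN}. I would prefer the degeneration-composition route since it appeals to results already proven in the excerpt.

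Next I would argue the lower bound $\omega(\mathcal{L}(d,n),\mathcal{N}(d,n))\geq 1$ using \Cref{prop:rate-SchmidtRank}. As noted in the proof of \Cref{theo:SLOCC-LMN}, the Schmidt rank of both the $n$-qudit $\LL$ state and the $n$-qudit $\N$ state across any bipartition $S|\overline{S}$ is $d$ (this is a consequence of their $1$-conciseness established earlier). Taking any such bipartition, the bound
\[
\omega(\mathcal{L}(d,n),\mathcal{N}(d,n))\geq\frac{\log\rk_S(\mathcal{N}(d,n))}{\log\rk_S(\mathcal{L}(d,n))}=\frac{\log d}{\log d}=1\,,
\]
yields the desired lower bound. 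Combining the two inequalities gives equality.

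I do not anticipate a genuine obstacle here, since the corollary is essentially an immediate consequence of assembling \Cref{theo:SLOCC-LMN}, \Cref{theo:LMN-degenerations}, \Cref{theo:rate-degenration}, and \Cref{prop:rate-SchmidtRank}. The only point requiring a small amount of care is the transitivity/submultiplicativity step in the upper bound: one must check that composing asymptotic SLOCC transformations of rate one again yields rate at most one. This is exactly what the degeneration-composition argument handles cleanly, so I would make that the backbone of the proof and relegate the rate inequality to a one-line confirmation via the Schmidt ranks.
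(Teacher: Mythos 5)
Your proof is correct and follows essentially the same route the paper intends: the corollary is stated without an explicit proof, but the evident argument is exactly your combination of the chain of degenerations (\Cref{theo:LMN-degenerations} plus \Cref{theo:rate-degenration}) for the upper bound and the Schmidt-rank bound of \Cref{prop:rate-SchmidtRank} for the lower bound, mirroring the proofs of \Cref{theo:SLOCC-LMN} and \Cref{theo:SLOCC-GHZ-L}. Nothing further is needed.
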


\begin{theorem}\label{theo:SLOCC-GHZ-L}
An $n$-qudit $\rm{GHZ}$ state can be transformed into an $n$-qudit $\LL$ state by asymptotic SLOCC with rate one, i.e.,
\begin{equation}\label{rate-GHZ-L}
\omega(\mathcal{G}(d,n),\mathcal{L}(d,n))=1\,.
\end{equation}
\end{theorem}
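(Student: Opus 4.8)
The plan is to establish the equality $\omega(\mathcal{G}(d,n),\mathcal{L}(d,n))=1$ by sandwiching the rate between a lower bound of one and an upper bound of one, exactly in the spirit of the proof of \Cref{theo:SLOCC-LMN}. For the lower bound, I would invoke \Cref{prop:rate-SchmidtRank}: it suffices to exhibit a single bipartition $S|\overline{S}$ for which $\rk_S(\mathcal{L}(d,n))=\rk_S(\mathcal{G}(d,n))$, since then the ratio $\log\rk_S(\mathcal{L})/\log\rk_S(\mathcal{G})$ already equals one and the maximum over bipartitions can only be $\geq1$. Both states are $1$-concise (for $\mathcal{L}(d,n)$ this is the content of the preceding corollary establishing it is a persistent, hence $1$-concise, tensor; for $\mathcal{G}(d,n)$ it is immediate), so each single-party flattening has rank exactly $d$. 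Thus for any singleton bipartition $S=\{i\}$ we get $\rk_S(\mathcal{L}(d,n))=\rk_S(\mathcal{G}(d,n))=d$, giving $\omega(\mathcal{G}(d,n),\mathcal{L}(d,n))\geq\frac{\log d}{\log d}=1$.

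For the upper bound, the natural route is \Cref{theo:rate-degenration}: if $\mathcal{G}(d,n)$ degenerates into $\mathcal{L}(d,n)$ via SLOCC, then $\omega(\mathcal{G}(d,n),\mathcal{L}(d,n))\leq1$, and combined with the lower bound this forces equality. So the crux is to produce an explicit degeneration $\mathcal{G}(d,n)\xdashrightarrow{\text{SLOCC}}\mathcal{L}(d,n)$. Here I would lean directly on \Cref{theo:LMN-brank}, which states $\brk(\mathcal{L}(d,n))=d$, together with the characterization \eqref{brank-degeneration} of border rank as the smallest $d'$ such that a $\mathrm{GHZ}$-equivalent state degenerates into the tensor. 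Since the border rank of $\mathcal{L}(d,n)$ equals $d$, by definition $\mathcal{G}(d,n)$ (the $d$-dimensional $n$-qudit $\mathrm{GHZ}$ state) degenerates into $\mathcal{L}(d,n)$, which is precisely the hypothesis needed for \Cref{theo:rate-degenration}.

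Concretely, the degeneration is already written down in the proof of \Cref{theo:LMN-brank}: the approximation
\begin{equation}
\mathcal{L}(d,n)=\lim_{\varepsilon\to 0}\frac{1}{d\,\varepsilon^{d-1}}\sum_{p=0}^{d-1}\xi^p\Big(\sum_{j=0}^{d-1}\varepsilon^j\xi^{pj}|j\rangle\Big)^{\otimes n}\,,
\end{equation}
exhibits $\mathcal{L}(d,n)$ as a limit of a sum of $d$ simple tensors indexed by $p$, which is exactly the image of $\mathcal{G}(d,n)=\sum_{p=0}^{d-1}|p\rangle^{\otimes n}$ under the family of local maps $A_i(\varepsilon)\colon|p\rangle\mapsto c_p(\varepsilon)\sum_{j=0}^{d-1}\varepsilon^j\xi^{pj}|j\rangle$ (absorbing the scalar $\xi^p/(d\,\varepsilon^{d-1})$ into one factor). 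Matching this to \Cref{def:degeneration} confirms $\mathcal{G}(d,n)\xdashrightarrow{\text{SLOCC}}\mathcal{L}(d,n)$, and then \Cref{theo:rate-degenration} gives the upper bound $\omega\leq1$.

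I expect the only delicate point to be bookkeeping: verifying that the local maps $A_i(\varepsilon)$ depend polynomially on $\varepsilon$ and that the scalar prefactor $\frac{1}{d\,\varepsilon^{d-1}}$ is correctly interpreted as the approximation degree $d-1$ in \eqref{degenration} rather than breaking the definition. This is a routine rescaling, so the main obstacle is essentially cosmetic; the substantive work has already been done in establishing $\brk(\mathcal{L}(d,n))=d$ and the conciseness of both tensors. Hence the proof reduces to citing \Cref{prop:rate-SchmidtRank}, \Cref{theo:LMN-brank}, and \Cref{theo:rate-degenration} and assembling the inequalities.
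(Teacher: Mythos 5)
Your proposal is correct and follows essentially the same route as the paper: the lower bound via \Cref{prop:rate-SchmidtRank} using equal Schmidt ranks $d$ across single-party cuts, and the upper bound via the explicit degeneration $\mathcal{G}(d,n)\xdashrightarrow{\text{SLOCC}}\mathcal{L}(d,n)$ extracted from the border-rank computation in \Cref{theo:LMN-brank}, combined with \Cref{theo:rate-degenration}. If anything, you are slightly more careful than the paper in spelling out that the $1$-conciseness of $\mathcal{L}(d,n)$ guarantees its single-party flattening ranks equal $d$, and in exhibiting the local maps realizing the degeneration.
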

{\it Proof.}
The Schmidt rank of the $n$-qudit $\rm{GHZ}$ states across any bipartition is $d$. Therefore, with respect to \Cref{prop:rate-SchmidtRank}, the rate of the aforementioned asymptotic SLOCC transformation is greater than one. Regarding \Cref{theo:LMN-degenerations} and \Cref{theo:LMN-brank}, we have the following chain of degenerations
\begin{equation}\label{LMN-degenerations}
\mathcal{G}(d,n)\xdashrightarrow[]{\text{SLOCC}}\mathcal{L}(d,n)\xdashrightarrow[]{\text{SLOCC}}\mathcal{M}(d,n)\xdashrightarrow[]{\text{SLOCC}}\mathcal{N}(d,n)\,.
\end{equation}
Therefore, the upper bound of the rate is one. This concludes the proof.
\qed

\begin{corollary}
Concerning \Cref{theo:SLOCC-LMN} and \Cref{theo:SLOCC-GHZ-L}, one can conclude the following results
\begin{align}\label{rate-GHZ-M}
&\omega(\mathcal{G}(d,n),\mathcal{M}(d,n))=1\,,
\\ \label{rate-GHZ-N}
&\omega(\mathcal{G}(d,n),\mathcal{N}(d,n))=1\,.
\end{align}
\end{corollary}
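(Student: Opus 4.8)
The final statement to prove is the corollary establishing that $\omega(\mathcal{G}(d,n),\mathcal{M}(d,n))=1$ and $\omega(\mathcal{G}(d,n),\mathcal{N}(d,n))=1$. The plan is to deduce these two equalities directly by composing the rate-one transformations already established, using the transitivity of asymptotic SLOCC rates together with the explicit chain of degenerations of \Cref{theo:LMN-degenerations}.

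First I would recall the two ingredients that do all the work. From \Cref{theo:SLOCC-GHZ-L} we have $\omega(\mathcal{G}(d,n),\mathcal{L}(d,n))=1$, and from \Cref{theo:SLOCC-LMN} we have $\omega(\mathcal{L}(d,n),\mathcal{M}(d,n))=1$ and $\omega(\mathcal{M}(d,n),\mathcal{N}(d,n))=1$. The lower bound $\omega\geq1$ for both targets follows immediately from \Cref{prop:rate-SchmidtRank}, since the Schmidt rank of $\mathcal{G}(d,n)$, $\mathcal{M}(d,n)$, and $\mathcal{N}(d,n)$ across every bipartition equals $d$, so that $\log \rk_S(\varphi)/\log \rk_S(\psi)=\log d/\log d=1$. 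Thus the content is entirely in the upper bound $\omega\leq1$.

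For the upper bound I would invoke the chain of degenerations
\begin{equation}
\mathcal{G}(d,n)\xdashrightarrow[]{\text{SLOCC}}\mathcal{L}(d,n)\xdashrightarrow[]{\text{SLOCC}}\mathcal{M}(d,n)\xdashrightarrow[]{\text{SLOCC}}\mathcal{N}(d,n)\,,
\end{equation}
which is assembled from \Cref{theo:LMN-degenerations} and the first arrow supplied in the proof of \Cref{theo:SLOCC-GHZ-L}. By \Cref{theo:rate-degenration}, each individual degeneration gives rate at most one, and since $\mathcal{G}(d,n)$ degenerates into $\mathcal{M}(d,n)$ (truncating the chain at the third tensor) and into $\mathcal{N}(d,n)$ (using the full chain), \Cref{theo:rate-degenration} applies directly to yield $\omega(\mathcal{G}(d,n),\mathcal{M}(d,n))\leq1$ and $\omega(\mathcal{G}(d,n),\mathcal{N}(d,n))\leq1$. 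Combining the matching lower and upper bounds gives both equalities.

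The only subtlety — and the step I would be most careful about — is whether degeneration is itself transitive, i.e.\ whether a composition of degenerations is again a degeneration, so that \Cref{theo:rate-degenration} can be applied to the composite. This holds because composing the polynomial families of linear maps $A_i(\varepsilon)$ from the successive degenerations (after reparametrizing so that a single $\varepsilon\to0$ drives all of them) again produces maps depending polynomially on $\varepsilon$, and the approximation and error degrees simply add. Since the statement is phrased as a corollary of \Cref{theo:SLOCC-LMN} and \Cref{theo:SLOCC-GHZ-L}, I would present this transitivity observation explicitly as the bridge, then conclude with the sandwich of bounds. No genuinely new estimate is required; the work is purely in correctly chaining the already-proven rate-one transformations.
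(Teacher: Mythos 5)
Your proposal is correct and follows essentially the same route the paper intends: the lower bound from \Cref{prop:rate-SchmidtRank} (all states have Schmidt rank $d$ across every bipartition) and the upper bound from applying \Cref{theo:rate-degenration} to the composite chain of degenerations $\mathcal{G}(d,n)\xdashrightarrow[]{\text{SLOCC}}\mathcal{L}(d,n)\xdashrightarrow[]{\text{SLOCC}}\mathcal{M}(d,n)\xdashrightarrow[]{\text{SLOCC}}\mathcal{N}(d,n)$ already assembled in the proof of \Cref{theo:SLOCC-GHZ-L}. Your explicit remark on the transitivity of degenerations is a reasonable extra precaution but introduces nothing beyond what the paper's argument already relies on.
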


%%%%%%%%%%%%%%%%%%%%%%%%%%%%%%%%%%%%%%%%%%%%%%%%%%%%%%%%%%%%%%%%%%%%%%%%%%%
\section{Direct sums of persistent tensors}\label{sec.6.6}
The lower bound obtained in~\Cref{theo:PTLB} can be extended to direct sums with persistent summands and to even more general combinations of tensors, which we call \emph{block pyramidal tensors}. Our lower bound techniques for direct sums and pyramidal tensors generalize some of the constructions of Alder and Strassen in Ref. \cite{AS81} to multipartite tensors. Recent work of Buczy\'{n}ski et al. in Ref. \cite{BPR20} uses similar ideas to prove the rank additivity for some tripartite tensors using the substitution method.

\begin{theorem}\label{theo:PTLB-DirectSum}
Let $\mathcal{T}\in U_1\otimes\cdots\otimes U_n$ be an arbitrary tensor of rank $\rk(\mathcal{T})$. If $\mathcal{P}\in V_1\otimes\cdots\otimes V_n$ is a persistent tensor and $\dim V_k=d_k$, then
\begin{equation}\label{PTLB-DirectSum}
\rk(\mathcal{T}\oplus\mathcal{P})\geq\rk(\mathcal{T})+\sum_{k=1}^{n-1}(d_k-1)+1\,.
\end{equation}
\end{theorem}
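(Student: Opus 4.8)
The plan is to combine the substitution method (\Cref{lem:substitution-method}) with the inductive structure of persistence (\Cref{def:persistent-tensor} and \Cref{lem:persistence-equivalent}), essentially mimicking the alternative proof of \Cref{theo:PTLB} but carrying the extra summand $\mathcal{T}$ along as dead weight. The key observation is that $\mathcal{T}\oplus\mathcal{P}$ lives in $(U_1\oplus V_1)\otimes\cdots\otimes(U_n\oplus V_n)$, and its restriction to the $V$-block recovers $\mathcal{P}$ while its restriction to the $U$-block recovers $\mathcal{T}$. The crucial point enabling the substitution step is that $\mathcal{T}\oplus\mathcal{P}$ is $1$-concise in the $V_1$-directions even though it need not be concise overall: contracting with any covector supported on $V_1^{\vee}$ annihilates $\mathcal{T}$ and leaves a contraction of $\mathcal{P}$, which is non-zero by $1$-conciseness of $\mathcal{P}$.

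First I would set up an induction on $n$ that strengthens the statement so that it can feed itself. The natural inductive claim is: if $\mathcal{P}\in V_1\otimes\cdots\otimes V_n$ is persistent, then for any tensor $\mathcal{T}\in U_1\otimes\cdots\otimes U_n$ one has $\rk(\mathcal{T}\oplus\mathcal{P})\geq\rk(\mathcal{T})+\sum_{k=1}^{n-1}(d_k-1)+1$. For the base case $n=2$, persistence of $\mathcal{P}$ means $1$-conciseness, so $\rk(\mathcal{P})=d_1$ by \Cref{cor:concise-matrix}; here I would invoke the matrix-case additivity of rank under direct sum (block-diagonal matrices) to get $\rk(\mathcal{T}\oplus\mathcal{P})=\rk(\mathcal{T})+d_1$, which matches $\rk(\mathcal{T})+(d_1-1)+1$. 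For the inductive step with $n>2$, by \Cref{lem:persistence-equivalent} there is a vector $|e\rangle\in V_1$ such that $\langle f|\mathcal{P}$ is persistent whenever $\langle f|e\rangle\neq0$. I would apply \Cref{lem:substitution-method} in the first factor to the subspace $V_1'=\Span\{|e\rangle\}\subseteq V_1\subseteq U_1\oplus V_1$, using that $\mathcal{T}\oplus\mathcal{P}$ is $1$-concise along $V_1$; the care needed is that the substitution lemma as stated requires $1$-conciseness in the full first factor $U_1\oplus V_1$, so I would instead apply it after projecting away $U_1$, or reformulate the substitution step to work with the $V_1$-part directly. The projection $\pi_1$ produced sends the tensor to $|e\rangle\otimes(\mathcal{T}'\oplus\mathcal{P}')$ with $\rk$ dropping by at least $\dim V_1-1=d_1-1$, where $\mathcal{P}'=\langle f|\mathcal{P}$ is persistent and $\mathcal{T}'=\langle f|\mathcal{T}$ for the induced covector $\langle f|$ with $\langle f|e\rangle=1$.

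Then I would apply the induction hypothesis to $\mathcal{T}'\oplus\mathcal{P}'$ in $\otimes_{k=2}^n(U_k\oplus V_k)$, obtaining $\rk(\mathcal{T}'\oplus\mathcal{P}')\geq\rk(\mathcal{T}')+\sum_{k=2}^{n-1}(d_k-1)+1$, and chain the inequalities to recover $\rk(\mathcal{T}\oplus\mathcal{P})\geq(d_1-1)+\rk(\mathcal{T}'\oplus\mathcal{P}')\geq\rk(\mathcal{T}')+\sum_{k=1}^{n-1}(d_k-1)+1$. The remaining gap is that this gives a bound involving $\rk(\mathcal{T}')$ rather than $\rk(\mathcal{T})$, and a contraction can only decrease rank, so $\rk(\mathcal{T}')\leq\rk(\mathcal{T})$ runs the wrong way. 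This is the main obstacle, and resolving it is where the real work lies: I expect the correct route is not to contract $\mathcal{T}$ at all but to arrange the substitution so that the first factor's projection acts only on the $V_1$-summand while preserving $U_1$ fully — i.e. choose $V_1'=U_1\oplus\Span\{|e\rangle\}$ inside $U_1\oplus V_1$ and a projection along a complement spanned by $d_1-1$ of the decomposition vectors lying in $V_1$. With that choice the surviving tensor is $\mathcal{T}\oplus(|e\rangle\otimes\mathcal{P}')$ with the full $\mathcal{T}$ intact, the rank drops by at least $d_1-1$, and the induction on $n$ applied to $\mathcal{T}\oplus(\text{persistent }\mathcal{P}')$ closes the argument cleanly. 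Verifying that the spanning/complementarity conditions of \Cref{lem:substitution-method} can indeed be met with the $d_1-1$ removed vectors drawn from the $V_1$-block — which follows because $\langle f|(\mathcal{T}\oplus\mathcal{P})=\langle f|\mathcal{P}$ forces the $V_1$-components of the decomposition to span $V_1$ by \Cref{lem:concise-rank} applied to the $1$-concise-along-$V_1$ tensor — is the technical heart of the proof.
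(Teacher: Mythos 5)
Your diagnosis of the main obstacle is exactly right: a contraction $\langle f|$ in the first factor either annihilates $\mathcal{T}$ (if $\langle f|$ vanishes on $U_1$) or replaces $\rk(\mathcal{T})$ by $\rk(\langle f|\mathcal{T})$, which runs the wrong way. But the fix you propose does not close the gap. The projection that \Cref{lem:substitution-method} supplies for $V_1'=U_1\oplus\Span\{|e\rangle\}$ must kill $d_1-1$ whole summands $w_1^{(p_i)}\otimes\cdots$ of the decomposition, where $w_1^{(p_i)}=u_1^{(p_i)}+v_1^{(p_i)}$ with $u_1^{(p_i)}\in U_1$. Killing such a summand while fixing $U_1$ pointwise forces $\pi_1 v_1^{(p_i)}=-u_1^{(p_i)}\in U_1$; consequently $\pi_1$ maps $V_1$ into $U_1\oplus\Span\{|e\rangle\}$ but \emph{not} into $\Span\{|e\rangle\}$, and the image of $\mathcal{P}$ acquires components in $U_1\otimes V_2\otimes\cdots\otimes V_n$. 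The surviving tensor is then $\mathcal{T}+|e\rangle\otimes\mathcal{P}'+(\text{cross terms})$, not $\mathcal{T}\oplus(|e\rangle\otimes\mathcal{P}')$, and your induction hypothesis --- stated for direct sums --- no longer applies. If instead you insist on a genuine projection with $\pi_1(V_1)\subseteq\Span\{|e\rangle\}$, preserving the direct-sum shape, then $\pi_1 w_1^{(p_i)}=u_1^{(p_i)}+\pi_1v_1^{(p_i)}$ survives whenever $u_1^{(p_i)}\neq0$, and the rank need not drop by $d_1-1$. These two requirements --- kill the summands, keep the direct sum --- are in genuine tension, and that tension is the crux of the theorem. (A smaller point: even granting your surviving tensor, it still has $n$ factors while $\mathcal{P}'$ has $n-1$, so ``induction on $n$'' would anyway need to be recast as induction on total dimension.)

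The paper escapes the tension by not preserving any intermediate structure at all. It applies $\pi_{V_1}\otimes\cdots\otimes\pi_{V_n}$ to the whole decomposition to obtain an $r$-term decomposition of $\mathcal{P}$, then invokes the \emph{strengthened} form of \Cref{theo:PTLB} --- the ``Moreover'' clause, which you never use and which was forged precisely for this purpose --- to reorder the summands so that for each $j<n$ the $V_j$-components with indices in the window $\{D_j+1,\ldots,D_j+d_j\}$ form a basis of $V_j$, where $D_j=\sum_{k=1}^{j-1}(d_k-1)$; these windows overlap and together cover $\{1,\ldots,s\}$ with $s=\sum_{k=1}^{n-1}(d_k-1)+1$. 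It then builds a single map $\Pi=\Pi_1\otimes\cdots\otimes\Pi_{n-1}\otimes\pi_{U_n}$, where each $\Pi_j$ fixes $U_j$ and sends $v_j^{(D_j+k)}\mapsto -u_j^{(D_j+k)}$, so every one of the first $s$ summands is annihilated by the factor governing its window. The leakage of $\Pi_j(V_j)$ into $U_j$ --- the very thing that breaks your induction --- is rendered harmless because the clean projection $\pi_{U_n}$ in the last factor annihilates $\mathcal{P}$ wholesale, giving $\Pi(\mathcal{T}\oplus\mathcal{P})=\mathcal{T}$ exactly and hence $r\geq\rk(\mathcal{T})+s$ in one step. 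If you want to rescue an inductive argument, the statement to induct on is the block-pyramidal generalization (\Cref{theo:PTLB-pyramidal}), which tolerates leakage in the first $n-1$ factors; but the paper proves that, too, by the same one-shot argument.
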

\begin{proof}
Consider a tensor rank decomposition
\begin{equation}\label{eq:decomp-DirectSum}
\mathcal{T} \oplus \mathcal{P} =\sum_{p=1}^{r}w_1^{(p)}\otimes\cdots\otimes w_n^{(p)}\,,
\end{equation}
where $w_j^{(p)} = u_j^{(p)} + v_j^{(p)}$ with $u_j^{(p)} \in U_j$ and $v_j^{(p)} \in V_j$.

Let $\pi_{V_j}\colon U_j\oplus V_j\to V_j$ be the canonical projection onto $V_j$ and let $\pi_V=\pi_{V_1}\otimes\cdots \otimes\pi_{V_n}$
Applying $\pi_V$ to the both sides of the decomposition, we get
\begin{equation}
\mathcal{P} =\sum_{p=1}^{r} v_1^{(p)}\otimes\cdots\otimes v_n^{(p)}\,.
\end{equation}

By \Cref{theo:PTLB} we can assume that for $j < n$, $\{v_j^{(D_j+1)},\ldots,v_j^{(D_j+d_j)}\}$ forms a basis of $V_j$, where $D_j = \sum_{k = 1}^{j - 1} (d_k - 1)$.

Let $\pi_{U_n} \colon U_n\oplus V_n \to U_n$ be the canonical projection onto $U_n$. Define the new projections $\Pi_j \colon U_j \oplus V_j \to U_j$ for $j<n$ as
\begin{align}
\begin{cases}
\Pi_j v_j^{(D_j+k)}=-u_j^{(D_j+k)}\,, & \text{$1\leq k\leq d_j$}\,, \\
\Pi_j u=u\,, & \text{$u\in U_j$}\,,
\end{cases}
\end{align}
so that we have $\Pi_j w_j^{(D_j+1)}=\cdots=\Pi_j w_j^{(D_j+d_j)}=0$.
Let $\Pi=\Pi_1\otimes\cdots\otimes\Pi_{n-1}\otimes \pi_{U_n}$.

Note that $\Pi$ sends the first $s=\sum_{k=1}^{n-1}(d_k-1)+1$ summands of the decomposition in Eq.~\eqref{eq:decomp-DirectSum} to zero. Furthermore, $\Pi(\mathcal{T}\oplus\mathcal{P}) =\mathcal{T}$. Therefore, applying $\Pi$ to both sides of the decomposition in Eq.~\eqref{eq:decomp-DirectSum}, one gets a rank decomposition for $\mathcal{T}$ with $r-s$ summands, so $r\geq\rk(\mathcal{T})+s$.
\end{proof}

As a corollary, we find that the rank is additive for direct sums with persistent tensors of minimum rank. Similar rank additivity statements are known for tripartite tensors, the rank of which can be determined using the substitution method~\cite{BPR20,LM17}.

\begin{corollary}
Let $V_1,\ldots,V_n$ be vector spaces with $\dim V_k=d_k$. If $\mathcal{P} \in V_1\otimes\cdots\otimes V_n$ is a persistent tensor of the minimum possible rank $\rk(\mathcal{P})=\sum_{k=1}^{n-1}(d_k-1)+1$, then for any arbitrary $n$-partite tensor $\mathcal{T}$ we have
\begin{equation}
\rk(\mathcal{T}\oplus\mathcal{P})=\rk(\mathcal{T})+\rk(\mathcal{P})\,.
\end{equation}
\end{corollary}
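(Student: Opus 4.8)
The plan is to prove the equality by establishing the two opposing inequalities and combining them. The nontrivial inequality is already available as \Cref{theo:PTLB-DirectSum}, so the work amounts to assembling the pieces and checking that the minimal-rank hypothesis makes the bounds coincide.

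First I would prove the upper bound $\rk(\mathcal{T}\oplus\mathcal{P})\leq\rk(\mathcal{T})+\rk(\mathcal{P})$, which holds for a direct sum of any two tensors and does not use persistence at all. Taking rank decompositions $\mathcal{T}=\sum_{p=1}^{\rk(\mathcal{T})} u_1^{(p)}\otimes\cdots\otimes u_n^{(p)}$ with $u_k^{(p)}\in U_k$ and $\mathcal{P}=\sum_{q=1}^{\rk(\mathcal{P})} v_1^{(q)}\otimes\cdots\otimes v_n^{(q)}$ with $v_k^{(q)}\in V_k$, I would view each simple summand inside the larger space $\otimes_{k=1}^{n}(U_k\oplus V_k)$ via the canonical inclusions $U_k\hookrightarrow U_k\oplus V_k$ and $V_k\hookrightarrow U_k\oplus V_k$. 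Since the two families of simple tensors are supported on complementary summands in every factor, the concatenation of the two lists is exactly a decomposition of $\mathcal{T}\oplus\mathcal{P}$, and it has $\rk(\mathcal{T})+\rk(\mathcal{P})$ terms.

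Next I would invoke \Cref{theo:PTLB-DirectSum} for the lower bound. Since $\mathcal{P}$ is persistent with $\dim V_k=d_k$, that theorem gives $\rk(\mathcal{T}\oplus\mathcal{P})\geq\rk(\mathcal{T})+\sum_{k=1}^{n-1}(d_k-1)+1$. This is exactly where the minimal-rank hypothesis enters: by assumption $\rk(\mathcal{P})=\sum_{k=1}^{n-1}(d_k-1)+1$, so the lower bound reads precisely $\rk(\mathcal{T}\oplus\mathcal{P})\geq\rk(\mathcal{T})+\rk(\mathcal{P})$. Combining this with the upper bound from the previous paragraph yields the claimed equality.

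The only genuinely difficult step is the lower bound, and it has already been dispatched in \Cref{theo:PTLB-DirectSum} through the substitution method (the projection $\Pi$ that zeroes out the first $\sum_{k=1}^{n-1}(d_k-1)+1$ summands while restricting to $\mathcal{T}$). Thus I anticipate no real obstacle in the corollary itself beyond confirming that the quantity $\sum_{k=1}^{n-1}(d_k-1)+1$ appearing in \Cref{theo:PTLB-DirectSum} coincides with $\rk(\mathcal{P})$ under the minimal-rank assumption, which is guaranteed by the matching lower bound of \Cref{theo:PTLB}; the corollary is essentially the observation that when $\mathcal{P}$ saturates that bound, the additivity gap closes.
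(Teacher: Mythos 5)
Your proposal is correct and matches the paper's (implicit) argument exactly: the lower bound is \Cref{theo:PTLB-DirectSum} combined with the minimal-rank hypothesis $\rk(\mathcal{P})=\sum_{k=1}^{n-1}(d_k-1)+1$, and the upper bound is the standard subadditivity of rank under direct sums obtained by concatenating the two rank decompositions inside $\otimes_{k=1}^{n}(U_k\oplus V_k)$. Nothing further is needed.
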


Furthermore, the tensor rank is multiplicative under the Kronecker and tensor products of the persistent tensor of the minimum rank with $\rm{GHZ}$ tensors.

\begin{lemma}\label{lem:GHZ-P}
Let $V_1,\ldots,V_n$ be vector spaces with $\dim V_k = d_k$. If $\mathcal{P} \in V_1\otimes\cdots\otimes V_n$ is a persistent tensor of the minimum possible rank $\rk(\mathcal{P}) = \sum_{k=1}^{n-1}(d_k-1)+1$, then
\begin{equation}
\rk(\mathcal{G}(d, n)\boxtimes\mathcal{P})=\rk(\mathcal{G}(d, n)\otimes\mathcal{P})=d\cdot\rk(\mathcal{P})\,.
\end{equation}
\end{lemma}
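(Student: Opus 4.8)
The plan is to prove both equalities by a sandwich argument, reducing the whole statement to computing the rank of the Kronecker product, which I claim equals a $d$-fold direct sum of $\mathcal{P}$ with itself. First I would record the easy upper bounds: applying the chain of inequalities in Eq.~\eqref{rank-inequalities} with $\mathcal{S}=\mathcal{G}(d,n)$ and $\mathcal{T}=\mathcal{P}$, together with $\rk(\mathcal{G}(d,n))=d$, gives
\begin{equation}
\rk(\mathcal{G}(d,n)\boxtimes\mathcal{P})\leq\rk(\mathcal{G}(d,n)\otimes\mathcal{P})\leq d\cdot\rk(\mathcal{P})\,.
\end{equation}
Hence it suffices to establish the single lower bound $\rk(\mathcal{G}(d,n)\boxtimes\mathcal{P})\geq d\cdot\rk(\mathcal{P})$, since this forces all three quantities to coincide.

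The key structural observation is that the Kronecker product with a $\mathrm{GHZ}$ tensor is a direct sum. Writing $\mathcal{G}(d,n)=\sum_{j=0}^{d-1}|j\rangle^{\otimes n}$ and combining each factor $|j\rangle$ with the corresponding factor of $\mathcal{P}$ via the tensor product inside $\mathbbm{C}^d\otimes V_k$, one sees that $\mathcal{G}(d,n)\boxtimes\mathcal{P}=\sum_{j=0}^{d-1}\mathcal{P}_j$, where $\mathcal{P}_j$ is the copy of $\mathcal{P}$ supported on $\bigotimes_{k}(\mathbbm{C}|j\rangle\otimes V_k)$. Since $\mathbbm{C}^d\otimes V_k=\bigoplus_{j=0}^{d-1}(\mathbbm{C}|j\rangle\otimes V_k)$ is a direct-sum decomposition of each local factor and each $\mathcal{P}_j$ lives in the corresponding block $\bigotimes_k(\mathbbm{C}|j\rangle\otimes V_k)\cong\bigotimes_k V_k$, this exhibits $\mathcal{G}(d,n)\boxtimes\mathcal{P}$ as the $d$-fold direct sum $\mathcal{P}^{\oplus d}$ of $\mathcal{P}$ with itself. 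As a sanity check, for $n=2$ this is the familiar block-diagonal identity $I_d\otimes M\cong M\oplus\cdots\oplus M$, whose matrix rank is $d\cdot\rk(M)$.

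It then remains to evaluate $\rk(\mathcal{P}^{\oplus d})$, which I would do by iterating the rank-additivity corollary immediately following Theorem~\ref{theo:PTLB-DirectSum}. Since $\mathcal{P}$ is persistent of the minimum possible rank, that corollary applies with the arbitrary tensor taken to be $\mathcal{P}^{\oplus(d-1)}$ and the distinguished minimum-rank persistent summand taken to be $\mathcal{P}$, yielding $\rk(\mathcal{P}^{\oplus d})=\rk(\mathcal{P}^{\oplus(d-1)})+\rk(\mathcal{P})$. A straightforward induction on the number of copies, with base case $\rk(\mathcal{P}^{\oplus 1})=\rk(\mathcal{P})$, then gives $\rk(\mathcal{P}^{\oplus d})=d\cdot\rk(\mathcal{P})$. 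Combining this with the upper bounds above closes the sandwich and proves both equalities.

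I expect the only genuinely substantive step to be the identification $\mathcal{G}(d,n)\boxtimes\mathcal{P}\cong\mathcal{P}^{\oplus d}$; once this is in place the rank computation is immediate from the previously established additivity result, and no delicate substitution-method bookkeeping is needed beyond what is already packaged in Theorem~\ref{theo:PTLB-DirectSum}. The main thing to be careful about is verifying that the blocks $\mathbbm{C}|j\rangle\otimes V_k$ really are independent direct summands in each tensor factor, so that $\sum_j\mathcal{P}_j$ is an honest direct sum rather than a sum of overlapping tensors; this independence is precisely what legitimizes invoking the additivity corollary (giving the exact value) instead of only the weaker lower bound of Theorem~\ref{theo:PTLB-DirectSum}.
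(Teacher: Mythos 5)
Your proposal is correct and follows essentially the same route as the paper: the upper bound from Eq.~\eqref{rank-inequalities}, the identification of $\mathcal{G}(d,n)\boxtimes\mathcal{P}$ with the direct sum $\mathcal{P}^{\oplus d}$ via the local isomorphisms $\mathbbm{C}^d\otimes V_k\cong V_k^{\oplus d}$, and repeated application of the rank-additivity corollary following Theorem~\ref{theo:PTLB-DirectSum}. No gaps.
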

\begin{proof}
Regarding Eq. \eqref{rank-inequalities}, we have
\[
\rk(\mathcal{G}(d, n) \boxtimes \mathcal{P}) \leq \rk(\mathcal{G}(d, n) \otimes \mathcal{P}) \leq d \cdot \rk(\mathcal{P})\,.
\]

To get a lower bound, note that the tensors $\mathcal{G}(d, n) \boxtimes \mathcal{P}$ and $\mathcal{P}^{\oplus d}$ are (isometrically) equivalent.
More specifically, $\mathcal{P}^{\oplus d}$ is transformed into $\mathcal{G}(d,n)\boxtimes\mathcal{P}$ by applying to each tensor factor the isomorphism $V_j^{\oplus d} \xrightarrow{\sim} \mathbbm{C}^d \otimes V_j$ sending $(v_0,\ldots,v_{d-1})$ to $\sum_{k=0}^{d-1} e_k \otimes v_k$.

We then apply the previous corollary repeatedly to the direct sum $\mathcal{P}^{\oplus d}$ to obtain the lower bound of the rank $d\cdot\rk(\mathcal{P})$.
\end{proof}

\begin{corollary}\label{cor:Tensor-product-GHZ-W-M-L}
From \Cref{lem:GHZ-P} we have the following corollaries:
\begin{enumerate}
    \item $\rk\big(\mathcal{G}(d,n)\boxtimes\mathcal{W}_n\big)=\rk\big(\mathcal{G}(d,n)\otimes\mathcal{W}_n\big)=nd$.
    \item $\rk\big(\mathcal{G}(d,n)\boxtimes\mathcal{W}_n^{\boxtimes 2}\big)=\rk\big(\mathcal{G}(d,n)\otimes\mathcal{W}_n^{\boxtimes 2}\big) = (3n - 2)d$.
    \item $\rk\big(\mathcal{G}(d_1,n)\boxtimes\mathcal{L}(d_2,n)\big)=\rk\big(\mathcal{G}(d_1,n)\otimes\mathcal{L}(d_2,n)\big)$
    \\
    $=d_1\big((n-1)d_2-n+2\big)$.
    \item $\rk\big(\mathcal{G}(d_1,n)\boxtimes\mathcal{M}(d_2,n)\big)=\rk\big(\mathcal{G}(d_1,n)\otimes\mathcal{M}(d_2,n)\big)$
    \\
    $=d_1\big((n-1)d_2-n+2\big)$.
    \item $\rk\big(\mathcal{G}(d_1,n)\boxtimes\mathcal{N}(d_2,n)\big)=\rk\big(\mathcal{G}(d_1,n)\otimes\mathcal{N}(d_2,n)\big)$
    \\
    $=d_1\big((n-1)d_2-n+2\big)$.
\end{enumerate}
\end{corollary}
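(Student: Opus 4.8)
The plan is to read each of the five identities as a direct instance of \Cref{lem:GHZ-P}. That lemma requires only that its second argument be a persistent tensor whose rank attains the minimum value $\sum_{k=1}^{n-1}(d_k-1)+1$, so the whole task reduces to checking, tensor by tensor, that (a) the tensor is persistent and (b) its already-computed rank equals this minimal value. Once both hold, substituting into $\rk(\mathcal{G}(d,n)\boxtimes\mathcal{P})=\rk(\mathcal{G}(d,n)\otimes\mathcal{P})=d\cdot\rk(\mathcal{P})$ and simplifying the arithmetic produces each claim; the bilinearity bound and the matching lower bound are already packaged inside \Cref{lem:GHZ-P}, so no further rank estimates are needed.

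First I would dispatch the homogeneous qudit families. The corollary in \Cref{sec.6.4} establishes that $\mathcal{L}(d_2,n)$, $\mathcal{M}(d_2,n)$ and $\mathcal{N}(d_2,n)$, all living in $\otimes^n\mathbbm{C}^{d_2}$, are persistent, and the three rank theorems of the same section give each of them rank $(n-1)(d_2-1)+1$. Since the minimal persistent rank in $\otimes^n\mathbbm{C}^{d_2}$ is exactly $(n-1)(d_2-1)+1$, \Cref{lem:GHZ-P} applies verbatim with $d=d_1$, yielding $\rk=d_1\big((n-1)(d_2-1)+1\big)$; rewriting $(n-1)(d_2-1)+1=(n-1)d_2-n+2$ gives items (3)--(5).

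For item (1) I would use that $\mathcal{W}_n=\mathcal{L}(2,n)$ is persistent and, by \Cref{cor:W-rank}, has rank $n=(n-1)(2-1)+1$, the minimum in $\otimes^n\mathbbm{C}^2$, so that \Cref{lem:GHZ-P} outputs $nd$. For item (2) the cleanest route avoids verifying \Cref{def:persistent-tensor} for $\mathcal{W}_n^{\boxtimes2}$ directly: I would invoke the equivalence $\mathcal{W}_n^{\boxtimes2}\cong\mathcal{M}(4,n)$ recorded in \Cref{sec.6.4}. Applying the local identification $\mathbbm{C}^2\otimes\mathbbm{C}^2\cong\mathbbm{C}^4$ on each party turns $\mathcal{G}(d,n)\boxtimes\mathcal{W}_n^{\boxtimes2}$ into $\mathcal{G}(d,n)\boxtimes\mathcal{M}(4,n)$, and because tensor rank is invariant under such local isomorphisms the two ranks coincide; \Cref{lem:GHZ-P} applied to the minimal-rank persistent tensor $\mathcal{M}(4,n)$ (whose rank is $3n-2$ by \Cref{lem:Tensor-rank-M}) then gives $(3n-2)d$.

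The only delicate point is the minimality check in step (b): \Cref{lem:GHZ-P} genuinely needs the persistent summand to be of minimum rank, so one must confirm in every case that the cited rank equals $\sum_{k=1}^{n-1}(d_k-1)+1$ for the local dimensions at hand. For the homogeneous families this is a single arithmetic comparison, and for item (2) it is subsumed by the reduction to $\mathcal{M}(4,n)$ together with the SLOCC-invariance of rank, which also lets me sidestep proving persistence of $\mathcal{W}_n^{\boxtimes2}$ by hand. No other obstacle arises, since the chain $\rk(\mathcal{G}\boxtimes\mathcal{P})\le\rk(\mathcal{G}\otimes\mathcal{P})\le d\cdot\rk(\mathcal{P})$ from \eqref{rank-inequalities} is exactly the upper estimate already proved together with its reverse inside \Cref{lem:GHZ-P}.
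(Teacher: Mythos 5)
Your proposal is correct and is exactly the argument the paper intends: the corollary is stated as an immediate consequence of \Cref{lem:GHZ-P}, and your case-by-case verification of persistence and minimality of rank (using \Cref{cor:W-rank}, the rank theorems for $\mathcal{L}$, $\mathcal{M}$, $\mathcal{N}$, and the local equivalence $\mathcal{W}_n^{\boxtimes 2}\cong\mathcal{M}(4,n)$ already recorded in \Cref{sec.6.4}) is precisely the intended route. The arithmetic identity $(n-1)(d_2-1)+1=(n-1)d_2-n+2$ closes each case as you describe.
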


In particular, for $\mathcal{G}(d,n) \boxtimes \mathcal{W}_n$ we answer an open question posed in Ref. \cite{CF18}.

The same lower bound method can be applied not only to direct sums but also to a more general class of block tensors.

\begin{definition}[Block pyramidal tensor]
A tensor $\mathcal{Q} \in (U_1 \oplus V_1) \otimes\cdots\otimes(U_n \oplus V_n)$ is a \emph{block pyramidal tensor} if $\mathcal{Q}\in U_1\otimes\cdots\otimes U_n\oplus(U_1\oplus V_1)\otimes\cdots\otimes(U_{n-1}\oplus V_{n-1})\otimes V_n$. Denote by $\pi_{U_k}$ and $\pi_{V_k}$, the canonical projections associated with the summands of $U_k \oplus V_k$. The tensor $(\pi_{U_1} \otimes \dots \otimes \pi_{U_n}) \mathcal{Q}$ is called the \emph{head block} of $\mathcal{Q}$, and $(\pi_{V_1} \otimes \dots \otimes \pi_{V_n}) \mathcal{Q}$ is called the \emph{step block} of $\mathcal{Q}$.
\end{definition}

\begin{theorem}\label{theo:PTLB-pyramidal}
Let $\mathcal{Q} \in (U_1 \oplus V_1) \otimes\cdots\otimes(U_n \oplus V_n)$ be a block pyramidal tensor with the head block $\mathcal{T}\in U_1\otimes\cdots\otimes U_n$ and the step block $\mathcal{P}\in V_1\otimes\cdots\otimes V_n$.
If $\mathcal{P}$ is a persistent tensor and $\dim V_k=d_k$, then
\begin{equation}\label{PTLB-DirectSum}
\rk(\mathcal{Q})\geq\rk(\mathcal{T})+\sum_{k=1}^{n-1}(d_k-1)+1\,.
\end{equation}
\end{theorem}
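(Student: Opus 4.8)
The plan is to adapt almost verbatim the substitution-by-projection argument used for \Cref{theo:PTLB-DirectSum}, exploiting the fact that a block pyramidal tensor differs from a genuine direct sum only by ``mixed'' components whose last factor already lies in $V_n$. I would start from an arbitrary rank decomposition $\mathcal{Q}=\sum_{p=1}^r w_1^{(p)}\otimes\cdots\otimes w_n^{(p)}$ with $r=\rk(\mathcal{Q})$, splitting each vector as $w_j^{(p)}=u_j^{(p)}+v_j^{(p)}$ along $U_j\oplus V_j$. Applying the all-$V$ projection $\pi_V=\pi_{V_1}\otimes\cdots\otimes\pi_{V_n}$ to both sides, and recalling that by definition $\pi_V\mathcal{Q}$ is exactly the step block $\mathcal{P}$, yields a rank decomposition $\mathcal{P}=\sum_{p=1}^r v_1^{(p)}\otimes\cdots\otimes v_n^{(p)}$. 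Since $\mathcal{P}$ is persistent, \Cref{theo:PTLB} lets me permute the summands so that for each $j<n$ the vectors $\{v_j^{(D_j+1)},\ldots,v_j^{(D_j+d_j)}\}$ form a basis of $V_j$, where $D_j=\sum_{k=1}^{j-1}(d_k-1)$.

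With this basis structure in hand I would build the same projections as before: for $j<n$, define $\Pi_j\colon U_j\oplus V_j\to U_j$ by $\Pi_j v_j^{(D_j+k)}=-u_j^{(D_j+k)}$ for $1\le k\le d_j$ and $\Pi_j|_{U_j}=\mathrm{id}$, so that $\Pi_j w_j^{(D_j+k)}=0$; then set $\Pi=\Pi_1\otimes\cdots\otimes\Pi_{n-1}\otimes\pi_{U_n}$. Two facts then have to be checked. First, the index ranges $\{D_j+1,\ldots,D_j+d_j\}$ for $j=1,\ldots,n-1$ cover $\{1,\ldots,s\}$ with $s=\sum_{k=1}^{n-1}(d_k-1)+1$ (consecutive ranges overlapping in a single index because $D_j+d_j=D_{j+1}+1$), so every one of the first $s$ summands is annihilated by some $\Pi_j$ and hence by $\Pi$. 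Second, and this is the step where the pyramidal hypothesis really enters, I must verify that $\Pi\mathcal{Q}=\mathcal{T}$: the crucial point is that $\pi_{U_n}$ acts on the last tensor factor, and the entire step summand of $\mathcal{Q}$ lives in $(U_1\oplus V_1)\otimes\cdots\otimes(U_{n-1}\oplus V_{n-1})\otimes V_n$, whose last factor is killed by $\pi_{U_n}$; what survives is only the head block $\mathcal{T}\in U_1\otimes\cdots\otimes U_n$, on which each $\Pi_j$ and $\pi_{U_n}$ act as the identity.

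Granting these two facts, applying $\Pi$ to the rank decomposition of $\mathcal{Q}$ produces a decomposition of $\mathcal{T}$ in which the first $s$ terms vanish, hence one with at most $r-s$ summands; therefore $\rk(\mathcal{T})\le r-s$ and $\rk(\mathcal{Q})=r\ge\rk(\mathcal{T})+\sum_{k=1}^{n-1}(d_k-1)+1$, as claimed. I expect the main obstacle to be precisely the verification that $\Pi\mathcal{Q}=\mathcal{T}$ rather than merely $\Pi\mathcal{Q}=\mathcal{T}+(\text{mixed terms})$: unlike the direct-sum case, $\mathcal{Q}$ carries additional components with last factor in $V_n$ and arbitrary earlier factors, and one must confirm that the asymmetric choice of $\pi_{U_n}$ on the final slot (as opposed to a symmetric $\Pi_n$) is exactly what annihilates all of them while leaving $\mathcal{T}$ untouched. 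Once this is settled, the counting and the conclusion are identical to \Cref{theo:PTLB-DirectSum}.
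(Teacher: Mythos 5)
Your proposal is correct and follows exactly the route the paper takes: the paper's own proof is literally ``the same as for the direct-sum theorem with $\mathcal{T}\oplus\mathcal{P}$ replaced by $\mathcal{Q}$,'' and you have correctly identified and verified the two points where the pyramidal hypothesis is needed, namely that $\pi_V\mathcal{Q}=\mathcal{P}$ (the head block has no all-$V$ component) and that $\Pi\mathcal{Q}=\mathcal{T}$ (the asymmetric $\pi_{U_n}$ on the last slot annihilates the entire step summand, whose final factor lies in $V_n$). Nothing is missing.
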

\begin{proof}
The proof is the same as for~\Cref{theo:PTLB-DirectSum}, with $\mathcal{T} \oplus \mathcal{P}$ replaced by $\mathcal{Q}$.
\end{proof}

%%%%%%%%%%%%%%%%%%%%%%%%%%%%%%%%%%%%%%%%%%%%%%%%%%%%%%%%%%%%%%%%%%%%%%%%%%%%%%%%%%%%%%%%%%%%%%%%%%%%%%%%%%
\chapter{Summary and Outlook}

\epigraph{``pauca sed matura''}{\textit{Carl Friedrich Gau{\ss}}}
%The earth has its music for those who will listen. William Shakespeare

\section{Conclusions}

This dissertation links algebraic geometry to entanglement theory to answer two central problems, that is, classification of multipartite entanglement and SLOCC interconversion between different resources.

In \cref{chap2}, mathematical formulation and the fundamental concepts of quantum information theory are introduced. Mainly, we have introduced the description of entangled states and the mathematical tools and concepts needed for entanglement classification in bipartite and multipartite quantum systems.

\cref{chap3} is devoted to introduce the algebro-geometric tools we have used for addressing the above mentioned problems in entanglement theory. Namely, we have introduced $k$-secant varieties of Segre embedding, $\ell$-multilinear ranks, and tensor rank and border rank that are SLOCC invariants.

In \cref{chap4}, we focused on the central problem of quantum information theory which concerns the classification of multipartite entanglement. To this end, we presented a fine-structure entanglement classification that can be interpreted as a Mendeleev table, where the structure of an element can be used as a core structure of another. As a matter of fact, for $n$-qubit classification we are fixing the elements in $k$-secant families [see Eqs. (\ref{general2})-(\ref{general4})], and, indeed, one can always use $n$-qubit classification as a partial classification of ($n+1$)-qubit case. Then, we just need to find the elements of new $k$-secants for the classification of ($n+1$)-qubit states. As we have already illustrated in our examples, new $k$-secants' elements can be identified by joining points of previous $k$-secant families, and considering all tangential varieties. More interesting is that joining randomly chosen elements from both $\sigma_{i}$ and $\sigma_{j}$ would land in $\sigma_{i+j}\setminus\sigma_{i+j-1}$, with probability one \cite{Landsberg}. Therefore, one can always create a general element in a desired secant family. In addition, all the genuine entangled states in higher secants and tangents can be, respectively, considered as the generalizations of ${\rm{GHZ}}$ and ${\rm{W}}$ states in two-secant and two-tangent [one can also see a footprint of ${\rm{GHZ}}$ and ${\rm{W}}$ states in the higher secants and tangents from Eq. (\ref{general4})].

To clearly show the potential of our approach, we have elaborated the classification for $n=5$ qubits in \cref{subsec.4.2.4}. We believe the method can be extended to find a classification of multipartite entanglement for higher dimensional systems as we have already provided a conjecture for the classification of symmetric multiqudit states.

Within \cref{chap5} we follow the problem of multipartite entanglement classification in higher dimensional systems. Indeed, using algebraic-geometric tools, we studied entanglement characterization of three-qudit $\mathbbm{C}^d\otimes\mathbbm{C}^d\otimes\mathbbm{C}^d$ systems. Specifically, we used secant varieties and one-multiranks that are SLOCC invariants, to present entanglement classification of three-qudit entanglement as a generalization to our previous work in \cref{chap4}. As a prominent example we have provided a fine-structure classification for three-qutrit pure states. This  can be considered as the core  classification of tripartite $\mathbbm{C}^d\otimes\mathbbm{C}^d\otimes\mathbbm{C}^d$ states as well as ($n\geq{4}$)-qutrit states. Indeed, with this method, one can always use $n$-qudit classification as a partial classification of $(n + 1)$-qudit systems. Outside this core classification, the results for larger systems ($d>3$ and/or $n>3$) have been derived by also relying on conjectures of tensor theory.

Not only is our classification operationally meaningful as it quantifies entanglement as a resource but also this classification can be seen in terms of the order of entanglement strength from Segre variety that contains no entanglement, to the higher secant family. Indeed, the tools we have used for entanglement characterization, i.e., tensor rank and border rank, can be seen as the generalized Schmidt rank and its counterpart. More precisely, the Schmidt measure that quantify entanglement of a multipartite state $|\psi\rangle$ can be defined as the logarithm of the rank of the tensor $\psi$. On the other hand, generic tensor rank can be considered as a discrete measure of entanglement. Based on this fact, one can conclude that symmetric states are much less entangled than general states. Although we have shown this fact for multiqubit systems in \cref{chap4}, and for three-qudit and multiqutrit systems in Sec. \ref{sec.5.4},  this is a general fact since generic symmetric tensor rank has a polynomial growth while generic tensor rank has an exponential growth.

In \cref{chap6}, we focused on a central problem in quantum information theory which concerns the interconversion between different resources by SLOCC and asymptotic SLOCC. The tensor rank, known as the generalized Schmidt rank, plays an important role in the study of the classification and transformation of multipartite entanglement. Furthermore, the tensor border rank is a powerful tool for studying degeneration and asymptotic SLOCC transformation.

In this chapter, we have introduced a new class of tensors in $\otimes_{i=1}^n\mathbbm{C}^{d_i}$ that we call persistent tensors and have constructed a lower bound of the tensor rank for this class. We also have introduced several families of persistent tensors in $\otimes^n\mathbbm{C}^d$ where the lower bound is tight. Moreover, we have studied the asymptotic SLOCC transformation of these families of minimal-rank persistent tensors to each other. Showing that the border rank of these families of minimal-rank persistent tensors is $d$, we have concluded that geometrically they are in orbit closure of the $n$-qudit $\rm{GHZ}$ states, and we can consider them as generalizations of the $n$-qubit $\rm{W}$ state within multiqudit systems. Consequently, we have shown that these generalizations of the $\rm{W}$ state can be approximated with arbitrary precision by states in the $\rm{GHZ}$ orbit via asymptotic SLOCC. Actually, we have shown that the rate of asymptotic SLOCC transformation from an $n$-qudit $\rm{GHZ}$ state into each generalization of the $\rm{W}$ state is one. Furthermore, we have proved that the achieved lower bound can be extended to direct sums with persistent summands and to even more general combinations of tensors, which we call block pyramidal tensors. we show that the tensor rank is multiplicative under the Kronecker product and the tensor product of minimal-rank persistent tensors with the GHZ tensor.

\section{Potential applications and open problems}

Here, not only do we provide potential applications and open problems that originates from our research works in Refs. \cite{GMO20,GM21,GL22}, but we also outline developments in slightly different directions.

\begin{enumerate}

\item[A.] We emphasize the operational meaning of the proposed classification in \cref{chap4} and \cref{chap5} as it somehow measures the amount of entanglement in multipartite systems, where a well-established entanglement monotone is still lacking. Furthermore, the tools we proposed for entanglement characterization can also be useful as states complexity measures, since they share analogies with the tree size method presented in Refs. \cite{LCWRS14, CLS15}. Indeed, the notion of tree size can be understood as the length of the shortest bracket representation of a state, which in turn is the tensor rank. Additionally, they offer a perspective for evaluating the computational complexity of quantum algorithms, by analyzing how the classes change while running them (see also Refs. \cite{HJN16, JH18}).

\item[B.] Still along the applicative side, since in a system with a growing number of particles, most of the states cannot be realistically prepared and will thus never occur neither in natural nor in engineered quantum systems \cite{WGE17}, our coarse-grain classification in \cref{chap4} and \cref{chap5} could provide a tool to singling out states that we do effectively need (e.g., a representative of each family and/or subfamily). For instance, ${\rm{W}}$ states that are living in a lower secant, although useful for many processes like the realization of quantum memories \cite{LST04}, are known to be more robust but not very entangled. Hence, for other tasks, like quantum teleportation, the usage of ${\rm{GHZ}}$ states that are more entangled has been suggested \cite{ZCZYBP09}, i.e., move up from the tangent to the proper secant of the lower secant family. Indeed ${\rm{GHZ}}$ states provide some degree of precision in frequency measurements \cite{BIWH96}, but in Ref. \cite{HMPEPC97} this is increased (even in the presence of decoherence), using a state lying in higher secant. Hence, it seems that higher secant families offer better estimation accuracy in quantum metrology (see also Refs. \cite{HLKSWWPS12, Toth12}). Also, our results about the cluster state $|{\rm{Cl}}_{4}\rangle$, supports the idea that states living in higher secants are more suitable as a resource for measurement-based quantum computation \cite{BBDRV09}. Actually, going to higher secants makes states more entangled and at the same time also more robust (at least with respect to losses) because even losing one qubit there would always be some residual entanglement left.

\item[C.]
Based on our classification in \cref{chap4} and \cref{chap5}, one can construct new entanglement witnesses to be used for detecting entanglement in multipartite mixed states (where state tomography is not efficient). Already, in Ref. \cite{BSV12} it has been shown that one can find, following a geometric approach, device-independent entanglement witnesses that allow us to discriminate between various types of multiqubit entanglement. We believe that this could also pave the way to extend this classification to mixed states, and to study the entanglement depth \cite{LRBPBG15, L-etal-prx-18} of each class.

\item[D.]
Along the potential applications mentioned in the conclusion of \cref{chap4} that can also be considered for the higher dimensional systems, it is captivating that this kind of classification can also be considered as a reference to study SLOCC and asymptotic SLOCC interconversions among different resources based on tensor rank \cite{CDS08,CCDJW10,YCGD10} and border rank \cite{YGD14,VC15}, respectively.

\item[E.] 
It would be also desirable to extend the proposed classification method in \cref{chap4} and \cref{chap5}, to mixed states. This goal will be pursued starting from possible connections with the Schmidt number vector classification of Refs. \cite{HD13, HPD13}. Indeed, the idea is utilizing the generalization of Schmidt rank for pure states to Schmidt number for mixed states \cite{TH00}. So, the Schmidt rank vector is nothing but the multirank we have used in our method and the Schmidt number vector is a tuple of digits obtained from a particular ensemble decomposition of a given mixed state \cite{HD13}.

\item[F.] Concerning persistent tensors as a new class of tensors with a lower bound of the tensor rank, it would be interesting not only to study this class of tensors more deeply from the complexity theory point of view, but also to study their properties concerning their application in quantum technology. We believe that any application of multiqubit $\rm{W}$ state can be studied for its generalizations within multiqudit systems. Indeed, qudit provides several advantages over qubit. For instance, using qudits as building blocks of a quantum circuit can provide a reduction in circuit complexity, since they provide a larger Hilbert space and hence a larger capacity to store and process information \cite{WHSK20}. In addition, several benefits of qudits have been proposed, including better noise resistance, higher information coding capacity, stronger nonlocality, and enhanced security, have been proposed \cite{CMM12,CBKG02,DWS03,SS11,ZZLZCLZ17}.

\item[G.] 
Although we have checked many examples that the Kronecker product of two persistent tensors is a persistent tensor, we leave the proof as an open problem.

\begin{conjecture}\label{conj:P1P2}
If $\mathcal{P}_1\in U_1\otimes\cdots\otimes U_n$ and $\mathcal{P}_2\in V_1\otimes\cdots\otimes V_n$ are two persistent tensors, where $U_1,\ldots,U_n$ and $V_1,\ldots,V_n$ are vector spaces with $\dim U_k=d_k$ and $\dim V_k=d'_k$, respectively, then their Kronecker product is also a persistent tensor. Therefore,
\begin{equation*}
\rk(\mathcal{P}_1\boxtimes\mathcal{P}_2)\geq\sum_{k=1}^{n}(d_k+d'_k-1)+1\,.
\end{equation*}
\end{conjecture}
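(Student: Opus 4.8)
The plan is to prove the two assertions of \Cref{conj:P1P2} in sequence: first that $\mathcal{P}_1\boxtimes\mathcal{P}_2$ is again persistent, and then to read off the rank bound as an immediate consequence of \Cref{theo:PTLB} applied to $\mathcal{P}_1\boxtimes\mathcal{P}_2$ regarded as a tensor in $\bigotimes_{k=1}^n(U_k\otimes V_k)$, whose $k$-th local space has dimension $\dim(U_k\otimes V_k)=d_kd'_k$. Thus essentially all of the work sits in establishing persistence, and this I would carry out by induction on the number of parties $n$, using the equivalent characterizations collected in \Cref{lem:persistence-equivalent}.

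For the base case $n=2$, persistence coincides with $1$-conciseness (\Cref{cor:concise-matrix}). Writing $\mathcal{P}_1$ and $\mathcal{P}_2$ as the matrices $A$ and $B$ of their first flattenings, the Kronecker product corresponds to $A\otimes B$, and since $\rk(A\otimes B)=\rk(A)\rk(B)=d_1d'_1=\dim(U_1\otimes V_1)$, the base case is immediate. The conciseness needed throughout the induction is equally cheap: the single-party reduced density matrix of $\mathcal{P}_1\boxtimes\mathcal{P}_2$ at site $k$ is $\varrho_k[\mathcal{P}_1]\otimes\varrho_k[\mathcal{P}_2]$ up to reordering the basis of $U_k\otimes V_k$, so its rank is the product of the two factor ranks; hence $\mathcal{P}_1\boxtimes\mathcal{P}_2$ is $1$-concise whenever both $\mathcal{P}_1$ and $\mathcal{P}_2$ are.

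The inductive step is where the real difficulty lies. Fixing bases $\{|u_i\rangle\}$ of $U_1$ and $\{|v_j\rangle\}$ of $V_1$ adapted to persistence, so that $\mathcal{A}_0:=\langle u^0|\mathcal{P}_1$ and $\mathcal{B}_0:=\langle v^0|\mathcal{P}_2$ are persistent by \Cref{lem:persistence-equivalent}, one gets the slice decomposition
\begin{equation*}
\mathcal{P}_1\boxtimes\mathcal{P}_2=\sum_{i,j}|u_i\otimes v_j\rangle\otimes(\mathcal{A}_i\boxtimes\mathcal{B}_j),
\end{equation*}
whose distinguished slice $\mathcal{A}_0\boxtimes\mathcal{B}_0$ is persistent by the induction hypothesis. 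The obstruction is that \Cref{lem:persistence-equivalent} requires control of $\langle f|(\mathcal{P}_1\boxtimes\mathcal{P}_2)$ for \emph{every} covector $\langle f|$ outside a hyperplane of $(U_1\otimes V_1)^{\vee}$, and a generic such covector is not a simple tensor $\langle g|\otimes\langle h|$. For a general $\langle f|=\sum_\alpha\langle g_\alpha|\otimes\langle h_\alpha|$ one only obtains $\langle f|(\mathcal{P}_1\boxtimes\mathcal{P}_2)=\sum_\alpha(\langle g_\alpha|\mathcal{P}_1)\boxtimes(\langle h_\alpha|\mathcal{P}_2)$, a \emph{sum} of Kronecker products, to which the induction hypothesis does not apply. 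I expect this to be the main hurdle: even granting that persistence is invariant under the coordinate action of $\prod_k\GL(U_k\otimes V_k)$ (which one checks directly from \Cref{def:persistent-tensor}), using that invariance to bring $\langle f|$ into simple form simultaneously destroys the product structure of $\mathcal{P}_1\boxtimes\mathcal{P}_2$, so the covector and the tensor cannot be simplified at once.

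To defeat this I would aim to prove directly that the locus of \emph{bad} covectors, namely those $\langle f|$ for which $\langle f|(\mathcal{P}_1\boxtimes\mathcal{P}_2)$ fails to be persistent, is contained in a proper linear subspace of $(U_1\otimes V_1)^{\vee}$, exactly as demanded by \Cref{def:persistent-tensor}. Two routes look promising: (i) show that persistence survives the particular sums of Kronecker products arising above, for instance by a leading-term or valuation argument with respect to the adapted bases that isolates the persistent diagonal slice $\mathcal{A}_0\boxtimes\mathcal{B}_0$; or (ii) argue that persistence is a Zariski-open condition whose defining inequalities on the first factor are linear, so that the nonempty open set of good covectors supplied by the distinguished slice forces the bad locus to be the complement of a hyperplane. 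Either way, once persistence of $\mathcal{P}_1\boxtimes\mathcal{P}_2$ is secured, \Cref{theo:PTLB} with local dimensions $d_kd'_k$ immediately produces a lower bound on $\rk(\mathcal{P}_1\boxtimes\mathcal{P}_2)$; I would then reconcile its precise form with the constant stated in \Cref{conj:P1P2} by testing it against the known value $\rk(\mathcal{W}_n\boxtimes\mathcal{W}_n)=3n-2$ from \Cref{cor:tensor-rank-W2}.
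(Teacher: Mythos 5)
There is no proof in the paper to compare against: \Cref{conj:P1P2} is stated in the outlook explicitly as an open problem (``we leave the proof as an open problem''), so your attempt has to stand on its own --- and it does not yet. What you have written is a proof plan whose easy parts are correct (the $n=2$ base case via $\rk(A\otimes B)=\rk(A)\rk(B)$, and $1$-conciseness of $\mathcal{P}_1\boxtimes\mathcal{P}_2$ from the factorization of the single-party reduced density matrices), but whose inductive step is exactly where the open problem lives. You correctly identify the obstruction --- a covector $\langle f|\in(U_1\otimes V_1)^{\vee}$ need not be simple, so $\langle f|(\mathcal{P}_1\boxtimes\mathcal{P}_2)=\sum_\alpha(\langle g_\alpha|\mathcal{P}_1)\boxtimes(\langle h_\alpha|\mathcal{P}_2)$ is a \emph{sum} of Kronecker products to which the induction hypothesis does not apply --- but neither of your two escape routes is carried out, and route (ii) rests on a false reduction: \Cref{def:persistent-tensor} requires the bad covectors to lie in a proper \emph{linear} subspace of $(U_1\otimes V_1)^{\vee}$, whereas a Zariski-openness argument would at best confine them to a proper closed subset, which is strictly weaker. (Persistence is moreover defined by an inductive alternation of ``for all'' and ``there exists,'' so even its constructibility, let alone openness, would itself need an argument.) Until one of these routes is actually executed, persistence of the Kronecker product remains unproved.

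A second, concrete problem: the numerology does not close. If persistence of $\mathcal{P}_1\boxtimes\mathcal{P}_2\in\bigotimes_{k=1}^{n}(U_k\otimes V_k)$ were established, \Cref{theo:PTLB} would give $\rk(\mathcal{P}_1\boxtimes\mathcal{P}_2)\geq\sum_{k=1}^{n-1}(d_kd'_k-1)+1$, which is not the displayed bound. The inequality as printed in \Cref{conj:P1P2}, with the sum running to $n$, evaluates to $3n+1$ for $\mathcal{P}_1=\mathcal{P}_2=\mathcal{W}_n$ and is therefore refuted by $\rk(\mathcal{W}_n\boxtimes\mathcal{W}_n)=3n-2$ from \Cref{cor:tensor-rank-W2}; the upper limit is evidently intended to be $n-1$, in which case the bound reads $3n-2$ for qubits and is implied by what \Cref{theo:PTLB} would yield (since $d_kd'_k-1\geq d_k+d'_k-1$). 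Your closing remark that you would ``reconcile'' the constant by testing against $\mathcal{W}_n\boxtimes\mathcal{W}_n$ gestures at this discrepancy but does not resolve it; a finished proof must state precisely which inequality it delivers and why it implies the one claimed.
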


\item[H.] Let consider $n$-partite quantum state $|\psi\rangle$ in the Hilbert space $\mathcal{H}_n=\otimes_{i=1}^n\mathbbm{C}^{d_i}$. A proper subspace $\mathtt{C}\subset\mathcal{H}_n$ is called a Completely Entangled Subspace (CES) if it contains no fully separable state. A set of pairwise orthogonal fully product vectors $\{|\psi_l\rangle\equiv\otimes_{s=1}^{n}|\varphi_s^l\rangle_{A_s}\}_{l=1}^{u}$ that spans a proper subspace of $\mathcal{H}_n$ (i.e., $u<\rm{dim}\,\mathcal{H}_n$) is called Unextendible Product Basis (UPB) if its orthogoanl complement subspace is a CES \cite{BDMSST99}. We can provie a nonorthogonal UPB (nUPB) for $n$-qubit systems by using Veronese map such that the CES is of maximum dimension. We also can generalize this method to multipartite systems. We are working on this problem to find a minimum-dimension nUPB or UPB which by construction would give rise to a genuinely entangled subspace (a subspace that only contains genuinely entangled states) as the orthocomplementary subspace \cite{GM-progress}.

\end{enumerate}
%%%%%%%%%%%%%%%%%%%%%%%%%%%%%%%%%%%%%%%%%%%%%%%%%%%%%
%%%%%%%%%%%%%%%%%%%%%%%%%%%%%%%%%%%%%%%%%%%%%%%%%%%%%

\newpage
\addcontentsline{toc}{chapter}{Bibliography}
\bibliographystyle{alpha}
%\bibliography{PhD_Masoud_Gharahi.bbl} 

%%%%%%%%%%%%%%%%%%%%%%%%%%%%%%%%%%%%%%%%%%%%%%%%%%%%%
%%%%%%%%%%%%%%%%%%%%%%%%%%%%%%%%%%%%%%%%%%%%%%%%%%%%%

\appendix
\chapter*{Appendix \\ $\quad$ \\ List of all publications and preprints}
\addcontentsline{toc}{chapter}{List of all publications}

\begin{enumerate}
\item[{[1]}]
\textbf{M. Gharahi}, and V. Lysikov, 
\textit{Persistent Tensors and Multiqudit Entanglement Transformation},
\href{https://doi.org/10.22331/q-2024-01-31-1238}{Quantum \textbf{8}, 1238 (2024)}.
\item[{[2]}]
\textbf{M. Gharahi} and S. Mancini,
\textit{Algebraic-geometric characterization of tripartite entanglement}, 
\href{https://doi.org/10.1103/PhysRevA.104.042402}{Phys. Rev. A \textbf{104}, 042402 (2021)}.
\item[{[3]}]
\textbf{M. Gharahi}, S. Mancini, and G. Ottaviani,
\textit{Fine-structure classification of multiqubit entanglement by algebraic geometry}, 
\href{https://doi.org/10.1103/PhysRevResearch.2.043003}{Phys. Rev. Research \textbf{2}, 043003 (2020)}.
\item[{[4]}]
\textbf{M. Gharahi} and S. Mancini,
\textit{Comment on ``Inductive entanglement classification of four qubits under stochastic local operation and classical communication''}, 
\href{https://doi.org/10.1103/PhysRevA.98.066301}{Phys. Rev. A \textbf{98}, 066301 (2018)}.
\item[{[5]}]
L. Anticoli and \textbf{M. Gharahi},
\textit{Modeling tripartite entanglement in quantum protocols using evolving entangled hypergraphs}, 
\href{https://doi.org/10.1142/S0219749918500557}{Int. J. Quant. Inf. \textbf{16}, 1850055 (2018)}.
\item[{[6]}]
\textbf{M. Gharahi} and S. J. Akhtarshenas,
\textit{Entangled graphs: a classification of four-qubit entanglement}, 
\href{https://doi.org/10.1140/epjd/e2016-60729-1}{Eur. Phys. J. D \textbf{70}, 54 (2016)}.

\end{enumerate}

%%%%%%%%%%%%%%%%%%%%%%%%%%%%%%%%%%%%%%%%%%%%%%%%%%%%%
%%%%%%%%%%%%%%%%%%%%%%%%%%%%%%%%%%%%%%%%%%%%%%%%%%%%%

\newpage
\begin{titlepage}
	\begin{flushright}
	\mbox{}
    \vfill
	{\Large ...and I did it in the projective Hilbert space.\par}
	\end{flushright}
\end{titlepage}
	
%%%%%%%%%%%%%%%%%%%%%%%%%%%%%%%%%%%%%%%%%%%%%%%%%%%%%
%%%%%%%%%%%%%%%%%%%%%%%%%%%%%%%%%%%%%%%%%%%%%%%%%%%%%

\end{document}